\documentclass[11pt]{article}
\usepackage{times}
\usepackage{amssymb}
\usepackage{amsmath}
\usepackage{xspace,multirow}
\usepackage[shortlabels]{enumitem}
\usepackage{calc}
\usepackage{amsthm}
\usepackage{color}
\usepackage{comment}
\usepackage{mathtools}
\usepackage{multirow,makecell}

\newcommand\ocirc[1]{\ThisStyle{\ensurestackMath{%
  \stackon[1pt]{\SavedStyle#1}{\SavedStyle\kern.6\LMpt\circ}}}}

\topmargin=-0.4in
\topskip=0pt
\headsep=15pt
\oddsidemargin=0pt 
\textheight=9in 
\textwidth=6.5in 
\voffset=0in


\newenvironment{proofof}[1]{\begin{proof}[Proof of {#1}]}{\end{proof}}

\newtheorem{theorem}{Theorem}[section]
\newtheorem{lemma}[theorem]{Lemma}
\newtheorem{claim}[theorem]{Claim}

\newtheorem{corollary}[theorem]{Corollary}

\newtheorem{inftheorem}{Theorem}
\newtheorem{infcor}{Corollary}

{\theoremstyle{definition} \newtheorem{definition}[theorem]{Definition} \newtheorem{defn}[theorem]{Definition}}

\newcommand{\np}{{\em NP}\xspace}
\newcommand{\nphard}{\np-hard\xspace} 
\newcommand{\npcomplete}{\np-complete\xspace}
\newcommand{\apx}{{\em APX}\xspace}

\DeclareMathOperator{\argmin}{argmin}
\DeclareMathOperator{\argmax}{argmax}
\DeclareMathOperator*{\Exp}{E}
\newcommand{\E}[2][{}]{\ensuremath{{\textstyle \Exp_{#1}}\bigl[#2\bigr]}}


\newcommand{\A}{\ensuremath{\mathcal{A}}}
\newcommand{\bA}{\ensuremath{\overline{A}}}

\newcommand{\short}{\ensuremath{\mathrm{sh}}}
\newcommand{\lng}{\ensuremath{\mathrm{lg}}}

\newcommand{\gbar}{\ensuremath{\bar{g}}}
\newcommand{\gdisc}{\ensuremath{g^{\mathrm{int}}}}
\newcommand{\gdiscmon}{\ensuremath{g^{\mathrm{int,mon}}}}
\newcommand{\pcentral}{\ensuremath{\mathring{p}}} 
\newcommand{\pcent}{\pcentral}

\newcommand{\ld}{\ensuremath{\lambda}}
\newcommand{\Ld}{\ensuremath{\Lambda}}

\newcommand{\Rplus}{\ensuremath{\mathbb{R}_+}}
\newcommand{\T}{\ensuremath{\intercal}}

\newcommand{\kinf}{\ensuremath{t}}

\newcommand{\R}{\ensuremath{\mathbb R}}
\newcommand{\Z}{\ensuremath{\mathbb Z}}

\newcommand{\Ebi}{\ensuremath{\overleftrightarrow{E}}}
\newcommand{\deltain}{\ensuremath{\delta^{\text{in}}}}
\newcommand{\deltaout}{\ensuremath{\delta^{\text{out}}}}
\newcommand{\fIAv}{\ensuremath{f^{\one, A, v}}}
\newcommand{\fIIAv}{\ensuremath{f^{\two, A, v}}}

\newcommand{\B}{\ensuremath{\mathcal{B}}}
\newcommand{\C}{\ensuremath{\mathcal{C}}}
\newcommand{\G}{\ensuremath{\mathcal{G}}}

\newcommand{\I}{\ensuremath{\mathcal I}}
\newcommand{\F}{\ensuremath{\mathcal F}}
\newcommand{\D}{\ensuremath{\mathcal D}}

\newcommand{\K}{\ensuremath{\mathcal{K}}}

\newcommand{\Nc}{\ensuremath{\mathcal N}}

\newcommand{\Sc}{\ensuremath{\mathcal S}}
\newcommand{\Pc}{\ensuremath{\mathcal P}}
\newcommand{\Qc}{\ensuremath{\mathcal Q}}
\newcommand{\OPT}{\ensuremath{\mathit{OPT}}}

\newcommand{\sm}{\ensuremath{\setminus}}
\newcommand{\es}{\ensuremath{\emptyset}}
\newcommand{\assign}{\ensuremath{\leftarrow}}

\newcommand{\ceil}[1]{\ensuremath{\left\lceil#1\right\rceil}}

\newcommand{\poly}{\operatorname{\mathsf{poly}}}

\newcommand{\e}{\ensuremath{\epsilon}}

\newcommand{\gm}{\ensuremath{\gamma}}

\newcommand{\sse}{\subseteq}


\newcommand{\iopt}{\ensuremath{O^*}}
\newcommand{\bopt}{\ensuremath{\overline{O}}}

\newcommand{\tX}{\ensuremath{\widetilde{X}}}
\newcommand{\tx}{\ensuremath{\widetilde{x}}}

\newcommand{\tz}{\ensuremath{\widetilde{z}}}
\newcommand{\tf}{\ensuremath{\widetilde{f}}}

\newcommand{\tq}{\ensuremath{\tilde q}}
\newcommand{\tc}{\ensuremath{\tilde c}}
\newcommand{\tg}{\ensuremath{\tilde g}}

\newcommand{\hd}{\ensuremath{\widehat{d}}}
\newcommand{\hx}{\ensuremath{\widehat{x}}}

\newcommand{\hy}{\ensuremath{\widehat{y}}}

\newcommand{\tK}{\widetilde{K}}

\newcommand{\barf}{\ensuremath{\overline f}}
\newcommand{\bx}{\ensuremath{\bar x}}
\newcommand{\by}{\ensuremath{\bar y}}

\newcommand{\bp}{\ensuremath{\bar p}}

\newcommand{\tA}{\ensuremath{\tilde {\mathcal A}}}
\newcommand{\sx}{\ensuremath{x^*}}

\newcommand{\kp}{\ensuremath{\eta}}
\newcommand{\al}{\ensuremath{\alpha}}
\newcommand{\tht}{\ensuremath{\theta}}

\newcommand{\dt}{\ensuremath{\delta}}
\newcommand{\Dt}{\ensuremath{\Delta}}

\newcommand{\w}{\ensuremath{\omega}}
\newcommand{\bo}{\ensuremath{0}} 

\newcommand{\ve}{\ensuremath{\varepsilon}}
\newcommand{\vro}{\ensuremath{\varrho}}

\newcommand{\LW}{\ensuremath{L_{\mathrm{W}}}}

\newcommand*{\rom}[1]{\expandafter\@slowromancap\romannumeral #1@}

\newcommand{\dr}{DR\xspace}
\newcommand{\dcol}{\ensuremath{\D}}
\newcommand{\pdist}{\ensuremath{L}}

\newcommand{\drssc}{\ensuremath{\mathsf{DRSSC}}\xspace}

\newcommand{\drsfl}{\ensuremath{\mathsf{DRSFL}}\xspace}
\newcommand{\drst}{\ensuremath{\mathsf{DRSST}}\xspace}
\newcommand{\drstmon}{\ensuremath{\mathsf{M}\drst}}
\newcommand{\one}{\ensuremath{\mathrm{I}}}
\newcommand{\two}{\ensuremath{\mathrm{II}}}
\newcommand{\dist}{\ensuremath{w}}

\newcommand{\scm}{\ensuremath{\ell}}
\newcommand{\scmdisc}{\scm^{\mathsf{dis}}\xspace}
\newcommand{\scmmax}{\ensuremath{\scm_{\max}}}
\newcommand{\scmmin}{\ensuremath{\scm_{\min}}}
\newcommand{\sgr}{\ensuremath{d}}
\newcommand{\hsgr}{\ensuremath{\widehat{\sgr}}}
\newcommand{\tsgr}{\ensuremath{\widetilde{\sgr}}}
\newcommand{\asym}{\ensuremath{\mathsf{asym}}}
\newcommand{\inpsize}{\ensuremath{\I}}
\newcommand{\Qpcentral}[1][{\pcent}]{Q\ensuremath{{}^{\mathrm{fr}}_{#1}}}
\newcommand{\Qdisc}[1][{\pcent}]{Q\ensuremath{{}_{#1}}}
\newcommand{\bQdisc}[1][{\pcent}]{\ensuremath{\overline{\mathrm{Q}}_{#1}}}
\newcommand{\Rdisc}[1][{\pcent}]{R\ensuremath{{}_{#1}}}
\newcommand{\bRdisc}[1][{\pcent}]{\ensuremath{\overline{\mathrm{R}}_{#1}}}
\newcommand{\fp}[2][{p}]{\ensuremath{f({#1};{#2})}}
\newcommand{\hpcentral}[2][{\pcent}]{\ensuremath{h({#1}\,;{#2})}}
\newcommand{\bhpcentral}[2][{\pcent}]{\ensuremath{\overline{h}({#1}\,;{#2})}}
\newcommand{\thpcentral}[2][{\pcent}]{\ensuremath{\widetilde{h}({#1}\,;{#2})}}
\newcommand{\zpcentral}[2][{\pcent}]{\ensuremath{z({#1}\,;{#2})}}
\newcommand{\zpcentshort}[2][{\pcent}]{\ensuremath{z^{\short}({#1}\,;{#2})}}
\newcommand{\zpcentlng}[2][{\pcent}]{\ensuremath{z^{\lng}({#1}\,;{#2})}}
\newcommand{\hzshort}[1][{}]{\ensuremath{\widehat{z}^{\short,{#1}}}}
\newcommand{\hzlong}[1][{}]{\ensuremath{\widehat{z}^{\lng,{#1}}}}

\newcommand{\htpcent}{\ensuremath{\widehat{p}}}
\newcommand{\htp}{\ensuremath{\widehat{p}}\xspace}

\newcommand{\ppoly}{\htp}
\newcommand{\Qdiscpoly}{\Qdisc[\ppoly]} 
\newcommand{\Qppoly}{\Qpcentral[\ppoly]}

\newcommand{\hppoly}[1]{\hpcentral[\ppoly]{#1}}
\newcommand{\zppoly}[1]{\zpcentral[\ppoly]{#1}}
\newcommand{\zpolylp}[2][\ppoly]{T\ensuremath{{}_{{#1},{#2}}}}
\newcommand{\Asup}{\ensuremath{\A^{\mathrm{sup}}}}

\newcommand{\hh}{\ensuremath{\widehat{h}}}
\newcommand{\bh}{\ensuremath{\overline{h}}}
\newcommand{\tdh}{\ensuremath{\widetilde{h}}}
\newcommand{\bg}{\ensuremath{\overline{g}}}
\newcommand{\lb}{\ensuremath{\mathsf{LB}}}
\newcommand{\lbfl}{\ensuremath{\psi}}
\newcommand{\ub}{\ensuremath{\mathsf{UB}}}

\newcommand{\Eh}[2][A]{\ensuremath{{\textstyle \Exp^h_{#1}}\bigl[#2\bigr]}}
\newcommand{\El}[2][A]{\ensuremath{{\textstyle \Exp^l_{#1}}\bigl[#2\bigr]}}
\newcommand{\hEh}[2][i]{\ensuremath{\widehat{\Exp}^{{#1},h}_{A}\bigl[#2\bigr]}}
\newcommand{\hEl}[2][i]{\ensuremath{\widehat{\Exp}^{{#1},l}_{A}\bigl[#2\bigr]}}

\newcommand{\alg}{\ensuremath{\mathsf{Alg}}}
\newcommand{\maxmin}{$\max$-$\min$\xspace}
\newcommand{\kmaxmin}{$k$-$\max$-$\min$\xspace}
\newcommand{\polyopt}{\ensuremath{\iopt_{\ppoly}}}
\newcommand{\vol}{\ensuremath{\mathsf{vol}}}

\newcommand{\polyalg}{\ensuremath{\mathsf{PolyAlg}}\xspace}
{\theoremstyle{remark} \newtheorem*{ppolyalg}{Algorithm $\polyalg(\kp)$}}
\newcommand{\optp}{\OPT_{\Pi}}

\newcommand{\gxcp}[1][\scmbound]{\ensuremath{\Phi(x,{#1},A)}\xspace}
\newcommand{\scmbound}{\mu}
\newcommand{\SCM}{\ensuremath{\mathcal{L}}}
\newcommand{\SCMbar}{\ensuremath{\overline{\mathcal{L}}}}
\newcommand{\allsets}{unrestricted\xspace}
\newcommand{\kbounded}{\ensuremath{k}-bounded\xspace}
\newcommand{\J}{\ensuremath{S}}
\newcommand{\mustar}{\ensuremath{\mu^*}}

\newcommand{\jbar}{\ensuremath{\overline{j}}}
\newcommand{\tm}{\ensuremath{t}}

\newcommand{\veprime}{\ve}
\newcommand{\hproxy}[2][{\pcent}]{\ensuremath{h^{\mathrm{pr}}({#1}\,;{#2})}}

\newcommand{\Atilde}{\ensuremath{\widetilde{\A}}}
\newcommand{\Afreq}{\ensuremath{\mathcal{A}^{\mathrm{freq}}}}
\newcommand{\Ahatfreq}{\ensuremath{\widehat{\mathcal{A}}^{\mathrm{freq}}}}
\newcommand{\Ahatrare}{\ensuremath{\widehat{\mathcal{A}}^{\mathrm{rare}}}}
\newcommand{\Arare}{\ensuremath{\mathcal{A}^{\mathrm{rare}}}}

\newcommand{\pblocked}{\ensuremath{\overline{p}}}
\newcommand{\pfree}{\ensuremath{\widetilde{p}}}
\newcommand{\Pfree}{\ensuremath{P^{\mathrm{free}}}}
\newcommand{\Phatfree}{\ensuremath{\widehat{P}^{\mathrm{free}}}}
\newcommand{\Qfree}{\ensuremath{Q^{\mathrm{free}}}}

\newcommand{\sgrest}{\ensuremath{\sgr^{\mathrm{est}}}}
\newcommand{\qtilde}{\ensuremath{\widetilde{q}}}
\newcommand{\qx}{\ensuremath{q^*}}

\newcommand{\phat}{\ensuremath{\widehat{p}}}
\newcommand{\maxinprox}[1][x]{K\ensuremath{{}_{#1}}\xspace}
\newcommand{\qstar}{\ensuremath{q^*}}

\newcommand{\stocp}{\textnormal{(2St-P)}\xspace}
\newcommand{\bbeta}{\ensuremath{\overline\beta}}
\newcommand{\Sbx}{\ensuremath{S_{\bx}}}
\newcommand{\new}{\ensuremath{\mathrm{new}}}

\makeatletter
\newcommand{\leqnomode}{\tagsleft@true\let\veqno\@@leqno}
\newcommand{\reqnomode}{\tagsleft@false\let\veqno\@@eqno}
\makeatother

\title{Approximation Algorithms for Distributionally Robust \\ Stochastic Optimization with
  Black-Box Distributions%
\thanks{A preliminary version~\cite{LinharesS19} appeared in the Proceedings of the
  51st ACM Symposium on Theory of Computing (STOC), 2019.}} 
\author{Andr\'e Linhares\thanks{{\tt \{alinhare,cswamy\}@uwaterloo.ca}.
        Dept. of Combinatorics and Optimization, University of Waterloo, Waterloo, ON N2L 3G1.
        Supported in part by NSERC grant 327620-09 and an NSERC Discovery Accelerator
        Supplement award.}
\and
\addtocounter{footnote}{-1}
         Chaitanya Swamy\footnotemark
}

\date{}

\begin{document}

\maketitle
\def\thepage{}
\thispagestyle{empty}

\begin{abstract}
Two-stage stochastic optimization is a widely used framework for modeling uncertainty,
where we have a probability distribution over possible realizations of the data, called
scenarios, and decisions are taken in two stages: we make first-stage decisions knowing
only the underlying distribution and before a scenario is realized, and may take
additional second-stage recourse actions after a scenario is realized. The goal is
typically to minimize the total expected cost. A common criticism levied at this model is
that the underlying probability distribution is itself often imprecise! To address this,
an approach that is quite versatile and has gained popularity in the
stochastic-optimization literature is the {\em distributionally robust 2-stage model}: given a
collection $\dcol$ of probability distributions, our goal now is to minimize the maximum
expected total cost with respect to a distribution in $\dcol$. 

There has been almost no prior work
however on developing approximation algorithms for distributionally robust problems, when
the underlying scenario-set is discrete, as is the case with discrete-optimization
problems. We provide a framework for designing approximation algorithms in such settings
when the collection $\dcol$ is a ball around a central distribution and the central
distribution is accessed {\em only via a sampling black box}. 

We first show that one
can utilize the {\em sample average approximation} (SAA) method---solve the
distributionally robust problem with an empirical estimate of the central distribution---%
to reduce the problem to the case where the central distribution has
{\em polynomial-size} support. This follows because we argue that a distributionally robust
problem can be reduced in a novel way 
to a standard 2-stage problem with bounded inflation factor, which enables one to use the
SAA machinery developed for 2-stage problems. 
Complementing this, we show how to approximately solve a fractional relaxation of the
SAA (i.e., polynomial-scenario central-distribution) problem. Unlike in 2-stage stochastic-
or robust- optimization, this turns out to be quite challenging. We utilize the ellipsoid
method in conjunction with several new ideas to show that this problem can be
approximately solved provided that we have an (approximation) algorithm 
for a certain max-min problem that is akin to, and generalizes, the \kmaxmin problem---%
find the worst-case scenario consisting of at most $k$ elements---encountered in 2-stage
robust optimization. We obtain such a 
procedure for various discrete-optimization problems; by complementing this via  
LP-rounding algorithms that provide {\em local} (i.e., per-scenario)
approximation guarantees, 
we obtain the {\em first} approximation algorithms for the distributionally robust versions
of a variety of discrete-optimization problems including set cover, vertex cover, edge cover, 
facility location, and Steiner tree,
with guarantees that are, except for set cover, within $O(1)$-factors of the
guarantees known for the deterministic version of the problem.  
\end{abstract}

\newpage
\pagenumbering{arabic}
\normalsize

\section{Introduction}
Stochastic-optimization models capture uncertainty by modeling it via a probability
distribution over a collection $\A$ of possible realizations of the data, called 
{\em scenarios}.  
An important and widely used model is the {\em 2-stage recourse model}, where one seeks to
take actions both before and after the data has been realized (stages I and II) so as to
minimize the expected total cost incurred. Many applications come under this setting. An
oft-cited prototypical example is {\em 2-stage stochastic facility location}, wherein one
needs to decide where to set up facilities to serve clients. 
The client-demand pattern is uncertain, but one does have some statistical
information about the demands. One can open some facilities initially, given only the
distributional information about demands; after a specific demand pattern is realized
(according to this distribution), one can take additional recourse actions such as opening
more facilities incurring their recourse costs. The recourse costs are usually higher
than the first-stage costs, as they may entail making decisions in rapid reaction to
the observed scenario (e.g., deploying resources with smaller lead time). 

An issue with the above 2-stage model, which is a common source of criticism, 
is that the distribution modeling the uncertainty is itself often imprecise!
Usually, one models the distribution to be statistically
consistent with some historical data, so we really have a {\em collection} of
distributions, and a more robust approach is to 
{\em hedge against the worst possible distribution}. This gives rise to the
{\em distributionally robust 2-stage model}: 
the setup is similar to that of the 2-stage model, but we now have a collection $\dcol$ of
probability distributions; our goal is to minimize the maximum expected total cost with
respect to a distribution in $\dcol$. Formally, if $X\sse\Rplus^m$ is the set of
first-stage actions and the cost associated with $x\in X$ is $c^\T x$, 
we want to solve the following problem:

\medskip
\noindent \hspace*{-2ex}
\begin{minipage}[t]{0.45\textwidth}
\vspace*{-5ex}
\begin{equation}
\min_{x\in X}\quad c^\T x+\max_{q\in\dcol}\E[A\sim q]{g(x,A)} \tag{DRO} \label{drprob} 
\end{equation}
\end{minipage}
\qquad
where \ \ $g(x,A):=\min_{\text{second-stage actions }z^A}\bigl(\text{cost of }z^A\bigr)$.
\smallskip

Distributionally robust (\dr) stochastic optimization 
is a versatile approach dating back
to~\cite{Scarf58} that has (re)gained interest 
recently in the Operations Research literature, where it is 
sometimes called {\em data-driven} or {\em ambiguous} stochastic optimization 
(see, e.g.,~\cite{GaoK,BertsimasSZ,VanparysEK,EsfahaniK17} and their references). 
The \dr 2-stage model 
also serves to 
nicely interpolate between the extremes of:  
(a) 2-stage stochastic optimization, which optimistically assumes that one knows
the underlying distribution $p$ precisely (i.e., $\dcol=\{p\}$); and  
(b) 2-stage robust optimization, which abandons the distributional view and seeks to
minimize the maximum cost incurred in a scenario, 
thereby adopting the overly
cautious approach of being robust against {\em every} possible scenario regardless of how 
{\em likely} it is for a scenario to materialize;
this can be captured by letting $\dcol=\{\text{all distributions over $\A$}\}$, where $\A$ is the scenario-collection
in the 2-stage robust problem. 
Both extremes can lead to suboptimal decisions: with stochastic optimization, 
the optimal solution for a specific distribution $p$ could be quite suboptimal even for a
``nearby'' distribution $q$;%
\footnote{There are examples where $\|q-p\|_1\leq\ve$ but an optimal
solution for $p$ can be arbitrarily bad when evaluated under $q$.}
with robust optimization, the presence of a single scenario, however unlikely,
\nolinebreak
\mbox{may force certain decisions that are undesirable for all other scenarios.}

Despite its modeling benefits and popularity, 
to our knowledge, there has been almost no prior work on 
developing approximation algorithms for 
{\em \dr 2-stage discrete-optimization}, and, more generally, for \dr 2-stage problems
with a {\em discrete} underlying scenario set (as is the case in discrete optimization).
(The exception is~\cite{AgarwalDSY10}, which we discuss in Section~\ref{relwork}.%
\footnote{Peripherally related is~\cite{WuDX15}, who consider a version of \dr facility location,
where the uncertainty only influences the costs and not the constraints, which yields a
much-simpler and more restrictive model.})

\vspace*{-0.5ex}
\subsection{Our contributions} \label{contr}

\vspace*{-0.5ex}
We initiate a systematic study of distributionally robust discrete 2-stage problems from
the perspective of approximation algorithms. 
We develop a {\em general framework} for
designing approximation algorithms for these problems, when the collection $\dcol$ is a
ball around a central distribution $\pcent$ in the $L_\infty$ metric, $\frac{1}{2}L_1$
metric ({\em total-variation distance}), 
or {\em Wasserstein} metric (defined below). 
(Note that this still allows interpolating between stochastic and robust optimization.)
We make {\em no} assumptions about $\pcent$; it could have exponential-size support,
and our only means of accessing $\pcent$ is via a {\em sampling black box}.%
\footnote{The \dr problem remains challenging even if $\pcent$ has polynomial-size
  support, but $|\A|$ is exponential.} 
We view sampling from the black box as an elementary
operation, so our running time bounds also imply sample-complexity bounds.
Settings where $\dcol$ is a ball in some probability metric arise
naturally when one tries to infer a scenario distribution from observed data 
(see, e.g.~\cite{ErdoganI,EsfahaniK17,ZhaoG15})---%
hence, the moniker data-driven optimization---%
and it has been argued that defining $\dcol$ using the Wasserstein metric has various
benefits~\cite{EsfahaniK17,ZhaoG15,GaoK,VanparysEK}.

{\em We view the frameworks that we develop for \dr discrete 2-stage problems as our
chief contribution, and the techniques that we devise for dealing with 
Wasserstein metrics as the main feature of our work} (see Theorem~\ref{intromainthm} below).
We demonstrate the utility of our frameworks by using them 
to obtain the {\em first} approximation guarantees for the
distributionally robust versions of various discrete-optimization problems such as set 
cover, vertex cover, edge cover, 
facility location, and Steiner tree. 
The guarantees that we obtain 
are, in most cases, within $O(1)$-factors of the guarantees known for the deterministic 
(and 2-stage-\{stochastic, robust\}) counterpart of the problem (see Table~\ref{results}). 

\vspace*{-1.5ex}
\paragraph{Formal model description.}
We study the following {\em distributionally robust 2-stage model}.
We are given an underlying set $\A$ of scenarios, and 
a ball $\dcol=\{q: \pdist(\pcent,q)\leq r\}$ of distributions around a central
distribution $\pcent$ over $\A$
under some metric $\pdist$ on probability distributions. 
We can take {\em first-stage} actions $x\in X\sse\Rplus^m$ 
before a scenario is realized, incurring a first-stage cost $c^\T x$, 
and 
{\em second-stage recourse} actions $z^A$ after a scenario $A\in\A$ is realized;  
the combination
$(x,z^A)$ of first- and second-stage actions for a scenario $A$ must yield a feasible
solution for each scenario $A$. 
Using $A\sim q$ to denote that scenario $A$ is drawn according to distribution $q$,
we want to solve: 
$\min_{x\in X}\ \bigl(c^\T x+\max_{q:\pdist(\pcent,q)\leq r}\E[A\sim q]{\text{cost of $z^A$}}\bigr)$.

We use $\inpsize$ to denote the input size, which always measures the encoding size of
the underlying deterministic problem, along with the first- and second-stage costs and the
radius $r$ of the ball $\dcol$. 
It is standard in the study of 2-stage problems in the CS literature 
to assume that every first-stage action has a corresponding recourse action 
(e.g., facilities may be opened in either stage). 
We use $\ld\geq 1$ to denote an inflation parameter that measures 
the maximum factor by which the cost of a first-stage action increases 
in the second stage.  
We consider the cases where $\pdist$ is the $L_\infty$ metric,
$\|p-q\|_\infty:=\max_{A\in\A}|p_A-q_A|$;  
$\frac{1}{2}L_1$ metric, $\frac{1}{2}\|p-q\|_1:=\frac{1}{2}\sum_{A\in\A}|p_A-q_A|$, which
is the total-variation distance; or a Wasserstein metric. 

To motivate and define the rich class of Wasserstein metrics, 
note that while the choice of $\pdist$ 
is a problem-dependent modeling decision, 
we would like the ball $\dcol$ to contain other ``reasonably similar''
distributions, and 
{\em exclude} completely unrelated distributions, as
the latter could lead to overly-conservative decisions, \`a la robust optimization. 
One way of measuring the similarity between two distributions  
is to see if they 
they spread their probability mass on ``similar'' scenarios. 
Wasserstein metrics capture this viewpoint crisply, and lift an underlying 
{\em scenario metric} $\scm$ to a metric on distributions over scenarios. 
The  {\em Wasserstein distance} between two distributions $p$ and $q$ is the minimal cost 
of moving probability mass to transform $p$ into $q$,
where the cost of moving $\gm_{A,A'}$ mass from scenario $A$ to scenario $A'$ is 
$\gm_{A,A'}\scm(A,A')$. 
(Observe that $\frac{1}{2}L_1$ is the Wasserstein metric with respect to the
\nolinebreak
\mbox{discrete scenario metric: $\scmdisc(A,A')=1$ if $A\neq A'$, and $0$ otherwise.)}  

\medskip \noindent 
{\bf Example: \dr 2-stage facility location (\drsfl)}. 
As a concrete example, consider the \dr 
version of 2-stage facility location. 
We have a metric space
$\bigl(\F\cup\C,\{\dist_{ij}\}_{i,j\in\F\cup\C}\bigr)$, where $\F$ is a set of facilities,
and $\C$ is a set of clients. A scenario is a subset of $\C$ indicating the set of clients
that need to be served in that scenario. (We can model integer demands by creating
co-located clients.) 
We may open a facility $i\in\F$ in stages I or II,
incurring costs of $f_i$ and $f^\two_i$ respectively. 
In scenario $A$, we need to assign every $j\in A$ to a facility $i^A(j)$ opened in stage I
or in scenario $A$; the second-stage cost of scenario $A$ is 
$\sum_{i\text{ opened in scenario }A}f^\two_i+\sum_{j\in A}c_{i^A(j)j}$. 
The goal is to minimize 
$\sum_{i\text{ opened in stage I}}f_i+\max_{q:\pdist(\pcent,q)\leq r}\E[A\sim q]{\text{second-stage cost of }A}$.
Here $\ld:=\max\{1,\max_{i\in\F}f^\two_i/f_i\}$, and
$\inpsize$ is the encoding size of $\bigl(\F,\C,\dist,f,f^\two,r\bigr)$.

We consider two common choices for $\A$: (a) the {\em \allsets} setting: $\A:=2^\C$, 
which is the usual setting in 2-stage stochastic optimization; and 
(b) the {\em \kbounded} setting: $\A=\A_{\leq k}:=\{A\sse\C: |A|\leq k\}$, which is
the usual setup in 2-stage robust optimization for modeling an exponential number of
scenarios~\cite{FeigeJMM05,Khandekar,GuptaNR}.  
These two settings for $\A$ arise for other problems as well (where $\C$ is a suitable
ground set). 

In addition to $\pdist$ being the $L_\infty$ or $\frac{1}{2}L_1$ metrics, we can consider
various ways of defining a scenario metric $\scm$  
in terms of the underlying assignment-cost metric $\dist$ to capture that two scenarios
involving demand locations in the same vicinity are deemed similar; lifting these scenario
metrics to the Wasserstein metric over distributions yields a rich class of
\dr 2-stage facility location models. 
For instance, we can define the {\em asymmetric metric}
$\scm^\asym_\infty(A,A'):=\max_{j'\in A'}\dist(j',A)$, 
where $\dist(j',A):=\min_{j\in A}\dist_{j'j}$, which measures the maximum separation
between clients in $A'$ and locations in $A$ (the resulting Wasserstein metric $\LW$ will
now be an asymmetric metric on distributions). 
(There are other natural scenario metrics: the asymmetric metric
$\scm^\asym_1(A,A'):=\sum_{j'\in A'}\dist(j',A)$, and the symmetrizations of these
asymmetric metrics:

\vspace*{-1ex}
\paragraph{Our results.} 
Our main result pertains to Wasserstein metrics, 
which have a great deal of modeling power. 
Let $\LW$ be the Wasserstein metric with respect to a scenario metric $\scm$. 
To gain mathematical traction, it will be convenient to move to a 
relaxation of the \dr 2-stage problem where we 
allow fractional second-stage decisions. 
Let $g(x,A)$ be the optimal second-stage cost of scenario $A$ given
$x$ as the first-stage actions when we allow {\em fractional} second-stage actions. 
(We will obtain integral second-stage actions by rounding an optimal solution to $g(x,A)$
using an LP-relative $\al$-approximation algorithm for the deterministic problem.) 

We relate the approximability of the \dr problem to that of known tasks in
2-stage-stochastic- and deterministic- optimization, and the following {\em deterministic} 
problem:   

\vspace*{-3.5ex}
\begin{equation*}
g(x,y,A)\ :=\ \max_{A'\in\A} \ g(x,A')-y\cdot\scm(A,A') \qquad
\text{given a first-stage decision $x\in X$, scenario $A\in\A$, $y\geq 0$}. 
\end{equation*}

\vspace*{-1.5ex}
\noindent
Notice that 
$g(x,y,A)$ ties together three distinct  
sources of complexity in the \dr 2-stage problem: the combinatorial complexity of the
underlying optimization problem, captured by $g(x,A')$; the complexity of the
scenario set $\A$; and the complexity of the scenario metric $\scm$, captured by the 
$y\cdot\scm(A,A')$ term. 

\begin{inftheorem}[Combination of Theorems~\ref{mainsaathm} and~\ref{polythm}] 
\label{intromainthm} 
Suppose that we have the following. 
\begin{enumerate}[label=(\arabic*), nosep, labelwidth=\widthof{(3)}, leftmargin=!]
\item \label{gxyapprox} 
A {\em $(\beta_1,\beta_2)$-approximation algorithm for computing $g(x,y,A)$}, which is
an algorithm that given $(x,y,A)\in X\times\R_+\times\A$ returns $\bA\in\A$ such that
$g(x,\bA)-y\cdot\scm(A,\bA)\geq
\max_{A'\in\A}\bigl(\frac{g(x,A')}{\beta_1}-\beta_2\cdot y\cdot\scm(A,A')\bigr)$;

\item \label{lapx} 
A {\em local $\rho$-approximation} algorithm for the underlying 2-stage problem, which
is an algorithm that rounds a fractional first-stage solution to an integral one while
incurring at most a $\rho$-factor blowup in the first-stage cost, and in the cost of each
scenario; and

\item \label{dapx} 
An LP-relative $\al$-approximation algorithm for the underlying deterministic problem.
\end{enumerate} 
Then we can obtain an
$O\bigl(\al\beta_1\beta_2\rho+\ve)$-approximation for the \dr problem in time
$\poly\bigl(\text{input size},\frac{\ld}{\ve}\bigr)$. 
\end{inftheorem}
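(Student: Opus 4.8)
The plan is to decompose the argument into two independent reductions, matching the two cited theorems. The first reduction (Theorem~\ref{mainsaathm}) uses the \emph{sample average approximation}: we draw polynomially many samples from the black box for $\pcent$, form the empirical distribution $\phat$, and argue that solving the \dr problem with center $\phat$ (which now has polynomial-size support) yields a near-optimal solution for the true \dr problem with center $\pcent$. The key conceptual step here, as the abstract signals, is that a \dr 2-stage problem can itself be recast as a \emph{standard} 2-stage stochastic problem with bounded inflation factor: the inner $\max_{q:\pdist(\pcent,q)\le r}\E[A\sim q]{g(x,A)}$ is a finite-dimensional concave maximization in $q$, and rewriting it appropriately exposes an ``effective'' 2-stage objective to which the existing SAA machinery for 2-stage problems (with inflation-dependent sample bounds) applies. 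First I would make this reformulation precise for each of the three metrics ($L_\infty$, $\tfrac12 L_1$, and Wasserstein $\LW$), verify that the relevant inflation parameter is $O(\ld)$, and invoke the black-box SAA theorem to get that $\poly(\inpsize,\ld/\ve)$ samples suffice to preserve the optimum up to a $(1+\ve)$ factor (plus lower-order additive terms absorbed into the $O(\ve)$).

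The second reduction (Theorem~\ref{polythm}) handles the polynomial-scenario instance produced by SAA: we must approximately solve the fractional relaxation $\min_{x\in X}\,c^\T x+\max_{q}\E[A\sim q]{g(x,A)}$ when $\pcent=\phat$ has support of polynomial size, and then round. This is where the $(\beta_1,\beta_2)$-approximation for $g(x,y,A)$ enters. The approach is to run the ellipsoid method on the (fractional) first-stage polytope, using the max-min oracle to supply approximate separating hyperplanes / subgradients: at a candidate $x$, evaluating the \dr objective requires solving the inner distributionally-robust maximization, whose dual (via LP duality on the Wasserstein transportation constraints, with the radius constraint dualized by a multiplier $y\ge 0$) reduces exactly to computing quantities of the form $\max_{A'}\bigl(g(x,A')-y\cdot\scm(A,A')\bigr)=g(x,y,A)$ over the support scenarios $A$. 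So a $(\beta_1,\beta_2)$-approximation for $g(x,y,A)$ yields an approximate evaluation/subgradient for the outer objective, and the standard ``ellipsoid with approximate oracle'' argument (as used for 2-stage stochastic LPs) gives a fractional $x$ within $O(\beta_1\beta_2)$ of optimal. Finally, the local $\rho$-approximation converts the fractional first-stage solution to an integral one with $\rho$-factor blowup per scenario and in the first-stage cost, and the LP-relative $\al$-approximation rounds each scenario's fractional second-stage solution $g(x,A)$ to an integral recourse at $\al$ cost; composing the factors gives $O(\al\beta_1\beta_2\rho+\ve)$.

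The technical heart --- and the step I expect to be the main obstacle --- is the second reduction, specifically making the ellipsoid-based approximate optimization go through in the \dr setting. Unlike in 2-stage stochastic or robust optimization, the inner maximization is over a \emph{set} of distributions rather than a single one or a worst-case scenario, so the subgradient we extract is not of the clean form available there; the Wasserstein transportation polytope introduces the auxiliary variable $y$ (the dual multiplier on the radius constraint) which couples the combinatorial oracle's two approximation parameters multiplicatively, and one must carefully track how a $(\beta_1,\beta_2)$-guarantee on $g(x,y,A)$ propagates through the duality to a guarantee on the outer objective without blowing up. Handling the radius constraint correctly (so that the returned $q$ is genuinely in $\dcol$, up to the approximation), and ensuring the approximate separation oracle is consistent enough for ellipsoid to terminate with a valid bound, will require the ``several new ideas'' the introduction alludes to --- e.g., working with a suitably perturbed/scaled objective, or a two-level search over the multiplier $y$. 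The remaining pieces (the SAA inflation-factor bound, and the rounding composition) are, by contrast, relatively routine given the quoted 2-stage SAA theorem and the definitions of local and LP-relative approximation.
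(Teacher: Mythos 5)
Your two-part architecture (SAA reduction to polynomial support, then ellipsoid with an approximate dual-separation oracle built from the $(\beta_1,\beta_2)$-oracle, then rounding) is the paper's architecture, and your account of the second part is essentially right: the paper dualizes the transportation LP, Lagrangifies the radius constraint with a multiplier $y\geq 0$, and uses the $g(x,y,A)$ oracle as an approximate separation oracle to recover an approximate primal flow $\gamma$, which in turn yields an $\w$-subgradient and an approximate value oracle. (One detail you miss there: the $g(x,y,A)$ oracle is only available at \emph{integral} $x$, so the paper cannot generate a subgradient cut at the fractional ellipsoid center $\bx$; it instead rounds $\bx$ to $\tx\in X$ with the local $\rho$-approximation and generates the cut at $\bx$ from an approximate solution to $(\mathrm{T}_{\ppoly,\tx})$, which is why $\rho$ already appears in the guarantee of Theorem~\ref{polythm} rather than only in a final rounding step. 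The factors compose to the same $O(\al\beta_1\beta_2\rho)$ either way.)

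The genuine gap is in the SAA half, which you assert is ``relatively routine'' after verifying ``that the relevant inflation parameter is $O(\ld)$.'' That verification fails. The Lagrangian reformulation \eqref{Rdisc} is indeed a standard 2-stage problem with first-stage actions $(x,y)$ and recourse cost $g(x,y,A)$, but its inflation parameter is $\max\{\ld,\scmmax/r\}$ (Lemma~\ref{rdiscld}): to certify $g(0,0,A)\leq g(x,y,A)+\Ld(c^\T x+ry)$ one must absorb a $y\cdot\scmmax$ term into $\Ld\cdot ry$, forcing $\Ld\geq\scmmax/r$, which is not $\poly(\inpsize,\ld)$. Fixing this is the main content of Theorem~\ref{mainsaathm}: the paper truncates to ``short'' edges with $\scm(A,A')\leq\ld r$ to get a proxy problem \eqref{bQdisc} with inflation $\ld$, at the price of (a) a constant-factor (not $(1+\ve)$) pointwise gap between proxy and true objectives (Lemma~\ref{zproxy}), which is why the final guarantee is $4\beta\rho(1+O(\ve))$; (b) the fact that one approximates the \emph{true} SAA objective while the bounded-inflation 2-stage problem is the \emph{proxy}, so Theorem~\ref{ccpsaa} cannot be invoked as a black box; and (c) the long-edge contribution $\zpcentlng[p]{0}$ being nonlinear in $p$, which is handled by showing it is concave (Claim~\ref{zlongconc}) and running $O(1/\ve)$ independent SAA instances with a Markov-inequality selection. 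None of this is supplied by your plan, and without it the sample-complexity claim $\poly(\inpsize,\ld/\ve)$ does not follow.
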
 

Ingredients~\ref{lapx} and~\ref{dapx} can be obtained using
known results for 2-stage-stochastic- and deterministic- optimization; 
ingredient~\ref{gxyapprox} is the new component we need to supply to instantiate
Theorem~\ref{intromainthm} and obtain results for specific \dr 2-stage problems.
(The non-standard notion of approximation for $g(x,y,A)$ is necessary, as
the mixed-sign objective precludes any guarantee under the standard notion of 
approximation; 
see Theorem~\ref{saainapx}.)  
In various settings, 
we show that a $(\beta_1,\beta_1)$-approximation for $g(x,y,A)$ can be
obtained by utilizing results for the simpler \maxmin problem%
---$\max_{A'\in\A}g(x,A')$ (i.e., $g(x,0,A)$)---encountered in 2-stage robust 
optimization (see the proof of Theorem~\ref{gxyalg} in Section~\ref{gxyalgproof}): in
the \kbounded setting, where $\A=\A_{\leq k}$, 
this is called the \kmaxmin problem~\cite{FeigeJMM05,Khandekar,GuptaNR}. 
In particular, this 
applies to the $\frac{1}{2}L_1$-metric, 
as in this case we have $g(x,y,A)=\max\{g(x,A),\max_{A'\in\A}g(x,A')-y\}$.

\begin{infcor} \label{introl1thm}
Consider a \dr 2-stage problem where the Wasserstein metric $\LW$ is the $\frac{1}{2}L_1$
metric.  
Suppose that we have a $\beta$-approximation for the problem $\max_{A'\in\A}g(x,A')$
(given $x\in X$ as input), and
we have ingredients~\ref{lapx} and~\ref{dapx} in Theorem~\ref{intromainthm}.
Then we can obtain an
$O\bigl(\al\beta\rho+\ve)$-approximation for the \dr problem in time
$\poly\bigl(\text{input size},\frac{\ld}{\ve}\bigr)$. 
\end{infcor}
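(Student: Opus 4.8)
The plan is to derive Corollary~\ref{introl1thm} as a short special case of Theorem~\ref{intromainthm}: it suffices to convert the given $\beta$-approximation for the \maxmin problem $\max_{A'\in\A}g(x,A')$ into a $(\beta,1)$-approximation algorithm for $g(x,y,A)$ in the sense of ingredient~\ref{gxyapprox}, and then invoke Theorem~\ref{intromainthm} with $\beta_1=\beta$, $\beta_2=1$ (keeping ingredients~\ref{lapx} and~\ref{dapx} as given), which yields an $O(\al\beta\cdot 1\cdot\rho+\ve)=O(\al\beta\rho+\ve)$-approximation in time $\poly(\inpsize,\ld/\ve)$. A $(\beta,1)$-approximation is in particular a $(\beta,\beta)$-approximation since $\beta\geq 1$, but it is the sharper choice $\beta_2=1$ that prevents a spurious extra factor of $\beta$ in the final guarantee. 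The one structural fact I would use is the closed form already noted in the excerpt: since $\frac12 L_1$ is the Wasserstein metric of the discrete scenario metric $\scmdisc$, we have $g(x,y,A)=\max\bigl\{g(x,A),\ \max_{A'\in\A}g(x,A')-y\bigr\}$.

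For the algorithm, on input $(x,y,A)$ I would (i) run the $\beta$-approximation for $\max_{A'\in\A}g(x,A')$ to get a scenario $B\in\A$ with $g(x,B)\geq\frac1\beta\max_{A'\in\A}g(x,A')$; (ii) evaluate $g(x,A)$ and $g(x,B)$ exactly---each is just the optimum of the fractional second-stage LP, hence polynomial-time for the problems considered---and (iii) return $\bA:=\argmax_{A''\in\{A,B\}}\bigl(g(x,A'')-y\,\scmdisc(A,A'')\bigr)$, i.e.\ return $A$ if $g(x,A)\geq g(x,B)-y$ and $B$ otherwise. The value $g(x,\bA)-y\,\scmdisc(A,\bA)$ of the returned scenario is then at least $\max\{g(x,A),\,g(x,B)-y\}$ (using $\scmdisc\le 1$). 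To check the $(\beta,1)$-guarantee I would split on the maximizer $A'$ in the definition of $g(x,y,A)$: if $A'=A$, the target $\frac{g(x,A')}{\beta}-y\,\scmdisc(A,A')$ equals $\frac{g(x,A)}{\beta}\leq g(x,A)$; if $A'\neq A$, it equals $\frac{g(x,A')}{\beta}-y\leq\frac1\beta\max_{A''\in\A}g(x,A'')-y\leq g(x,B)-y$. Either way the target is at most the value of $\bA$, which is exactly the required inequality $g(x,\bA)-y\,\scmdisc(A,\bA)\geq\max_{A'\in\A}\bigl(\frac{g(x,A')}{\beta}-y\,\scmdisc(A,A')\bigr)$.

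I do not expect a serious obstacle; the step needing the most care is \emph{why} $A$ must be kept as an explicit candidate. The \maxmin oracle only reports a value near $\max_{A'\in\A}g(x,A')$, which can be arbitrarily larger than $g(x,A)$, and in the second branch of $g(x,y,A)$ it always carries the full penalty $y$; so precisely when the maximizer defining $g(x,y,A)$ is $A$ itself---the branch where the $\scmdisc$ penalty vanishes---the oracle's output gives no useful lower bound, and the guarantee is recovered only by comparing directly against $A$. Once that candidate is included, the mixed-sign $(\beta_1,\beta_2)$ notion of approximation (which, as discussed around Theorem~\ref{saainapx}, is unavoidable because of the $\pm$ signs in $g(x,y,A)$) is exactly what lets the $\beta$-loss on the $g$-terms coexist with a completely penalty-free ($\beta_2=1$) treatment of the metric term, and the reduction of Theorem~\ref{intromainthm} does the rest. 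This is simply the $\scmdisc$ instance of the more general reduction one would carry out for an arbitrary Wasserstein metric.
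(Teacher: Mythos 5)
Your proposal is correct and follows essentially the same route as the paper: the conversion of the \maxmin oracle into a $(\beta,1)$-approximation for $g(x,y,A)$ by comparing the oracle's output against $A$ itself is precisely the $\scmdisc$ instance of Lemma~\ref{gxyredn}~(i) (enumerating the two values of $\scm(A,A')$ and keeping the better candidate), after which Theorem~\ref{intromainthm} with $\beta_1=\beta$, $\beta_2=1$ gives the stated bound. Your observation that retaining $A$ as an explicit candidate is what saves the zero-penalty branch is exactly the point handled in the paper by including $\mu=0$ in $\SCM$.
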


Theorem~\ref{intromainthm} (to a partial extent) and Corollary~\ref{introl1thm} thus
provide novel, useful {\em reductions} from \dr 2-stage optimization to 
2-stage \{stochastic, robust\} (and deterministic) optimization. 
(For instance, \cite{GuptaNRb} 
devise approximations for the \maxmin problem in Corollary~\ref{introl1thm} (i.e.,
$\max_{A'\in\A}g(x,A')$) for scenario sets defined by matroid-independence and/or 
knapsack constraints; 
Corollary~\ref{introl1thm} enables us to export these guarantees to the
corresponding \dr 2-stage problem with the $\frac{1}{2}L_1$ metric.) 
In some cases, we can improve upon the guarantees in Theorem~\ref{intromainthm}.
For certain covering problems, \cite{ShmoysS06} showed how to obtain $\rho=2\al$ via a
decoupling idea; 
by incorporating this idea within our reduction, 
we can improve the guarantee in Theorem~\ref{intromainthm} and obtain 
an $O(\beta_1\beta_2\rho+\ve)$-approximation (see ``Set cover'' in Section~\ref{apps}). 

We demonstrate the versatility of our framework by 
applying Theorem~\ref{intromainthm} and these refinements 
to obtain guarantees for the \dr versions of set cover, vertex cover, edge cover,
facility location, and Steiner tree (Section~\ref{apps}). These constitute the majority of
problems investigated for 2-stage optimization. Our strongest results are for facility
location, vertex cover, and edge cover; for Steiner tree, we obtain results in the
unrestricted setting. 
Table~\ref{results} summarizes these results.

\begin{table}[t!]
\centering
{\small 
\begin{tabular}{|c|c|c|c|c|c|c|}
\hline
\multirow{3}{*}{\begin{tabular}{c}{\bf Problem}\end{tabular}} 
& \multicolumn{5}{c|}{\bf Wasserstein metrics} & 
\multirow{3}{*}{
{\bf \boldmath $L_\infty$, $\A=2^U$}} 
\\ \cline{2-6} 
& \multicolumn{2}{c|}{$\frac{1}{2}L_1$} & \multicolumn{2}{c|}{$\scm^\asym_\infty$ (see \S~\ref{prelim})} & 
\multirow{2}{*}{\begin{tabular}{c} General $\A$, $\scm$ \\ $\beta$=approx. for $g(x,y,A)$\end{tabular} } &  

\\ \cline{2-5}
& 
{$\A=2^U$} & 
{$\A_{\leq k}$}
& \makecell[c]{$\A=2^U$} & \makecell[c]{$\A_{\leq k}$} & & \\ \hline

\makecell[l]{Facility location} & $21.96$ & $196$ & $21.96$ & $196$ &
$O(\beta)$ & $10.98$ \\ \hline
\makecell[l]{Vertex cover} & $16$ & $101.25$ & -- & -- & $O(\beta)$ & $8$ \\ \hline
\makecell[l]{Edge cover} & $12$ & $36$ & -- & -- & $O(\beta)$ &  $6$ \\ \hline
\makecell[l]{Set cover} & $O(\log n)$ & $O(\log^2 n)$ & -- & -- & $O(\beta\log n)$ & $O(\log n)$ \\ \hline
\makecell[l]{Steiner tree} & 160 & * & 160 & * & * & *  \\ \hline
\end{tabular}
\caption{A summary of our results. Recall that $\A_{\leq k}=\{A\sse U:|A|\leq k\}$. We have
  omitted the $O(\ve)$ terms that appear in the factors. The $\scm^\asym_\infty$
  setting does not apply to vertex cover, edge cover, and set cover. The $\beta$-approximation
  for $g(x,y,A)$ is the factor $\beta_1\beta_2$ in Theorem~\ref{intromainthm}. The
  * entries are open questions.\label{results}} 
}
\end{table}

\vspace*{-1ex}
\paragraph{Technical takeaways for \dr problems with Wasserstein metrics.}
The reduction in Theorem~\ref{intromainthm} is obtained by supplementing tools from
2-stage \{stochastic, robust\} optimization 
with various additional ideas. 
Its proof consists of two main components, 
{\em both of which are of independent interest}.  
\begin{description}[style=sameline, leftmargin=0ex, topsep=0.5ex, itemsep=0.5ex, parsep=0ex]
\item[$\bullet$ {\bf Sample average approximation (SAA) for \dr problems.}]\  
In Section~\ref{saa}, we prove that a simple and appealing 
approach in stochastic optimization called the 
SAA method can be applied to {\em reduce the \dr problem to the setting where $\pcent$ has
a polynomial-size support}. 
In the SAA method, we draw some $N$ samples to estimate
$\pcent$ by its empirical distribution $\htp$, and solve the distributionally robust
problem for $\htp$. 
We show that (roughly speaking) by taking $N=\poly\bigl(\text{input size},\frac{\ld}{\ve}\bigr)$
samples, we can ensure 
that a $\beta$-approximate oracle for the SAA objective value can be combined with
a $\rho$-approximation algorithm for the SAA problem, 
to obtain an $O(\beta\rho+\ve)$-approximate solution to the original problem, with high
probability (see Theorem~\ref{mainsaathm}). 
It is well known that $\Omega(\ld)$ samples are needed even for (standard) 2-stage
stochastic problems in the black-box model~\cite{ShmoysS06}.  
Our SAA result substantially expands the scope of problems for which the SAA method is
known to be effective (with $\poly(\text{input size},\ld)$ sample size). Previously, such
results 
were known for the special case of 2-stage stochastic problems~\cite{CharikarCP05,SwamyS12} (see
also~\cite{HomemKS}), and multi-stage stochastic problems with a constant number of
stages~\cite{SwamyS12} (for $\beta,\rho=1$).

Proving our SAA result requires augmenting the SAA machinery for 2-stage stochastic
problems~\cite{CharikarCP05,SwamyS12} with various new 
ingredients to deal with the challenges presented by \dr problems. 
We elaborate in Section~\ref{saa}. 

\item[$\bullet$ {\bf Solving the polynomial-size central-distribution case.}]\ 
Complementing the above SAA result, we show how to approximately solve the \dr
2-stage problem with a polynomial-size central distribution $\ppoly$
(Section~\ref{polysolve}).  
It is natural to move to
a fractional relaxation of the problem, by replacing the first-stage set $X$ by a suitable
polytope $\Pc\supseteq X$. 
In {\em stark contrast} with 2-stage \{stochastic, robust\} optimization, 
where the fractional relaxation of the polynomial-scenario problem immediately gives a
polynomial-size LP and is therefore straightforward to solve in polytime, 
it is substantially more challenging 
to even approximately solve the fractional \dr problem with a polynomial-size central
distribution. In fact, this 
is perhaps the technically more-challenging part of the paper.
The crux of the problem is that, while $\ppoly$ has polynomial-size support, there are
(numerous) distributions $q$ in $\dcol$ 
that have {\em exponential-size support}, and one needs to optimize over such
distributions. 
In particular, if we use duality to reformulate the problem 
$\max_{q:\LW(\ppoly,q)\leq r}\E[A\sim q]{g(x,A)}$ 
as a minimization LP, this leads to an LP with 
{\em an exponential number of both constraints and variables} (see the discussion in
Section~\ref{polysolve}).
Thus, while we started with a 
polynomial-support central distribution, we have ended up in a situation similar to that 
in 2-stage stochastic or robust optimization with an {exponential} number of
scenarios!

To surmount these obstacles, we work with the {\em convex program}
$\min_{x\in\Pc}\hppoly{x}$, and solve this approximately by leveraging the ellipsoid-based
machinery in~\cite{ShmoysS06} (see Theorem~\ref{polythm}). Not surprisingly, this poses
various fresh difficulties, chiefly because we are unable to compute approximate
subgradients as required by~\cite{ShmoysS06}. 
We delve into these issues, and the ideas needed to overcome them in
Section~\ref{polysolve}. 
\end{description}

\vspace*{-1ex}
\paragraph{Approximating \boldmath $g(x,y,A)$.}
We use the following natural strategy: ``guess''
$\scmbound=\scm(A,A^*)$ for the optimal $A^*$, possibly within a $(1+\ve)$-factor, 
and solve the constrained problem (\gxcp):
$\max_{A'\in\A:\scm(A,A')\leq\scmbound}g(x,A')$. 
It is easy to show that a $\beta$-approximation to (\gxcp) 
yields a $\beta(1+\ve)$-approximation for $g(x,y,A)$ (Lemma~\ref{gxyredn}). 
In the unrestricted setting ($\A=2^U$), we will usually be able to solve (\gxcp) exactly,
exploiting the fact that our problems are covering problems.
In the \kbounded setting, we cast (\gxcp) as a \kmaxmin problem (note that $x$ is
integral), and utilize known results for this problem.  

For \drsfl, the result by~\cite{Khandekar} requires creating co-located clients,
which does not work for us. 
We illuminate a novel connection between {\em cost-sharing schemes} and \kmaxmin problems 
by showing 
that a cost-sharing scheme for FL having certain properties can be leveraged 
to obtain an approximation algorithm for \kmaxmin \{integral, fractional\} FL
(see the proof of Theorem~\ref{flasym}). 
In doing so, we also end up improving the approximation factor for \kmaxmin FL
from $10$~\cite{Khandekar} to $6$. 
Whereas cost-sharing schemes have played a role in 2-stage stochastic
optimization, in the context of the boosted-sampling approach of~\cite{GuptaPRS04}, they
have not been used previously for \kmaxmin problems. (The approach in~\cite{GuptaNR} 
has some some similar elements, but there is no explicit use of cost shares.) 
Cost-sharing schemes 
offer a useful tool for designing algorithms
for \kmaxmin problems, that we believe will find further application. 

\vspace*{-1ex}
\paragraph{\dr problems with the \boldmath $L_\infty$ metric.}
For the $L_\infty$ metric (Section~\ref{linfty}), 
we directly consider the fractional relaxation of the problem. As with the Wasserstein
metric, even for a polynomial-scenario central distribution, solving the resulting problem
is quite challenging since it (again) leads to an LP with exponentially many variables and
constraints. We move to a proxy objective that is pointwise close to the true objective,
and show that an $\w$-subgradient of the proxy objective can be computed efficiently at
any point, even for $\w=1/\poly(\text{input size})$. This enables us to use the algorithm 
in~\cite{ShmoysS06} to solve the fractional problem; rounding this solution using a local
approximation algorithm yields results for the \dr discrete 2-stage
problem. 
Table~\ref{results} lists the results we obtain for the $L_\infty$ metric as well.

\vspace*{-1ex}
\subsection{Related work} \label{relwork}
Stochastic optimization is a field with a vast amount of literature 
(see, e.g.,~\cite{BirgeL97,Prekopa95,RuszczynskiS03}), but 
its study from an approximation-algorithms perspective is relatively recent. 
Various approximation results have been obtained in the 2-stage recourse model over the
last 15 years in the CS and Operations-Research (OR) literature (see, e.g.,~\cite{SwamyS06}), 
but more general models, such as distributionally robust stochastic optimization,
have received little or no attention in this regard.

To the best of our knowledge, with the exception of~\cite{AgarwalDSY10}, which we discuss
below, there are no prior approximation algorithms for distributionally robust 2-stage
discrete optimization problems, when the number $|\A|$ of possible scenarios is (finite,
but) exponentially large (even if $\pcent$ has polynomial-size support). 
Much of the work in the stochastic-optimization and OR literature on these problems has
focused on proving suitable duality results that sometimes allow one to reformulate the
\dr problem more compactly. Moreover, in many cases, the results obtained are for 
{\em continuous} scenario spaces and with other assumptions about the recourse costs.
For instance, \cite{EsfahaniK17,GaoK,ZhaoG15,HanasusantoK18} all consider the setting where
$\dcol$ is a ball in the Wasserstein metric, and provide a closed-form description
of the worst-case distribution in $\dcol$, which is then used to reformulate the \dr
problem under further convexity assumptions of the scenario collection $\A$. 
\dr problems have gained attention in recent years due to their 
usefulness in inferring decisions from observed data while avoiding the risk of
overfitting: here $\dcol$ is used to model a class of distributions 
from which the observed data could arise (with high confidence).
Various works have advocated the use of a Wasserstein ball around the empirical
distribution $\htp$ for this purpose~\cite{EsfahaniK17,ZhaoG15,GaoK,VanparysEK}, but there
are no results proving polynomial bounds on the number of samples needed in order to
produce provably-good results. Note that these works, by definition, consider the 
setting where the central distribution has polynomial-size support.
The distributionally robust setting has also been considered for chance-constrained
problems; see, e.g.~\cite{ErdoganI} and the references therein.

The work of~\cite{AgarwalDSY10} in the CS literature on {\em correlation gap} can be
interpreted as studying distributionally robust 
discrete-optimization problems, but in a very different setting where $\dcol$ is not a
ball.  
Instead, $\dcol$ is the collection of distributions that agree with some given
expected values; the correlation gap quantifies the worst-case ratio of the \dr
objective when one chooses the optimal decisions with respect to 
the distribution in $\dcol$ that treats all random variables as independent, versus the
optimum of the \dr problem. Agrawal et al.~\cite{AgarwalDSY10} proved various $O(1)$
bounds on the correlation gap for submodular functions and subadditive functions admitting
suitable cost shares. 
Various other works (see, e.g.,~\cite{DelageY10,Popescu07} and the references therein) have
considered such moment-based collections, but again under continuity and/or convexity
assumptions about the scenario space and/or recourse costs. 

We now briefly survey the work on approximation algorithms under the stochastic- and
robust- optimization models, which the \dr model generalizes.
As noted above, various approximation results have been obtained for 2-stage, and even
multistage problems.
In the black-box model, a common approach is the SAA method, which simply consists of
solving the stochastic-optimization problem for the empirical distribution $\htp$ obtained
by sampling.
The effectiveness of this method has been analyzed both for 2-stage stochastic
problems~\cite{HomemKS,CharikarCP05,SwamyS12} and multi-stage stochastic
problems~\cite{SwamyS12}. 
The sample-complexity bound in~\cite{HomemKS} is a non-polynomial bound for general 
2-stage stochastic problems, 
whereas~\cite{CharikarCP05,SwamyS12} both obtain $\poly(\text{input size},\ld)$ bounds for
structured problems.  
The proof in~\cite{SwamyS12} applies also to structured multistage linear programs,
and~\cite{CharikarCP05} show that even approximate solutions to the 
2-stage SAA problem translate to approximate solutions to the original 2-stage
problem. We build upon the SAA machinery of Charikar et
al.~\cite{CharikarCP05}. Previously, Shmoys and Swamy~\cite{ShmoysS06} showed how to use
the ellipsoid method to solve structured 2-stage linear programs in the black-box model,
and how to round the resulting fractional solution. We utilize their machinery based on
approximate subgradients to solve the polynomial-scenario central-distribution setting.
Approximation algorithms for 2-stage problems have also been developed via combinatorial
means. The prominent technique here is the boosted sampling technique of Gupta et al.%
~\cite{GuptaPRS04}; the survey~\cite{SwamyS06} gives a detailed description of these and
other approximation results for 2-stage optimization.

Two-stage robust optimization where uncertainty is reflected in the constraints and not
the data was proposed in~\cite{DhamdhereGRS05}, who devised approximation algorithms for
various problems in the {\em polynomial-scenario setting}. Notice that it is not
clear how to even specify problems with exponentially many scenarios in the robust
model. Feige et al.~\cite{FeigeJMM05} expanded the model of~\cite{DhamdhereGRS05} by
considering what we call the \kbounded setting, where 
every subset of at most $k$ elements is a scenario. 
Subsequently,~\cite{Khandekar} and~\cite{GuptaNR} expanded the collection of results known
for 2-stage robust problems in the \kbounded setting. 
We utilize results for the closely-related \kmaxmin problem encountered in this setting in 
our work. 

We briefly discuss a few other snippets that consider intermediary approaches
between stochastic and robust optimization.
Swamy~\cite{Swamy11} considers a model for 
{\em risk-averse 2-stage stochastic optimization} that
interpolates between the stochastic and robust optimization approaches.
In the context of {\em online algorithms}, Mirrokni et al.~\cite{Mirrokni12} and
Esfandiari et al.~\cite{Esfandiari15} give online algorithms for allocation problems that
are simultaneously competitive both in a random input model and in an adversarial input
model. Finally, we note that our distributionally robust setting can be seen to be in a
similar spirit as a recent focus in algorithmic mechanism design, 
where one does not assume precise knowledge of the underlying distribution;
rather one (implicitly) has a collection of distributions, 
and one seeks to design mechanisms that work for every distribution in this collection;
see, e.g.,~\cite{HuangMR15}.

\section{Problem definitions, and our general class of \dr 2-stage problems}
\label{prelim}

Recall that we consider settings where we have a ball 
$\dcol=\{q: \pdist(\pcent,q)\leq r\}$ of distributions (over the scenario-collection $\A$)
around a central distribution $\pcent$ under some metric $\pdist$ on distributions, and we
seek to minimize the maximum expected cost with respect to a distribution in $\D$.  
As mentioned earlier, we make no assumptions about $\pcent$, and 
only require the ability to draw samples from $\pcent$. 
The metrics that we consider for $\pdist$ 
are the $L_\infty$ metric, 
$\frac{1}{2}L_1$ metric, 
and the Wasserstein metric. 
We now define Wasserstein metrics precisely.

\begin{definition}[{\bf Wasserstein (a.k.a transportation or earth-mover) distance}] 
\label{transdef}
The Wasserstein distance between two probability distributions $p$ and $q$ over $\A$ is
defined with respect to an underlying metric $\scm$ on $\A$. 
A {\em transportation plan} or {\em flow} from $p$ to $q$ is a vector
$\gamma\in\Rplus^{\A \times \A}$ such that:
(i) $\sum_{A' \in \A}\gamma_{A,A'} = p_A$ for all $A \in \A$; and
(ii) $\sum_{A \in \A} \gamma_{A, A'}=q_{A'}$ for all $A' \in \A$. 
The {\em Wasserstein distance} between $p$ and $q$, denoted $\LW(p,q)$, is the minimum
value of $\sum_{A,A'}\gm_{A,A'}\scm(A,A')$ over all transportation plans from $p$ to $q$. 

If $\scm$ is an asymmetric metric, then $\LW$ is an asymmetric metric; if $\scm$ is a
pseudometric---i.e., $\scm$ satisfies the triangle inequality but $\scm(A,A')$ could be $0$
for $A\neq A'$---then so is $\LW$. 
\end{definition}

In Section~\ref{apps}, 
we consider the \dr versions of set cover (and some special cases), facility location, and
Steiner tree. \dr 2-stage facility location (\drsfl) was defined in Section~\ref{contr};
we define the remaining problems below, and then discuss the general class of \dr 2-stage
problems to which our framework applies.  
Recall that $\inpsize$ denotes the input size.

\begin{enumerate}[label=$\bullet$, topsep=0.5ex, itemsep=0.5ex,
    labelwidth=\widthof{$\bullet$}, leftmargin=!] 
\item {\bf \dr 2-stage set cover (\drssc).} 
We have a collection $\Sc$ of subsets 
over a ground set $U$. A scenario is a subset of $U$ and specifies the set of elements to
be covered in that scenario.  
We may buy a set $S\in\Sc$ in either
stage, incurring costs of $c_S$ and $c^\two_S$ in stages I and II respectively. 
The sets chosen in stage I and in each scenario $A$ must together cover $A$. 
The goal is to choose some first-stage sets
$\Sc^\one\sse\Sc$ and sets $\Sc^A\sse\Sc$ in each scenario $A$ 
so as to minimize
$\sum_{S\in\Sc^\one}c_S+\max_{q:\pdist(\pcent,q)\leq r}\E[A\sim q]{\sum_{S\in\Sc^A}c^\two_S}$. 

We have $\ld:=\max\{1,\max_{S\in\Sc}c^\two_S/c_S\}$, and 
$\inpsize$ is the encoding size of $\bigl(U,\Sc,c,c^\two,r\bigr)$.
We consider the \allsets ($\A=2^U$) and \kbounded ($\A=\{A\sse U:|A|\leq k\}$)
settings. Different scenarios could be quite unrelated, so there does not seem to be a
natural choice for a (non-discrete) scenario-metric; we therefore consider (balls in) the 
$L_\infty$ or $\frac{1}{2}L_1$ metrics. 

\item {\bf \dr 2-stage Steiner tree (\drst).} 
We have a complete graph $G=(V,E)$ with metric edge costs $\{c_e\}_{e \in E}$, root 
$s\in V$, and inflation factor $\ld \ge 1$. A scenario is a subset of nodes 
$A \subseteq V$ (called  {\em terminals}) specifying the nodes that need to be connected
to $s$. We may buy an edge $e \in E$ in stages I or II, incurring costs $c_e$ or
$c^\two_e=\ld c_e$ respectively. The union of the edges $F\sse E$ bought in stage I, and
$F^A\sse E$ bought in scenario $A$, must connect all nodes in $A$ to $s$, and 
we want to minimize 
$\sum_{e\in F}c_e+\max_{q:\pdist(\pcent,q)\leq r}\E[A\sim q]{\sum_{e\in F^A}c^\two_e}$.
(With non-uniform inflation factors for different edges, even 2-stage stochastic Steiner
tree becomes at least as hard as {\em group Steiner tree}~\cite{Ravi:2004gp}.)

Here $\inpsize$ is the encoding size of $(G,c,r)$. 
We obtain results in the \allsets setting, and leave the \kbounded setting for future
work. As with \drsfl, in addition to the $L_\infty$ and $\frac{1}{2}L_1$ metrics, we can
consider scenario metrics defined using $c$ (e.g., $\scm^\asym_\infty$) and the resulting
Wasserstein metrics. 
\end{enumerate}

\paragraph{A general class of \dr 2-stage problems.}
Abstracting away the key properties of \drsfl, \drssc, \drst, we 
now define the generic \dr 2-stage problem that we consider.
As before, $X$ denotes the finite first-stage action set of the discrete problem.
It will be convenient to consider 
the natural fractional relaxation of the \dr problem obtained by 
enlarging the discrete second-stage action set and $X$ to suitable polytopes.
Recall that $g(x,A)$ is the optimal second-stage cost of scenario $A$ given $x$ as the
first-stage decision, when we allow fractional second-stage actions. 
Let $\Pc\sse\Rplus^m$ denote the polytope specifying the fractional first-stage 
decisions, 
with $X=\Pc\cap\Z^m$. 
(For example, for \drssc, $g(x,A)$ is the optimal value of a set-cover LP 
where we may buy sets fractionally in the second stage, and $\Pc=[0,1]^m$.) 
One benefit of moving to the fractional relaxation is that, for every scenario $A$,
$g(x,A)$ is a convex function of $x$, whose value and subgradient can be exactly computed.    

\begin{defn} \label{apsgrad}
Let $f:\R^m\rightarrow\R$ be a function. We say that $\sgr\in\R^m$ is a {\em subgradient} of $f$
at $u\in\R^m$ if we have $f(v)-f(u)\geq\sgr\cdot(v-u)$ for all $v\in\R^m$.
Given $S\sse\R^m$, we say that $\hsgr$ is an {\em $(\w,S)$-subgradient} of $f$ at the point
$u\in S$ if for every $v\in S$, we have $f(v)-f(u)\geq\hsgr\cdot(v-u)-\w f(u)$. 
We abbreviate $(\w,\Pc)$-subgradient to $\w$-subgradient.
\end{defn}

Following~\cite{CharikarCP05,ShmoysS06,SwamyS12}, we 
consider the following generic \dr 2-stage problem \eqref{Qdisc} with discrete first-stage
set $X$, and its
(further) fractional relaxation \eqref{Qpcentral}, and require that they satisfy properties
\ref{p1}--\ref{p6} listed below.
Let $\|u\|$ denote the $L_2$-norm of $u$. 

\vspace*{-1.5ex}
\noindent
\begin{minipage}{0.6\textwidth}
\begin{equation}
\min_{x \in X} \quad 
\hpcentral{x} := c^\T x + \max_{q:\pdist(\pcentral, q) \le r}\E[A \sim q]{g(x,A)} 
\tag{\Qdisc} \label{Qdisc}
\end{equation}
\end{minipage}
\qquad
\begin{minipage}{0.3\textwidth}
\begin{equation}
\min_{x\in\Pc} \hpcentral{x} \tag{\Qpcentral} \label{Qpcentral}
\end{equation}
\end{minipage}

\medskip

In proving their SAA result for 2-stage stochastic problems, \cite{CharikarCP05} 
define properties \ref{p1}, \ref{p2} below to capture the fact that every first-stage
action has a corresponding recourse action that is more expensive by a bounded factor, and  
hence, it is always feasible to not take any first-stage actions. 

\begin{enumerate}[label=(P\arabic*), topsep=0.5ex, itemsep=0.25ex, parsep=0ex, labelwidth=\widthof{(P6)},
    leftmargin=!] 
\item \label{p1}
$\bo\in X$, $c\geq\bo$, $\log|X|=\poly(\inpsize)$, 
and $0\leq g(x,A)\leq g(\bo,A)$ for all $x\in\Pc, A\in\A$.

\item \label{p2}
We know an {\em inflation parameter} $\ld\geq 1$ such that 
$g(\bo,A)\leq g(x,A)+\ld c^\T x$ for all $x\in\Pc, A\in\A$. 
\end{enumerate}

Since we apply the ellipsoid-based machinery in~\cite{ShmoysS06} to solve the
fractional problem with a polynomial-size central distribution, we need
bounds on the feasible region $\Pc$ in terms of enclosing and enclosed balls; this is 
captured by \ref{p3}, which is directly lifted from~\cite{ShmoysS06}. 
Note that the vast majority of 2-stage problems (including \drsfl, \drssc, \drst)
involve $\{0,1\}$ decisions, with $X=\{0,1\}^m$ and so $\Pc=[0,1]^m$, so \ref{p3} is
readily satisfied.
As in~\cite{ShmoysS06}, we need to be able to compute the value and subgradient of 
the recourse cost $g(x,A)$, which is a benign requirement since $g(x,A)$ is the optimal 
value of a polytime-solvable LP in all our applications. 
Whereas~\cite{ShmoysS06} define a syntactic class of 2-stage stochastic LPs 
and show (implicitly) that they satisfy this requirement, 
we explicitly isolate this requirement in \ref{p4}, \ref{p5}. 

\begin{enumerate}[label=(P\arabic*), start=3, topsep=0.5ex, itemsep=0.25ex, parsep=0ex,
    labelwidth=\widthof{(P6)}, leftmargin=!] 
\item \label{p3}
We have positive bounds $R$ and $V\leq 1$ such that 
$\Pc\sse B(\bo,R):=\{x:\|x\|\leq R\}$ and $\Pc$ contains a ball of radius $V$ such that
$\ln\bigl(\frac{R}{V}\bigr)=\poly(\inpsize)$. 

\item \label{p4}
For every $A\in\A$, $g(x,A)$ is convex over $\Pc$, and can be efficiently computed for
every $x\in\Pc$.  

\item \label{p5}
For every $x\in\Pc, A\in\A$, we can efficiently compute a subgradient $\sgr$ of
$g(x,A)$ at $x$ with $\|\sgr\|\leq K$, where $\ln K=\poly(\inpsize)$.
Hence, the Lipschitz constant of $g(x,A)$ is at most $K$ (due to
Definition~\ref{apsgrad}).
\end{enumerate}

Finally, 
we need the following additional mild condition.
 
\begin{enumerate}[label=(P\arabic*), start=6, topsep=0.5ex, itemsep=0.25ex, parsep=0ex,
    labelwidth=\widthof{(P6)}, leftmargin=!] 
\item \label{p6}
When $\pdist$ is the Wasserstein metric with respect to a scenario metric $\scm$, we
know $\tau\geq 1$ with $\ln\tau=\poly(\inpsize)$ such that 
$g(x,A')-g(x,A)\leq\tau\cdot\scm(A,A')$ for all $x\in\Pc$ and all $(A,A')$ with
$\scm(A,A')>0$.
\end{enumerate}

\smallskip \noindent
As noted above, \ref{p1}--\ref{p5} are gathered from~\cite{CharikarCP05,ShmoysS06},
and hold for all the 2-stage problems considered in the CS literature 
(see~\cite{SwamyS12,DhamdhereGRS05,FeigeJMM05,Khandekar,GuptaNR});
\ref{p6} is a new requirement, but 
is also rather mild and holds for all the problems we consider.
\ref{p1}, \ref{p2} 
and \ref{p6} are
used to prove that SAA works for the \dr problem under the Wasserstein metric
(Section~\ref{saa}). \ref{p3}--\ref{p5} pertain to the fractional relaxation, and are
utilized to show that one can efficiently solve the SAA problem approximately
(Section~\ref{polysolve}).  

A solution to \eqref{Qdisc} needs to be rounded to yield integral second-stage actions: 
any LP-relative $\al$-approximation algorithm for the deterministic version of the
problem can be used to obtain recourse actions for each scenario $A$ having cost at
most $\al\cdot g(x,A)$. 
To round a fractional solution to \eqref{Qpcentral}, 
we utilize a local approximation algorithm for the 2-stage problem: we say that
$\alg$ is a {\em local $\rho$-approximation algorithm} for \eqref{Qpcentral} if, given any
$x\in\Pc$, it returns an integral solution $\tx\in X$ and implicitly specifies integral
recourse actions $\tz^A$ for every $A\in\A$, such that $c^\T\tx\leq\rho(c^\T x)$
and $\text{(cost of $\tz^A$)}\leq\rho g(x,A)$ for all $A\in\A$. 
An $\al$-approximate solution to \eqref{Qpcentral} combined with a local
$\rho$-approximation yields an $\al\rho$-approximate solution to the discrete \dr
2-stage problem. 
Local approximation algorithms exist for various 2-stage problems%
---e.g., set cover, vertex cover, facility location~\cite{ShmoysS06}---%
with approximation factors that are comparable to the approximation factors
known for their deterministic counterparts.

\section{Distributionally robust problems under the Wasserstein metric} \label{wasserstein}
We now focus on the \dr 2-stage problem \eqref{Qdisc} when $\pdist$ is the
Wasserstein metric $\LW$ with respect to a metric $\scm$ on scenarios. 
Plugging in the definition of $\LW$ (with respect to scenario metric $\scm$),
we can rewrite \eqref{Qdisc} as follows.

\vspace*{-3.5ex}
\noindent
\begin{minipage}[t]{0.6\textwidth}
\leqnomode
\begin{equation}
\tag{\Qdisc}
\min_{x \in X}\ \hpcentral{x} := c^\T x + \zpcentral{x}, 
\qquad \text{where}\ \zpcentral{x}\ :=\ 
\end{equation}
\reqnomode
\end{minipage}
\hspace*{-0.4in}
\begin{minipage}[t]{0.45\textwidth}
\begin{alignat}{2}
\max & \ \ & \sum_{A,A'}\gm_{A,A'}&g(x,A') \tag{T${}_{\pcent,x}$} \label{zdefn} \\
\text{s.t.} && \sum_{A'} \gm_{A,A'} & \leq \pcent_A \quad \ \forall A\in\A 
\label{pbnd} \\
&& \sum_{A,A'}\scm(A,A')&\gm_{A,A'} \leq r \label{rbnd} \\[-2ex]
&& \gm & \geq 0. \label{noneg}
\end{alignat}
\end{minipage} 

\medskip
Let $\iopt:=\min_{x\in X}\hpcentral{x}$ denote the optimal value of \eqref{Qdisc}.
We note that a naive, simplistic approach that ignores 
the uncertainty in the underlying distribution,
and only considers the central distribution $\pcent$, 
yields (expectedly) poor bounds. 
Suppose $\bx$ is an $\al$-approximate solution for the 2-stage problem 
$\min_{x\in X}\bigl(c^\T x+\E[A\sim\pcent]{g(x,A)}\bigr)$. 
Given \ref{p6}, one can show that 
$\zpcentral{\bx}\leq\E[A\sim\pcent]{g(\bx,A)}+\tau\cdot r$ (and is at least
$\E[A\sim\pcent]{g(\bx,A)}$), which implies $\hpcentral{\bx}\leq\al\cdot\iopt+\tau\cdot r$,
but this is too weak a guarantee since $\tau\cdot r$ could be quite large
compared to $\iopt$. 

In Section~\ref{saa}, we work with \eqref{Qdisc} and show that the
SAA approach can be used to reduce 
to the case where the central distribution has {\em polynomial-size} support. 
In Section~\ref{polysolve}, we show how to approximately solve the polynomial-size
support case by applying the ellipsoid method to its (further) relaxation
\eqref{Qpcentral}, where we replace
$X$ with $\Pc$. 
Here, we utilize a local approximation algorithm to move from $\Pc$ to $X$, and thereby 
interface with, and complement, the SAA result for \eqref{Qdisc} proved in
Section~\ref{saa}. 
This result applies more generally, even when $\scm$ is not a metric; we only require
that $\scm(A,A)=0$ for all $A\in\A$. (If $\scm$ is not a metric, the Wasserstein
distance with respect to $\scm$ need not yield a metric on distributions.)

In Section~\ref{apps}, we consider various combinatorial-optimization
problems, and utilize the above results in conjunction to obtain the {\em first}
approximation results for the \dr versions of these problems.

\subsection{A sample-average-approximation (SAA) result for distributionally robust problems} 
\label{saa} 
The SAA approach is the following simple, intuitive
idea: draw some $N$ samples from $\pcent$, estimate $\pcent$ by the empirical distribution
$\htpcent$ induced by these samples, and solve the {\em SAA problem} (\Qdisc[\htpcent]).
We prove the following SAA result. 
For any $\ve\leq\frac{1}{3}$, if we construct $O\bigl(\frac{1}{\ve}\bigr)$ 
SAA problems, each using 
$\poly\bigl(\inpsize,\frac{\ld}{\ve},\log(\frac{1}{\kp})\bigr)$ independent samples, 
and if we have a $\beta$-approximation algorithm for computing the objective value of the SAA
problem at any given point, then we can utilize $\rho$-approximate solutions to these SAA
problems to obtain a solution $\hx\in X$ satisfying 
$\hpcentral{\hx}\leq 4\beta\rho\bigl(1+O(\ve)\bigr)\cdot\iopt+2\beta\rho\kp$
with high probability; 
Theorem~\ref{mainsaathm} gives the precise statement. 

The proof has several ingredients. 
There are two main approaches~\cite{CharikarCP05,SwamyS12} for showing that the SAA method
with a polynomial number of samples works for stochastic-optimization problems. 
Charikar et al.~\cite{CharikarCP05} prove the following SAA result for 2-stage problems.
\begin{theorem}[\cite{CharikarCP05}] \label{ccpsaa}
Consider a 2-stage problem \stocp\textnormal{:}
\hfill $\min_{x\in\tX}\ \bigl(\fp{x}:=\tc^\T x+\E[A\sim p]{\tg(x,A)}\bigr)$, \hfill 
with scenario set $\tA$, where $(\tX,\tc,\tg,\tA)$ satisfy \ref{p1}, \ref{p2} with
inflation parameter $\Ld$. 
With probability at least $1-\dt$, any optimal solution to the SAA problem constructed
using $\poly\bigl(\log|\tX|,\frac{\Ld}{\ve},\log(\frac{1}{\dt})\bigr)$ samples is a
$(1+\ve)$-approximate solution to \stocp. 
More generally, 
there is a way of using an $\al$-approximation algorithm for the SAA problem, 
in conjunction with 
a $\beta$-approximate objective-value oracle for the SAA problem,
to obtain an $\bigl(\al\beta+O(\ve)\bigr)$-approximate solution to
\stocp with high probability. 
\end{theorem}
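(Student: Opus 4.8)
The empirical objective $\fp[\htpcent]{\cdot}$ is \emph{not} a reliable pointwise estimate of $\fp{\cdot}$: rare, very expensive scenarios make $\E[A\sim p]{\tg(x,A)}$ heavy-tailed, and no polynomial number of samples can tame this. So the plan is the scenario-bucketing strategy of~\cite{CharikarCP05}: fix a threshold $M$ and call $A\in\tA$ \emph{heavy} if $\tg(\bo,A)>M$ and \emph{light} otherwise; decompose $\fp{x}$ into a light part $\tc^\T x+\E[A\sim p]{\tg(x,A)\bon\{A\text{ light}\}}$ and a heavy part $\E[A\sim p]{\tg(x,A)\bon\{A\text{ heavy}\}}$, and likewise decompose the SAA objective using $\htpcent$. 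I would take $M=\Theta(\Ld/\ve)$ times a lower bound on $\OPT$ (obtained by geometric guessing, or replaced by a per-solution truncation level $\Ld\,\tc^\T x/\ve$). The point of the split is that empirical estimation will be reliable on the light part, while the heavy part will turn out to be essentially independent of $x$.

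\emph{The heavy part is decision-robust.} By \ref{p1}, $0\le\tg(x,A)\le\tg(\bo,A)$, and by \ref{p2}, $\tg(x,A)\ge\tg(\bo,A)-\Ld\,\tc^\T x$; hence for heavy $A$ and any $x$ with $\tc^\T x=O(\OPT)$ one has $\tg(x,A)\in[(1-\ve)\tg(\bo,A),\tg(\bo,A)]$. So the heavy part of $\fp{x}$ lies within a $(1\pm\ve)$ factor of $\E[A\sim p]{\tg(\bo,A)\bon\{A\text{ heavy}\}}$, which does not depend on $x$, and the same holds on the SAA side with $\htpcent$. Therefore, up to a $(1\pm O(\ve))$ factor and an additive term that is constant in $x$, minimizing $\fp{\cdot}$---and minimizing the SAA objective---is the same as minimizing the light part alone.

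\emph{Concentration for the light part, and composition.} The light contribution $\tg(x,A)\bon\{A\text{ light}\}$ lies in $[0,M]$, so a Hoeffding/Bernstein bound and a union bound over $\tX$ (using $\log|\tX|=\poly$) give that with $N=\poly\bigl(\tfrac{M}{\OPT},\tfrac1\ve,\log|\tX|,\log\tfrac1\dt\bigr)=\poly\bigl(\log|\tX|,\tfrac\Ld\ve,\log\tfrac1\dt\bigr)$ samples, the empirical and true light parts differ by at most $\ve\cdot\OPT$ for all $x\in\tX$ simultaneously, with probability $\ge1-\dt$. On that event, combined with the previous paragraph, the SAA objective and $\fp{\cdot}$ agree up to a $(1\pm O(\ve))$ factor and $O(\ve)\OPT$ additive error over all relevant $x$; hence any $\htpcent$-optimal solution $\hx$ satisfies $\fp{\hx}\le(1+O(\ve))\OPT$, which is the first assertion. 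For the general statement, run the $\al$-approximation algorithm for the SAA problem; since it accesses the SAA objective only through the $\be$-approximate oracle, it returns $\hx$ whose SAA value is at most $\al\be$ times the SAA optimum, and then $\fp{\hx}\le(1+O(\ve))\al\be\cdot\OPT$, an $(\al\be+O(\ve))$-approximation.

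\emph{Main obstacle.} The delicate step is calibrating $M$. It must be large enough for the decision-robustness argument, but---because $M\cdot\Pr_p[A\text{ heavy}]\le\E[A\sim p]{\tg(\bo,A)\bon\{A\text{ heavy}\}}=O(\OPT)$ forces the heavy probability mass to shrink as $M$ grows---it must also be tuned so that $N$ samples see essentially no heavy scenario, so that the $p$-versus-$\htpcent$ discrepancy on heavy scenarios stays negligible. Reconciling these two pressures while keeping $N=\poly(\Ld/\ve,\dots)$, and handling the a priori unknown $\OPT$, is where the real work of the proof lies; the remaining ingredients are the concentration bound and the structural inequalities \ref{p1}--\ref{p2}.
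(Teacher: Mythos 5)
Your decomposition into heavy and light scenarios, the decision-robustness of the heavy part via \ref{p1}--\ref{p2}, and the Hoeffding-plus-union-bound concentration of the light part are indeed the opening moves of the argument in~\cite{CharikarCP05}. But the step where you conclude that ``the SAA objective and $\fp{\cdot}$ agree up to a $(1\pm O(\ve))$ factor and $O(\ve)\OPT$ additive error'' is false, and it is precisely the crux. What you have shown is that the heavy part of $\fp{x}$ is close to the constant $c_1:=\E[A\sim p]{\tg(\bo,A)\bon\{A\text{ heavy}\}}$ and the heavy part of the SAA objective is close to the constant $c_2:=\E[A\sim \htpcent]{\tg(\bo,A)\bon\{A\text{ heavy}\}}$; you have not shown $c_1\approx c_2$, and in general they are not close. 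A single sampled heavy scenario of cost $T$ contributes $T/N$ to $c_2$ with $T$ unbounded in terms of $\OPT$, so $c_2$ is an unbiased but arbitrarily high-variance estimator of $c_1$. Your proposed escape---tune $M$ so that the $N$ samples see no heavy scenario---cannot work: the light-part concentration forces $N=\Omega\bigl((M/(\ve\OPT))^2\bigr)=\Omega(\Ld^2/\ve^4)$, while avoiding heavy samples would require $N\ll 1/p^h\approx \Ld/\ve$. These requirements are contradictory, and heavy scenarios \emph{will} typically be sampled.

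The actual resolution has two parts, both absent from your writeup. For the first assertion (exact SAA minimization), one never compares $c_1$ with $c_2$: writing the optimality inequality $\hat f(\hx)\le \hat f(x^*)$ and substituting the heavy-part approximations, the constant $c_2$ appears on \emph{both} sides and cancels up to an $O(\ve)\tc^\T\hx$ error, leaving an inequality purely about light parts, which transfer between $\htpcent$ and $p$ by concentration. For the ``more generally'' assertion this cancellation fails, since $\hat f(\hx)\le\al\be\,\hat f(x^*)$ leaves a residual $(\al\be-1)c_2$. This is why the procedure constructs $O\bigl(\frac{1}{\ve}\log\frac{1}{\dt}\bigr)$ \emph{independent} SAA replicas: since $\E{c_2}=c_1=O(\OPT)$, Markov's inequality guarantees that with high probability at least one replica has $c_2\le(1+\ve)c_1$, and the $\be$-approximate value oracle is then used to \emph{select} a good replica (its role is not merely to let the $\al$-approximation algorithm run). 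This multi-replica-plus-Markov-plus-selection mechanism is exactly the structure the present paper generalizes in Theorem~\ref{mainsaathm} and Appendix~\ref{mainsaaproof} (where the analogous nonlinear term $\zpcentlng[p]{0}$ is handled by a concavity argument before Markov is applied); without it, your proof does not go through.
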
 

Note that \eqref{Qdisc} is {\em not} a standard 2-stage stochastic-optimization problem
because constraint \eqref{rbnd} couples the various scenarios, 
which prevents us from applying Theorem~\ref{ccpsaa} to \eqref{Qdisc}. 
The SAA result in Swamy and Shmoys~\cite{SwamyS12} applies to the fractional relaxation of 
the problem, 
and works whenever the objective functions of the SAA and original problems satisfy 
a certain ``closeness-in-subgradients'' property. 
A subgradient of $\hpcentral{\cdot}$ at a point $x\in\Pc$ is obtained from 
the optimal distribution $q$ 
to the inner maximization problem in \eqref{Qdisc}.
This is however an exponential-size object and utilizing this to prove
closeness-in-subgradients seems quite daunting.

Our first insight is that 
we can decouple the scenarios by {\em Lagrangifying} constraint
\eqref{rbnd} using a dual variable $y\geq 0$. 
By standard duality arguments, this leads to the following reformulation of
\eqref{Qdisc}.
\begin{gather}
\min_{x\in X} \quad \Bigl[c^\T x +
\underbrace{\min_{y\geq 0} \Bigl(ry+
\max\ \Bigl\{\sum_{A,A'}\gm_{A,A'}(g(x,A')-y\cdot\scm(A,A')): \ \ 
\gm\geq 0, \ \ 
\sum_{A'}\gm_{A,A'}\leq\pcent_A\ \ \forall A\in\A\Bigr\}\Bigr)}_{\text{\small{$\zpcentral{x}$}}}\Bigr]
\notag \\
\hspace*{-0.2in}
\text{which simplifies to} \quad
\min_{x\in X, y\geq 0} \ \ \hpcentral{x,y} := c^\T x+ry+
\E[A\sim\pcent]{\max_{A'\in\A}\bigl(g(x,A')-y\cdot\scm(A,A')\bigr)}. 
\tag{\Rdisc} \label{Rdisc}
\end{gather}

Recall that $g(x,y,A):=\max_{A'\in\A}\bigl(g(x,A')-y\cdot\scm(A,A')\bigr)$. 
Let $\scmmax:=\max_{A,A'}\scm(A,A')$. 
The chief benefit of the reformulation \eqref{Rdisc} is that we can view \eqref{Rdisc} as
a 2-stage problem: the first-stage action-set is $X\times\Rplus$, and the optimal
second-stage cost of scenario $A$ under first-stage actions $(x,y)$ is given by
$g(x,y,A)$. This makes it more amenable to utilize the SAA machinery developed for
2-stage problems. We can exploit
\ref{p6} to show that we may limit $y$ to the range $[0,\tau]$ in \eqref{Rdisc}, 
and use \ref{p2} to bound the inflation factor of \eqref{Rdisc}. 

\begin{lemma} \label{lem:xy_to_x} \label{xytox} \label{ybnd}
For any $x\in X$, there exists $y\in[0,\tau]$ such that $\hpcentral{x}=\hpcentral{x,y}$.
Hence, $x\in X$ is an $\al$-approximate solution to \eqref{Qdisc} iff 
$\exists y\in[0,\tau]$ such that $(x,y)$ is an $\alpha$-approximate solution to
\eqref{Rdisc}.  
\end{lemma}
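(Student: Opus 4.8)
The plan is to prove the two halves of Lemma~\ref{ybnd} in sequence: first establish the pointwise identity $\hpcentral{x}=\min_{y\in[0,\tau]}\hpcentral{x,y}$ for every fixed $x\in X$, and then deduce the claimed equivalence of approximate solutions. The heart of the matter is the first claim, and its proof has two parts: (a) for the fixed $x$, strong LP duality gives $\zpcentral{x}=\min_{y\geq 0}\hpcentral{x,y}-c^\T x$, i.e.\ that the minimum over all $y\geq 0$ of $\hpcentral{x,y}$ equals $\hpcentral{x}$; this is essentially the derivation already displayed in the passage leading to \eqref{Rdisc}, and I would just invoke it (LP strong duality applied to the inner transportation LP \eqref{zdefn} with constraint \eqref{rbnd} dualized). (b) The new content is that the minimizing $y$ can be taken in the bounded range $[0,\tau]$. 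For this I would argue that $y\mapsto\hpcentral{x,y}=c^\T x+ry+\E[A\sim\pcent]{g(x,y,A)}$ is convex (it is a sum of a linear term and an expectation of pointwise maxima of affine-in-$y$ functions), and that its right derivative at any $y\geq\tau$ is nonnegative, so the infimum over $[0,\infty)$ is attained somewhere in $[0,\tau]$.

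To see the derivative bound, fix $x$ and $A$, and consider $g(x,y,A)=\max_{A'\in\A}\bigl(g(x,A')-y\,\scm(A,A')\bigr)$. For $y\geq 0$ the maximizer $A'$ is some scenario; if $A'\neq A$ then property~\ref{p6} gives $g(x,A')-g(x,A)\leq\tau\,\scm(A,A')$, hence $g(x,A')-y\,\scm(A,A')\leq g(x,A)+(\tau-y)\scm(A,A')\leq g(x,A)$ whenever $y\geq\tau$; and taking $A'=A$ (recall $\scm(A,A)=0$) yields exactly $g(x,A)$. So for $y\geq\tau$ we have $g(x,y,A)=g(x,A)$, a constant in $y$. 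Therefore for $y\geq\tau$, $\hpcentral{x,y}=c^\T x+ry+\E[A\sim\pcent]{g(x,A)}$ is nondecreasing in $y$ (as $r\geq 0$), so $\inf_{y\geq 0}\hpcentral{x,y}=\inf_{y\in[0,\tau]}\hpcentral{x,y}=\min_{y\in[0,\tau]}\hpcentral{x,y}$, the last equality by compactness and continuity. Combining with part (a), there exists $y\in[0,\tau]$ with $\hpcentral{x,y}=\hpcentral{x}$, and moreover $\hpcentral{x,y}\geq\hpcentral{x}$ for all $y$ (since minimizing over $y$ only helps), which is the ``hence'' direction I will need.

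For the second sentence of the lemma, I would reason as follows. Note $\iopt=\min_{x\in X}\hpcentral{x}=\min_{x\in X}\min_{y\in[0,\tau]}\hpcentral{x,y}=\min_{(x,y)\in X\times[0,\tau]}\hpcentral{x,y}$, so \eqref{Qdisc} and \eqref{Rdisc} (restricted to $y\in[0,\tau]$) have the same optimal value. If $x\in X$ is an $\al$-approximate solution to \eqref{Qdisc}, pick $y\in[0,\tau]$ with $\hpcentral{x,y}=\hpcentral{x}\leq\al\cdot\iopt$; then $(x,y)$ is $\al$-approximate for \eqref{Rdisc}. Conversely, if $(x,y)$ with $y\in[0,\tau]$ is $\al$-approximate for \eqref{Rdisc}, then $\hpcentral{x}=\min_{y'\in[0,\tau]}\hpcentral{x,y'}\leq\hpcentral{x,y}\leq\al\cdot\iopt$, so $x$ is $\al$-approximate for \eqref{Qdisc}.

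The main obstacle is really just making part (a) rigorous: one must verify that LP strong duality applies to the inner maximization in \eqref{Qdisc}—the transportation polytope intersected with the budget constraint \eqref{rbnd} is a nonempty polytope (it contains $\gm\equiv 0$ after dropping the equality constraints to inequalities, or more carefully one checks feasibility directly), and the objective is linear, so duality holds with no gap and with the dual minimum attained. I would state this cleanly and lean on the derivation already given in the text just before \eqref{Rdisc}; the monotonicity/derivative argument via~\ref{p6} is then short and is the genuinely new ingredient here, so I would present that in full while keeping the duality step terse.
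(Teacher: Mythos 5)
Your proposal follows essentially the same route as the paper: existence of a minimizing $y^*\geq 0$ comes from LP duality (the derivation of \eqref{Rdisc}), and \ref{p6} is used to show that nothing is lost by truncating to $[0,\tau]$; the paper phrases this as ``if $y^*>\tau$, replace it by $\tau$'' whereas you phrase it as monotonicity of $y\mapsto\hpcentral{x,y}$ on $[\tau,\infty)$, but the content is identical. One slip: your intermediate claim that $g(x,y,A)=g(x,A)$ for all $y\geq\tau$ is not correct in general, because \ref{p6} is only stated for pairs with $\scm(A,A')>0$, and the paper explicitly allows $\scm$ to be a pseudometric with $\scm(A,A')=0$ for $A'\neq A$ (this actually occurs for $\scm^{\asym}_\infty$, e.g.\ when $A'\subseteq A$); for such $A'$ your bound $g(x,A')-y\cdot\scm(A,A')\leq g(x,A)$ fails. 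The correct statement, which the paper derives, is $g(x,y,A)=\max_{A'\in\A:\scm(A,A')=0}g(x,A')$ for $y\geq\tau$; this is still constant in $y$, so your monotonicity argument and the rest of the proof go through unchanged after this correction.
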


\begin{proof} 
The second statement is immediate from the first one since \eqref{Qdisc} and \eqref{Rdisc}
have the same optimal values. So we focus on showing the first statement.

Consider any $x\in X$. There exists $y^*\geq 0$ such that $\hpcentral{x}=\hpcentral{x,y^*}$.
If $y^*\leq\tau$, then we are done. So suppose $y^*>\tau$. 
We argue that $\hpcentral{x^*,\tau}\leq\hpcentral{x,y^*}$. This completes the proof since
we also have $\hpcentral{x}\leq\hpcentral{x,y}$ for all $y\geq 0$.
Clearly, $c^\T x+r\tau\leq c^\T x+ry^*$. 
If $A'\in\A$ is such that $g(x,y^*,A)=g(x,A')-y^*\cdot\scm(A,A')$, then it must be that
$\scm(A,A')=0$. Otherwise, 
$g(x,A')-y^*\cdot\scm(A,A')<g(x,A')-\tau\cdot\scm(A,A')\leq g(x,A)$, 
where the last inequality follows from \ref{p6}.
This contradicts the choice of $A'$. Therefore, we have
$g(x,y^*,A)=\max_{A'\in\A:\scm(A,A')=0}g(x,A')=g(x,\tau,A)$, completing the proof.
\end{proof}

\begin{lemma} \label{rdiscld}
For the 2-stage problem \eqref{Rdisc}, 
we can set the parameter $\Ld$ in Theorem~\ref{ccpsaa} to be
$\max\bigl\{\ld,\frac{\scmmax}{r}\bigr\}$. 
\end{lemma}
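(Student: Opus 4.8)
The plan is to check directly that \eqref{Rdisc}, regarded as a $2$-stage stochastic problem with first-stage action set $X\times\Rplus$ (restricted to $X\times[0,\tau]$ via Lemma~\ref{ybnd}), first-stage cost $(x,y)\mapsto c^\T x+ry$, scenario set $\A$, and second-stage cost $g(x,y,A)=\max_{A'\in\A}\bigl(g(x,A')-y\cdot\scm(A,A')\bigr)$, satisfies properties \ref{p1} and \ref{p2} with inflation parameter $\Ld=\max\bigl\{\ld,\scmmax/r\bigr\}$; the claimed bound on $\Ld$ then follows from the role of $\Ld$ in Theorem~\ref{ccpsaa}. Note that $\scmmax<\infty$ since $\A$ is finite, and $\Ld\geq\ld\geq 1$.

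For \ref{p1}, the ``take no first-stage action'' point is $(\bo,0)$, which lies in the action set, and the cost vector $(c,r)$ is nonnegative since $c\geq\bo$ by \ref{p1} for the base problem and $r>0$. For the bound $0\leq g(x,y,A)\leq g(\bo,0,A)$: taking $A'=A$ in the defining maximum and using $\scm(A,A)=0$ gives $g(x,y,A)\geq g(x,A)\geq 0$; discarding the nonnegative penalty terms $y\cdot\scm(A,A')$ gives $g(x,y,A)\leq\max_{A'}g(x,A')\leq\max_{A'}g(\bo,A')$ by \ref{p1} for the base problem, and the right-hand side equals $g(\bo,0,A)$---which is in fact independent of $A$---since the penalty term vanishes at $y=0$. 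The remaining requirement in \ref{p1}, that the log-size of the first-stage set be polynomial, is orthogonal to the inflation bound and is handled by discretizing $y\in[0,\tau]$ at a granularity polynomially related to $\inpsize$.

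The only real content is \ref{p2}. Fix $x\in\Pc$, $y\geq 0$, $A\in\A$, and let $A^*$ attain $\max_{A'}g(\bo,A')$, so that $g(\bo,0,A)=g(\bo,A^*)$. I would chain three estimates: (i) \ref{p2} for the base problem gives $g(\bo,A^*)\leq g(x,A^*)+\ld\,c^\T x$; (ii) since $A^*$ is one candidate in the maximum defining $g(x,y,A)$, we have $g(x,y,A)\geq g(x,A^*)-y\cdot\scm(A,A^*)\geq g(x,A^*)-y\,\scmmax$, i.e.\ $g(x,A^*)\leq g(x,y,A)+y\,\scmmax$; and (iii) $y\,\scmmax=\frac{\scmmax}{r}\cdot(ry)$. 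Combining them, $g(\bo,0,A)\leq g(x,y,A)+\ld\,c^\T x+\frac{\scmmax}{r}(ry)\leq g(x,y,A)+\Ld\,(c^\T x+ry)$, which is exactly \ref{p2} for \eqref{Rdisc} with $\Ld=\max\bigl\{\ld,\scmmax/r\bigr\}$.

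There is essentially no serious obstacle here; the one mildly non-obvious move is step (iii)---charging the transportation-penalty term $y\cdot\scm(A,A^*)$ against the first-stage cost $ry$ of the Lagrange multiplier $y$, which is precisely what forces the $\scmmax/r$ term into the inflation parameter. The point worth double-checking is that the inner maximum defining $g(x,0,A)$ really collapses to $\max_{A'}g(\bo,A')$ uniformly over every scenario $A$ (so that \ref{p1}'s domination condition is scenario-independent), which is immediate once the penalty term is set to zero.
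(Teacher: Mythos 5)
Your proof is correct and follows essentially the same route as the paper: pick $A^*$ attaining $\max_{A'}g(\bo,A')$, observe it is a candidate in the maximum defining $g(x,y,A)$, apply \ref{p2} for the base problem, and charge $y\cdot\scm(A,A^*)\leq y\,\scmmax=\frac{\scmmax}{r}(ry)$ against the first-stage cost of $y$. The additional verification of \ref{p1} and the remark on discretizing $[0,\tau]$ are consistent with how the paper handles those points elsewhere.
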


\begin{proof} 
Consider any $x\in X$, $y\geq 0$, and $A\in\A$. Let $A'\in\A$ be such that $g(0,0,A)=g(0,A')$.
Then 
$$
g(0,0,A)-g(x,y,A)\leq g(0,A')-\bigl(g(x,A')-y\cdot\scm(A,A')\bigr)
\leq\ld c^\T x+y\cdot\scmmax\leq\max\Bigl\{\ld,\tfrac{\scmmax}{r}\Bigr\}(c^\T x+ry).
$$
The second inequality above follows from \ref{p2}.
\end{proof}

Given Lemmas~\ref{ybnd} and~\ref{rdiscld}, by suitably discretizing 
$[0,\tau]$, 
one can use Theorem~\ref{ccpsaa} to show that: if we construct the SAA problem   
$\min_{x\in X}\hpcentral[\htp]{x}\ \equiv\ \min_{x\in X,y\in[0,\tau]}\hpcentral[\htp]{x,y}$
using $\poly\bigl(\inpsize,\frac{\ld}{\ve},\log\tau,\frac{\scmmax}{r}\bigr)$ samples, 
and can compute (approximately) the SAA objective value $\hpcentral[\htp]{x,y}$ at any
given point, then, with high probability, one can translate an $\al$-approximate solution
to the SAA problem to an $O(\al+\ve)$-approximate solution to \eqref{Qdisc}. 
But this result does not quite suit our purposes due to various reasons. 

The term $\frac{\scmmax}{r}$ could be rather large, and is not $\poly(\inpsize,\ld)$, so 
this does not yield polynomial sample complexity.%
\footnote{The problem persists even if we utilize the closeness-in-subgradients machinery
in~\cite{SwamyS12} to the fractional version of \eqref{Rdisc}. 
This would involve estimating $\E[A\sim p]{\scm(A,\pi(x,y,A))}$ to within an $\ve r$
term, where $\pi(x,y,A)=\argmax_{A'\in\A}\bigl(g(x,A')-y\cdot\scm(A,A')\bigr)$, 
which requires $O\left(\frac{\scmmax}{\ve r}\right)$ samples.}
Moreover
it seems difficult to compute the SAA objective value $\hpcentral[\htp]{x,y}$, or even
approximate it. 
This difficulty arises because 
computing $g(x,y,A)$ 
encompasses the \nphard \kmaxmin problem encountered in 2-stage robust optimization, and 
furthermore, the mixed-sign objective in $g(x,y,A)$ makes it hard to even approximate
$g(x,y,A)$ (see Theorem~\ref{saainapx}).

We need various ideas to circumvent these issues.
We show that we can eliminate the dependence on $\frac{\scmmax}{r}$ altogether at
the expense of a slight deterioration in the approximation ratio when moving from the SAA 
to the original problem. 
The $\frac{\scmmax}{r}$ term arises because $g(0,0,A)$ might be attained by a scenario
$A'$ where $\scm(A,A')\approx\scmmax$ (see the proof of Lemma~\ref{rdiscld}). 
Our crucial second insight is that we can eliminate this and reduce the sample complexity
to $\poly(\inpsize,\ld)$,  
by specifically imposing that we
never encounter $(A,A')$ pairs with $\scm(A,A')>M:=\ld r$; 
we call such pairs 
{\em long} edges, and the remaining pairs {\em short} edges.
Any $\gm$ satisfying \eqref{rbnd} can send at most $\frac{r}{M}=\frac{1}{\ld}$ flow on
the long edges. Motivated by this, we ``decompose'' $\zpcentral{x}$ into $\zpcentshort{x}$
and $\zpcentlng{x}$, which are (roughly speaking) the contribution from the short and long
edges respectively. 
(This decomposition is akin to the division of low-
and high- cost scenarios 
used by~\cite{CharikarCP05} to prove Theorem~\ref{ccpsaa}, but there are significant
technical differences, which complicate things for us, as we discuss below.) 
We define $\zpcentshort{x}$ and $\zpcentlng{x}$ as follows.  
\begin{alignat*}{1}
\zpcentshort{x}\ &:=\ \max\ \biggl\{\sum_{A,A'}\gm_{A,A'}g(x,A'): \quad 
\eqref{pbnd}, \eqref{rbnd}, \eqref{noneg}, \ \ \gm_{A,A'}=0\ \ \text{if}\ \ \scm(A,A')>M\biggr\} \\
\zpcentlng{x}\ &:=\ \max\ \biggl\{\sum_{A,A'}\gm_{A,A'}g(x,A'): \quad 
\eqref{pbnd}, \eqref{rbnd}, \eqref{noneg}, 
\quad \sum_{A,A'}\gm_{A,A'}\leq\tfrac{1}{\ld}\biggr\}.
\end{alignat*}

\begin{lemma} \label{lem:decomposition} \label{lem:high_neg} \label{zproxy}
For every central distribution $p$, and every $x \in\Pc$, we have 
$\hpcentral[p]{x}\leq c^\T x+\zpcentshort[p]{x}+\zpcentlng[p]{0}\leq 2\hpcentral[p]{x}$.
\end{lemma}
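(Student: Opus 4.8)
The claim has two inequalities; I would prove each separately. For the lower bound $c^\T x + \zpcentshort[p]{x} + \zpcentlng[p]{0} \le 2\hpcentral[p]{x}$, the idea is that each of $\zpcentshort[p]{x}$ and $\zpcentlng[p]{x}$ is bounded above by $\zpcentral[p]{x} \le \hpcentral[p]{x}$ (since adding extra constraints to a maximization LP only decreases its value — note that \eqref{pbnd}, \eqref{rbnd}, \eqref{noneg} are already present in \eqref{zdefn}, and the ``short'' and ``long'' versions just add the constraint $\gm_{A,A'}=0$ on long edges, resp.\ $\sum \gm_{A,A'}\le 1/\ld$). So $c^\T x + \zpcentshort[p]{x} \le \hpcentral[p]{x}$. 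It then remains to check $\zpcentlng[p]{0} \le \hpcentral[p]{x}$; but $\zpcentlng[p]{0} \le \zpcentral[p]{0} \le \hpcentral[p]{0}$, and since $c \ge \bo$ and $g(0,A)\ge g(x,A)$ for all $A$ (property \ref{p1}), we get $\zpcentral[p]{0}\ge\zpcentral[p]{x}$ and, more directly, $\hpcentral[p]{0}=\zpcentral[p]{0}\ge c^\T x+\zpcentral[p]{x}=\hpcentral[p]{x}$. Adding the two bounds gives the right inequality.

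**The harder direction.** For the lower bound on the sum, i.e.\ $\hpcentral[p]{x}\le c^\T x + \zpcentshort[p]{x}+\zpcentlng[p]{0}$, equivalently $\zpcentral[p]{x}\le \zpcentshort[p]{x}+\zpcentlng[p]{0}$, I would take an optimal flow $\gm$ for \eqref{zdefn} (achieving $\zpcentral[p]{x}$) and split it as $\gm = \gm^{\short}+\gm^{\lng}$, where $\gm^{\short}$ keeps the mass on short edges ($\scm(A,A')\le M$) and $\gm^{\lng}$ keeps the mass on long edges ($\scm(A,A')> M$). Then $\gm^{\short}$ is feasible for the $\zpcentshort[p]{x}$ program: it satisfies \eqref{pbnd} (mass only decreased), \eqref{rbnd}, \eqref{noneg}, and vanishes on long edges. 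So $\sum \gm^{\short}_{A,A'}g(x,A') \le \zpcentshort[p]{x}$. For the long part, constraint \eqref{rbnd} forces $\sum_{A,A'}\gm^{\lng}_{A,A'}\cdot M \le \sum \gm^{\lng}_{A,A'}\scm(A,A') \le r$, hence $\sum \gm^{\lng}_{A,A'}\le r/M = 1/\ld$. Now I would like $\sum \gm^{\lng}_{A,A'}g(x,A') \le \zpcentlng[p]{0}$: indeed $\gm^{\lng}$ satisfies \eqref{pbnd} (mass only decreased), \eqref{rbnd}, \eqref{noneg}, and $\sum\gm^{\lng}_{A,A'}\le 1/\ld$, so it is feasible for the program defining $\zpcentlng[p]{x}$; and since $g(0,A')\ge g(x,A')$ by \ref{p1}, replacing $x$ by $\bo$ in the objective only helps, giving $\sum\gm^{\lng}_{A,A'}g(x,A')\le \sum\gm^{\lng}_{A,A'}g(0,A') \le \zpcentlng[p]{0}$. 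Summing the two bounds yields $\zpcentral[p]{x}\le\zpcentshort[p]{x}+\zpcentlng[p]{0}$, hence the left inequality.

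**Main obstacle.** The only subtle point is making sure the decomposition $\gm=\gm^{\short}+\gm^{\lng}$ really preserves feasibility for \emph{both} subprograms simultaneously — in particular that \eqref{pbnd} holds for each piece (trivially, since $\sum_{A'}\gm^{\short}_{A,A'}\le\sum_{A'}\gm_{A,A'}\le p_A$ and likewise for $\gm^{\lng}$) and that the flow-budget $1/\ld$ genuinely follows from \eqref{rbnd} restricted to long edges. I expect this to be routine, so the ``hard part'' is really just being careful that we pay for the long edges at the $x=\bo$ recourse cost rather than at $x$; that is the whole reason the decomposition is useful, since it is precisely the long edges that would otherwise force the $\scmmax/r$ dependence in the sample complexity. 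No genuinely deep step is needed; it is a clean LP-feasibility argument combined with properties \ref{p1} and \ref{p2}.
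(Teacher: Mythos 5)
Your argument for the first inequality, $\hpcentral[p]{x}\leq c^\T x+\zpcentshort[p]{x}+\zpcentlng[p]{0}$, is correct and is essentially the paper's: split an optimal flow for \eqref{zdefn} into its short and long parts, check feasibility of each piece for the respective subprogram, bound the total long flow by $r/M=1/\ld$ via \eqref{rbnd}, and use $g(x,A')\leq g(0,A')$ from \ref{p1} to charge the long part to $\zpcentlng[p]{0}$.

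The second inequality, however, has a genuine gap. You reduce it to showing $\zpcentlng[p]{0}\leq\hpcentral[p]{x}$, and for that you assert $\hpcentral[p]{0}=\zpcentral[p]{0}\geq c^\T x+\zpcentral[p]{x}=\hpcentral[p]{x}$. This does not follow from \ref{p1} and is false in general: \ref{p1} gives only $\zpcentral[p]{0}\geq\zpcentral[p]{x}$, which says nothing about the gap being at least $c^\T x$. For a counterexample, take every scenario to be null ($g(x,A)=0$ for all $x,A$); then $\hpcentral[p]{0}=0$ while $\hpcentral[p]{x}=c^\T x>0$ for any $x$ with positive cost. (Indeed, if $\hpcentral[p]{0}\geq\hpcentral[p]{x}$ always held, $x=\bo$ would be the worst possible first-stage decision, defeating the purpose of the model.) The correct route — and the reason the $1/\ld$ flow cap was built into the definition of $\zpcentlng[p]{\cdot}$ in the first place — is to use \ref{p2} rather than \ref{p1}: for any $\gm$ feasible for the long program, $\sum_{A,A'}\gm_{A,A'}g(0,A')\leq\sum_{A,A'}\gm_{A,A'}g(x,A')+\ld c^\T x\sum_{A,A'}\gm_{A,A'}\leq\sum_{A,A'}\gm_{A,A'}g(x,A')+c^\T x$, whence $\zpcentlng[p]{0}\leq\zpcentlng[p]{x}+c^\T x\leq\zpcentral[p]{x}+c^\T x=\hpcentral[p]{x}$. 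You mention \ref{p2} in passing at the end, but the place where it is actually indispensable is exactly this step, and your written argument uses the \ref{p1} inequality in the wrong direction there.
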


\begin{proof} 
We prove this by showing that: 
(i) $\zpcentral[p]{x}\le\zpcentshort[p]{x}+\zpcentlng[p]{x}\le 2\zpcentral[p]{x}$; and
(ii) $\zpcentlng[p]{x}\leq\zpcentlng[p]{0}\leq\zpcentlng[p]{x}+ c^\T x$.
Given these bounds, the upper bound on $\hpcentral[p]{x}$ follows from the upper
bounds on $\zpcentral[p]{x}$ and $\zpcentlng[p]{x}$ in parts (i) and (ii) respectively. 
For the other direction, we have 
$$ 
c^\T x+\zpcentshort[p]{x}+\zpcentlng[p]{0}
\leq c^\T x+\zpcentshort[p]{x}+\zpcentlng[p]{x}+ c^\T x
\leq 2 c^\T x+2\zpcentral[p]{x}= 2\hpcentral[p]{x} ,
$$
where the first and second inequalities follow from the second inequalities of parts (ii)
and (i) respectively.

Part (ii) follows from property \ref{p2}. For any feasible solution $\gm$ to the
optimization problem defining $\zpcentlng[p]{0}$ (and $\zpcentlng[p]{x}$), we have
$$
\sum_{A,A'}\gm_{A,A'}g(x,A')\leq\sum_{A,A'}\gm_{A,A'}g(0,A')
\leq\sum_{A,A'}\gm_{A,A'}g(x,A')+(\ld c^\T x)\sum_{A,A'}\gm_{A,A'}
\leq\sum_{A,A'}\gm_{A,A'}g(x,A')+ c^\T x.
$$

We now prove part (i). It is clear from the definition that
$\zpcentshort[p]{x},\zpcentlng[p]{x}\leq\zpcentral[p]{x}$, so the second inequality holds. 
For the first inequality, consider any feasible solution $\gm$ to \eqref{zdefn}. 
Let $\gm^\short$ be the restriction of $\gm$ to the short edges, along with $0$s for the
long edges. Similarly, let $\gm^\lng$ be the restriction of $\gm$ to the long edges,
along with $0$s for the short edges. Then $\gm^\short$ and $\gm^\lng$ are feasible
solutions to the optimization problems defining $\zpcentshort[p]{x}$ and
$\zpcentlng[p]{x}$ respectively. This yields the first inequality in (i).
\end{proof}

Given Lemma~\ref{zproxy}, 
we focus on the {\em thresholded proxy problem} \eqref{bQdisc} below, 
and its reformulation obtained (as before) by Lagrangifying \eqref{rbnd} and simplifying.  

\leqnomode
\noindent
\begin{minipage}{0.4\textwidth}
\begin{equation}
\min_{x\in X}\ \bhpcentral{x} := c^\T x+\zpcentshort{x}, \tag{\bQdisc} \label{bQdisc}
\end{equation}
\end{minipage}
\quad
\begin{minipage}{0.6\textwidth}
\begin{equation}
\tag{\bRdisc} \label{bRdisc}
\min_{x\in X,y\geq 0} \ \bhpcentral{x,y} := c^\T x+ry+
\E[A\sim\pcent]{\bg(x,y,A)}, 
\end{equation}
\end{minipage}
\reqnomode
\smallskip

\noindent
where $\bg(x,y,A):=\max_{A'\in\A:\scm(A,A')\leq M}\bigl(g(x,A')-y\cdot\scm(A,A')\bigr)$.
After suitably discretizing the $y$-interval $[0,\tau]$, we obtain that the 2-stage
problem \eqref{bRdisc} satisfies \ref{p1} and \ref{p2} 
with inflation parameter $\Ld=\ld$. 
So Theorem~\ref{ccpsaa} applied to \eqref{bRdisc} 
suggests an improved $\poly\bigl(\inpsize,\frac{\ld}{\ve}\bigr)$ sample complexity,
but two sources of difficulty remain. 

First, 
while we would like to consider the proxy problem (\bRdisc[\ppoly]), which is the SAA
version of \eqref{bRdisc}, 
we are in fact solving the true) SAA problem (\Qdisc[\ppoly]) 
approximately.
Whereas $\hpcentral[p]{x}$ and $\bhpcentral[p]{x}+\zpcentlng[p]{0}$ are pointwise close,
$\zpcentlng[p]{0}$ could be significant compared to $\zpcentral[p]{x}$ (as indicated by the
factor-$2$ loss in Lemma~\ref{zproxy}).
Therefore, 
an $\al$-approximation to (\Qdisc[\ppoly]) 
does not yield an $O(\al)$-approximation to (\bQdisc[\ppoly]) (or equivalently,
(\bRdisc[\ppoly])).  
We will in fact {\em not} be able to obtain an approximate solution to
(\bRdisc[\ppoly]),
and so it is unclear why transferring approximation guarantees from 
(\bRdisc[\ppoly]) to \eqref{bRdisc} (and hence \eqref{bQdisc}) is helpful. 
That is, the artifact we encounter is that the 2-stage SAA problem that has bounded
inflation factor is {\em not} the one that we are able to approximate.
(Note that Theorem~\ref{ccpsaa} is not equipped to deal with this issue since its starting
point is an approximate solution to the SAA problem.)

The way around this is to realize
that 
our goal is to evaluate the quality of the SAA solution for the 
original problem \eqref{Qdisc}, and not \eqref{bRdisc}. 
In 2-stage stochastic optimization, the contribution $f_h(p)$ from high-cost scenarios to
the total expected cost is linear in $p$, 
which provides a handle on how to relate $f_h(\pcent)$ and $f_h(\ppoly)$. 
In our case, 
the contribution $\zpcentlng[p]{0}$ is nonlinear in $p$, and we need to derive new 
insights to reason about how this changes when we move from
$\pcent$ to its empirical estimate $\htpcent$; we then proceed by carefully adapting the 
ideas in~\cite{CharikarCP05}. We explain this in more detail under ``Overview'' in
Appendix~\ref{mainsaaproof}. 

Second, we (still) do not have an approximate value oracle for 
$\hpcentral[\htpcent]{x,y}$ (or $\bhpcentral[\htpcent]{x,y}$). 
However, we will show in Section~\ref{polysolve} (see Lemma~\ref{gxytoapxsub})
that if we have the non-standard type of approximation for $g(x,y,A)$ mentioned in
Theorem~\ref{intromainthm}, 
then one can obtain an approximate value oracle for
$\hpcentral[\htpcent]{x}$. 
While this is not the same as a value oracle for $\hpcentral[\htpcent]{x,y}$, we show that 
this nevertheless suffices. 
 
Combining these ingredients yields the following theorem,
which is the main result of this section. Recall that $\iopt:=\min_{x\in X}\hpcentral{x}$,
and $\log|X|$ and $\log\tau$ are $\poly(\inpsize)$.

\begin{theorem} \label{mainsaathm}
Let $\ve\leq\frac{1}{3}$, $\kp>0$.
Consider $k=\frac{2}{\ve}\log\bigl(\frac{1}{\dt}\bigr)$ SAA problems with
objective functions 
$\hpcentral[\htpcent^i]{x}:=c^\T x+\zpcentral[\htpcent^i]{x}$,
for $i=1,\ldots,k$, where each $\htpcent^i$ is an empirical estimate of $\pcent$
constructed using 
$N=\poly(\frac{\ld}{\ve},\log|X|,\log(\frac{\tau}{\kp}),\log(\frac{1}{\dt})\bigr)$
independent samples. Suppose that for every $i=1,\ldots,k$, we have 
a solution $\hx^i\in X$ and an estimate $f^i$, 
such that: 
(S1) $\hpcentral[\htpcent^i]{\hx^i}\leq\beta f^i$; and
(S2) $f^i\leq\rho\cdot\min_{x\in X}\hpcentral[\htpcent^i]{x}$ (where $\beta,\rho\geq 1$). 
Let $j=\argmin_{i=1,\ldots, k}f^i$ and $\hx=\hx^j$. 
Then, \mbox{$\hpcentral{\hx}\leq 4\beta\rho\bigl(1+O(\ve)\bigr)\iopt+2\beta\rho\kp$}
with probability at least $1-3\dt$. 
\end{theorem}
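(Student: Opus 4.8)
The plan is to mimic the proof architecture of Theorem~\ref{ccpsaa} (Charikar et al.), but run the argument against the original objective $\hpcentral{\cdot}$ rather than the thresholded proxy, using Lemma~\ref{zproxy} as the bridge and handling the \emph{nonlinear} long-edge contribution $\zpcentlng[p]{0}$ by hand. Concretely, I would first fix the true optimum $\iopt$ with minimizer $\sx\in X$, and let $y^*\in[0,\tau]$ be the dual value from Lemma~\ref{ybnd} so that $\hpcentral{\sx}=\hpcentral{\sx,y^*}=\iopt$. The key event I want to hold simultaneously for all the (polynomially many, since $\log|X|=\poly(\inpsize)$) pairs $(x,y)$ in a suitably fine discretization of $X\times[0,\tau]$ is a two-sided estimate: the empirical thresholded objective $\bhpcentral[\htpcent^i]{x,y}$ is, up to a $(1\pm O(\ve))$ multiplicative factor and an additive $O(\ve\,\iopt)+\kp$ slack, a faithful estimate of $\bhpcentral{x,y}$; and, separately, the empirical long-edge term $\zpcentlng[\htpcent^i]{0}$ is an analogous faithful estimate of $\zpcentlng{0}$. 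For the first estimate I can invoke Theorem~\ref{ccpsaa} / Lemma~\ref{rdiscld} machinery: $(\bRdisc)$ has inflation factor $\Ld=\ld$, so $N=\poly(\ld/\ve,\log|X|,\log(\tau/\kp),\log(1/\dt))$ samples and a union bound over the discretized grid give the claim with failure probability $\le\dt$. The $\log(\tau/\kp)$ term is exactly what is needed to discretize $y\in[0,\tau]$ at granularity $\sim\kp/(r\cdot\poly)$ and to absorb the discretization error into the additive $\kp$ slack (using property \ref{p6} to control how much $\bg(x,y,A)$ moves under an $O(\kp/r)$ perturbation of $y$).

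Next I would handle the long-edge term. This is the step that is genuinely new relative to~\cite{CharikarCP05}, because $\zpcentlng[p]{0}$ is not linear in $p$. The idea is that $\zpcentlng[p]{0}$ is defined by an LP over transportation plans $\gm$ with $\sum_{A,A'}\gm_{A,A'}\le 1/\ld$ and $\sum\scm(A,A')\gm_{A,A'}\le r$, with objective $\sum\gm_{A,A'}g(0,A')$; since the total mass is capped at $1/\ld$ and $g(0,A')\le g(0,A)+\ld\,c^\T 0=g(0,A)$ is bounded, this LP behaves like a bounded functional of $p$, and its optimal value is an expectation-like quantity that concentrates. I would argue (following the ``Overview'' in Appendix~\ref{mainsaaproof}, which the excerpt forward-references) that with $N$ as above, $\bigl|\zpcentlng[\htpcent^i]{0}-\zpcentlng{0}\bigr|\le\ve\cdot\zpcentlng{0}+\text{(small additive)}\le O(\ve)\,\iopt+\kp$ with probability $\ge 1-\dt$, where I have used $\zpcentlng{0}\le 2\hpcentral{0}$ from Lemma~\ref{zproxy} and $\hpcentral{0}$ is within an $\ld$-ish factor of $\iopt$ — more carefully, I would bound $\zpcentlng{0}$ against $\iopt$ directly through the flow-mass cap of $1/\ld$ together with \ref{p2}. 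Combining this with the first estimate and Lemma~\ref{zproxy} ($\hpcentral[p]{x}\le c^\T x+\zpcentshort[p]{x}+\zpcentlng[p]{0}\le 2\hpcentral[p]{x}$), the empirical composite quantity $c^\T x+\zpcentshort[\htpcent^i]{x}+\zpcentlng[\htpcent^i]{0}$ sandwiches $\hpcentral{x}$ up to a factor $2(1+O(\ve))$ and additive $\kp$, uniformly over the grid.

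From here the argument is the standard ``best of $k$ runs'' amplification. Condition on the good event (probability $\ge 1-3\dt$ by a union bound over the three failure modes — grid estimate, long-edge estimate, and the rare event that the empirical objective value is itself too far off, which is where we need $k=\frac{2}{\ve}\log(1/\dt)$ repetitions to guarantee at least one ``lucky'' sample whose empirical optimum is not much larger than $\iopt$, exactly as in Theorem~\ref{ccpsaa}). For the lucky index, $\min_{x\in X}\hpcentral[\htpcent^i]{x}\le(1+\ve)\iopt$ (after translating through Lemma~\ref{zproxy}, this costs another factor close to $2$); by (S2), $f^i\le\rho(1+\ve)\iopt$, hence $f^j=\min_i f^i\le\rho(1+\ve)\iopt$ as well. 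By (S1), $\hpcentral[\htpcent^j]{\hx}\le\beta f^j\le\beta\rho(1+\ve)\iopt$. Finally, push $\hx$ back to the true objective: $\hpcentral{\hx}\le 2(1+O(\ve))\hpcentral[\htpcent^j]{\hx}+\kp$-type inequalities (again via Lemma~\ref{zproxy} applied with $p=\htpcent^j$ and $p=\pcent$, both directions, using the uniform estimate), which collapses to $\hpcentral{\hx}\le 4\beta\rho(1+O(\ve))\iopt+2\beta\rho\kp$. The factor $4$ is the product of the two factor-$2$ losses incurred by invoking Lemma~\ref{zproxy} once when passing from the true objective at $\htpcent^j$ to the proxy, and once more when passing back to $\pcent$; the additive $2\beta\rho\kp$ accumulates the $\kp$ slacks through the same two passes scaled by $\beta\rho$.

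The main obstacle I anticipate is the nonlinear long-edge term: proving a sharp two-sided concentration bound $\zpcentlng[\htpcent]{0}\approx\zpcentlng{0}$ with only $\poly(\inpsize,\ld/\ve)$ samples, since unlike the high-cost-scenario contribution in~\cite{CharikarCP05} this is an optimal value of an LP coupling all scenarios through constraint~\eqref{rbnd}, not a plain expectation. The resolution I would pursue is to exploit the mass cap $\sum\gm_{A,A'}\le 1/\ld$: this makes $\zpcentlng[p]{0}$ a $1/\ld$-Lipschitz-type functional in a transportation sense, so that perturbing $p\to\htpcent$ changes the optimal LP value by at most the transportation cost of moving the perturbed mass, which concentrates; alternatively one can LP-dualize the long-edge program (finitely many dual variables once $\htpcent$ has polynomial support on the $p$-side) and concentrate the dual objective. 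I expect this to be the part requiring the careful adaptation of~\cite{CharikarCP05} that the excerpt defers to Appendix~\ref{mainsaaproof}.
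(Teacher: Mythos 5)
There is a genuine gap, and it sits exactly where you flagged your ``main obstacle.'' Your plan rests on two \emph{two-sided, uniform} concentration claims: that $\bhpcentral[\htpcent^i]{x,y}$ is a $(1\pm O(\ve))$-faithful estimate of $\bhpcentral{x,y}$ over the whole grid, and that $\bigl|\zpcentlng[\htpcent^i]{0}-\zpcentlng{0}\bigr|\le\ve\,\zpcentlng{0}+O(\ve)\iopt+\kp$. Neither is obtainable with $\poly(\inpsize,\ld/\ve)$ samples, and neither is what Theorem~\ref{ccpsaa} provides (that theorem transfers approximate \emph{solutions}, not objective values; uniform value concentration is known to fail for 2-stage problems in the black-box model). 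The obstruction is the high-cost scenarios: $\bg(x,y,A)$ can be arbitrarily large relative to $\iopt$ on scenarios of tiny probability, so $\Eh[A\sim\htpcent^i]{\cdot}$ is an unbounded random variable that does not concentrate; the same scenarios make $\zpcentlng[p]{0}$ fail to be Lipschitz in $p$ in any useful norm (removing $\epsilon$ mass from one source scenario can change the LP value by $\epsilon\cdot\max_{A'}g(\bo,A')$, which is not bounded by $\poly\cdot\iopt$). Your fallback of LP-dualizing and ``concentrating the dual objective'' runs into the same unboundedness.

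The paper's actual route is different in a way that matters. It splits $\E[A\sim p]{\bg(x,y,A)}$ into low- and high-cost scenario contributions; only the low part concentrates (two-sidedly, via Chernoff, uniformly over the grid). For the high part and for $\zpcentlng[p]{0}$ it proves no concentration at all: instead it shows (i) these terms are nearly independent of $(x,y)$ within each problem (inequalities \eqref{pfineq2}, \eqref{pfineq3}, using that high scenarios have mass $\le O(\ve/\ld)$ and \ref{p2}); and (ii) $\Eh[A\sim p]{\bg(\bo,0,A)}$ is linear in $p$ while $\zpcentlng[p]{0}$ is \emph{concave} in $p$ (Claim~\ref{zlongconc}), so by Jensen the empirical combined term has expectation at most the true one, and Markov over the $k=\frac{2}{\ve}\log(1/\dt)$ independent repetitions yields a single lucky index $t$ for which a \emph{one-sided} bound \eqref{pfineq5} holds. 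A final exchange argument \eqref{pfineq6} shows the best-$f^i$ index $j$ is as good as the unknown $t$, and the terms are arranged in the closing chain so that only the one-sided direction is ever needed. So the missing idea is not a sharper concentration bound but the realization that you should not try to estimate these terms at all: exploit their near-constancy in $(x,y)$, their linearity/concavity in $p$, and the multiplicity of SAA problems to get a one-sided guarantee at one index.
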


The mixed (i.e., multiplicative + additive) guarantee obtained above can be turned into a 
purely multiplicative guarantee if we have a lower bound $\lb$ on $\iopt$
with $\log\bigl(\frac{1}{\lb}\bigr)=\poly(\inpsize)$.
We show 
that such a lower bound can indeed be obtained under some very mild assumptions
(Lemma~\ref{lbnd}). 

The proof of Theorem~\ref{mainsaathm} is further complicated due to
the peculiarities of the estimates 
$f^i$ that we have for $\hpcentral[\htpcent^i]{\hx^i}$. 
Note that (S1), (S2) only imply that $\hx^i$ is a $\beta\rho$-approximation to the SAA
problem, and $\hpcentral[\htpcent^i]{\hx^i}\in[f^i/\rho,\beta f^i]$, 
so a statement of the form in Theorem~\ref{ccpsaa} 
would yield an inferior approximation bound of $O(\beta^2\rho^2+\ve)$.
Instead, we need to adapt the arguments of \cite{CharikarCP05} to suit the numerous
peculiarities of our setting. The proof is therefore somewhat technical and
we defer this to Appendix~\ref{mainsaaproof}.  

\medskip
We remark that the proxy problem \eqref{bQdisc} (or \eqref{bRdisc}) is used 
{\em only in the analysis}. One takeaway  
here is
that we derive a {\em substantially improved sample-complexity bound by taking a slight
hit in the approximation ratio} when moving from the SAA to the original problem. 
This is a novel, nuanced result regarding the effectiveness of the SAA method for
\dr 2-stage problems. 
We do not know of any other
setting where one obtains drastically improved sample complexity by settling
for a worse than $(1+\ve)$-factor (but still $O(1)$) loss when moving from the SAA
to the original problem. (In particular, no such result is known for standard 2-stage problems.)

\subsection{Solving distributionally robust problems for polynomial-support central distributions} 
\label{polysolve} 
We now show how to approximately solve the distributionally robust problem \eqref{Qdiscpoly}
when the central distribution $\ppoly$ has polynomial-size. 
This will allow us to solve the SAA problem(s) constructed in Section~\ref{saa}, and
complement Theorem~\ref{mainsaathm}. 
Let $\Asup$ denote the support of $\ppoly$. So we have

\vspace*{-3.5ex}
\noindent
\begin{minipage}[t]{0.35\textwidth}
\leqnomode
\begin{equation}
\tag{\Qdiscpoly} \label{Qdiscpoly}
\min_{x \in X}\ \hppoly{x} := c^\T x + \zppoly{x}, 
\end{equation}
\reqnomode
\end{minipage}
\begin{minipage}[t]{0.65\textwidth}
\begin{alignat*}{2}
\text{where}\ \zppoly{x}\ :=\ 
\max & \ \ & \sum_{(A,A')\in\Asup\times\A}\gm_{A,A'}&g(x,A') \tag{T${}_{\ppoly,x}$} \label{zpolydefn} \\
\text{s.t.} && \sum_{A'} \gm_{A,A'} & \leq \ppoly_A \quad \ \forall A\in\Asup \\
&& \sum_{A,A'}\scm(A,A')&\gm_{A,A'} \leq r \\[-2ex]
&& \gm & \geq 0. 
\end{alignat*}
\end{minipage}

\medskip
We consider the fractional relaxation of \eqref{Qdiscpoly}, where we replace $X$ with
its relaxation $\Pc$ to obtain \mbox{(\Qppoly):\ $\min_{x\in\Pc}\hppoly{x}$.} 
As noted earlier, unlike the case 
with 2-stage \{stochastic, robust\} optimization, 
where the fractional relaxation of the polynomial-scenario problem gives a
polynomial-size LP and is therefore straightforward to solve in polytime, 
it is substantially more challenging 
to even approximately solve the fractional \dr polynomial-scenario problem. 
In particular, reformulating $\zppoly{x}$ (and hence (\Qppoly)) as a minimization LP leads
to an LP with exponential number of constraints and variables.  
The issue is that 
\eqref{zpolydefn} involves an {\em exponential} number of
$\gm_{A,A'}$ variables.  
So if we reformulate $\zppoly{x}$ as a minimization LP by taking the dual of \eqref{zpolydefn}
(and replacing $g(x,A')$ by its LP formulation), we obtain an exponential number of constraints
(due to the $\gm_{A,A'}$ variables), and an exponential number of variables (needed 
to encode the LP for $g(x,A')$, for each $A'\in\A$).
(An exception to all this is the \allsets setting (i.e., $\A=2^U$ for some set $U$) with
the discrete scenario metric $\scm$ (so $\LW$ is the $\frac{1}{2}L_1$-metric),
under the assumption that $g(x,A)\leq g(x,A')$ for all $x$,
$A\sse A'$, which holds for covering problems. 
Here, we can reformulate $\zppoly{x}$ as a polynomial-size minimization LP and
hence, obtain a compact LP for 
(\Qppoly), and round 
its optimal solution using a local approximation algorithm. 
Theorem~\ref{polythm_collapsible} proves a more general result along these lines.)   

To overcome these obstacles, we work with the {\em convex program} given by
(\Qppoly). Recall that $g(x,y,A):=\max_{A'\in\A}\bigl(g(x,A')-y\cdot\scm(A,A')\bigr)$, 
where $x\in\Pc$, $y\geq 0$, and $A\in\A$.
We show that the complexity of solving \eqref{Qdiscpoly} is tied to the problem of
finding a near-optimal solution to $g(x,y,A)$. However, as noted earlier, under the
standard notion of approximation, it is impossible to obtain any approximation guarantee 
due to the mixed-sign objective in $g(x,y,A)$ (see Theorem~\ref{saainapx}). To evade this
difficulty, we consider the following non-standard notion of approximation for $g(x,y,A)$.

\begin{definition} \label{gxyapx}
We say that $\alg$ is a {\em $(\beta_1,\beta_2)$-approximation algorithm} for $g(x,y,A)$, where
$\beta_1,\beta_2\geq 1$, if it returns a scenario $\bA\in\A$ such that 
\nolinebreak
\mbox{$g(x,\bA)-y\cdot\scm(A,\bA)\geq\frac{g(x,A')}{\beta_1}-\beta_2\cdot y\cdot\scm(A,A')$ for all
$A'\in\A$.} 
\end{definition}

Recall that a {\em local $\rho$-approximation} for (\Qppoly) 
is an algorithm that given $x\in\Pc$, returns an integral solution $\tx\in X$ and integral 
recourse actions $\tz^A$ for every $A\in\A$ (implicitly), such that $c^\T\tx\leq\rho(c^\T x)$
and $\text{(cost of $\tz^A$)}\leq\rho g(x,A)$ for all $A\in\A$. 
The main result of this section, which is used to interface with 
Theorem~\ref{mainsaathm}, is as follows. 

\begin{theorem} \label{polythm}
Suppose that we have a polytime separation oracle for $\Pc$, a local
$\rho$-approximation algorithm for (\Qppoly), and a $(\beta_1,\beta_2)$-approximation algorithm for
$g(x,y,A)$ for any $(x,y,A)\in X\times\Rplus\times\A$. 
For any $\ve>0$, in $\poly\bigl(\inpsize,\log(\frac{1}{\ve})\bigr)$ time, we can compute
$\tx\in X$ and an estimate $\tf$ of $\hppoly{\tx}$ such that: 
$\tf\leq\hppoly{\tx}\leq\beta_1\beta_2\cdot\tf$, and
$\tf\leq\rho(1+\ve)\cdot\min_{x\in X}\hppoly{x}$. 
\end{theorem}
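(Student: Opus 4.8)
The plan is to solve the convex program $(\Qppoly)$ via the ellipsoid-based machinery of~\cite{ShmoysS06}, which requires only the ability to compute approximate (i.e., $\w$-) subgradients of the objective $\hppoly{\cdot}$ at any queried point $x\in\Pc$, together with function-value estimates of comparable quality. The key structural fact I would establish first is a subgradient formula: a subgradient of $\hppoly{\cdot}$ at $x$ is $c + \sum_{A\in\Asup}\sum_{A'}\gm^*_{A,A'}\,d^{x,A'}$, where $\gm^*$ is an optimal transportation plan for $\zppoly{x}$ and $d^{x,A'}$ is a subgradient of $g(\cdot,A')$ at $x$ (available by~\ref{p4},~\ref{p5}). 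Since $\ppoly$ has polynomial-size support, the outer sum over $A\in\Asup$ is manageable; the obstacle is that $\gm^*$ ranges over exponentially many $A'$ and we cannot compute it exactly. Here is where the $(\beta_1,\beta_2)$-approximation algorithm for $g(x,y,A)$ enters: via Lagrangian duality (Lagrangify~\eqref{rbnd} with multiplier $y$), optimizing $\zppoly{x}$ decomposes, for a fixed $y$, into choosing for each $A\in\Asup$ the scenario $A'$ maximizing $g(x,A')-y\cdot\scm(A,A')$, i.e.\ exactly a call to the $g(x,y,A)$ oracle. So I would solve the one-dimensional Lagrangian dual over $y\in[0,\tau]$ (finitely discretized using~\ref{p6} and Lemma~\ref{ybnd}-type reasoning) approximately, using the oracle, to obtain an \emph{approximate} optimal plan and an \emph{approximate} value of $\zppoly{x}$.

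\textbf{Handling the mixed-sign / non-standard approximation.} The delicate point is that a $(\beta_1,\beta_2)$-approximation for $g(x,y,A)$ does not give a multiplicative approximation to $\zppoly{x}$ in the usual sense, because the returned scenario only satisfies $g(x,\bA)-y\scm(A,\bA)\ge g(x,A')/\beta_1 - \beta_2 y\scm(A,A')$. I would show (this is the content of the ``Lemma~\ref{gxytoapxsub}'' referenced in the text) that plugging such scenarios into the subgradient formula still yields a legitimate \emph{$\w$-subgradient} of $\hppoly{\cdot}$ in the sense of Definition~\ref{apsgrad}, for $\w$ as small as $1/\poly(\inpsize)$ — essentially because the multiplicative slack $\beta_1$ and the metric-term slack $\beta_2$ both get absorbed into the ``$-\w f(u)$'' allowance once we argue, using~\ref{p1},~\ref{p2},~\ref{p6}, that the error terms are bounded by a small fraction of $\hppoly{x}$ itself (after rescaling, or after separately handling the regime where $\hppoly{x}$ is tiny using the enclosed-ball bound $V$ from~\ref{p3}). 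Simultaneously, the oracle-based computation yields an estimate $\tf$ of the objective value at the point the ellipsoid method returns, and one tracks that $\tf \le \hppoly{\tx_{\mathrm{frac}}} \le \beta_1\beta_2\tf$ — the two-sided gap being exactly the price of the non-standard approximation notion.

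\textbf{From fractional to integral.} The ellipsoid method of~\cite{ShmoysS06}, fed with $\w$-subgradients for suitably small $\w = \poly(\ve/\ld,\dots)$ and run for $\poly(\inpsize,\log(1/\ve))$ iterations, returns a point $\bx\in\Pc$ with $\hppoly{\bx}\le(1+\ve)\min_{x\in\Pc}\hppoly{x}\le(1+\ve)\min_{x\in X}\hppoly{x}$, together with the value estimate. Then I would invoke the assumed local $\rho$-approximation algorithm for $(\Qppoly)$ to round $\bx$ to an integral $\tx\in X$ with $c^\T\tx\le\rho\,c^\T\bx$ and recourse cost at most $\rho\,g(\bx,A)$ for every scenario $A$; since $\hppoly{\cdot}$ is built from $c^\T x$ and the transportation-weighted combination of the $g(x,A)$ terms, local $\rho$-roundability gives $\hppoly{\tx}\le\rho\,\hppoly{\bx}$, hence $\tf\le\hppoly{\tx}\le\beta_1\beta_2\tf$ and $\tf\le\rho(1+\ve)\min_{x\in X}\hppoly{x}$, as claimed. (A little care is needed because $g(x,A)$ inside $\zppoly{x}$ is the LP value, so ``local $\rho$-approximation'' must be the fractional-second-stage version — but that is exactly how it is defined for $(\Qppoly)$ in the excerpt.)

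\textbf{Main obstacle.} I expect the crux to be the $\w$-subgradient argument: showing that scenarios returned under the weak Definition~\ref{gxyapx} guarantee — with its separate $\beta_1$ on the $g$-part and $\beta_2$ on the metric-part — still produce an object satisfying $f(v)-f(u)\ge\hsgr\cdot(v-u)-\w f(u)$ uniformly over $v\in\Pc$, with $\w$ controllable independently of $\scmmax/r$. This requires carefully splitting the error contributions, bounding the metric-term error using the long/short-edge threshold $M=\ld r$ idea (so that $y\le\tau$ and the total $\scm$-mass is $O(r)$, keeping $\beta_2 y\scm$ terms comparable to $r\tau$ which in turn is controlled by~\ref{p6} relative to the recourse costs), and invoking~\ref{p1},~\ref{p2} to lower-bound $\hppoly{u}$ when needed. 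Everything else — the subgradient formula, the Lagrangian decomposition, the ellipsoid invocation, and the final rounding — is comparatively routine assembly of the cited machinery.
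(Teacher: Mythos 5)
There is a genuine gap at the step you yourself flag as the crux. You claim that plugging the scenarios returned by the $(\beta_1,\beta_2)$-oracle into the subgradient formula yields an $\w$-subgradient of $\hppoly{\cdot}$ with $\w$ as small as $1/\poly(\inpsize)$, the two slacks being "absorbed into the $-\w f(u)$ allowance." This is not achievable, and the paper explicitly argues why: a $\beta$-approximate solution to the transportation LP (\zpolylp{x}) yields only a $\bigl(1-\frac{1}{\beta}\bigr)$-subgradient (Lemma~\ref{apxsub}), i.e.\ $\w=1-1/(\beta_1\beta_2)$, a constant bounded away from $0$; driving $\w$ down to $1/\poly(\inpsize)$ would require an FPTAS for (\zpolylp{x}), which is ruled out since it captures the \apx-hard \kmaxmin problem (Theorem~\ref{saainapx}). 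Consequently your claimed intermediate conclusion --- that the ellipsoid method returns $\bx\in\Pc$ with $\hppoly{\bx}\leq(1+\ve)\min_{x\in\Pc}\hppoly{x}$ --- cannot be reached, and indeed the theorem deliberately asserts something weaker: only the \emph{estimate} $\tf$ is within $\rho(1+\ve)$ of the optimum, while the true value $\hppoly{\tx}$ is controlled only up to the additional factor $\beta_1\beta_2$. The paper's way around this is to run the first (cut-generation) phase with the large-$\w$ subgradients, which is fine, and to replace the second phase of~\cite{ShmoysS06} (which needs tiny $\w$) by the $\beta_1\beta_2$-approximate \emph{value} oracle of Lemma~\ref{gxytoapxsub}(ii), selecting the candidate with smallest estimate $\tf_i$; the per-cut guarantee of Lemma~\ref{validcut} then relates each $\tf_l$ directly to the value of any point the cut discards, avoiding any accumulation of the constant-$\w$ loss over iterations.

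A second, related gap: the hypothesis only provides the $(\beta_1,\beta_2)$-approximation for $g(x,y,A)$ at \emph{integral} $x\in X$, but your Lagrangian-decomposition step calls this oracle at the fractional ellipsoid center. The paper resolves this with the unorthodox device of generating the cut from a point different from the center: round $\bx$ to $\tx\in X$ with the local $\rho$-approximation algorithm, compute the approximate transportation plan for $(\mathrm{T}_{\ppoly,\tx})$, but assemble the cut vector using subgradients $\sgr^{\bx,A'}$ of $g(\cdot,A')$ at $\bx$; Lemma~\ref{validcut} shows the resulting halfspace is still a valid cut relative to $\hppoly{\tx}/(\beta\rho)$. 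In your write-up the rounding appears only as a postprocessing step after the ellipsoid method terminates, so the oracle calls during the iterations are unsupported. Your remaining ingredients (the subgradient formula, recovering an approximate primal plan from the dual via the oracle --- the paper does this with an approximate separation oracle for \eqref{dual} rather than a one-dimensional Lagrangian search, but either could be made to work --- and the final local rounding) match the paper, but the two issues above are exactly the nonstandard parts of the argument and need to be repaired along the lines indicated.
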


We prove the above theorem by utilizing the ellipsoid method. For this, we need to be able
to compute a subgradient of the 
objective function $\hppoly{x}$. Shmoys and Swamy~\cite{ShmoysS06} showed that it suffices
to have $\w$-subgradients (Definition~\ref{apsgrad}). 
We show that a near-optimal solution to \eqref{zpolydefn} yields an
approximate subgradient of $\hppoly{x}$ (Lemma~\ref{apxsub}), and we can obtain such a
solution to \eqref{zpolydefn} using a $(\beta_1,\beta_2)$-approximation to $g(x,y,A)$
(Lemma~\ref{gxytoapxsub}). 
Recall from properties \ref{p4}, \ref{p5} that for every $A\in\A$, the function
$g(\bullet,A)$ is convex, and at every $x\in\Pc$, 
$A\in\A$, we can efficiently compute $g(x,A)$, and a subgradient $\sgr^{x,A}$ with 
$\|\sgr^{x,A}\|\leq K$, where $\ln K=\poly(\inpsize)$.
The proof of Lemma~\ref{gxytoapxsub} appears after the proof of Theorem~\ref{polythm},
right before Section~\ref{hardness}.

\begin{lemma} \label{apxsub} \label{lem:sub1}
Let $x\in\Pc$, and $\gamma$ be a $\beta$-approximate solution to \eqref{zpolydefn}. Then
$d := c + \sum_{(A,A')\in\Asup\times\A}\gamma_{A,A'}\sgr^{x,A'}$
is a $\bigl(1 - \frac{1}{\beta}\bigr)$-subgradient of $\hppoly{.}$ at $x$.
\end{lemma}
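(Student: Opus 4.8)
The goal is to show that $d := c + \sum_{(A,A')\in\Asup\times\A}\gamma_{A,A'}\sgr^{x,A'}$ is a $(1-\tfrac{1}{\beta})$-subgradient of $\hppoly{\cdot}$ at $x$, i.e.\ that for every $v\in\Pc$ we have $\hppoly{v}-\hppoly{x}\geq d\cdot(v-x)-\bigl(1-\tfrac1\beta\bigr)\hppoly{x}$. The plan is to unwind $\hppoly{v}$ by testing it against the \emph{same} transportation plan $\gamma$ that is near-optimal for $x$, and then to use convexity of each $g(\cdot,A')$ together with its subgradient $\sgr^{x,A'}$ at $x$.

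\textbf{Key steps.} First I would write $\hppoly{v} = c^\T v + \zppoly{v}$ and lower-bound $\zppoly{v} = \max_{\gamma'\ \text{feasible}}\sum \gamma'_{A,A'}g(v,A')$ by plugging in the specific plan $\gamma$, which is feasible for the polytope defining $\zppoly{\cdot}$ (feasibility of $\gamma$ does not depend on the first-stage point), giving $\zppoly{v}\geq \sum_{(A,A')}\gamma_{A,A'}g(v,A')$. Second, for each term I invoke the subgradient inequality $g(v,A')\geq g(x,A')+\sgr^{x,A'}\cdot(v-x)$ (property \ref{p4}, \ref{p5}), and sum with the nonnegative weights $\gamma_{A,A'}$ to obtain $\zppoly{v}\geq \sum_{(A,A')}\gamma_{A,A'}g(x,A') + \bigl(\sum_{(A,A')}\gamma_{A,A'}\sgr^{x,A'}\bigr)\cdot(v-x)$. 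Third, I combine this with $c^\T v = c^\T x + c\cdot(v-x)$ to get $\hppoly{v}\geq c^\T x + \sum_{(A,A')}\gamma_{A,A'}g(x,A') + d\cdot(v-x)$. Fourth, since $\gamma$ is a $\beta$-approximate solution to \eqref{zpolydefn} (the problem defining $\zppoly{x}$), we have $\sum_{(A,A')}\gamma_{A,A'}g(x,A')\geq \tfrac{1}{\beta}\zppoly{x}$, so $c^\T x + \sum_{(A,A')}\gamma_{A,A'}g(x,A')\geq c^\T x + \tfrac1\beta\zppoly{x}\geq \tfrac1\beta\bigl(c^\T x+\zppoly{x}\bigr) = \tfrac1\beta\hppoly{x}$, using $c^\T x\geq 0$ (which holds since $c\geq\bo$ by \ref{p1} and $x\in\Pc\sse\Rplus^m$) and $\beta\geq 1$. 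Finally I rearrange: $\hppoly{v}-\hppoly{x}\geq d\cdot(v-x) - \bigl(1-\tfrac1\beta\bigr)\hppoly{x}$, which is exactly the $(1-\tfrac1\beta)$-subgradient condition from Definition~\ref{apsgrad}.

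\textbf{Main obstacle.} The one place needing care is the transition from the $\beta$-approximation guarantee on $\gamma$ to the additive slack $(1-\tfrac1\beta)\hppoly{x}$: one must carry the $c^\T x$ term correctly and use $c^\T x\geq 0$ together with $\tfrac1\beta\leq 1$ so that $c^\T x + \tfrac1\beta\zppoly{x}\geq \tfrac1\beta(c^\T x + \zppoly{x})$. Everything else is a direct application of convexity and feasibility of $\gamma$ being independent of the first-stage decision; there is no real difficulty, which is why this is stated as a short lemma.
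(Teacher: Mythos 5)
Your proof is correct and follows essentially the same route as the paper's: plug the fixed plan $\gamma$ into the maximization at the new point, apply the subgradient inequality for each $g(\cdot,A')$, and convert the $\beta$-approximation guarantee into the $\bigl(1-\tfrac{1}{\beta}\bigr)\hppoly{x}$ slack. The only cosmetic difference is where the slack is absorbed — you use $c^\T x+\tfrac{1}{\beta}\zppoly{x}\geq\tfrac{1}{\beta}\hppoly{x}$, while the paper splits $\zppoly{x}$ into a $\tfrac{1}{\beta}$ and a $\bigl(1-\tfrac{1}{\beta}\bigr)$ piece and bounds the latter by $\bigl(1-\tfrac{1}{\beta}\bigr)\hppoly{x}$; both steps rest on the same fact $c^\T x\geq 0$.
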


\begin{proof} 
Consider any $x'\in\Pc$. Since $\gm$ is a feasible solution to (\zpolylp{x'}),
we have $\hppoly{x'}\geq c^\T x'+\sum_{(A,A')\in\Asup\times\A}\gm_{A,A'}g(x',A')$.
Let $\gm^*$ be an optimal solution to \eqref{zpolydefn}.
Since $\gm$ is a $\beta$-approximate solution to \eqref{zpolydefn}, we have
\begin{equation*}
\begin{split}
\zppoly{x}=\sum_{(A,A')\in\Asup\times\A}\gm^*_{A,A'}g(x,A')
&=\tfrac{1}{\beta}\cdot\quad\sum_{\mathclap{(A,A')\in\Asup\times\A}}\ \gm^*_{A,A'}g(x,A')
+\Bigl(1-\tfrac{1}{\beta}\Bigr)\cdot\quad\sum_{\mathclap{(A,A')\in\Asup\times\A}}\ \gm^*_{A,A'}g(x,A') \\
&\leq\sum_{\mathclap{(A,A')\in\Asup\times\A}}\ \gm_{A,A'}g(x,A')
+\Bigl(1-\tfrac{1}{\beta}\Bigr)\cdot\quad\sum_{\mathclap{(A,A')\in\Asup\times\A}}\ \gm^*_{A,A'}g(x,A').
\end{split}
\end{equation*}
Therefore,
\begin{equation*}
\begin{split}
\hppoly{x'}-\hppoly{x}
& \geq c^\T(x'-x)+\quad\sum_{\mathclap{(A,A')\in\Asup\times\A}}\ \gm_{A,A'}\Bigl(g(x',A')-g(x,A')\Bigr)
-\Bigl(1-\tfrac{1}{\beta}\Bigr)\cdot\quad\sum_{\mathclap{(A,A')\in\Asup\times\A}}\ \gm^*_{A,A'}g(x,A') \\
& \geq c^\T(x'-x)+\quad\sum_{\mathclap{(A,A')\in\Asup\times\A}}\ \gm_{A,A'}\sgr^{x,A'}\cdot (x'-x)
-\Bigl(1-\tfrac{1}{\beta}\Bigr)\hppoly{x} \\
&=d^\T(x'-x)-\Bigl(1-\tfrac{1}{\beta}\Bigr)\hppoly{x}. 
\end{split} 
\end{equation*}
The second inequality follows since $\sgr^{x,A'}$ is a subgradient of $g(\cdot,A')$ at $x$.
\end{proof}

\begin{lemma} \label{lem:beta-dual} \label{gxytoapxsub}
Let $x\in\Pc$. Suppose we have a $(\beta_1,\beta_2)$-approximation algorithm for
$g(x,y,A)$ for all $y\geq 0$ and all $A\in\A$. 
Then, (i) we can compute a
$\beta_1\beta_2$-approximate solution $\gamma$ to \eqref{zpolydefn}; 
(ii) hence, $f=c^\T x+\sum_{(A,A')\in\Asup\times\A}\gm_{A,A'}g(x,A')$ satisfies
$f\leq\hppoly{x}\leq\beta_1\beta_2\cdot f$.
\end{lemma}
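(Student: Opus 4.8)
The plan is to establish part~(i) by LP duality and then read off part~(ii) in two lines. The LP \eqref{zpolydefn} defining $\zppoly{x}$ is a packing LP with polynomially many constraints but exponentially many variables $\gm_{A,A'}$, so the idea is to pass to its dual, which has polynomially many variables and exponentially many constraints, and solve that with the ellipsoid method using the $(\beta_1,\beta_2)$-approximation algorithm for $g(x,y,A)$ as an (approximate) separation oracle. Assigning a multiplier $\al_A\ge 0$ to the constraint $\sum_{A'}\gm_{A,A'}\le\ppoly_A$ for each $A\in\Asup$ and $y\ge 0$ to the budget constraint $\sum\scm(A,A')\gm_{A,A'}\le r$, the dual of \eqref{zpolydefn} is $\min\bigl\{\sum_{A\in\Asup}\ppoly_A\al_A+ry:\ \al_A+y\cdot\scm(A,A')\ge g(x,A')\ \forall(A,A')\in\Asup\times\A,\ \al,y\ge 0\bigr\}$. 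For a fixed $A\in\Asup$, the constraints ``$\al_A\ge g(x,A')-y\cdot\scm(A,A')$ for all $A'\in\A$'' say exactly ``$\al_A\ge g(x,y,A)$'', so the dual is the covering problem $\min_{y\ge 0}\bigl(ry+\sum_{A\in\Asup}\ppoly_A\cdot g(x,y,A)\bigr)$, i.e.\ the $x$-fixed analogue of \eqref{Rdisc}.

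The $(\beta_1,\beta_2)$-oracle enters at the separation step. Given a candidate $(\al,y)\ge 0$, run the oracle on $(x,y,A)$ for every $A\in\Asup$ to get $\bA=\bA(A)\in\A$, and test whether $\al_A\ge g(x,\bA)-y\cdot\scm(A,\bA)$ for all $A$. If all tests pass, then by the oracle guarantee $g(x,\bA)-y\cdot\scm(A,\bA)\ge g(x,A')/\beta_1-\beta_2\cdot y\cdot\scm(A,A')$ we get $\beta_1\al_A+\beta_1\beta_2\,y\cdot\scm(A,A')\ge g(x,A')$ for all $(A,A')$, so $(\beta_1\al,\beta_1\beta_2\,y)$ is a genuine dual-feasible point of objective at most $\beta_1\beta_2\bigl(\sum_A\ppoly_A\al_A+ry\bigr)$; report ``feasible''. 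If some test fails, the pair $(A,\bA)$ gives a genuine dual constraint $\al_A+y\cdot\scm(A,\bA)<g(x,\bA)$ violated at $(\al,y)$; return it as a cutting plane. The two facts this rests on are: (a) every genuinely dual-feasible point passes all tests (apply the feasibility inequality with $A'=\bA$), so the true feasible region sits inside the set of points that pass; and (b) every returned cut is a genuine dual constraint, hence valid for the true feasible region.

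Given (a) and (b), run the ellipsoid-based LP-solving method of \cite{ShmoysS06} (with its standard bounding box --- $y\le\tau$ by \ref{p6}, and $\al_A$ polynomially bounded using \ref{p1} --- and perturbation bookkeeping), binary-searching the objective threshold $t$. A barely-successful threshold $\bar t$ outputs $(\al^*,y^*)$ that passes all tests with $\sum_A\ppoly_A\al^*_A+ry^*\le\bar t$, so by~(a) the rescaling $(\beta_1\al^*,\beta_1\beta_2\,y^*)$ certifies $\zppoly{x}=\mathrm{OPT}(\text{dual})\le\beta_1\beta_2\bar t$; a barely-failing threshold $\underline t=\bar t-\dt$ produces, via~(b), a polynomial-size family $\mathcal C\sse\Asup\times\A$ of genuine dual constraints such that the restricted dual $D_{\mathcal C}:=\min\{\sum_A\ppoly_A\al_A+ry:(\al,y)\ge 0\text{ satisfies }\mathcal C\}$ has optimum at least $\underline t$. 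Taking the LP-dual of $D_{\mathcal C}$ recovers \eqref{zpolydefn} restricted to the polynomially many columns $\{\gm_{A,A'}:(A,A')\in\mathcal C\}$ --- a polynomial-size LP whose data $g(x,A'),\scm,\ppoly,r$ are all available ($g(x,A')$ by \ref{p4}) --- which we solve exactly; extending the solution $\gm$ by zeros gives a feasible solution of \eqref{zpolydefn} with $\sum_{(A,A')}\gm_{A,A'}g(x,A')=\mathrm{OPT}(D_{\mathcal C})\ge\underline t\ge\zppoly{x}/(\beta_1\beta_2)-\dt$, so for $\dt$ driven negligibly small $\gm$ is a $\beta_1\beta_2$-approximate solution to \eqref{zpolydefn}. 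I expect the interface between the approximate oracle and the ellipsoid method to be the crux: because the oracle separates only up to the $(\beta_1,\beta_2)$-rescaling, one must check carefully that the rescaled point is genuinely feasible, that the collected cuts are genuine constraints, and that the restricted-primal optimum inherits the $\beta_1\beta_2$ bound; the ellipsoid scaffolding itself is routine given \cite{ShmoysS06}.

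Part~(ii) is then immediate. Feasibility of $\gm$ for the LP defining $\zppoly{x}$ gives $\sum_{(A,A')\in\Asup\times\A}\gm_{A,A'}g(x,A')\le\zppoly{x}$, hence $f=c^\T x+\sum\gm_{A,A'}g(x,A')\le c^\T x+\zppoly{x}=\hppoly{x}$. And the $\beta_1\beta_2$-approximation guarantee gives $\zppoly{x}\le\beta_1\beta_2\sum\gm_{A,A'}g(x,A')$, so $\hppoly{x}=c^\T x+\zppoly{x}\le\beta_1\beta_2\bigl(c^\T x+\sum\gm_{A,A'}g(x,A')\bigr)=\beta_1\beta_2 f$, using $c\ge\bo$ and $\beta_1\beta_2\ge 1$ from \ref{p1}.
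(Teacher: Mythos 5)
Your proposal is correct and follows essentially the same route as the paper's proof: dualize \eqref{zpolydefn}, use the $(\beta_1,\beta_2)$-algorithm as an approximate separation oracle (accepting when the rescaled point $(\beta_1\tht,\beta_1\beta_2 y)$ is genuinely feasible, cutting with a genuine violated constraint otherwise), binary-search the objective threshold with the ellipsoid method, and recover a polynomial-support primal solution by solving \eqref{zpolydefn} restricted to the columns of the returned cuts, with the additive slack driven to $1/\exp(\inpsize)$. The observation that the dual is the $x$-fixed analogue of \eqref{Rdisc} is a nice extra, but the argument is the standard one the paper also uses.
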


The ellipsoid-based algorithm in~\cite{ShmoysS06} (and for convex 
optimization in general) has two phases: one where we use approximate
subgradients to obtain a polynomial number of feasible points such that at least one of
them is a near-optimal solution, and the other, where we choose the best among these
feasible points. 
In the first phase, starting with an ellipsoid that contains the entire feasible region, 
at each step, we add a cut (i.e., a hyperplane) passing through the center $\bx$ of the
current ellipsoid to chop off a half-ellipsoid that does not contain points of interest.  
If $\bx$ is infeasible, we use a violated inequality to obtain such a cut. Otherwise, we 
find an $\w$-subgradient $\hsgr$ of $\hppoly{\bullet}$ at $\bx$ and use the cut
$\hsgr^\T(y-\bx)\leq 0$; the definition of $\w$-subgradient ensures that any point $y$
discarded by this cut has $\hppoly{y}\geq(1-\w)\hppoly{x}$. We continue this until the
volume of the current ellipsoid becomes sufficiently small, which happens after a
polynomial number of iterations.
The first phase can be executed using $\w$-subgradients, for an
arbitrary $\w$. Shmoys and Swamy~\cite{ShmoysS06} showed that the second phase can be
implemented even without having an (approximate) objective-function oracle (which can be
hard to obtain with exponentially many scenarios) 
provided that we have $\w$-subgradients for sufficiently small $\w$ ($=1/\poly(\inpsize)$).  

Computing $\w$-subgradients efficiently for such small $\w$ would require an FPTAS for
\eqref{zpolydefn}. But, in general, the optimization problems $g(x,y,A)$ and
\eqref{zpolydefn} are complicated problems that can capture the \apx-hard 
\kmaxmin problem---$\max_{A\sse U:|A|\leq k} g(x,A)$---%
encountered in 2-stage robust optimization~\cite{FeigeJMM05,GuptaNR,Khandekar} 
(see Theorem~\ref{saainapx}). 
rules out an FPTAS for \eqref{zpolydefn};
moreover, the approximation we can obtain for $g(x,y,A)$ will naturally depend on the
application.  
We sidestep this difficulty by noting that Lemma~\ref{gxytoapxsub} (ii) 
gives a $\beta_1\beta_2$-approximate value oracle for 
$\hppoly{x}$, which can be used to implement the second phase. 

A final difficulty that remains is that for our applications (see Section~\ref{apps}), we
will only be able to approximate $g(x,y,A)$ for integral $x$ (as is the case with robust \kmaxmin
problems); indeed Theorem~\ref{polythm} only assumes that we have an approximation
algorithm for computing $g(x,y,A)$ when $x\in X=\Pc\cap\Z^m$.
However, we need to 
add an $\w$-subgradient cut passing through the center $\bx$ of our current ellipsoid,
which will typically not be integral; so we will not be able to use
Lemmas~\ref{gxytoapxsub} and~\ref{apxsub} to obtain an $\w$-subgradient at $\bx$.
To bypass this difficulty, we use the unorthodox approach of generating a cut from a point  
{\em different} from the ellipsoid-center $\bx$. 
We round $\bx$ to $\tx\in X$ using our local
approximation algorithm, and use Lemma~\ref{apxsub} at $\bx$, but with an approximate
solution to (\zpolylp{\tx}) (obtained by approximating $g(\tx,y,A)$), 
to compute a vector $\sgr$; 
we add the cut $\sgr^\T(x-\bx)\leq 0$. While $\sgr$ need not
be an $\w$-subgradient at $\bx$, we argue that this cut is still valid, in that any
point $x'$ cut off by the inequality has $\hppoly{x'}$ large compared to $\hppoly{\tx}$.

\begin{lemma} \label{validcut}
Let $\bx\in\Pc$ and $\tx\in X$ be obtained by rounding $\bx$ using a local
$\rho$-approximation algorithm. Let $\gm$ be a $\beta$-approximate 
solution to $\mathrm{(T_{\ppoly,\tx})}$, 
and let $\tsgr=c+\sum_{(A,A')\in\Asup\times\A}\gm_{A,A'}\sgr^{\bx,A'}$.
If $x'\in\Pc$ is such that $\tsgr^\T(x'-\bx)\geq 0$, then 
$\hppoly{x'}\geq
\frac{1}{\rho}\cdot\bigl(c^\T\tx+\sum_{(A,A')\in\Asup\times\A}\gm_{A,A'}g(\tx,A')\bigr)
\geq\frac{1}{\beta\rho}\cdot\hppoly{\tx}$.
\end{lemma}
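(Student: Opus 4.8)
The plan is to chain together three inequalities. The middle one is essentially Lemma~\ref{apxsub} applied not at the ellipsoid center but at the rounded point; the outer two come from the local approximation guarantee and from the $\beta$-approximation quality of $\gm$. Concretely, I would first observe that since $\gm$ is a feasible solution to $(\mathrm{T}_{\ppoly,\tx})$—and, crucially, it is the \emph{same} flow $\gm$ that is feasible for $(\mathrm{T}_{\ppoly,x'})$ because the feasible region of \eqref{zpolydefn} depends only on $\ppoly$, $\scm$, $r$ and not on the first-stage point—we get the lower bound
\[
\hppoly{x'}\ \geq\ c^\T x' + \sum_{(A,A')\in\Asup\times\A}\gm_{A,A'}\,g(x',A').
\]
Now I want to replace $g(x',A')$ by a linear underestimate anchored at $\bx$. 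Using $\sgr^{\bx,A'}$, a subgradient of $g(\cdot,A')$ at $\bx$ (property~\ref{p5}), I have $g(x',A')\geq g(\bx,A')+\sgr^{\bx,A'}\cdot(x'-\bx)$, and similarly $c^\T x' = c^\T\bx + c^\T(x'-\bx)$. Substituting and collecting the $(x'-\bx)$ terms gives
\[
\hppoly{x'}\ \geq\ \Bigl(c^\T\bx + \sum_{A,A'}\gm_{A,A'}g(\bx,A')\Bigr) + \tsgr^\T(x'-\bx),
\]
where $\tsgr = c + \sum_{A,A'}\gm_{A,A'}\sgr^{\bx,A'}$ exactly as in the statement. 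By the hypothesis $\tsgr^\T(x'-\bx)\geq 0$, the second term is nonnegative, so $\hppoly{x'}\geq c^\T\bx + \sum_{A,A'}\gm_{A,A'}g(\bx,A')$.

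The remaining step is to pass from $\bx$ to $\tx$ at the cost of a factor $\rho$, and from $\tx$ to the true value $\hppoly{\tx}$ at the cost of a factor $\beta$. For the first, since $\tx$ is obtained by the local $\rho$-approximation algorithm applied to $\bx$, we have $c^\T\bx \geq \frac{1}{\rho}c^\T\tx$ and—here I would invoke property~\ref{p1}'s monotonicity $g(\bx,A')\geq g(\tx,A')$, or, if that is not available in the needed direction, the local-approximation guarantee $g(\bx,A')\geq \frac{1}{\rho}(\text{cost of }\tz^{A'})\geq \frac{1}{\rho}g(\tx,A')$ (since $g$ is the optimal fractional recourse cost and $\tz^{A'}$ is a feasible integral one)—so $g(\bx,A')\geq\frac{1}{\rho}g(\tx,A')$ term by term. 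This yields
\[
c^\T\bx + \sum_{A,A'}\gm_{A,A'}g(\bx,A')\ \geq\ \tfrac{1}{\rho}\Bigl(c^\T\tx + \sum_{A,A'}\gm_{A,A'}g(\tx,A')\Bigr),
\]
which is the first claimed inequality. For the second, since $\gm$ is a $\beta$-approximate solution to $(\mathrm{T}_{\ppoly,\tx})$, we have $\sum_{A,A'}\gm_{A,A'}g(\tx,A')\geq\frac{1}{\beta}\zppoly[\tx]{}$ (in the notation of the excerpt, $\geq\frac{1}{\beta}$ times the optimum of \eqref{zpolydefn} at $\tx$), and combining with $c^\T\tx\geq\frac{1}{\beta}c^\T\tx$ (trivially, as $\beta\geq 1$ and $c^\T\tx\geq 0$) gives $c^\T\tx + \sum_{A,A'}\gm_{A,A'}g(\tx,A')\geq\frac{1}{\beta}\hppoly{\tx}$. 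Chaining everything produces the stated bound.

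I do not expect a serious obstacle here; the lemma is a careful bookkeeping argument. The one subtlety worth flagging—and the step I would double-check—is the coupling in the first inequality: the argument only works because the \emph{same} near-optimal flow $\gm$ computed for the rounded point $\tx$ is used both to form the cut vector $\tsgr$ (via subgradients at $\bx$) and as a feasible flow witnessing a lower bound on $\hppoly{x'}$; feasibility of $\gm$ for $(\mathrm{T}_{\ppoly,\cdot})$ at every first-stage point is what makes this legitimate, and that in turn relies on the objective of \eqref{zpolydefn} being linear in $\gm$ with a point-independent feasible region. The secondary subtlety is justifying $g(\bx,A')\geq\frac{1}{\rho}g(\tx,A')$; if the clean monotonicity of~\ref{p1} does not apply in this direction, the local-approximation guarantee $(\text{cost of }\tz^{A'})\leq\rho\,g(\bx,A')$ together with $g(\tx,A')\leq(\text{cost of }\tz^{A'})$ supplies it.
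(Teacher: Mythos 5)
Your proof is correct and is essentially the paper's argument: the paper packages the identical chain of inequalities via the auxiliary function $f(x)=c^\T x+\sum_{(A,A')\in\Asup\times\A}\gm_{A,A'}g(x,A')$, using that $f\leq\hppoly{\cdot}$ pointwise (feasibility of the fixed flow $\gm$ at every first-stage point), that $\tsgr$ is a subgradient of $f$ at $\bx$ (by mimicking Lemma~\ref{apxsub}), that $f(\bx)\geq f(\tx)/\rho$ from the local approximation, and that $f(\tx)\geq\hppoly{\tx}/\beta$ from Lemma~\ref{gxytoapxsub}(ii). The only nit is that \ref{p1} does not supply $g(\bx,A')\geq g(\tx,A')$ (it compares only against $x=\bo$), so your fallback route via $g(\tx,A')\leq(\text{cost of }\tz^{A'})\leq\rho\, g(\bx,A')$ is the one that actually carries that step, exactly as the paper intends.
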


\begin{proof} 
Define $f(x):=c^\T x+\sum_{(A,A')\in\Asup\times\A}\gm_{A,A'}g(x,A')$ for all
$x\in\Pc$. Clearly, $f(x)\leq\hppoly{x}$ for all $x\in\Pc$. 
Also, since we use a local approximation algorithm to obtain $\tx$, we have 
$f(\bx)\geq f(\tx)/\rho$.
By mimicking the proof of Lemma~\ref{apxsub}, we have that
$c+\sum_{(A,A')\in\Asup\times\A}\gm_{A,A'}\sgr^{x,A'}$ is a 
subgradient of $f$ at $x$.
We have $\hppoly{x'}-f(\bx)\geq f(x')-f(\bx)\geq\tsgr^\T(x'-\bx)\geq 0$. So
$\hppoly{x'}\geq f(\bx)\geq f(\tx)/\rho$. Finally, $f(\tx)\geq\hppoly{\tx}/{\beta}$ by
Lemma~\ref{gxytoapxsub} (ii).
\end{proof}

We describe below the algorithm \polyalg leading to Theorem~\ref{polythm}. 
By \ref{p3}, $\Pc\sse B(\bo,R)$, and contains a ball of
radius $V\leq 1$, where $\ln\bigl(\frac{R}{V}\bigr)$, $\ln K$ are $\poly(\inpsize)$. 
Lemma~\ref{apxsub} implies that the Lipschitz constant of $\hppoly{.}$ is at most
$K':=\|c\|+K$, so $\ln K'=\poly(\inpsize)$.
To utilize \polyalg to obtain Theorem~\ref{polythm}, we require a lower bound $\lb$ on
$\polyopt:=\min_{x\in X}\hppoly{x}$ with $\log\bigl(\frac{1}{\lb}\bigr)=\poly(\inpsize)$.  
Under a standard, rather mild assumption (that originated in~\cite{ShmoysS06}), we argue
that we can either compute such a lower bound, or determine that $x=\bo$ is an optimal
solution (Lemma~\ref{lbnd}), and show that this suffices. 
Call a scenario $A$ a ``null 
scenario'' if $g(x,A)=g(0,A)$ for all $x\in\Pc$ (e.g., $A=\es$ in \drssc). 
We assume that in every non-null scenario $A$, we have
$c^\T x+g(x,A)\geq 1$ for all $x\in\Pc$. 
We assume that we are given $\scmmax=\max_{A,A'}\scm(A,A')$ (or an upper bound on it) in
the input.

{\small \vspace{5pt} \hrule 

\begin{ppolyalg}

Require: separation oracle for $\Pc$, local $\rho$-approximation algorithm $\B$, 
and a $(\beta_1,\beta_2)$-approximation algorithm $\alg$ for $g(x,y,A)$ for all 
$(x,y,A)\in X\times\Rplus\times\A$. 

\noindent
Output: $\tx\in X$ and $\tf$ satisfying: 
$\tf\leq\hppoly{\tx}\leq\beta_1\beta_2\tf$, and 
$\tf\leq\rho\bigl(\min_{x\in X}\hppoly{x}+\kp\bigr)$.

\begin{enumerate}[label=A\arabic*., topsep=0.5ex, itemsep=0ex, labelwidth=\widthof{A3.},
    leftmargin=!]
\item Set $k\assign 0,\ \bx_0\assign \bo,\ 
\mu\assign\min\bigl\{1,\frac{\kp}{2K'R}\bigr\},\ 
N\assign\lceil 2m^2\ln\bigl(\frac{2R}{\mu V}\bigr)\rceil$. 
Let $E_0\assign B(\bo,R)$ and  $\Pc_0\assign\Pc$. 

\item For $i=0,\ldots, N$ do the following. \mbox{(We maintain that $E_i$ is an 
ellipsoid centered at $\bx_i$ containing $\Pc_k$.)} 
\begin{enumerate}[label=\alph*), ref=\theenumi\alph*), topsep=0ex, itemsep=0ex,
    labelwidth=\widthof{c)}, leftmargin=!]  
\item If $\bx_i\notin\Pc_k$, let $a^\T x\leq b$ be an inequality that is satisfied by all
$x\in\Pc_k$ but violated by $\bx_i$. (This is either obtained from a   separation oracle
for $\Pc$, or from inequalities added in prior iterations.) 
Let $H$ be the halfspace $\{x\in\R^m: a\cdot (x-\bx_i)\leq 0\}$.

\item \label{polyalgcut}
If $\bx_i\in\Pc_k$, let $\tx_k\in X$ be obtained by rounding $\bx_i$ using $\B$.
Use Lemma~\ref{gxytoapxsub} and $\alg$ to obtain a $\beta_1\beta_2$-approximate solution
$\gm$ to (\zpolylp{\tx_k}) (which has polynomial-size support). Define 
$\tsgr_k:=c+\sum_{(A,A')\in\Asup\times\A}\gm_{A,A'}\sgr^{\bx_i,A'}$, and
$\tf_k:=c^\T\tx_k+\sum_{(A,A')\in\Asup\times\A}\gm_{A,A'}g(\tx_k,A')$.  
If $\tsgr_k=\bo$, then return $\tx_k$ and $\tf_k$.
Otherwise, let $H$ denote the halfspace $\{x\in\R^m: \tsgr_k^\T (x-\bx_i)\leq 0\}$. 
Set $\Pc_{k+1}\assign\Pc_k\cap H$, and $k\assign k+1$.  

\item Set $E_{i+1}$ to be the ellipsoid of minimum volume containing the half-ellipsoid
$E_i\cap H$, and let $\bx_{i+1}$ be its center.
\end{enumerate}

\item Let $k\assign k-1$. 
Let $j=\argmin_{i=0,\ldots,k}\tf_i$. Return $\tx_j$ and $\tf_j$.
\end{enumerate} 
\end{ppolyalg}
\hrule
}

\smallskip

\begin{lemma} \label{lbnd}
Suppose that we have a $(\beta_1,\beta_2)$-approximation for $g(\bo,0,A'')$ for some
scenario $A''\in\A$.  
We can efficiently determine that either $\lb=\frac{r}{\beta_1\scmmax}$ is a lower bound
on $\min_{x\in\Pc}\hpcentral[p]{x}$ for every distribution $p$, 
or $x=\bo$ is an optimal solution to $\min_{x\in\Pc}\hpcentral[p]{x}$ for every 
distribution $p$. 
\end{lemma}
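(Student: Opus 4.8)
The plan is to invoke the given $(\beta_1,\beta_2)$-approximation algorithm once on the instance $g(\bo,0,A'')$, which (since $y=0$, so the $\beta_2$ and $A''$ play no role) returns a scenario $\bA\in\A$ with $g(\bo,\bA)\ge\tfrac{1}{\beta_1}\max_{A'\in\A}g(\bo,A')$, and then to compute the number $v:=g(\bo,\bA)$ exactly (possible by \ref{p4}, since $\bA$ is a single scenario). The decision rule I would use is: if $v<\tfrac{1}{\beta_1}$, declare that $x=\bo$ is an optimal solution to $\min_{x\in\Pc}\hpcentral[p]{x}$ for every $p$; and if $v\ge\tfrac{1}{\beta_1}$, declare that $\lb=\tfrac{r}{\beta_1\scmmax}$ is a lower bound on this quantity for every $p$. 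This is clearly efficient, so it remains to prove both declarations are correct. Throughout I would assume $r\le\scmmax$, which is without loss of generality since capping $r$ at $\scmmax$ leaves the ball $\dcol$ unchanged (every transportation plan has total mass $1$, so $\LW(\pcent,q)\le\scmmax$ for all $q$).

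\emph{The branch $v<\tfrac{1}{\beta_1}$.} Here $\max_{A'\in\A}g(\bo,A')\le\beta_1 v<1$. By the standing assumption on non-null scenarios, applied at $x=\bo$ (where $c^\T\bo=0$ by \ref{p1}), every non-null scenario $A$ satisfies $g(\bo,A)\ge 1$; hence the displayed inequality forces there to be no non-null scenarios at all, i.e.\ $g(x,A')=g(\bo,A')$ for every $x\in\Pc$ and every $A'\in\A$. Since the feasible region of the transportation LP defining $\zpcentral[p]{x}$ does not depend on $x$, this gives $\zpcentral[p]{x}=\zpcentral[p]{\bo}$ for all $x\in\Pc$, and therefore $\hpcentral[p]{x}=c^\T x+\zpcentral[p]{\bo}\ge\zpcentral[p]{\bo}=\hpcentral[p]{\bo}$ for every $x\in\Pc$ and every $p$ (using $c\ge\bo$). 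Thus $\bo$ is optimal for every $p$.

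\emph{The branch $v\ge\tfrac{1}{\beta_1}$.} Fix any distribution $p$ and any $x\in\Pc$; I want $\hpcentral[p]{x}\ge\lb$. The key step is to exhibit a feasible transportation plan for the LP defining $\zpcentral[p]{x}$: take $\gm_{A,\bA}:=\tfrac{r}{\scmmax}\,p_A$ for all $A$ and $\gm_{A,A'}:=0$ for $A'\ne\bA$. This respects the marginal bounds since $\tfrac{r}{\scmmax}\le 1$, and its transportation cost is $\tfrac{r}{\scmmax}\sum_A p_A\,\scm(A,\bA)\le\tfrac{r}{\scmmax}\cdot\scmmax=r$, so $\zpcentral[p]{x}\ge\tfrac{r}{\scmmax}\,g(x,\bA)$ and hence $\hpcentral[p]{x}\ge c^\T x+\tfrac{r}{\scmmax}g(x,\bA)$. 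I would then split into two sub-cases, without needing to decide which one holds. If $\bA$ is non-null, then $c^\T x+g(x,\bA)\ge 1$, so using $\tfrac{r}{\scmmax}\le 1$ we get $\hpcentral[p]{x}\ge\tfrac{r}{\scmmax}\bigl(c^\T x+g(x,\bA)\bigr)\ge\tfrac{r}{\scmmax}\ge\lb$. If $\bA$ is null, then $g(x,\bA)=g(\bo,\bA)=v\ge\tfrac{1}{\beta_1}$, so $\hpcentral[p]{x}\ge\tfrac{r}{\scmmax}\cdot\tfrac{1}{\beta_1}=\lb$. Either way $\hpcentral[p]{x}\ge\lb$, which proves the claim.

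The main obstacle — though a mild one — is that the scenario $\bA$ returned by the oracle may itself be a null scenario, in which case the convenient bound $c^\T x+g(x,\bA)\ge 1$ is unavailable; the remedy is that routing flow towards $\bA$ still produces a useful bound, and the explicitly computed value $v=g(\bo,\bA)$ (which simultaneously drives the case split) serves as the required substitute lower bound when $\bA$ is null.
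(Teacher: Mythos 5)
Your proof is correct and follows essentially the same route as the paper's: run the oracle once at $(\bo,0,A'')$, threshold the returned value $g(\bo,\bA)$ at $\tfrac{1}{\beta_1}$ to decide between the two declarations, use the non-null assumption to conclude that the low branch forces every scenario to be null (so $\bo$ is optimal), and in the high branch exhibit a feasible transportation plan routing $\tfrac{r}{\scmmax}$ mass to a single scenario to certify the lower bound. The only (cosmetic) difference is that you case on whether the returned scenario $\bA$ is null rather than on whether $\A$ contains any non-null scenario, and you make explicit the harmless normalization $r\le\scmmax$ that the paper uses implicitly.
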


\begin{proof} 
We first show that $\min_{x\in X}\hpcentral[p]{x}\geq\frac{r}{\scmmax}$ for every $p$, if
$\A$ contains any non-null scenario. Otherwise $x=\bo$ is an optimal solution to
$\min_{x\in X}\hppoly{x}$ for every $p$. 
Note that a non-null scenario $A$ must satisfy $g(0,A)\geq 1$.

Say $A^*\in\A$ is a non-null scenario.  
Fix any $x\in X$. There is a feasible solution $\gm$ to
(\zpolylp[p]{x}) that sends at least $\frac{r}{\scmmax}$ flow to $A^*$, i.e.,
$\sum_{A\in\Asup}\gm_{A,A^*}\geq\frac{r}{\scmmax}$. 
So $\zpcentral[p]{x}\geq\frac{r}{\scmmax}\cdot g(x,A^*)$, and
so $\hpcentral[p]{x}\geq\frac{r}{\scmmax}$, since $c^\T x+g(x,A^*)\geq 1$ as $A^*$ is a
non-null scenario. This holds for every $p$. 

If all scenarios in $\A$ are null scenarios, then $\zpcentral[p]{x}=\zpcentral[p]{\bo}$
for all $x\in X$, since $g(x,A)=g(\bo,A)$ for all $A\in\A$ and $x\in X$. Hence,
$\hpcentral[p]{\bo}=\min_{x\in X}\hpcentral[p]{x}$; again, this holds for all $p$.

We use the $(\beta_1,\beta_2)$-approximation algorithm for $g(0,0,A'')$ to obtain a
scenario $\bA\in\A$. Therefore, we have 
$g(0,\bA)\geq\frac{1}{\beta_1}\cdot\max_{A\in\A}g(0,A)$. So if $g(0,\bA)<\frac{1}{\beta_1}$,
then $g(0,A)<1$ for all $A\in\A$, which means that all scenarios in $\A$ are null
scenarios, and we return $x=\bo$ as an optimal solution. Otherwise, we return the lower
bound $\lb$. To see why $\lb$ is a valid lower bound, when
$g(0,\bA)\geq\frac{1}{\beta_1}$, note that there are two cases. If $\A$ contains a
non-null scenario then we have established that $\frac{r}{\scmmax}$ is a lower bound.
Otherwise, we have established that $x=\bo$ is an optimal solution; there is a feasible
solution to (\zpolylp[p]{\bo}) that sends at least $\frac{r}{\scmmax}$ to $\bA$, so 
$\hpcentral[p]{\bo}\geq g(0,\bA)\cdot\frac{r}{\scmmax}=\frac{r}{\beta_1\scmmax}$.
\end{proof}

\begin{proofof}{Theorem~\ref{polythm}}
We first apply Lemma~\ref{lbnd} to either determine that $x=\bo$ is an optimal solution,
or obtain a lower bound $\lb=\frac{r}{\beta_1\scmmax}$ on $\polyopt$.
If Lemma~\ref{lbnd} returns $x=\bo$ as an optimal solution, then we
use Lemma~\ref{gxytoapxsub} and $\alg$ to obtain a $\beta_1\beta_2$-approximate solution
$\gm$ to (\zpolylp{0}). 
We return $x=0$ as the optimal solution, and $\sum_{(A,A')\in\Asup\times\A}\gm_{A,A'}g(0,A')$
as an estimate of $\hppoly{\bo}$, which is a suitable estimate due to
Lemma~\ref{gxytoapxsub} (ii). 

So suppose Lemma~\ref{lbnd} returns the lower bound $\lb$. 
We run Algorithm \polyalg with $\kp=\ve\lb$. 
By Lemma~\ref{gxytoapxsub} (ii), we immediately obtain
that $\tf_l\leq\hppoly{\tx_l}\leq\beta_1\beta_2\cdot\tf_l$ for all $l=1,\ldots,k$.

We re-work the arguments in Lemma 4.5 from~\cite{ShmoysS06}.
For $S\sse\R^m$, let $\vol(S)$ denote the volume of $S$. Let $\vol_m$ denote the volume of
the unit ball (in the $L_2$-norm) in $\R^m$. It is well known that
$\frac{\vol(E_{i+1})}{\vol(E_i)}\leq e^{-1/2m}$ for every $i=0,\ldots,N$ (see, e.g.,
\cite{GrotschelLS88}). 

Let $\sx\in X$ be an optimal solution to $\min_{x\in X}\hppoly{x}$. 
Recall that $\mu=\min\bigl\{1,\frac{\kp}{2K'R}\bigr\}$. 
If $\tsgr_l\cdot(\sx-\bx_l)\geq 0$ for some $l$ (this includes the case when
$\tsgr_l=\bo$), then Lemma~\ref{validcut} shows that 
$\tf_l\leq\rho\cdot\hppoly{\sx}$. 
Otherwise consider the affine transformation
$T$ defined by $T(x)=\mu I_m(x-\sx)+\sx=\mu x+(1-\mu)\sx$ where $I_m$ is the $m\times m$ 
identity matrix, and let $W=T(\Pc)$, so $W\sse\Pc$ is a shrunken version of $\Pc$.
By properties of affine transformations, we have 
$\vol(W)=\mu^m\vol(\Pc)\geq(\mu V)^m\vol_m$, where the last inequality follows since $\Pc$
contains a ball of radius $V$.  
For any $x'=T(x)\in W$, we have $\|x'-\sx\|=\mu\|x-\sx\|\leq\frac{\kp}{K'}$ since
$x,\sx\in B(\bo,R)$; so $\hppoly{x'}\leq\hppoly{\sx}+\kp$ since $\hppoly{\cdot}$ has
Lipschitz constant at most $K'$.
The volume of the ball $E_0=B(\bo,R)$ is $R^m\vol_m$. Therefore,
$$
\vol(\Pc_k)\leq\vol(E_N)\leq e^{-N/(2m)}\,\vol(E_0)\leq 
\Bigl(\tfrac{\mu V}{2}\Bigr)^m\vol_m<\vol(W).$$ 
So there must be a point $x'\in W$ that lies on a boundary of $\Pc_k$ generated by a
hyperplane $\hsgr_l\cdot(x-\bx_l)=0$. This implies (by Lemma~\ref{validcut}) that 
$$
\tf_l\leq\rho\cdot\hppoly{x'}\leq\rho\bigl(\hppoly{\sx}+\kp\bigr)
\leq\rho\cdot\hppoly{\sx}+\rho\ve\cdot\lb
\leq\rho(1+\ve)\hppoly{\sx}
$$
where the last inequality follows since $\lb$ is a lower bound on $\polyopt$.
\end{proofof}

\begin{proofof}{Lemma~\ref{gxytoapxsub}}
Part (ii) follows immediately from part (i) and the definition of $\hppoly{x}$. 
We focus on proving part (i).
We consider the dual of \eqref{zpolydefn}, and show that a
$(\beta_1,\beta_2)$-approximation algorithm $\alg$ for $g(x,y,A)$ yields an 
{\em approximate} separation oracle for the dual. The dual of \eqref{zpolydefn} is as
follows. 
\begin{alignat}{2}
\min & \quad & \sum_{A\in\Asup}\ppoly_A&\tht_A+ry \tag{D} \label{dual} \\
\text{s.t.} && \tht_A & \geq g(x,A')-y\cdot\scm(A,A') \qquad \forall A\in\Asup, A'\in\A 
\label{dgxy} \\
&& \tht, y & \geq 0. \label{dnonneg}
\end{alignat}
Notice that \eqref{dual} is an LP (since $x$ is fixed) with an exponential number of
constraints, but a polynomial number of variables.
It is evident that $\alg$ yields some type of approximate separation oracle for
\eqref{dual}. Using a standard technique in approximation algorithms, we prove
that \eqref{dual}, and the primal \eqref{zpolydefn}, can be solved approximately (see,
e.g.,~\cite{JainMS03,FriggstadS15}). 

Define $\Qc(\nu):=\{(\tht,y): \eqref{dgxy},\ \eqref{dnonneg},\ \sum_{A\in\Asup}\ppoly_A\tht_A+ry\leq\nu\}$. 
Note that $\OPT_{\ref{dual}}$ is the smallest $\nu$ such that $\Qc(\nu)\neq\es$. 
We use 
$\alg$ to give an approximate separation oracle in the following sense. Given $\nu, (\tht,y)$,
we either show that $(\beta_1\tht,\beta_1\beta_2 y)\in\Qc(\beta_1\beta_2\nu)$, or we
exhibit a hyperplane separating $(\tht,y)$ from $\Qc(\nu)$. Thus, for a fixed $\nu$, in
polynomial time, the ellipsoid method either certifies that $\Qc(\nu)=\es$, or returns a
point $(\tht,y)$ with $(\beta_1\tht,\beta_1\beta_2 y)\in\Qc(\beta_1\beta_2\nu)$. The
approximate separation oracle proceeds as follows. 
We first check if $\sum_{A\in\Asup}\ppoly_A\tht_A+ry\leq\nu$ and \eqref{dnonneg} hold, and if
not, use the appropriate inequality as the separating hyperplane. 
Next, for every $A\in\Asup$, we run $\alg$ for the point $(x,y,A)$. If in
this process, we ever obtain a scenario $\bA$ such that $g(x,\bA)-y\cdot\scm(A,\bA)>\tht_A$
then we return $\tht_A\geq g(x,\bA)-y\cdot\scm(A,\bA)$ as the separating
hyperplane. Otherwise, for all $A\in\Asup$ and $A'\in\A$, we have 
$$
\tht_A\geq g(x,\bA)-y\cdot\scm(A,\bA)\geq g(x,A')/\beta_1-\beta_2y\cdot\scm(A,A').
$$
This implies that $(\beta_1\tht,\beta_1\beta_2 y)\in\Qc(\beta_1\beta_2\nu)$. 

It is easy to find an upper bound $\ub$ with $\log\ub$ polynomially bounded such that
$\Qc(\ub)\neq\es$. For a given $\e>0$, we use binary search in $[0,\ub]$ to find $\nu^*$
such that the ellipsoid method when run for $\nu^*$ (with the above separation oracle),
returns a solution $(\tht^*,y^*)$ with 
$(\beta_1\tht^*,\beta_1\beta_2 y^*)\in\Qc(\beta_1\beta_2\nu^*)$, and when run for
$\nu^*-\e$ 
certifies that $\Qc(\nu^*-\e)=\es$. 
So $\OPT_{\ref{dual}}\leq\beta_1\beta_2\nu^*$. 
For $\nu^*-\e$, we obtain a polynomial-size certificate for the emptiness of
$\Qc(\nu^*-\e)$. This consists of the polynomially many violated
inequalities returned by the separation oracle during the execution of the ellipsoid
method, and the inequality $\sum_{A\in\Asup}\ppoly_A\tht_A+ry\leq\nu^*-\e$.
By duality (or Farkas' lemma), this means that if we restrict \eqref{zpolydefn} to only 
use the $\gm_{A,A'}$ variables corresponding to (the polynomially-many) violated inequalities of type \eqref{dgxy}
returned during the execution of the ellipsoid method,  we can obtain a polynomial-size
feasible solution $\overline{\gm}$ to \eqref{zpolydefn} whose value is at least
$\nu^*-\e$. If we take $\e$ to be $1/\exp(\inpsize)$ (so the binary search still takes
polynomial time),
this also implies that
$\overline{\gm}$ has value at least $\nu^*\geq\OPT_{\ref{dual}}/(\beta_1\beta_2)$.
\end{proofof}

\subsubsection{Hardness results for the SAA problem} \label{hardness}

First, observe that 
for the \dr 2-stage problem $\min_{x\in X}\hppoly{x}$, where $\ppoly$ has 
{\em polynomial-size} support, 
if we set $r=\scmmax$, then $\zppoly{\bo}=\max_{A\in\A}g(\bo,A)$, 
so that computing $\zppoly{\bo}$ is equivalent to the \maxmin problem $\max_{A\in\A}g(\bo,A)$. 

\begin{theorem} 
\label{saainapx} 
Consider the \dr 2-stage problem $\min_{x\in X}\hppoly{x}$, where 
the support of $\ppoly$ is a {\em polynomial-size} subset of 
$\A_{\leq k}=\{A\sse U: |A|\leq k\}$.
Consider the following two settings. 
\begin{enumerate}[label=(B\arabic*), topsep=0ex, itemsep=0.5ex, labelwidth=\widthof{(B2)},
    leftmargin=!] 
\item the \kbounded setting with the $\frac{1}{2}L_1$ metric;
\item the \allsets setting 
with scenario metric $\scm$ given by: $\scm(A,A)=0$ for all $A\in\A$; 
for $A\neq A'\in\A$, we have $\scm(A,A')=1$ if $|A|,|A'|\leq k$, and $Z$ otherwise, 
where $\frac{Z}{2}$ is an upper bound on $g(\bo,U)$.
\end{enumerate}
Assume that $g(\bo,\es)=0$,  
the \kmaxmin problem $(\Pi):\ \max_{A\sse U:|A|\leq k} g(\bo,A)$, is \nphard, and the
optimum value of $(\Pi)$ is at least $1$.  
We have the following hardness results in both settings, assuming {\em P}$\neq$\np.
\begin{enumerate}[label=(\alph*), topsep=0ex, itemsep=0.5ex, labelwidth=\widthof{(c)},
    leftmargin=!] 
\item No polytime multiplicative approximation is possible for computing 
  $g(\bo,y,\es)$, given $y\geq 0$ as input.  
\item By choosing $\htp$ suitably, the hardness result in (a) carries over to the problem
  of computing $\E[A\sim\htp]{g(\bo,y,A)}$, given $y\geq 0$ as input. 
\item One can choose $r$, $\htp$ so that the problem of computing
  $\zppoly{\bo}$ is at least as hard as $(\Pi)$.
\end{enumerate}
\end{theorem}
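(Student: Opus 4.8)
\medskip
\noindent\textbf{Proof proposal.}
All three parts flow from a single observation: under either scenario metric, $g(\bo,y,\es)$ reduces to a thresholded version of the \kmaxmin optimum $\OPT(\Pi)$, so I would first show that in both settings, and for $y$ in an appropriate range, $g(\bo,y,\es)=\max\{0,\OPT(\Pi)-y\}$. Indeed, $g(\bo,y,\es)=\max_{A'\in\A}\bigl(g(\bo,A')-y\cdot\scm(\es,A')\bigr)$ is always at least $g(\bo,\es)-y\cdot\scm(\es,\es)=0$. In the \kbounded setting (B1) the scenario metric is the discrete metric $\scmdisc$, so $\scm(\es,A')=1$ for every $A'\ne\es$; since $g(\bo,\es)=0$ and $\max_{A'\in\A}g(\bo,A')=\OPT(\Pi)$ (as $\A=\A_{\leq k}$), this gives $g(\bo,y,\es)=\max\{0,\OPT(\Pi)-y\}$ for every $y\ge 0$. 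In the \allsets setting (B2), $\scm(\es,A')=1$ for $0<|A'|\le k$ and $\scm(\es,A')=Z$ for $|A'|>k$, so $g(\bo,y,\es)=\max\{0,\ \OPT(\Pi)-y,\ G-yZ\}$ with $G:=\max_{|A'|>k}g(\bo,A')$; since $G\le g(\bo,U)\le Z/2$ ($g(\bo,\cdot)$ is monotone in the scenario for the covering problems at hand), we have $G-yZ\le Z(\tfrac12-y)\le 0$ once $y\ge\tfrac12$, so again $g(\bo,y,\es)=\max\{0,\OPT(\Pi)-y\}$ for all $y\ge\tfrac12$.

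For part~(a) I would then invoke the ``approximating zero'' obstruction: a multiplicative $c$-approximation of a nonnegative maximization returns a positive value precisely when the true optimum is positive. Hence a polytime $c$-approximation for $g(\bo,y,\es)$ (for any factor $c$, even input-dependent), run at $y=T-\tfrac12$, decides whether $\OPT(\Pi)-(T-\tfrac12)>0$, i.e.\ whether $\OPT(\Pi)\ge T$ for the (integer-valued) \kmaxmin instances---contradicting the \np-hardness of $(\Pi)$. Part~(b) is then immediate: take $\htp$ to be (say) the point mass on $\es$ (support size one, hence polynomial), so that $\E[A\sim\htp]{g(\bo,y,A)}=g(\bo,y,\es)$, and (a) applies verbatim.

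For part~(c) I would again take $\htp$ to be the point mass on $\es$, and choose the radius $r$ so that the transportation LP defining $\zppoly{\bo}$ is forced onto scenarios of size at most $k$, whence $\zppoly{\bo}=\OPT(\Pi)$ exactly. In the \kbounded setting, $r=\scmmax=1$ works (for any $\htp$): as observed just before the theorem, constraint~\eqref{rbnd} is then implied by~\eqref{pbnd}, so $\zppoly{\bo}=\max_{A\in\A}g(\bo,A)=\OPT(\Pi)$ since $\A=\A_{\leq k}$. In the \allsets setting take $r=1$; writing $\gm_{A'}$ for the flow out of $\es$ (mass on $\es$ itself contributes nothing, as $g(\bo,\es)=0$), the feasible set is $\{\gm\ge 0:\ \sum_{A'}\gm_{A'}\le 1,\ \sum_{0<|A'|\le k}\gm_{A'}+Z\sum_{|A'|>k}\gm_{A'}\le 1\}$, and grouping the mass as $a:=\sum_{0<|A'|\le k}\gm_{A'}$ and $b:=\sum_{|A'|>k}\gm_{A'}$ gives $\zppoly{\bo}\le\max\{\OPT(\Pi)\,a+G\,b:\ a,b\ge 0,\ a+Zb\le 1\}$. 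This two-variable LP is maximized at $(a,b)=(1,0)$ with value $\OPT(\Pi)$---the only competing vertex $(0,1/Z)$ gives $G/Z\le\tfrac12<1\le\OPT(\Pi)$---and this value is attained by routing all the mass to an optimal \kmaxmin scenario, so $\zppoly{\bo}=\OPT(\Pi)$ and computing it is exactly as hard as $(\Pi)$.

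Parts~(a) and~(b) become immediate once the collapse to $\max\{0,\OPT(\Pi)-y\}$ is in hand, so the delicate point is part~(c) in the \allsets setting: one has to pick $r$ so that routing flow to a large ($|A'|>k$) scenario is never profitable, even though such a scenario may have a larger $g(\bo,\cdot)$-value than any small one. This is exactly what the interaction between the $Z$-blowup of the distance to large scenarios in~\eqref{rbnd} and the $Z/2$ ceiling on $g(\bo,\cdot)$ delivers; confirming that it makes the vertex $(0,1/Z)$ strictly worse than $(1,0)$---the step that uses $\OPT(\Pi)\ge 1$---is the one calculation that needs care.
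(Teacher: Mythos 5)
Your proposal is correct and follows essentially the same route as the paper's proof: part (b) via the point mass on $\es$, part (a) via the observation that the mixed-sign objective makes $g(\bo,y,\es)$ either strictly positive or exactly zero depending on whether $\optp$ exceeds the threshold (so any multiplicative approximation decides the \nphard threshold problem), and part (c) by choosing $r$ so that routing mass to large scenarios in (B2) is unprofitable, using $\optp\geq 1$ against the $Z/2$ ceiling. Your packaging of (c)(B2) as a two-variable LP vertex comparison is a slightly cleaner rendering of the paper's algebraic derivation of $(\al+\tht)(\optp-\tfrac{1}{2})\leq 0$, and your $y=T-\tfrac{1}{2}$ offset in (a) is an inessential variant of the paper's choice $y=T$ (with the same implicit reduction to $T\geq 1$).
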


\begin{proof} 
Part (b) follows from part (a) by simply taking $\htp$ to be
the distribution that puts a weight of $1$ on the scenario $\es$; then
$\E[A\sim\htp]{g(\bo,y,A)}=g(\bo,y,\es)$, so the hardness result in part (a) carries over. Let $A^*\in\A_{\leq k}$ be an optimal solution to $(\Pi)$, and $\optp=g(\bo,A^*)$ be its objective 
value.  

\vspace*{-1ex}
\paragraph{Part (a).}
We consider the setting (B1) first.
Clearly, $g(\bo,y,\es)$ also seeks to find an optimum of $(\Pi)$. 
By exploiting the mixed-sign objective, we can argue that any multiplicative
approximation would allow us to decide if $\optp > T$ by setting $y$ appropriately, which
is \npcomplete. More precisely, suppose we have a $\beta$-approximation algorithm for
$g(\bo,y,\es)$. Then, we can decide if $\optp > T$ for a given number $T \ge 0$ as follows. Set
$y=T$, and run the $\beta$-approximation algorithm. 
If $\optp> T$, then 
\[
g(0, y, \es) \ge g(0, A^*) - y \cdot \scm(\es, A^*) > T - T \cdot 1 = 0 ,
\]
so the approximation algorithm would return a
solution with positive value. 
If instead we have $\optp\le T$, then for every scenario $A' \in \A_{\le k}$ with $A' \neq \es$, we have $
g(0, A') - y \cdot \scm(\es, A') = g(0, A') - T \le 0 $. 
Since we also have $g(0, \es) - y \cdot \scm(\es, \es) = 0 - T \cdot 0 = 0$, we conclude that $g(0, y, \es) = 0$, and so the approximation algorithm must return a
solution with value $0$.
So we can distinguish between $\optp > T$ and $\optp \le T$.

Now consider the setting (B2). Again, suppose we are given $T \ge 0$ and we want to
decide if $\optp > T$. We may assume that $T\geq 1$, as otherwise the answer is
yes. Again take $y=T$. If $\optp > T$, then scenario $A^*$ 
satisfies $g(\bo,A^*)-y\cdot\scm(\es,A^*) > T - T \cdot 1 \ge 0$, so a multiplicative approximation for
$g(\bo,y,\es)$ must return a solution with positive objective value. If $\optp\le T$, then we
claim that $g(\bo,y,\es)=0$, and so the approximation algorithm must return a solution
with objective value 0. Thus, we can distinguish between $\optp\geq T$ and $\optp<T$. 
To prove the claim, we have $g(\bo,\es)-y\cdot\scm(\es,\es)=0$.
For every $A'\in\A_{\leq k}$, we have 
$g(\bo,A')-y\cdot\scm(\es,A')\leq T-T\cdot 1=0$. For every $A'\notin\A_{\leq k}$, we
have $g(\bo,A')-y\cdot\scm(\es,A')\leq Z/2- T Z\leq 0$. 

\vspace*{-1ex}
\paragraph{Part (c).}
For the setting (B1), we simply set $r=\scmmax$ (and $\htp$ to be arbitrary). Then, we
have $\zppoly{\bo}=\max_{A'\in\A_{\leq k}}g(\bo,A')$, which is exactly the same as problem
$(\Pi)$. 

For the setting (B2), we set $r=1$ and take $\htp$ to be the distribution that puts weight
of 1 on $\es$. We claim that $\zppoly{\bo}$ is again the same as
problem $(\Pi)$. 
Setting $\gm_{\es,A^*}=1$ and $\gm_{A,A'}=0$ everywhere else
gives a feasible solution to (\zpolylp{\bo}) of objective value $\optp$.
Let $\gm^*$ be an optimal solution to (\zpolylp{\bo}). Let $\al$ be the amount of flow
sent by $\gm^*$ on $(\es,A')$ pairs with $\scm(\es,A')=Z$. Let $\tht=\gm^*_{\es,\es}$. 
The flow on the remaining $(\es,A)$ pairs has volume $1-\al-\tht$, contributes at most
$(1-\al-\tht)\optp$ to the objective, and has $\scm$-cost $1-\al-\tht$.
So we have $\al\cdot Z+(1-\al-\tht)\leq 1$ and $\optp\leq\al\cdot\frac{Z}{2}+(1-\al-\tht)\optp$, 
which implies that $(\al+\tht)\bigl(\optp-\frac{1}{2}\bigr)\leq 0$. Since $\optp\geq 1$ by
assumption, we have that $\al+\tht=0$, and hence $\gm^*$ has objective value $\optp$.
\end{proof}

\subsubsection{Refinements: \boldmath formulating (\Qppoly) as a compact LP in special
  cases} \label{refine}
We say that the set of scenarios $\A$ is 
{\em collapsible under the scenario metric $\scm$} if for every scenario $A \in \A$,  
we can efficiently compute a polynomial-size collection of scenarios $\phi(A)$ such that
for every $x\in\Pc$, $y\geq 0$, we have
$g(x,y,A)=\max_{A'\in\phi(A)}\bigl(g(x,A')-y\cdot\scm(A,A')\bigr)$. 
For example, if $\A=2^U$ for a ground set $U$, $\scm$ is the discrete scenario metric, and 
$g(x,A)\leq g(x,A')$ for all $x$, $A\sse A'$, then $\A$ is collapsible under $\scm$ since
$g(x,y,A)$ is attained by scenarios $A$ or $U$, for all $(x,y,A)\in\Pc\times\R_+\times\A$. 
We show that if $\A$ is collapsible under $\scm$ then (\Qppoly) can be cast as a
polytime-solvable LP, 
and its optimal solution can be rounded using an algorithm that is weaker than a
local approximation algorithm. 
(Note also that in this special case,
we have a simple, application-independent polytime algorithm for computing $g(x,y,A)$
exactly.)  

A {\em restricted local $\rho$-approximation algorithm} takes as input a point 
$x\in\Pc$ {\em and a set of scenarios $\Atilde \subseteq \A$}, and returns an integral solution
$\tx\in X$ and integral recourse actions $\tz^A$ for every $A\in\Atilde$ (possibly
specified implicitly), such that $c^\T\tx\leq\rho(c^\T x)$ 
and $\text{(cost of $\tz^A$)}\leq\rho g(x,A)$ for all $A\in\Atilde$. 
(A local $\rho$-approximation algorithm is a special case of this.) 
This weaker notion will be crucial for the Steiner-tree application in Section~\ref{apps}. 

\begin{theorem} \label{polythm_collapsible}
Suppose that $\A$ is collapsible under the scenario metric $\scm$, 
and $g(x,A)$ is the optimal value of a polytime-solvable LP for all
$(x,A)\in\Pc\times\A$. 
Suppose that we have a polytime separation oracle for $\Pc$, and a restricted
local $\rho$-approximation algorithm for (\Qppoly). 
Then, in $\poly(\inpsize)$ time, we can compute: 
\begin{enumerate}[label=(\alph*), topsep=0.5ex, itemsep=0.5ex] 
\item 
an optimal solution $\bx\in\Pc$ to $\min_{x\in\Pc}\hppoly{x}$, and its objective value
$\hppoly{\bx}$;
\item 
$\tx\in X$, and its objective value $\hppoly{\tx}$, satisfying
$\hppoly{\tx}\leq\rho\cdot\min_{x\in\Pc}\hppoly{x}$.
\end{enumerate}
\end{theorem}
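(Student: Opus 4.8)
The plan is to exploit collapsibility to turn (\Qppoly) into a polynomial-size linear program that is solvable exactly, and then to round its optimal fractional solution with the restricted local approximation algorithm. For part (a) I would begin from the Lagrangian reformulation of $\zppoly{x}$: fixing $x\in\Pc$ and dualizing constraint $\eqref{rbnd}$ in $\eqref{zpolydefn}$ with a multiplier $y\ge 0$, exactly as in the passage from $\eqref{Qdisc}$ to $\eqref{Rdisc}$. Since $\eqref{zpolydefn}$ is feasible ($\gm=\bo$) and bounded (because $\gm_{A,A'}\le\ppoly_A$ and $g(x,A')\le g(\bo,A')$), strong LP duality gives
\begin{align*}
\zppoly{x}\;&=\;\min_{y\ge 0}\Bigl(ry+\sum_{A\in\Asup}\ppoly_A\,g(x,y,A)\Bigr)\\
&=\;\min_{y\ge 0}\Bigl(ry+\sum_{A\in\Asup}\ppoly_A\max_{A'\in\phi(A)}\bigl(g(x,A')-y\cdot\scm(A,A')\bigr)\Bigr),
\end{align*}
the second equality being exactly the collapsibility hypothesis (note that, for any $y$, $\max_{A'\in\phi(A)}(\cdot)=g(x,y,A)\ge g(x,A)\ge 0$, so the per-$A$ dual variable is correctly captured by a single epigraph variable constrained only against the scenarios in $\phi(A)$). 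I would then introduce an epigraph variable $\tht_A$ for each inner maximum, let $x$ range over $\Pc$, and replace each $g(x,A')$ by its defining polytime-solvable minimization LP --- legitimate because $g(x,A')$ occurs only on the larger side of a ``$\ge$'' constraint, so its minimization folds into the outer minimization and merely introduces one extra second-stage vector $z^{A,A'}$ per pair $(A,A')$ with $A'\in\phi(A)$. Since $|\Asup|$ and each $|\phi(A)|$ are $\poly(\inpsize)$ and each $g(x,A')$-LP is polynomial-size, the resulting program has $\poly(\inpsize)$ variables and explicit linear constraints apart from $x\in\Pc$, for which we have a separation oracle; using the ball bounds in \ref{p3}, the ellipsoid method solves it exactly in $\poly(\inpsize)$ time. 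For each fixed $x$ the optimal choice of the remaining variables evaluates to $\zppoly{x}$, so the optimum equals $\min_{x\in\Pc}\hppoly{x}$, and it equals $\hppoly{\bx}$ for the returned first-stage component $\bx$; this is part (a).

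For part (b), I would set $\Atilde:=\bigcup_{A\in\Asup}\phi(A)$, a polynomial-size collection of scenarios --- note it is the only set of scenarios that appears in the formula for $\zppoly{\cdot}$ above --- and feed $\bx$ together with $\Atilde$ to the restricted local $\rho$-approximation algorithm. This returns $\tx\in X$ with $c^\T\tx\le\rho\,c^\T\bx$ and integral recourse actions for every $A'\in\Atilde$ of cost at most $\rho\,g(\bx,A')$, whence $g(\tx,A')\le\rho\,g(\bx,A')$ for all $A'\in\Atilde$ (as $g(\tx,A')$ is the optimal second-stage cost of $A'$ given $\tx$). Letting $y^*$ attain $\zppoly{\bx}$ in the formula above and plugging $y=\rho y^*$ into the analogous formula for $\zppoly{\tx}$, then pulling $\rho$ out using $g(\tx,A')\le\rho\,g(\bx,A')$ and $\rho\ge 1$, gives
\begin{align*}
\zppoly{\tx}\;&\le\;r\rho y^*+\sum_{A\in\Asup}\ppoly_A\max_{A'\in\phi(A)}\bigl(g(\tx,A')-\rho y^*\cdot\scm(A,A')\bigr)\\
&\le\;\rho\Bigl(r y^*+\sum_{A\in\Asup}\ppoly_A\max_{A'\in\phi(A)}\bigl(g(\bx,A')-y^*\cdot\scm(A,A')\bigr)\Bigr)\;=\;\rho\,\zppoly{\bx}.
\end{align*}
Hence $\hppoly{\tx}=c^\T\tx+\zppoly{\tx}\le\rho\,\hppoly{\bx}=\rho\cdot\min_{x\in\Pc}\hppoly{x}$, and $\hppoly{\tx}$ is computed exactly by solving the same polynomial-size LP restricted to $x=\tx$.

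The substance of the argument is the collapsibility-based LP reformulation; the parts most needing care are the feasibility/strong-duality bookkeeping certifying that the LP computes $\min_{x\in\Pc}\hppoly{x}$ exactly (rather than merely an approximation), and the observation that collapsibility makes the \emph{restricted} local approximation algorithm --- which need only handle the polynomially many scenarios in $\Atilde$ --- sufficient, which is exactly what the Steiner-tree application in Section~\ref{apps} will exploit. The only mildly non-obvious step in the rounding is the rescaling $y^*\mapsto\rho y^*$ that lets one push the factor $\rho$ through the inner maxima; beyond that I do not anticipate a genuine obstacle.
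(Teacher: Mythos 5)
Your proposal is correct and follows essentially the same route as the paper: dualize \eqref{zpolydefn}, use collapsibility to keep only the constraints indexed by $A'\in\phi(A)$, fold in the LP defining $g(x,A')$ to get a single polynomial-size LP solved by the ellipsoid method, and round with the restricted local algorithm on $\Atilde=\bigcup_{A\in\Asup}\phi(A)$. Your rescaling $y^*\mapsto\rho y^*$ in part (b) is exactly the paper's observation that $(\rho\theta^*,\rho y^*)$ remains dual-feasible for the instance at $\tx$, so there is no substantive difference.
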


\begin{proof} 
We reformulate $\zppoly{x}$ as an LP. The dual of \eqref{zpolydefn} is as
follows. 
\begin{alignat}{2}
\min & \quad & \sum_{A\in\Asup}\ppoly_A&\tht_A+ry \tag{D${}_{\ppoly,x}$} \label{dual_before_collapsing} \\
\text{s.t.} && \tht_A & \geq g(x,A')-y\cdot\scm(A,A') \qquad \forall A\in\Asup, A'\in\A 
\label{dgxyagain} \\
&& \tht, y & \geq 0. \notag 
\end{alignat}

Since by assumption $\A$ is collapsible under the scenario metric $\scm$, the
exponentially many constraints in \eqref{dgxyagain} can be collapsed to the polynomially
many constraints: 
\begin{equation} 
\tht_A \geq g(x,A')-y\cdot\scm(A,A') \qquad \forall A\in\Asup, A'\in\phi(A).
\label{dgxypoly}
\end{equation}
Suppose that $g(x,A)$ is captured by the polytime-solvable LP: 
$\min\ c^A \cdot z_A\ \text{s.t.}\ (x, z_A) \in \mathcal{F}(A)$, where $\mathcal{F}(A)$ is a polytope (over which we can optimize linear functions efficiently). Then, incorporating this in the
above constraints, we obtain the following LP-formulation for $\min_{x\in\Pc}\hppoly{x}$.
\begin{alignat}{3}
\min & \quad & c^\T x+\sum_{A\in\Asup}&\ppoly_A\tht_A+ry \tag{DR-LP} \label{drlp} \\
\text{s.t.}
&& \tht_A & \geq c^{A'} \cdot z_{A'}-y\cdot\scm(A,A') \quad && \forall A\in\Asup, A' \in \phi(A) \notag \\ 
&& \quad (x, z_{A'}) & \in \mathcal{F}(A') && \forall A' \in \cup_{A \in \Asup} \phi(A) \notag \\
&& \tht, y &\geq 0, \quad x\in\Pc. \notag
\end{alignat}
Since we have polytime separation oracles for the polytopes $\Pc$ and $\{\mathcal{F}(A')\}$, we can efficiently compute an
optimal solution $\bx$ for \eqref{drlp} using the ellipsoid method. This proves part (a). 

Part (b) follows from part (a) by applying the restricted local $\rho$-approximation
algorithm with the scenario set $\Atilde := \cup_{A \in \Asup} \phi(A)$ to round $\bx$ and
obtain $\tx\in X$. As shown above, we can efficiently compute $\zppoly{\tx}$, and hence
$\hppoly{\tx}$, by solving an LP. 
Observe that if $(\theta^*, y^*)$ is an optimal solution to (D$_{\ppoly,\bx}$), then
$(\rho\theta^*,\rho y^*)$ satisfies constraints \eqref{dgxypoly},
which implies that $\zppoly{\tx}\leq\rho\cdot\zppoly{\bx}$. Since we also have $c^\intercal \tx
\le \rho c^\intercal \bx$, this implies $\hppoly{\tx}\leq\rho\cdot\hppoly{\bx}$. 
\end{proof}

\subsection{Applications to distributionally robust combinatorial optimization} 
\label{apps} 
We now 
apply our framework---i.e., Theorems~\ref{mainsaathm} and~\ref{polythm}---for handling
general \dr 2-stage problems to obtain the {\em first}
approximation guarantees for the \dr versions of various
combinatorial-optimization problems (under the Wasserstein metric) such as 
set cover, vertex cover, edge cover, facility location, and Steiner tree. 
Except for set cover, our approximation factors are within $O(1)$ factors of
the guarantees known for the deterministic counterparts of these
problems.
In order to apply Theorems~\ref{mainsaathm} and~\ref{polythm} for a specific
problem, we need to do the following. 
\begin{enumerate}[1., topsep=0.5ex, itemsep=0.25ex, parsep=0ex, labelwidth=\widthof{2.},
    leftmargin=!] 
\item Verify that properties \ref{p1}--\ref{p6} hold. This is usually quite immediate. 
\ref{p1}--\ref{p3} follow from the problem definition (in most cases $X=\{0,1\}^m$,
$\Pc=[0,1]^m$), with $\ld$ being the maximum factor by which the cost of a first-stage
action increases in the second stage. 
\ref{p4}, \ref{p5} follow from prior work~\cite{ShmoysS06,SwamyS12} as
the underlying 2-stage problem falls into the class of 2-stage programs considered therein. 
\ref{p6} 
can usually be satisfied by taking $\tau=\ub/(\min_{A,A':\scm(A,A')>0}\scm(A,A'))$,
for a suitable upper bound $\ub$ on $\max_{A\in\A}g(\bo,A)$.

\item Furnish the following algorithms.
\begin{enumerate}[(a), nosep, labelwidth=\widthof{(c)}, leftmargin=!]
\item An LP-relative $\al$-approximation algorithm for the deterministic
counterpart, so as to round $g(x,A)$ and obtain integral second-stage decisions:
we simply plug in known approximation results.

\item A local $\rho$-approximation algorithm for the 2-stage problem:
we have $\rho=2\al$ for set cover, vertex cover, and edge cover~\cite{ShmoysS06}, 
and $\rho=O(1)$ for facility location~\cite{ShmoysS06}. (For Steiner tree, 
we use Theorem~\ref{polythm_collapsible} in place of Theorem~\ref{polythm}; see below.)

\item A $(\beta_1,\beta_2)$-approximation algorithm for computing $g(x,y,A)$, where 
$(x,y,A)\in X\times\Rplus\times\A$. This is a new component that we need to devise, whose
design will depend on the scenario set $\A$ and the scenario metric $\scm$ (and of course the
underlying problem).
For various problems, we show how to obtain such an approximation by building upon results
known for \kmaxmin problems. We defer the proof of Theorem~\ref{gxyalg} to
the end of this section (Section~\ref{gxyalgproof}).

\begin{theorem} \label{gxyalg}
For the \kbounded setting with $\scm$ being the discrete metric, for any 
$(x,y,A)\in X\times\Rplus\times\A$, we can obtain 
$(\beta,1)$-approximation algorithms for computing $g(x,y,A)$, where $\beta$ is:
\linebreak
(a) $O(\log n)$ for set cover; 
(b) $\frac{2e}{e-1}$ for vertex cover; and
(c) $2$ for edge cover.
\end{theorem}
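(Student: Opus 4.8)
The plan is to reduce computing $g(x,y,A)$, in the \kbounded setting with the discrete scenario metric, to the \kmaxmin problem $M(x):=\max_{A'\in\A_{\leq k}}g(x,A')$, and then to approximate $M(x)$ for each of the three problems. The starting point is that the discrete metric takes only the values $0$ (on $(A,A)$) and $1$ (on every pair $A'\neq A$), so
\[
g(x,y,A)\ =\ \max\bigl\{\,g(x,A),\ M(x)-y\,\bigr\}
\]
(the first term handles $A'=A$, the second handles all $A'\neq A$, and keeping $A'=A$ in the second maximum is harmless). This is the identity already exploited in Corollary~\ref{introl1thm}; equivalently it is what Lemma~\ref{gxyredn} yields for the only two possible values $\scmbound\in\{0,1\}$ of the guessed distance, so no $(1+\ve)$ loss is incurred. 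Since $x\in X$ is integral, $g(x,A')$ is just the optimal second-stage cost of covering the part of $A'$ not yet handled by $x$, so $M(x)$ is precisely a \kmaxmin instance of the underlying covering problem --- with second-stage costs, on the sub-instance induced by $x$. The algorithm is therefore: compute $g(x,A)$ exactly (property~\ref{p4}); run a $\beta$-approximation for $M(x)$ to obtain $\widehat A\in\A_{\leq k}$ with $g(x,\widehat A)\geq M(x)/\beta$; output whichever of $A,\widehat A$ maximizes $g(x,\cdot)-y\cdot\scm(A,\cdot)$, using~\ref{p4} again to evaluate $g(x,\widehat A)$.

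I would then check that this output $\bA$ is a $(\beta,1)$-approximation per Definition~\ref{gxyapx}, i.e.\ $g(x,\bA)-y\cdot\scm(A,\bA)\geq g(x,A')/\beta-y\cdot\scm(A,A')$ for every $A'\in\A_{\leq k}$, by a two-line case analysis: for $A'=A$ the right side is $g(x,A)/\beta\leq g(x,A)$, attained by the $A$-candidate; for $A'\neq A$ the right side is $g(x,A')/\beta-y\leq M(x)/\beta-y$, which is matched by the $\widehat A$-candidate when $\widehat A\neq A$ (value $g(x,\widehat A)-y\geq M(x)/\beta-y$) and by the $A$-candidate when $\widehat A=A$ (then $g(x,A)=g(x,\widehat A)\geq M(x)/\beta$). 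The point to stress is that the penalty coefficient stays at $1$ rather than degrading to $\beta$, which is exactly what the $\{0,1\}$-valued discrete metric provides, and is the source of the ``$1$'' in $(\beta,1)$.

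It remains to instantiate $\beta$ by approximating $M(x)$ on the sub-instance induced by $x$. For set cover I would obtain $\beta=O(\log n)$ for \kmaxmin set cover by adapting the thresholded-covering / LP-based machinery for robust and max-min covering of~\cite{FeigeJMM05,Khandekar,GuptaNR}. For vertex cover I would combine an LP-based $\tfrac{e}{e-1}$-type guarantee for selecting the worst $\leq k$-edge scenario (in the spirit of~\cite{GuptaNR}) with the standard factor-$2$ rounding for vertex cover, giving $\beta=\tfrac{2e}{e-1}$. For edge cover the factor $\beta=2$ is the cleanest: if $m^x_v$ denotes the cheapest second-stage edge incident to an uncovered vertex $v$, then any edge cover of a vertex set $A'$ costs at least $\tfrac12\sum_{v\in A'}m^x_v$ (charge each chosen edge to its $\leq 2$ endpoints) and at most $\sum_{v\in A'}m^x_v$ (cover each vertex by its cheapest edge), so taking the $k$ uncovered vertices with largest $m^x_v$ and computing their exact minimum-cost edge cover is a $2$-approximation to $M(x)$.

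I expect the main obstacle to be the set-cover and vertex-cover subroutines: the reduction, the correctness check, and the edge-cover subroutine are essentially self-contained, but extracting the $O(\log n)$ bound cleanly from the robust/max-min covering literature, and arranging a vertex-cover LP-rounding argument that hits exactly $\tfrac{2e}{e-1}$, is where the real work lies.
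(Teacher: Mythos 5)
Your proposal is correct and follows essentially the same route as the paper: the discrete metric collapses $g(x,y,A)$ to $\max\{g(x,A),\,\max_{A'}g(x,A')-y\}$ (this is the paper's Lemma~\ref{gxyredn} specialized to $\SCM=\{0,1\}$, with the same case analysis giving the $(\beta,1)$ guarantee), and the rest is a $\beta$-approximation for \kmaxmin fractional covering on the residual instance with set/vertex/edge costs zeroed out where $x$ is $1$. The only difference is that the paper simply invokes the known \kmaxmin results---$O(\log n)$ for fractional set cover from~\cite{GuptaNR}, and $\tfrac{2e}{e-1}$ and $2$ for fractional vertex and edge cover from~\cite{FeigeJMM05}---whereas you sketch re-derivations (your self-contained edge-cover argument is valid, including for the fractional LP).
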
  
\end{enumerate}
\end{enumerate}
Theorems~\ref{mainsaathm} and~\ref{polythm} then show that, for any $\ve>0$, we can obtain a
solution to the distributionally robust discrete 2-stage problem (i.e., integral first-
and second-stage decisions) of cost at most $4\al\rho\beta_1\beta_2\bigl(1+O(\ve)\bigr)$
times the optimum in $\poly\bigl(\inpsize,\frac{\ld}{\ve}\bigr)$ time (and hence, sample
complexity). 

In certain cases, we can obtain improved guarantees
by exploiting the fact that 
the fractional SAA problem, $\min_{x\in\Pc}\hppoly{x}$, can be solved in a better way,
without resorting to a local approximation algorithm.  
The most generic such setting is the unrestricted setting when the scenario collection
$\A=2^U$ is collapsible under the scenario metric. 
This includes the following natural choices of the scenario metric. 

\begin{lemma} \label{collapsiblelem}
Suppose that for all $x\in \Pc$, and all $A\sse A'$, we have $g(x,A)\leq g(x,A')$. Then
the collection of scenarios $\A = 2^U$ is collapsible under: 
(i) the discrete metric $\scmdisc$; and
(ii) the asymmetric metric $\scm^\asym_\infty(A,A') = \max_{j' \in A'}\dist(j',A)$, 
where $\dist$ is a metric on $U$.   
\end{lemma}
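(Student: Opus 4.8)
The plan is to exhibit, for each scenario $A\in\A=2^U$, an explicit polynomial-size collection $\phi(A)$ of scenarios such that $g(x,y,A)=\max_{A'\in\phi(A)}\bigl(g(x,A')-y\cdot\scm(A,A')\bigr)$ for every $x\in\Pc$ and $y\geq 0$, where in part (i) $\scm=\scmdisc$ and in part (ii) $\scm=\scm^\asym_\infty$. Throughout we use the monotonicity hypothesis: $g(x,A)\leq g(x,A')$ whenever $A\sse A'$.

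For part (i), take $\phi(A):=\{A,U\}$. Fix $x\in\Pc$, $y\geq 0$, and let $A^*\in\A$ attain $g(x,y,A)=g(x,A^*)-y\cdot\scmdisc(A,A^*)$. If $A^*=A$ then $g(x,y,A)=g(x,A)$, achieved by $A\in\phi(A)$. Otherwise $\scmdisc(A,A^*)=1$, so $g(x,y,A)=g(x,A^*)-y\leq g(x,U)-y=g(x,U)-y\cdot\scmdisc(A,U)$ using monotonicity ($A^*\sse U$) and the fact that $\scmdisc(A,U)=1$ whenever $U\neq A$ — and if $U=A$ the first case already applies. Hence $g(x,y,A)\leq\max_{A'\in\{A,U\}}\bigl(g(x,A')-y\cdot\scmdisc(A,A')\bigr)$; the reverse inequality is immediate since both $A$ and $U$ lie in $\A$. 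This gives $|\phi(A)|\leq 2$, as required.

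For part (ii) the idea is that the asymmetric metric $\scm^\asym_\infty(A,A')=\max_{j'\in A'}\dist(j',A)$ only ``sees'' the element of $A'$ farthest from $A$, so enlarging $A'$ by adding elements within that distance of $A$ is free and can only increase $g$. Concretely, for each element $u\in U$ let $r_u:=\dist(u,A)=\min_{j\in A}\dist_{uj}$, sort $U$ so that $r_{u_1}\le r_{u_2}\le\cdots\le r_{u_n}$, and set $B_t:=A\cup\{u_1,\dots,u_t\}$ for $t=0,1,\dots,n$; define $\phi(A):=\{B_0,B_1,\dots,B_n\}$, which has size $n+1$. Given $x,y$ and an optimal $A^*$ for $g(x,y,A)$, let $\mu:=\scm^\asym_\infty(A,A^*)=\max_{j'\in A^*}r_{j'}$ and let $t$ be the largest index with $r_{u_t}\le\mu$. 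Then $A^*\sse B_t$ (every element of $A^*$ has $\dist(\cdot,A)\le\mu$, hence lies among $u_1,\dots,u_t$, noting elements of $A$ itself have $r=0$), so by monotonicity $g(x,A^*)\le g(x,B_t)$; and $\scm^\asym_\infty(A,B_t)=\max_{j'\in B_t}r_{j'}=r_{u_t}\le\mu=\scm^\asym_\infty(A,A^*)$. Therefore $g(x,A^*)-y\cdot\scm^\asym_\infty(A,A^*)\le g(x,B_t)-y\cdot\scm^\asym_\infty(A,B_t)$, so $g(x,y,A)\le\max_{A'\in\phi(A)}\bigl(g(x,A')-y\cdot\scm^\asym_\infty(A,A')\bigr)$; again the reverse direction is trivial. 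This establishes collapsibility.

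I expect the only delicate point to be the monotonicity of the scenario-distance under the enlargement $A^*\rightsquigarrow B_t$ in part (ii): one must check that $\scm^\asym_\infty(A,B_t)$ does not exceed $\scm^\asym_\infty(A,A^*)$, which hinges precisely on the $\max$ (rather than $\sum$) in the definition of $\scm^\asym_\infty$ and on having chosen $B_t$ as a ``sublevel set'' of $u\mapsto\dist(u,A)$. The rest is routine. (Note: the analogous statement fails for $\scm^\asym_1$, since adding elements increases the sum — consistent with the paper restricting the collapsible result to $\scmdisc$ and $\scm^\asym_\infty$.)
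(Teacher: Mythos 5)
Your proposal is correct and follows essentially the same route as the paper: $\phi(A)=\{A,U\}$ for the discrete metric, and the sublevel sets of $u\mapsto\dist(u,A)$ (indexed by the polynomially many relevant thresholds) for $\scm^\asym_\infty$, with the key point in both cases being that enlarging $A^*$ to the corresponding maximal/sublevel set can only increase $g$ by monotonicity while not increasing the max-based scenario distance. The only cosmetic difference is that the paper phrases the argument via the constrained problem \eqref{Was_oracle_simplified} and Lemma~\ref{gxyredn}, whereas you argue directly on an optimizer of $g(x,y,A)$.
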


\begin{proof} 
Let $A \in \A$ be an arbitrary scenario. If $\scm$ is the discrete metric $\scmdisc$, we take
$\phi(A) := \{A, U\}$. If $\scm$ is the asymmetric metric $\scm^\asym_\infty$, we take
$\phi(A) := \left\{\{j \in U : \min_{k \in A}\dist_{kj} \le \scmbound\}: \scmbound \in \SCM
\right\}$, where $\SCM:=\{\dist_{jj'}: j,j' \in U\}$ is the set of all distances between two
elements of the ground set. Note that in both settings, if we choose an arbitrary pair
$(x, \mu) \in \Pc \times \SCM$, the collection of scenarios $\phi(A)$ contains the
(unique) maximal solution $A'$ for the constrained problem
\eqref{Was_oracle_simplified}. By the monotonicity property of the second-stage costs
$g(\cdot, \cdot)$ imposed in the lemma statement, $A'$ is optimal for
\eqref{Was_oracle_simplified}. By Lemma \ref{gxyredn}, it follows that $\phi (A)$ contains
an optimal solution for the unconstrained problem $g(x, y, A)$ for every pair $(x, y) \in
\Pc \times \Rplus$, and so $\A$ is collapsible under $\scm$. 
\end{proof}

The condition on $g$ in Lemma~\ref{collapsiblelem} holds for all our applications, since
they are covering problems. 
Thus, in the \allsets setting with Wasserstein metric corresponding to the scenario
metrics in Lemma~\ref{collapsiblelem}, 
Theorem~\ref{polythm_collapsible} combined with Theorem~\ref{mainsaathm} 
yields an improved $4\al\rho\bigl(1+O(\ve)\bigr)$-approximation, using 
a {\em restricted} local $\rho$-approximation algorithm,
a weaker requirement that is crucial for Steiner tree. 
There are other, orthogonal benefits 
that result from achieving a better approximation for the fractional SAA problem than that 
given by Theorem~\ref{polythm}.  
These require taking a different route than Theorem~\ref{mainsaathm} to transfer
approximation guarantees from the SAA problem to the original problem. We discuss these in
the context of the specific problems to which they apply. 

\subsubsection{Set cover} 
The \dr version was defined in Section~\ref{prelim}. 
Recall that an instance is given by $\bigl(U,\Sc,\{c_S,c^\two_S\}_{S\in\Sc}\bigr)$, where
$\Sc\sse 2^U$ and $c, c^\two$ denote the first- and second-stage costs respectively. 
Let $n=|U|$. We have $\al=O(\log n)$, and 
$\rho=2\al$. Different scenarios could be quite unrelated, so there does not seem 
to be a natural choice for $\scm$ other than the discrete metric $\scmdisc$; we therefore
consider the $\frac{1}{2}L_1$-metric. We can take $\tau=\sum_{S\in\Sc}c^\two_S$. 
Instantiating the above results yields an $O(\log^2n)$-approximation in the \allsets  
setting, and an $O(\log^3 n)$-approximation in the \kbounded setting (using
Theorem~\ref{gxyalg} (a)). But we can do better and improve these guarantees by an 
$O(\log n)$ factor.

By incorporating a decoupling idea of~\cite{ShmoysS06} in our ellipsoid-based algorithm
(in a manner similar to~\cite{FeigeJMM05} in their work on 2-stage robust set cover), we
can avoid the use of local approximation algorithm in Algorithm \polyalg, and instead use
a $(\beta_1,\beta_2)$-approximation algorithm for $g(x,y,A)$ more directly. 

\begin{theorem} \label{drscpolythm}
Consider the fractional SAA \drssc problem: $\min_{x\in\Pc}\hppoly{x}$.
Suppose that we have a $(\beta_1,\beta_2)$-approximation algorithm for
$g(\bo,y,A)$ for any $(y,A)\in\Rplus\times\A$. 
For any $\ve>0$, in $\poly\bigl(\inpsize,\log(\frac{1}{\ve})\bigr)$ time, we can compute 
$\bx\in\Pc$, and an estimate $\barf$ of $\hppoly{\bx}$, satisfying
$\hppoly{\bx}\leq\beta_1\beta_2\barf$ and
$\barf\leq 2(1+\ve)\cdot\min_{x\in\Pc}\hppoly{x}$. 
\end{theorem}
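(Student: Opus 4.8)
The plan is to mimic the structure of Algorithm \polyalg and the proof of Theorem~\ref{polythm}, but to exploit the special structure of set cover—namely that the first-stage and second-stage decisions involve the *same* ground set of sets $\Sc$, with the only difference being the cost vectors $c$ and $c^\two$—to sidestep the local approximation algorithm entirely. The key observation (the ``decoupling'' idea of~\cite{ShmoysS06}, as used by~\cite{FeigeJMM05} for 2-stage robust set cover) is that in (\Qppoly) for \drssc, we have $g(x,A) = \min\{\sum_{S}c^\two_S z_S : z\in[0,1]^{\Sc},\ x_S + z_S \ge \mathbf 1[e\in S \text{ covers } e]\ \forall e\in A\}$, and therefore $g(x,A)$ depends on $x$ only through which elements of $A$ are already ``fractionally covered.'' Crucially, $g(\bo,A)$—the second-stage cost when no first-stage sets are bought—dominates: for any $x\in\Pc$ we have $g(x,A)\le g(\bo,A)$, and one can always convert a fractional first-stage solution $x$ together with scenario recourse into a solution that uses $x$ in the second stage at inflated cost. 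This lets us reduce to optimizing over $\hppoly{\bo}$-type quantities, i.e., to running the ellipsoid machinery where every cut is generated ``at $\bo$'' rather than at the ellipsoid center.

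Concretely, first I would set up the fractional program $\min_{x\in\Pc}\hppoly{x}$ and observe, exactly as in Lemma~\ref{apxsub}, that a $\beta_1\beta_2$-approximate solution $\gm$ to (\zpolylp{x}) yields a $(1-\frac{1}{\beta_1\beta_2})$-subgradient $d = c + \sum_{(A,A')}\gm_{A,A'}\sgr^{x,A'}$ of $\hppoly{\cdot}$ at $x$. Then I would invoke Lemma~\ref{gxytoapxsub}: the given $(\beta_1,\beta_2)$-approximation algorithm for $g(\bo,y,A)$ furnishes—via the approximate-separation-oracle/ellipsoid argument for the dual \eqref{dual}—a $\beta_1\beta_2$-approximate solution to (\zpolylp{\bo}), and hence also an approximate subgradient of $\hppoly{\cdot}$ at $\bo$. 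The decoupling step is to argue that running the ellipsoid method on $\Pc$ where, whenever the center $\bx_i$ is feasible, we add the cut $\tsgr^\T(x-\bx_i)\le 0$ with $\tsgr = c + \sum_{(A,A')}\gm_{A,A'}\sgr^{\bo,A'}$ (using subgradients of $g(\cdot,A')$ evaluated at $\bo$, and $\gm$ a $\beta_1\beta_2$-approximate solution to (\zpolylp{\bo})) still produces a valid cut in the sense of Lemma~\ref{validcut}: any $x'$ cut off satisfies $\hppoly{x'}\ge \frac{1}{\beta_1\beta_2}\bigl(c^\T\bo + \sum_{(A,A')}\gm_{A,A'}g(\bo,A')\bigr)$—but here the ``rounding factor'' $\rho$ is replaced by the factor $2$ coming from the standard set-cover decoupling bound $g(x,A)\le g(\bo,A)\le g(x,A)+\ld c^\T x$ combined with the fact that buying sets fractionally in stage I at cost $c^\T x$ and re-buying the complementary coverage in stage II costs at most $2$ times an integral optimum obtained via the LP-relative $\al$-approximation—wait, this is precisely where care is needed. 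Actually the factor $2$ here is the decoupling loss $\rho = 2$ that replaces $2\al$: the set-cover LP relaxation allows us to split the coverage requirement and solve the first-stage part directly, so we never pay the $\al = O(\log n)$ rounding factor inside the fractional solve.

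After the ellipsoid iterations terminate (after $N = O(m^2\ln\frac{R}{V} + m^2\ln\frac{1}{\ve})$ steps, using \ref{p3} and the lower bound $\lb$ from Lemma~\ref{lbnd} to set the shrinkage parameter $\mu = \Theta(\ve\lb/(K'R))$), I would apply the volume-shrinking argument of Theorem~\ref{polythm}: a shrunken copy $W = T(\Pc)$ of $\Pc$ around the true fractional optimum $\sx$ has volume exceeding that of the final polytope $\Pc_k$, so some $x'\in W$ lies on one of the cut hyperplanes, and Lemma~\ref{validcut} (in its decoupled form, with factor $2$) gives $\barf = \tf_l \le 2(\hppoly{\sx}+\kp) \le 2(1+\ve)\min_{x\in\Pc}\hppoly{x}$, while $\hppoly{\bx}\le\beta_1\beta_2\barf$ follows from the approximate-value-oracle guarantee of Lemma~\ref{gxytoapxsub}(ii) applied at the returned point $\bx$. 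The main obstacle, and the step I would spend the most care on, is verifying the decoupled version of Lemma~\ref{validcut}: one must show that the cut generated using subgradients of $g(\cdot,A')$ at $\bo$ (rather than at the ellipsoid center $\bx_i$) remains valid, and that the relevant comparison quantity is $c^\T x' + \sum\gm_{A,A'}g(\cdot,A')$ evaluated in a way that only costs a factor of $2$ against $\hppoly{\sx}$—this is exactly the set-cover-specific place where $\rho = 2\al$ collapses to $2$, and it relies on the covering structure (the ability to buy the same sets in either stage and the monotonicity $g(x,A)\le g(\bo,A)$) rather than on any rounding algorithm.
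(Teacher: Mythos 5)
You have the right target (drop the local $\rho$-approximation and pay only a factor $2$ via a set-cover decoupling), but the cut-generation mechanism you propose does not work, and it is exactly the step you flag as needing care. You suggest adding, at a feasible ellipsoid center $\bx_i$, the cut $\tsgr^\T(x-\bx_i)\le 0$ where $\tsgr=c+\sum_{A,A'}\gm_{A,A'}\sgr^{\bo,A'}$ is built from subgradients of $g(\cdot,A')$ evaluated at $\bo$ and from an approximate solution $\gm$ to (\zpolylp{0}). Such a $\tsgr$ is a subgradient of the surrogate $f(x)=c^\T x+\sum\gm_{A,A'}g(x,A')$ at $\bo$, not at $\bx_i$, so the subgradient inequality only lower-bounds $f(x')$ in terms of $f(\bo)$, and the validity guarantee you state compares every cut-off point against $c^\T\bo+\sum\gm_{A,A'}g(\bo,A')$, i.e.\ essentially against $\hppoly{\bo}$. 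Since $\hppoly{\bo}$ can vastly exceed $\min_{x\in\Pc}\hppoly{x}$, the true optimum $\sx$ would itself be discarded by such cuts without contradiction, and the data $(\gm,\tsgr)$ is the same at every iteration, so the ellipsoid argument cannot localize anything. The decoupling loss $g(x,A)\le g(\bo,A)\le g(x,A)+\ld c^\T x$ that you invoke also does not produce a factor $2$; the $\ld$ there is the inflation parameter, not a constant.

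The missing idea in the paper's proof is that the decoupling is carried out through the \emph{scenarios}, not through the evaluation point. For the current center $\bx$ one forms $\Sbx=\{e\in U:\sum_{S\ni e}\bx_S\ge 1/2\}$, the elements half-covered by $\bx$, and replaces each scenario $A'$ by $A'\setminus\Sbx$; the relevant LP is \eqref{Wbx}, whose objective uses $g(\bo,A'\setminus\Sbx)$, and which is exactly (\zpolylp{0}) for a \emph{modified} instance in which $\Sbx$ is added as a zero-cost set---this is how the given approximation for $g(\bo,y,A)$ is imported (Lemma~\ref{newgxytoapxsub}). The integral surrogate is $\tx=(\min\{2\bx_S,1\})_S$, and the two inequalities $g(\tx,A')\le g(\bo,A'\setminus\Sbx)$ and $g(\bx,A'\setminus\Sbx)\ge\frac{1}{2}g(\bo,A'\setminus\Sbx)$ (both relying on the covering structure and the doubling trick) are where the factor $2$ actually comes from. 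Crucially, the cut vector $\tsgr=c+\sum\gm_{A,A'}\sgr^{\bx,A'\setminus\Sbx}$ still uses subgradients taken \emph{at the center $\bx$}, so it is a genuine subgradient of the surrogate $f(x)=c^\T x+\sum\gm_{A,A'}g(x,A'\setminus\Sbx)$ at $\bx$, and Lemma~\ref{newvalidcut} gives $\hppoly{x'}\ge f(\bx)\ge\frac{1}{2}\tf\ge\frac{1}{2\beta_1\beta_2}\hppoly{\tx}$ for any $x'$ on the far side of the cut. Without the $\bx$-dependent set $\Sbx$ and the evaluation at $\bx$, your argument has no valid analogue of Lemma~\ref{validcut}, so the proof as proposed does not go through.
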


We complement Theorem~\ref{drscpolythm} with an analogue of Theorem~\ref{mainsaathm},
to transfer approximation guarantees from the fractional SAA problem,
$\min_{x\in\Pc}\hppoly{x}$, to the original fractional problem,
$\min_{x\in\Pc}\hpcentral{x}$. 

Note that by Lemma~\ref{lbnd}, we can find in polytime (under very mild assumptions)
a lower bound $\lb$ (independent of $p$) on the optimal value of
$\min_{x\in\Pc}\hpcentral[p]{x}$ such that
$\log\bigl(\frac{1}{\lb}\bigr)=\poly(\inpsize)$, or determine if $x=\bo$ is 
an optimal solution to $\min_{x\in\Pc}\hpcentral[p]{x}$ for every distribution
$p$. In the latter case, there is nothing to be done, so assume otherwise. 

\begin{theorem} \label{fracsaathm}
Let $\ve\leq\frac{1}{3}$, $\kp>0$.
Let \eqref{Qpcentral}: $\min_{x\in\Pc}\hpcentral{x}$, be the fractional version of 
a \dr problem satisfying properties \ref{p1}--\ref{p6}. Let $\lb$ be a lower bound 
on $\min_{x\in\Pc}\hpcentral[p]{x}$ for all $p$. Consider
$k=\frac{2}{\ve}\log\bigl(\frac{1}{\dt}\bigr)$ SAA problems with objective functions  
$\hpcentral[\htpcent^i]{x}:=c^\T x+\zpcentral[\htpcent^i]{x}$,
for $i=1,\ldots,k$, where each $\htpcent^i$ is an empirical estimate of $\pcent$
constructed using 
$N=\poly(\frac{\ld}{\ve},\log(\frac{\tau R}{V\lb}),\log(\frac{1}{\dt})\bigr)$
independent samples. Suppose that for every $i=1,\ldots,k$, we have 
a solution $\bx^i\in\Pc$ and an estimate $\barf^i$ of $\hpcentral[\htpcent^i]{\bx^i}$
satisfying $\hpcentral[\htpcent^i]{\bx^i}\leq\bbeta\cdot\barf^i$ and
$\barf^i\leq\rho\cdot\min_{x\in\Pc}\hpcentral[\htpcent^i]{x}$ (where $\bbeta,\rho\geq 1$).
Let $j=\argmin_{i=1,\ldots, k}\barf^i$ and $\bx=\bx^j$. 
Then, 
$\hpcentral{\bx}\leq 4\bbeta\rho\bigl(1+O(\ve)\bigr)\cdot\min_{x\in\Pc}\hpcentral{x}$ 
with probability at least $1-3\dt$. 
\end{theorem}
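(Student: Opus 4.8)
The plan is to run the proof of Theorem~\ref{mainsaathm} almost verbatim, changing only two things: the feasible region is now the polytope $\Pc$ rather than the finite set $X$, and the additive error term is eliminated using the lower bound $\lb$. As in Section~\ref{saa}, the first step is to pass to the Lagrangified/thresholded reformulations: by Lemma~\ref{ybnd} the dual variable $y$ may be confined to $[0,\tau]$; by Lemma~\ref{zproxy}, for every distribution $p$ and every $x\in\Pc$ the true objective $\hpcentral[p]{x}$ lies between $c^\T x+\zpcentshort[p]{x}+\zpcentlng[p]{0}$ and twice that quantity; and after discretizing $[0,\tau]$ the proxy problem \eqref{bRdisc} is a bona fide $2$-stage stochastic problem satisfying \ref{p1},\ref{p2} with inflation parameter $\Ld=\ld$ (the $\scmmax/r$ blow-up of Lemma~\ref{rdiscld} having been removed by the short/long split, since any feasible transportation plan sends at most $1/\ld$ total flow on long edges). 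This is what makes the $2$-stage SAA machinery of Charikar et al.\ applicable.

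The only genuinely new ingredient, relative to the discrete case, is replacing the union bound over $X$ (which used $\log|X|=\poly(\inpsize)$ from \ref{p1}) by a net argument over $\Pc$. Here I would invoke \ref{p3} and the Lipschitz bound on $\hpcentral[\cdot]{\cdot}$ coming from \ref{p5}, whose constant $K':=\|c\|+K$ satisfies $\ln K'=\poly(\inpsize)$: cover $\Pc$, which sits inside $B(\bo,R)$, by a net of granularity $\Theta(\ve\lb/K')$; this net has $\exp\bigl(\poly(\inpsize,\log(R/V),\log(1/(\ve\lb)))\bigr)$ points, so its logarithm — together with the $\log\tau$ from discretizing the $y$-interval — is exactly the $\log(\tau R/(V\lb))$ that appears in $N$. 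Standard concentration (Chernoff/Hoeffding for the bounded-inflation short-edge part, and the nonlinear argument of Theorem~\ref{mainsaathm} for $\zpcentlng[p]{0}$) is established at each net point using the chosen $N$ samples, and the Lipschitz bound spreads it to all of $\Pc$ with an extra additive slack of $O(\ve\lb)$; since $\lb\le\min_{x\in\Pc}\hpcentral[p]{x}$ for all $p$ by hypothesis (Lemma~\ref{lbnd}), every such additive term, as well as any internal accuracy parameter $\kp$ set to $\ve\lb$ (cf.\ the proof of Theorem~\ref{polythm}), is absorbed into a $(1+O(\ve))$ factor, which is why the conclusion is purely multiplicative. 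With these pieces the transfer argument is the same as in Theorem~\ref{mainsaathm}: over the $k=\frac{2}{\ve}\log(1/\dt)$ independent SAA runs, a Chernoff bound on the runs (together with the net-wide concentration) shows that with probability at least $1-3\dt$ the index $j=\argmin_i\barf^i$ selects a run in which both $\barf^j\le\rho(1+O(\ve))\min_{x\in\Pc}\hpcentral{x}$ (from $\barf^i\le\rho\min_x\hpcentral[\htpcent^i]{x}$ plus the SAA lower-estimate bound) and $\hpcentral{\bx^j}\le\bbeta\rho(1+O(\ve))\barf^j$ (from $\hpcentral[\htpcent^j]{\bx^j}\le\bbeta\barf^j$, the Lemma~\ref{zproxy} sandwich, and the SAA upper-estimate bound). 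Chaining these yields $\hpcentral{\bx}\le 4\bbeta\rho(1+O(\ve))\min_{x\in\Pc}\hpcentral{x}$, the factor $4$ being the product of the two factor-$2$ losses from Lemma~\ref{zproxy} incurred on the two sides of the chain.

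The main obstacle is precisely the one highlighted for Theorem~\ref{mainsaathm}: controlling the long-edge term $\zpcentlng[p]{0}$ under the replacement of $\pcent$ by its empirical estimate. In ordinary $2$-stage stochastic optimization the analogous high-cost contribution is \emph{linear} in $p$, so relating its values at $\pcent$ and at $\htpcent$ is immediate; here $\zpcentlng[p]{0}$ is the optimum of a transportation LP and is merely concave in $p$, so this step requires the more delicate structural/concentration reasoning of the Theorem~\ref{mainsaathm} proof (sketched in the ``Overview'' of Appendix~\ref{mainsaaproof}) rather than linearity of expectation; the bound $\zpcentlng[p]{0}\le\frac{1}{\ld}\max_{A'}g(\bo,A')$ is what keeps this term within an $O(1)$ factor of $\iopt$ and ultimately limits the loss. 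Everything else — the net, the Chernoff bounds, and the booster over the $k$ runs — is routine given that step, so in fact the full argument for this theorem can be, and is, merged into the proof of Theorem~\ref{mainsaathm} in Appendix~\ref{mainsaaproof}.
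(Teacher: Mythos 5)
Your proposal is correct and takes essentially the same route as the paper: the paper likewise discretizes $\Pc$ into a grid of granularity $\Delta=\Theta\bigl(\ve\,\lb\, V/(K'R\sqrt{m})\bigr)$ using the Lipschitz constant $K'=\|c\|+K$ and the lower bound $\lb$, and then simply invokes Theorem~\ref{mainsaathm} as a black box with $X=\G$, $\hx^i=\phi(\bx^i)$, $f^i=\max\{\barf^i,\lb\}$, and $\kp=\ve\lb$, rather than re-opening that proof and replacing the union bound over $X$ with a net-wide concentration argument as you describe (the two packagings are equivalent). The one small inaccuracy is your closing remark that the argument ``is merged into'' the proof in Appendix~\ref{mainsaaproof}; the paper in fact presents it as a short standalone reduction in the main body, but this is presentational and does not affect correctness.
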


Before proving Theorems~\ref{drscpolythm} and~\ref{fracsaathm}, we state the results that
follow from these (and other prior results).
Combining Theorems~\ref{polythm_collapsible} (a) and~\ref{fracsaathm}, and a local
$\rho$-approximation algorithm (where $\rho=O(\log n)$), we obtain an 
$O(\log n)$-approximation in the \allsets setting.
Combining Theorems~\ref{gxyalg} (a), \ref{drscpolythm}, and~\ref{fracsaathm}, and a local 
$\rho$-approximation algorithm, we obtain an $O(\log^2 n)$ in the \kbounded setting. 

\begin{proofof}{Theorem~\ref{fracsaathm}} 
The proof follows by suitably discretizing $\Pc$ and applying Theorem~\ref{mainsaathm} to
the discretized version of $\Pc$. 
By Lemma~\ref{apxsub}, for every distribution $p$, we have that the Lipschitz constant of
$\hpcentral[p]{x}$ is at most $K':=\|c\|+K$, and $\ln K'=\poly(\inpsize)$. 
Recall that by \ref{p3}, $\Pc$ is contained in the ball $B(\bo,R)$, and contains a ball of
radius $V\leq 1$ such that $\ln\bigl(\frac{R}{V}\bigr)=\poly(\inpsize)$.  
We discretize $\Pc$ as in~\cite{SwamyS12}. 
Let $\Delta=\frac{\ve \cdot \lb \cdot V}{8K'R\sqrt{m}}$, and consider the grid 
$\G=\{x\in\Pc: x_i = n_i\Delta,\ \ n_i \in \mathbb{Z}_+ \text{ for all }i = 1,\dots,m \}$.%
\footnote{Note that $V$ needs to be a part of the specification of the grid size;
  otherwise, a ``flat'' $\Pc$ could evade the grid across arbitrarily large distances.}
As shown in~\cite{SwamyS12}, we have: 
(i) $|\G| \le \bigl(\frac{2R}{\Delta}\bigr)^m$, and so 
$\log |\G| = \poly\bigl(\inpsize,\log(\frac{1}{\ve\cdot\lb})\bigr)$; and
(ii) for any $x\in\Pc$, letting $\phi(x)$ denote the point in $\G$ closest to $x$ in
Euclidean distance, we have 
$\bigl\|x-\phi(x)\bigr\|\leq\frac{\ve\lb}{K'}$, and
hence, $\bigl|\hpcentral[p]{x}-\hpcentral[p]{\phi(x)}\bigr|\leq\ve\lb$.

Let $N$, the number of samples used to construct each empirical estimate $\htp^i$, be
as given by Theorem~\ref{mainsaathm}, when we apply it taking $X$ to be the grid
$\G$---i.e., we are considering the \dr 2-stage problem $\min_{x\in\G}\hpcentral{x}$---%
and $\kp=\ve\lb$. Note that properties~\ref{p1}--\ref{p6} hold for this \dr problem (since
by assumption they hold for the \dr problem $\min_{x\in X}\hpcentral{x}$). 

To apply Theorem~\ref{mainsaathm} with $X=\G$, we also need to supply the points $\hx^i$
and the estimates $f^i$ as required by the theorem statement. We set $\hx^i=\phi(\bx^i)$,
and $f^i=\max\{\barf^i,\lb\}$ for all $i=1,\ldots,k$. 
We show that these satisfy properties (S1) and (S2) in the statement of
Theorem~\ref{mainsaathm}, with $\beta=\bbeta(1+\ve)$. 
To see this, consider any $i=1,\ldots,k$.
We have 
$$
\hpcentral[\htpcent^i]{\hx^i} \le \hpcentral[\htpcent^i]{\bx^i} + \ve\lb \le
\bbeta\cdot\barf^i+\ve\lb\leq(\bbeta+\ve)f^i\leq\beta f^i
$$
and since $\lb\leq\min_{x\in\Pc}\hpcentral{x}$, we have
$f^i \le \rho \min_{x\in \Pc}\hpcentral[\htpcent^i]{x} \le
\rho \min_{x\in \G}\hpcentral[\htpcent^i]{x}$. 
Moreover, the index $j$, which is a
minimizer of the $\{\barf_i\}$ estimates, is also a minimizer for the new estimates
$\{f^i\}$.   
So applying Theorem~\ref{mainsaathm}, 
we obtain that with probability at
least $1 - 3\delta$, 
\[
\hpcentral{\hx^j}\leq 4\beta\rho\bigl(1+O(\ve)\bigr)\min_{x\in \G}\hpcentral{x}+2\beta\rho\kp.
\]
Note that 
$\min_{x\in \G}\hpcentral{x} \le\min_{x\in\Pc}\hpcentral{\phi(x)} \le\min_{x\in\Pc}\hpcentral{x} + \ve\lb$.
Therefore, we have
\begin{equation*}
\begin{split}
\hpcentral{\bx} & \leq\hpcentral{\hx^j}+\ve\lb
\leq 4\beta\rho\bigl(1+O(\ve)\bigr)\min_{x\in\Pc}\hpcentral{x}+4\beta\rho\bigl(1+O(\ve)\bigr)\ve\lb
+2\beta\rho\kp+\ve\lb \\
& \leq 4\bbeta\rho\bigl(1+O(\ve)\bigr)\min_{x\in\Pc}\hpcentral{x}. \qedhere
\end{split}
\end{equation*}
\end{proofof}

\subsubsection*{Proof of Theorem~\ref{drscpolythm}}
Let $\bigl(U,\Sc,\{c_S,c^\two_S\}_{S\in\Sc}\bigr)$ be the \dr set cover instance being
solved. For any point $\bx \in \Pc$, let
$\Sbx := \{e \in U: \sum_{S \in \Sc:e \in S} \bx_S \ge 1 / 2\}$ be the set of elements
covered to an extent of at least $1/2$ by the first-stage sets. 

The improvement comes from a better way of generating a cut passing through the center
$\bx$ of the current ellipsoid, when $\bx\in\Pc$.
Instead of rounding $\bx$ 
to $\tx\in X$ using a local $\rho$-approximation algorithm and using approximate solutions to
$g(\tx,y,A)$ to generate a suitable cut at $\bx$ in step \ref{polyalgcut} of Algorithm
\polyalg, we do the following. 
Since elements in $\Sbx$ are mostly covered by $\bx$, and the remaining elements are
mostly uncovered, intuitively only these remaining elements should matter.
Indeed, we argue that approximate solutions to 
$\max_{A'\in\A}\bigl(g(\bo,A'\sm\Sbx)-y\cdot\scm(A,A')\bigr)$ 
can be used to obtain a suitable cut at $\bx$. Note that this problem can be cast as
$g(\bo,y,A)$ for a modified instance where we add $\Sbx$ to our set-system, with costs
$c_{\Sbx}=c^\two_{\Sbx}=0$.  
Thus, we avoid the $\rho$-factor loss that was incurred earlier due to the local
approximation. 

Consider the following LP.
\begin{alignat*}{2}
\max & \quad & \sum_{(A,A')\in\Asup\times\A}\gm_{A,A'}&g(0,A' \setminus \Sbx) \tag{W${}_{\bx}$} \label{Wbx} \\
\text{s.t.} && \sum_{A'} \gm_{A,A'} & \leq \ppoly_A \qquad \forall A\in\Asup \\
&& \sum_{A,A'}\scm(A,A')\gm_{A,A'} & \leq r \\[-2ex]
&& \gm & \geq 0. 
\end{alignat*}
We prove analogues of Lemmas~\ref{gxytoapxsub} and~\ref{validcut} showing that one can
compute an approximate solution to \eqref{Wbx} using an approximation algorithm for
$g(\bo,y,A)$ (Lemma~\ref{newgxytoapxsub} (i)), which allows us to both approximate
$\hppoly{\tx}$ for a related point $\tx$ (Lemma~\ref{newgxytoapxsub} (ii)), and obtain a
suitable cut passing through $\bx$ (Lemma~\ref{newvalidcut}). 

\begin{lemma} \label{gxytoapxsub_setcover} \label{newgxytoapxsub}
Let $\bx\in\Pc$ and $\tx := (\min\{2\bx_S, 1\})_{S \in \Sc}$. Suppose we have a
$(\beta_1,\beta_2)$-approximation algorithm for $g(\bo,y,A)$ for all 
$(y,A)\in\Rplus\times\A$.  Then, (i) we can compute a
$\beta_1\beta_2$-approximate solution $\gamma$ to \eqref{Wbx};
(ii) hence, letting $\tf = 2 c^\T \bx+ \sum_{(A,A')\in\Asup\times\A}\gm_{A,A'}g(\bo,A'\setminus \Sbx)$, we have  
$\hppoly{\tx}\leq \beta_1\beta_2 \cdot \tf$.
\end{lemma}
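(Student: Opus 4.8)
The plan is to closely follow the proof of Lemma~\ref{gxytoapxsub}, treating \eqref{Wbx} as an instance of \eqref{zpolydefn} for a \emph{modified} set-cover instance. Concretely, I would first observe that $g(\bo, A' \setminus \Sbx)$ is exactly the second-stage optimal cost of scenario $A'$, with first-stage decision $\bo$, in the instance $\bigl(U, \Sc \cup \{\Sbx\}, c', c'^{\,\two}\bigr)$ obtained by adjoining the new set $\Sbx$ with $c'_{\Sbx} = c'^{\,\two}_{\Sbx} = 0$ and keeping all other costs unchanged: buying $\Sbx$ for free in the second stage covers precisely the elements of $\Sbx$, so the residual covering cost for $A'$ is $g(\bo, A' \setminus \Sbx)$ in the original instance. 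Hence \eqref{Wbx} is literally the LP \eqref{zpolydefn} with $x = \bo$ for this modified instance, and a $(\beta_1,\beta_2)$-approximation algorithm for $g(\bo, y, A)$ on the modified instance is what Lemma~\ref{gxytoapxsub} needs to produce a $\beta_1\beta_2$-approximate solution $\gm$ to \eqref{Wbx}.

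For part (i), the one thing to check is that a $(\beta_1,\beta_2)$-approximation for $g(\bo, y, A)$ on the \emph{original} instance yields one on the modified instance, i.e.\ that adding the single zero-cost set $\Sbx$ does not disturb the oracle. Since $\max_{A' \in \A}\bigl(g(\bo, A' \setminus \Sbx) - y\cdot\scm(A,A')\bigr)$ is exactly the quantity $g^{\mathrm{mod}}(\bo, y, A)$ for the modified instance, and the reduction in Section~\ref{apps} (used to prove Theorem~\ref{gxyalg}) produces an approximation for $g(\bo, y, A)$ for \emph{any} set-cover instance given as input, we simply invoke that reduction on the modified instance; this is a routine consistency check. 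Then Lemma~\ref{gxytoapxsub}(i), applied verbatim to the modified instance with the dual \eqref{dual} and its approximate separation oracle, delivers the $\beta_1\beta_2$-approximate $\gm$ of polynomial-size support.

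For part (ii), I would argue $\hppoly{\tx} \le \tf$ first (before the approximation loss) and then combine. The key combinatorial fact is that $\tx = (\min\{2\bx_S, 1\})_{S\in\Sc}$ covers every element of $\Sbx$ to extent $\ge 1$: indeed for $e \in \Sbx$ we have $\sum_{S \ni e}\bx_S \ge 1/2$, so $\sum_{S \ni e}\min\{2\bx_S,1\} \ge 2\cdot\tfrac12 = 1$. Therefore, for any scenario $A'$, in the instance with first-stage decision $\tx$ the elements of $A' \cap \Sbx$ are already fully covered, and the residual second-stage covering problem for $A'$ is dominated by the one for $A' \setminus \Sbx$ with first-stage decision $\bo$, giving $g(\tx, A') \le g(\bo, A' \setminus \Sbx)$ (using monotonicity of covering costs in the first-stage vector and in the scenario, i.e.\ property \ref{p1} and the covering structure). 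Since $c^\T \tx \le 2 c^\T \bx$ coordinatewise, and $\gm$ is feasible for \eqref{zpolydefn} at $\tx$ (its constraints \eqref{pbnd},\eqref{rbnd},\eqref{noneg} do not involve $x$), we get
\[
\hppoly{\tx} \;=\; c^\T \tx + \zppoly{\tx} \;\ge\; c^\T\tx + \sum_{(A,A')}\gm_{A,A'}g(\tx,A')
\]
is the wrong direction; instead I use that \emph{any} feasible $\gm$ lower-bounds the \emph{value} only via the max, so I bound $\zppoly{\tx} = \max_{\gm'} \sum \gm'_{A,A'} g(\tx,A') \le \max_{\gm'}\sum \gm'_{A,A'} g(\bo, A'\setminus\Sbx) = \OPT_{\eqref{Wbx}}$, and $\OPT_{\eqref{Wbx}} \le \beta_1\beta_2 \sum_{(A,A')}\gm_{A,A'}g(\bo,A'\setminus\Sbx)$ by part (i). Combined with $c^\T\tx \le 2c^\T\bx$, this gives $\hppoly{\tx} \le 2c^\T\bx + \beta_1\beta_2\sum_{(A,A')}\gm_{A,A'}g(\bo,A'\setminus\Sbx) \le \beta_1\beta_2\,\tf$ since $\beta_1\beta_2 \ge 1$.

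The main obstacle I anticipate is getting the inequality $g(\tx, A') \le g(\bo, A' \setminus \Sbx)$ exactly right: it rests on (a) $\tx$ fully covering $\Sbx$, which follows from the $2\bx$ scaling and the threshold-$\tfrac12$ definition of $\Sbx$, and (b) the observation that in a covering LP, increasing the first-stage vector and removing already-covered elements from a scenario can only decrease the residual second-stage cost. Both are standard for set cover but must be stated carefully because the second-stage LP for $g(\tx,A')$ has its own cost vector $c^\two$; the cleanest route is to exhibit, from an optimal second-stage solution for $(\bo, A'\setminus\Sbx)$, a feasible second-stage solution for $(\tx, A')$ of no greater cost, which is immediate since the first-stage sets in $\tx$ already cover $A' \cap \Sbx$ and $\tx \ge \bo$.
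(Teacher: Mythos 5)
Your proposal is correct and takes essentially the same route as the paper: part (i) is obtained by recognizing \eqref{Wbx} as the LP (T$_{\ppoly,\bo}$) for the modified instance with the zero-cost set $\Sbx$ adjoined and invoking Lemma~\ref{gxytoapxsub}(i) there, and part (ii) rests on exactly the paper's two facts, namely $c^\T\tx\le 2c^\T\bx$ and $g(\tx,A')\le g(\bo,A'\setminus\Sbx)$ (proved, as you do, by noting that $\tx$ fully covers $\Sbx$ and composing a second-stage solution for $(\bo,A'\setminus\Sbx)$ with $\tx$), followed by the $\beta_1\beta_2$-approximation guarantee for $\gm$. The brief mid-proof self-correction does not affect the validity of the final argument.
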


\begin{proof}
Consider the instance of \dr set cover obtained from the original instance
$\bigl(U,\Sc,\{c_S,c^\two_S\}_{S\in\Sc}\bigr)$ by adding the set $\Sbx$ to $\Sc$, with
costs $c_{\Sbx} = c^\two_{\Sbx} = 0$. Let $\{g^\new(x, A)\}_{x \in \Pc, A \in \A}$ denote the
second-stage costs for this new instance of \dr set cover. Note that, for every scenario
$A \in \A$, we have $g^\new(\bo, A) = g(\bo, A\setminus \Sbx)$. Therefore, if we were to write
the LP ($\text{T}_{\ppoly, 0}$) for this modified instance of \dr set cover (i.e.,
($\text{T}_{\ppoly, 0}$) with $g$ substituted by $g^\new$), we would obtain 
\eqref{Wbx}. This means that we can obtain a $\beta_1\beta_2$-approximate solution $\gm$ to
\eqref{Wbx} by applying Lemma~\ref{lem:beta-dual} (i) to the modified instance (using the
$(\beta_1, \beta_2)$-approximation algorithm for $g(\bo,y,A)$ 
given to us, also applied to the modified instance). This proves (i). 

To prove (ii), let $\gm^*$ be an optimal solution of (T${}_{\ppoly,\tx}$). We obtain
\begin{alignat*}{2}
\hppoly{\tx} & = c^\T \tx+\sum_{(A,A')\in\Asup\times\A}\gm^*_{A,A'}g(\tx,A') 
\le 2 c^\T\bx+\sum_{(A,A')\in\Asup\times\A}\gm^*_{A,A'}g(0,A' \setminus \Sbx) \\ 
& \le 2 c^\T \bx+ \beta_1\beta_2 \sum_{(A,A')\in\Asup\times\A}\gm_{A,A'}g(0,A' \setminus
\Sbx) \le \beta_1\beta_2 \cdot \tf \ . 
\end{alignat*}
The first inequality follows because $\tx \le 2\bx$ and, for every scenario $A' \in\A$, we
have $g(\tx,A') \le g(\bo,A'\setminus\Sbx)$. The latter inequality holds because every
feasible fractional second-stage solution for scenario $A'\sm\Sbx$ with $x=\bo$ as the
first-stage solution,
covers all elements of $A'\setminus\Sbx$ fully, and hence, combined with $\tx$, fully
covers all elements of $A'$; therefore, it yields feasible fractional second-stage actions
for scenario $A'$ given the first-stage actions $\tx$.
The second inequality above follows because $\gm$ is a 
$\beta_1\beta_2$-approximate solution for \eqref{Wbx}. The final inequality uses the fact
that $\beta_1,\beta_2 \ge 1$.
\end{proof}

\begin{lemma} \label{validcut_setcover} \label{newvalidcut}
Let $\bx\in\Pc$ and $\tx := (\min\{2\bx_S, 1\})_{S \in \Sc}$. Let $\gm$ be a $\beta$-approximate 
solution to the LP
\eqref{Wbx}, and let $\tsgr=c+\sum_{(A,A')\in\Asup\times\A}\gm_{A,A'}\sgr^{\bx,A'\sm\Sbx}$.
If $x'\in\Pc$ is such that $\tsgr^\T(x'-\bx)\geq 0$, then 
$\hppoly{x'}\geq \frac{1}{2}\bigl(2 c^\T \bx+ 
\sum_{(A,A')\in\Asup\times\A}\gm_{A,A'}g(\bo,A'\setminus \Sbx)\bigr) \ge 
\frac{1}{2\beta} \cdot \hppoly{\tx}$. 
\end{lemma}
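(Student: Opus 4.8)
The plan is to mimic the proof of Lemma~\ref{validcut}, but built around the ``residualized'' scenarios $A'\setminus\Sbx$ in place of $A'$, replacing the local-approximation step of that lemma by the decoupling observation that scaling $\bx$ by a factor of $2$ fully covers every element that $\bx$ already covers to an extent of at least $\tfrac12$.

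Concretely, I would define $f(x):=c^\T x+\sum_{(A,A')\in\Asup\times\A}\gm_{A,A'}\,g(x,A'\setminus\Sbx)$ for $x\in\Pc$, and establish two facts. First, $f(x)\le\hppoly{x}$ for every $x\in\Pc$: since $\gm$ is feasible for \eqref{Wbx} it satisfies \eqref{pbnd}, \eqref{rbnd}, \eqref{noneg}, which are exactly the constraints of $(\zpolylp{x})$, so $\zppoly{x}\ge\sum_{(A,A')}\gm_{A,A'}g(x,A')\ge\sum_{(A,A')}\gm_{A,A'}g(x,A'\setminus\Sbx)$, the last inequality using that $g(x,\cdot)$ is monotone under scenario-inclusion (valid since \drssc is a covering problem). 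Second, exactly as in Lemma~\ref{apxsub}, $\tsgr=c+\sum_{(A,A')}\gm_{A,A'}\sgr^{\bx,A'\setminus\Sbx}$ is a subgradient of $f$ at $\bx$ --- it is $c$ (the gradient of the linear term) plus a nonnegative combination, with weights $\gm_{A,A'}\ge0$, of the subgradients $\sgr^{\bx,A'\setminus\Sbx}$ of the convex functions $g(\cdot,A'\setminus\Sbx)$ at $\bx$. Combining these, for any $x'\in\Pc$ with $\tsgr^\T(x'-\bx)\ge0$ we obtain $\hppoly{x'}\ge f(x')\ge f(\bx)+\tsgr^\T(x'-\bx)\ge f(\bx)$.

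It then remains to lower-bound $f(\bx)$, and this is where the factor $\tfrac12$ enters. I would prove that for every scenario $A'\in\A$ one has $g(\bx,A'\setminus\Sbx)\ge\tfrac12\,g(\bo,A'\setminus\Sbx)$: by definition of $\Sbx$, every $e\in A'\setminus\Sbx$ satisfies $\sum_{S\in\Sc:\,e\in S}\bx_S<\tfrac12$, so an optimal fractional second-stage solution $w$ for $g(\bx,A'\setminus\Sbx)$ must place weight $\sum_{S\ni e}w_S>\tfrac12$ on the sets containing $e$; hence $2w$ is a feasible second-stage solution for covering $A'\setminus\Sbx$ from the all-zero first-stage decision, giving $g(\bo,A'\setminus\Sbx)\le 2\sum_S c^\two_S w_S=2\,g(\bx,A'\setminus\Sbx)$. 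Summing these inequalities with the weights $\gm_{A,A'}\ge0$ and adding $c^\T\bx$ yields $f(\bx)\ge c^\T\bx+\tfrac12\sum_{(A,A')}\gm_{A,A'}g(\bo,A'\setminus\Sbx)=\tfrac12\bigl(2c^\T\bx+\sum_{(A,A')}\gm_{A,A'}g(\bo,A'\setminus\Sbx)\bigr)$, which is precisely the middle quantity in the statement; chaining this with the display from the previous paragraph gives the first claimed inequality. The second inequality follows, as in the proof of Lemma~\ref{newgxytoapxsub}(ii), from $\hppoly{\tx}\le\beta\bigl(2c^\T\bx+\sum_{(A,A')}\gm_{A,A'}g(\bo,A'\setminus\Sbx)\bigr)$, since $\beta\ge1$.

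The step I expect to be the main obstacle is getting the quantifiers of the decoupling claim exactly right: one must consistently use $A'\setminus\Sbx$ (not $A'$) both as the scenario appearing in the cut-generating subgradient $\sgr^{\bx,A'\setminus\Sbx}$ and in the objective of \eqref{Wbx}, and one must verify that the two covering facts invoked --- monotonicity of $g(x,\cdot)$ under scenario-inclusion used in the first fact, and feasibility of $2w$ for the zero-first-stage recourse LP used in the decoupling claim --- are both legitimate for the fractional set-cover recourse problem. Both are fine because $\Pc=[0,1]^m$ and the recourse is an uncapacitated fractional covering LP; everything else is a direct transcription of the arguments in Lemmas~\ref{apxsub} and~\ref{validcut}.
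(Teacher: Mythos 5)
Your proof is correct and follows essentially the same route as the paper's: the same auxiliary function $f$, the same subgradient argument transcribed from Lemma~\ref{apxsub}, the same scaling-by-$2$ decoupling claim $g(\bx,A'\setminus\Sbx)\ge\tfrac12 g(\bo,A'\setminus\Sbx)$, and the same appeal to Lemma~\ref{newgxytoapxsub}(ii) for the final inequality. The only cosmetic difference is that the paper clips the doubled recourse solution at $1$ (i.e., uses $(\min\{2z_S,1\})_S$), which is immaterial for the uncapacitated fractional covering LP.
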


\begin{proof}
Consider the function $f(x) = c^\T x + \sum_{(A,A')\in\Asup\times\A}\gm_{A,A'}g(x,A'
\setminus \Sbx)$ defined over $\Pc$. Note that $\gm$ is feasible for the LP
(T${}_{\ppoly,x'}$) and $g(x',A' \setminus \Sbx) \le g(x', A')$ for every scenario 
$A' \in\A$, which implies $\hppoly{x'} \ge f(x')$. 
By mimicking the proof of Lemma~\ref{apxsub}, we have that
$\tsgr$ is a subgradient of $f$ at $\bx$. So 
\[
\hppoly{x'} \ge f(x') \ge f(\bx) + \tsgr^\T(x' - \bx) \ge f(\bx) \ .
\]
Now, note that for every scenario $A' \in \A$, we have 
$g(\bx,A' \setminus \Sbx) \ge\frac{1}{2}g(\bo,A' \setminus \Sbx)$. 
This is because if $z$ is a feasible second-stage solution to scenario $A'\sm\Sbx$ given
$\bx$ as the first-stage actions,
then it covers elements of $A'\sm\Sbx$ to an extent of at least $\frac{1}{2}$, and so
$(\min\{2z_S, 1\})_{S\in\Sc}$ is a feasible second-stage solution for $A' \setminus\Sbx$
given $\bo$ as the first-stage actions. So we obtain  
\begin{alignat*}{2}
\hppoly{x'} & \ge f(\bx) \ge 
c^\T \bx + \frac{1}{2} \sum_{(A,A')\in\Asup\times\A}\gm_{A,A'}g(\bo,A' \setminus \Sbx) \\
& = \frac{1}{2}\Bigl(2 c^\T \bx+ \sum_{(A,A')\in\Asup\times\A}\gm_{A,A'}g(0,A' \setminus\Sbx)\Bigr) 
\ge \frac{1}{2\beta}\cdot\hppoly{\tx},
\end{alignat*}
where the last inequality follows from Lemma~\ref{gxytoapxsub_setcover} (ii).
\end{proof}

We now exploit Lemmas~\ref{gxytoapxsub_setcover} and \ref{validcut_setcover} to obtain
Theorem~\ref{fracsaathm}. We do so by mimicking the proof of Theorem~\ref{polythm}, and
pointing out the changes to Algorithm \polyalg and its analysis.
Let $\alg$ be a $(\beta_1,\beta_2)$-approximation algorithm for $g(\bo,y,A)$ for all
$(y,A)\in\Rplus\times\A$. 
As before, we start by using Lemma~\ref{lbnd}, either certifying
that $x =\bo$ is an optimal solution to (\Qppoly) (in which case we return $x = \bo$, and an
estimate of $\hppoly{0}$ computed via Lemma~\ref{lem:beta-dual}), or that 
$\min_{x\in \Pc}\hppoly{x} \ge \lb$, where $\lb = \frac{r}{\beta_1 \scm_{\max}}$. Suppose we are in
the latter case. 
We run Algorithm \polyalg with parameter $\kp=\ve\lb$, but 
modify step \ref{polyalgcut} 
as follows.

\begin{itemize}
\item If $\bx_i\in\Pc_k$, let $\tx_k := (\min\{2\bx_{i,S}, 1\})_{S \in \Sc}$.
Use Lemma~\ref{gxytoapxsub_setcover} and $\alg$ to obtain a $\beta_1\beta_2$-approximate
solution $\gm$ to (W${}_{\bx_i}$) (which has polynomial-size support). Define 
$\tsgr_k:=c+\sum_{(A,A')\in\Asup\times\A}\gm_{A,A'}\sgr^{\bx_i,A' \setminus S_{\bx_i}}$, and
$\tf_k:=2 c^\T \bx_i+ \sum_{(A,A')\in\Asup\times\A}\gm_{A,A'}g(\bo,A' \setminus S_{\bx_i})$.  
If $\tsgr_k=\bo$, then return $\tx_k$ and $\tf_k$.
Otherwise, let $H$ denote the halfspace $\{x\in\R^m: \tsgr_k^\T (x-\bx_i)\leq 0\}$. 
Set $\Pc_{k+1}\assign\Pc_k\cap H$, and $k\assign k+1$.
\end{itemize}

By Lemma~\ref{newgxytoapxsub} (ii), we immediately obtain
that $\hppoly{\tx_l}\leq\beta_1\beta_2\cdot\tf_l$ for all $l=1,\ldots,k$.
Let $\sx \in \Pc$ be an optimal solution to $\min_{x\in \Pc}\hppoly{x}$. 
We show that there exists an index $l$ such that
$\tf_l\leq 2(1+\ve)\cdot\hppoly{\sx}$. We have two cases to consider. 

\begin{itemize}
\item \underline{Case 1}: we have $\tsgr_l\cdot(\sx-\bx_l)\geq 0$ for some $l$ (this
  includes the case where $\tsgr_l = 0$). Then Lemma~\ref{validcut_setcover} shows that 
$\tf_l\leq2\cdot\hppoly{\sx}$.

\item \underline{Case 2}: we have $\tsgr_l\cdot(\sx-\bx_l)< 0$ for all $l$. In this case,
as argued in the proof Theorem~\ref{polythm}, 
  we can show that there must be a point $x' \in \Pc$ such that
  $\hppoly{x'}\leq\hppoly{\sx}+\kp$ and $\hsgr_l\cdot(x'-\bx_l)=0$ for some $l$. Using
  Lemma~\ref{validcut_setcover} again, we obtain
  $\tf_l\leq2\cdot\hppoly{x'}\leq2\bigl(\hppoly{\sx}+\kp\bigr) 
=2\cdot\hppoly{\sx}+2\ve\cdot\lb\leq 2(1+\ve)\hppoly{\sx}$. \hfill \qed
\end{itemize}

\subsubsection{Vertex cover} 
This is the special case of set cover where we want to cover
edges of a graph by vertices, and we again consider the $\frac{1}{2}L_1$-metric. 
We have $\al=2$, $\rho=2\al$, so we obtain approximation factors of
$\bigl(4\rho+O(\ve)\bigr)=\bigl(16+O(\ve)\bigr)$ in the \allsets setting (using
Theorems~\ref{polythm_collapsible} (a) and~\ref{fracsaathm}), and \linebreak
$\bigl(4\rho\al\cdot\frac{2e}{e-1}+O(\ve)\bigr)=\bigl(101.25+O(\ve)\bigr)$ 
in the \kbounded setting (via Theorems~\ref{gxyalg} (b), \ref{polythm}, 
and~\ref{mainsaathm}).

\subsubsection{Edge cover} 
This is the special case of set cover where we want to cover
vertices of a graph by edges, and we again consider the $\frac{1}{2}L_1$-metric. 
We have $\al=\frac{3}{2}$,
$\rho=2\al$, so we obtain approximation factors of
$\bigl(12+O(\ve)\bigr)$ in the \allsets setting (via Theorems~\ref{polythm_collapsible}
(a) and~\ref{fracsaathm}), and $\bigl(36+O(\ve)\bigr)$
in the \kbounded setting (via Theorems~\ref{gxyalg} (c), \ref{polythm}, 
and~\ref{mainsaathm}).

\subsubsection{Facility location} 
The \dr version (\drsfl) was defined in Section~\ref{prelim}.
Recall that an instance is given by the tuple \linebreak
$\bigl(\F,\C,\{\dist_{ij}\}_{i,j\in\F\cup\C},\{f_i,f^\two_i\}_{i\in\F}\bigr)$,
where $\F$, $\C$ are the facility and client-sets respectively, $\dist$ is the underlying 
metric, and $f,f^\two$ are the first- and second-stage facility-opening costs.
We have $\al=1.488$~\cite{Li}. 
Shmoys and Swamy~\cite{ShmoysS06} showed that an LP-relative $\vro$-approximation for
deterministic FL having a certain ``demand-obliviousness'' property can be turned into a
$(\vro+\al)$-approximation algorithm for 2-stage FL. If the $\vro$-approximation algorithm
has the property that it returns a solution where every cost component of the rounded
solution---i.e., the facility cost, and {\em each} client's assignment cost---is at most
$\vro$ times the corresponding cost component of the fractional solution, then the
resulting algorithm is a local approximation algorithm. Using the deterministic
$4$-approximation algorithm of~\cite{ShmoysTA97} gives a local $\rho$-approximation with
$\rho=5.488$.  

As noted in Section~\ref{prelim}, besides the 
discrete scenario metric, we could define various other 
natural scenario metrics here in terms of the metric $\dist$ and
obtain a rich class of \dr models under the Wasserstein metric. 
We consider one such setting: the asymmetric metric given by
$\scm^\asym_\infty(A,A'):=\max_{j'\in A'}\dist(j',A)$. 

\begin{theorem} \label{flasym}
For \drsfl with $\scm$ being either the discrete metric $\scmdisc$ or the asymmetric metric
$\scm^\asym_\infty$, 
there is a $(6,1)$-approximation for computing $g(x,y,A)$ in the 
\kbounded setting, for any $(x,y,A)\in X\times\Rplus\times\A$, 
\end{theorem}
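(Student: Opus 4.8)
The plan is to reduce the computation of $g(x,y,A)$ to a constrained \kmaxmin-type facility-location problem, and then solve that via a cost-sharing scheme for FL. First, following Lemma~\ref{gxyredn}, fix a scenario $A^\#\in\A$ attaining $\max_{A'\in\A}\bigl(g(x,A')/6-y\cdot\scm(A,A')\bigr)$, and ``guess'' $\scmbound=\scm(A,A^\#)$; for a guessed value $\scmbound$, a $6$-approximate solution $\bA$ to the constrained problem \gxcp${}:=\max\{g(x,A'):A'\in\A,\ \scm(A,A')\leq\scmbound\}$ satisfies (since $A^\#$ is feasible for it, as $A^\#\in\A=\A_{\leq k}$) $g(x,\bA)-y\cdot\scm(A,\bA)\geq g(x,A^\#)/6-y\cdot\scmbound=\max_{A'\in\A}\bigl(g(x,A')/6-y\cdot\scm(A,A')\bigr)$, which is exactly the inequality in Definition~\ref{gxyapx} for a $(6,1)$-approximation. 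The key point is that for both $\scm=\scmdisc$ and $\scm=\scm^\asym_\infty$ the set of possible values of $\scm(A,A')$ is an explicit polynomial-size set---$\{0,1\}$ in the first case, $\{0\}\cup\{\dist_{jj'}:j,j'\in\C\}$ in the second---so we can enumerate \emph{all} candidate $\scmbound$ and output the $\bA$ maximizing $g(x,\bA)-y\cdot\scm(A,\bA)$; this incurs \emph{no} $(1+\ve)$ loss, so it suffices to $6$-approximate \gxcp in the \kbounded setting. Now in the \kbounded setting \gxcp is precisely a \kmaxmin FL problem on a client subset: for $\scmdisc$, $\scmbound=0$ forces $A'=A$ (so \gxcp${}=g(x,A)$ is computed directly via \ref{p4}) and $\scmbound\geq 1$ is vacuous (so \gxcp${}=\max_{A'\in\A_{\leq k}}g(x,A')$); for $\scm^\asym_\infty$, $\scm^\asym_\infty(A,A')\leq\scmbound\iff A'\subseteq B_\scmbound:=\{j'\in\C:\dist(j',A)\leq\scmbound\}$ (with $\dist(j',A)=\min_{j\in A}\dist_{j'j}$), so \gxcp${}=\max\{g(x,A'):A'\subseteq B_\scmbound,\ |A'|\leq k\}$. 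In all cases, since $x\in X$ is integral, this is the \kmaxmin version---with at most $k$ clients---of a fractional-recourse FL instance where the facilities opened by $x$ have zero second-stage opening cost and the client set is a given subset $B$.

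Second, for a $6$-approximation to \kmaxmin FL on $B$ I would establish and exploit a connection to cost-sharing schemes for facility location. The plan is to run an FL algorithm on $B$ that also produces a cost-sharing vector possessing budget-balance together with a strictness/recoverability property, in the spirit of the boosted-sampling framework of~\cite{GuptaPRS04}: the shares of any set $S$ of clients are, on the one hand, at most a constant times a feasible cost of serving $S$, and on the other hand can be ``recovered'' into a feasible way of serving $S$ at cost at most a constant times their total. Using these shares to select a set $\bA$ of at most $k$ clients---those deemed most expensive by the scheme---one then chains three bounds: the recoverability direction gives that every $k$-set, in particular the \kmaxmin-optimal $A^*$, has serving cost at most a constant times its total share; the choice of $\bA$ gives that $A^*$'s total share is at most $\bA$'s; and budget-balance gives that $\bA$'s total share is at most a constant times $g(x,\bA)$; together $g(x,A^*)\leq 6\,g(x,\bA)$, the $6$ arising as the product of an FL-approximation factor (about $3$) and a factor (about $2$). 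Since we need this for the fractional-recourse cost $g(x,\cdot)$, the cost-sharing scheme must be compatible with fractional assignments (i.e.\ ``LP-relative''); this also improves the best bound for plain \kmaxmin FL from $10$~\cite{Khandekar} to $6$, and the detour through cost shares is needed because Khandekar's reduction relies on co-located client copies, which are incompatible with the ``set of clients'' structure of a scenario here.

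The step I expect to be the main obstacle is pinning down the exact cost-sharing scheme for the \emph{restricted} FL problem (some facilities free) that simultaneously yields the budget-balance and the strictness/recoverability guarantees \emph{for every} client subset, extends cleanly to fractional recourse, and is tight enough to produce the constant $6$ rather than a weaker bound; once such a scheme and its two key inequalities are in hand, the client-selection argument of the second paragraph and the reduction from $g(x,y,A)$ of the first are routine.
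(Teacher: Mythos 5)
Your overall route coincides with the paper's: reduce $g(x,y,A)$ to the constrained problem (\gxcp) by enumerating the polynomially many values that $\scm(A,A')$ can take (this is exactly Lemma~\ref{gxyredn}~(i), and your observation that no $(1+\ve)$ loss is incurred for these two metrics is correct), identify (\gxcp) with a \kmaxmin fractional FL instance in which the facilities opened by $x$ are free and the client set is restricted to $\C'$, and then attack \kmaxmin FL via a cost-sharing scheme, with the constant $6$ arising as $3\times 2$. All of that matches the paper.

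The genuine gap is in the part you label ``routine once the scheme is in hand'': the client-selection argument. Your three-bound chain hinges on the middle inequality ``$A^*$'s total share is at most $\bA$'s'', and no static rule of the form ``pick the $k$ clients deemed most expensive by the scheme'' yields this from competitiveness, cost-recovery and cross-monotonicity alone---cost shares are defined relative to a set and only decrease as the set grows, so a client that looks expensive in isolation can become cheap inside the selected set, and one must also decide \emph{with respect to which set} the shares of $A^*$ and of $\bA$ are being compared. The paper's algorithm is adaptive: it builds $\J_k$ greedily, at step $t$ adding $\jbar=\argmax_{j\notin\J_{t-1}}\xi(\J_{t-1}+j,j)$, and the analysis requires (i) a \emph{new} structural property of the P\'al--Tardos scheme beyond the four standard ones, namely $\xi(\J+j_2,j_1)\geq\min\bigl\{\xi(\J,j_1),\xi(\J+j_2,j_2)\bigr\}$ (Lemma~\ref{lem:insertion}), proved from the internals of the scheme (the time function $\tm(\cdot,\cdot)$), not from its axiomatic properties; (ii) an induction using this property to show that every client of $\J_t$ retains share at least $\lbfl_t:=\max_{j'\notin\J_{t-1}}\xi(\J_{t-1}+j',j')$; and (iii) a cross-monotonicity-plus-averaging argument over the at most $k$ elements of $O^*\setminus\J_k$ giving $\xi(\J_k\cup O^*,O^*\setminus\J_k)\leq k\cdot\lbfl_k\leq\xi(\J_k,\J_k)$, whence $\xi(\J_k\cup O^*,\J_k\cup O^*)\leq 2\,\xi(\J_k,\J_k)$ and then $g(x,O^*)\leq g(x,\J_k\cup O^*)\leq 3\,\xi(\J_k\cup O^*,\J_k\cup O^*)\leq 6\,\xi(\J_k,\J_k)\leq 6\,g(x,\J_k)$. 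Note that the comparison is routed through the \emph{union} $\J_k\cup O^*$, not made directly between $\xi(A^*,A^*)$ and $\xi(\bA,\bA)$ as your chain suggests. The scheme itself is not the obstacle you anticipate---P\'al--Tardos is competitive against the fractional cost $g(x,\cdot)$, cost-recovering at factor $3$, and accommodates zero-cost facilities with no modification; the actual content of the proof is the greedy selection rule, the insertion lemma, and the factor-$2$ comparison, none of which is routine.
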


For 
the Wasserstein metric with respect to both the discrete metric and $\scm^\asym_\infty$,
we can take 
$\tau=\bigl(\sum_{i\in\F}f^\two_i+\sum_{i\in\F,j\in\C}\dist_{ij}\bigr)/(\min_{i,j:\dist_{ij}>0}\dist_{ij})$. 
We obtain the following approximation guarantees for \drsfl with the Wasserstein metric
corresponding to the above scenario metrics:
(i) $\bigl(4\rho+O(\ve)\bigr)=\bigl(21.96+O(\ve)\bigr)$ in the \allsets setting (using 
Theorems~\ref{polythm_collapsible} (a) and~\ref{fracsaathm}); and 
(ii) $\bigl(24\rho\al+O(\ve)\bigr)=\bigl(196+O(\ve)\bigr)$ in the \kbounded setting
(using Theorems~\ref{flasym}, \ref{polythm}, and~\ref{mainsaathm}).

\subsubsection*{Proof of Theorem~\ref{flasym}} \label{flasym-proof}
Fix $(x,y,A)\in X\times\Rplus\times\A$, where $\A=\A_{\leq k}:=\{A\sse\C:|A|\leq k\}$.
Fix $\scm$ to be either the discrete scenario metric $\scmdisc$ or the asymmetric metric $\scm^\asym_\infty$.
Since $\scm(A,A')$ takes polynomially-many values, by Lemma~\ref{gxyredn} (i),
it suffices to give a $6$-approximation for the constrained problem
\eqref{Was_oracle_simplified}: $\max_{A'\in\A:\scm(A,A')\leq\scmbound}g(x,A')$.

With both scenario metrics, this amounts to approximating the \kmaxmin fractional facility
location problem for an underlying facility-location instance 
$\bigl(\F,\C',\{\dist_{ij}\}_{i,j\in\F\cup\C'},\{\tf_i\}_{i\in\F}\bigr)$,
where $\tf_i=0$ if $x_i=1$, and is $f^\two_i$ otherwise.
If $\scm = \scmdisc$ and $\scmbound>0$, then $\C'=\C$ (if $\scmbound=0$, the
optimum of the constrained problem is $g(x,A)$);
if $\scm=\scm^\asym_\infty$, then $\C':=\{j\in\C: \dist(j,A)\leq\scmbound\}$.

\vspace*{-1ex}
\paragraph{A \boldmath $6$-approximation algorithm for \kmaxmin facility location.}
We now devise an algorithm for the \mbox{\kmaxmin fractional facility-location} problem
corresponding to a facility-location instance  
(such as the one obtained above) 
\mbox{$\bigl(\F,\C',\{\dist_{ij}\}_{i,j\in\F\cup\C'},\{\tf_i\}_{i\in\F}\bigr)$.}

Khandekar et al.~\cite{Khandekar} give a $10$-approximation for the version of \kmaxmin
integral FL, where a scenario may place an {\em arbitrary} number of co-located
clients at a location in $\C'$ (and the {\em total} number of clients must be at most $k$).%
\footnote{Since the gap between the integral and fractional optimal
values for FL is at most $\al=1.488$~\cite{Li}, a $\beta$-approximation for the integral
(resp. fractional) version implies an $\al\beta$-approximation for \kmaxmin fractional
(resp. integral) facility location.}  
However, in our setting, we may place at most one client
at any location in $\C'$, so the algorithm in~\cite{Khandekar} does not work for our
purposes. (Clearly, our setting is more general, since we can encode the scenario-setting
of~\cite{Khandekar} by creating $k$ co-located copies at every $j\in\C'$.)
As noted earlier, we can model more-general settings, where clients have (integer)
demands, by creating a fixed number of co-located clients at locations in $\C'$; but,
here again, we have a constraint that {\em limits} the number of co-located clients at any 
$j\in\C'$. 

We therefore need to develop new techniques to devise an approximation algorithm for
\kmaxmin fractional FL. The key tool that we exploit here is that of 
{\em cost-sharing schemes}.
We uncover a novel connection between cost-sharing schemes and \kmaxmin problems 
by demonstrating that one can exploit a cost-sharing scheme for FL having certain
properties to obtain an approximation algorithm for \kmaxmin \{integral, fractional\} FL. 
Our result also improves the approximation factor for \kmaxmin integral FL
from $10$ to $6$. 

A cost-sharing method is a function $\xi:2^{\C'}\times\C'\rightarrow\R_+$, where $\xi(\J,j)$
for $j\in\J$, intuitively gives the contribution of $j$ towards the cost incurred in
satisfying the client-set $\J$ (i.e., the cost of opening facilities and assigning clients
in $\J$ to these open facilities). P\'al and Tardos~\cite{PalT} devised a cost-sharing method 
$\xi$ satisfying the following properties. For sets $\J,T\sse\C'$, define
$\xi(\J,T):=\sum_{j\in T}\xi(\J,j)$.
\begin{enumerate}[label=$\bullet$, topsep=1ex, itemsep=0ex]
\item $\xi(S,j)=0$ if $j\notin S$.
\item (Competitiveness) For every $\J\subseteq\C'$, we have $\xi(\J, \J) \le g(x,\J)$.
\item (Cost-recovery) For every $\J\subseteq \C'$, we have $\xi(\J, \J) \ge g(x,\J)/3$.
\item (Cross-monotonicity) For all $\J_1\sse \J_2 \subseteq\C'$ and every client $j \in\C'$,
  we have $\xi(\J_2, j) \le \xi(\J_1, j)$. 
\end{enumerate}

We will prove an additional useful property about $\xi$, for which we very briefly
describe how $\xi$ is computed. 
For every $\J\sse\C'$ and $i\in\F$, we compute a certain {\em time} $\tm(\J, i) \ge 0$. The
cost-share of a client $j\in\J$ is then defined as 
$\xi(\J, j) := \min_{i \in \F} \max\{t(\J, i), \dist_{ij}\}$. 
The function $\tm(\cdot,\cdot)$ satisfies the following property: for every set 
$\J \subseteq \C'$, every client $j \not \in \J$, and every facility $i \in \F$, we have
$\tm(\J + j, i) \le \tm(\J, i)$. Further, if this inequality is strict, then 
$\tm(\J + j, i) \ge \dist_{ij}$. 

\begin{lemma}
\label{lem:insertion}
Consider $\J \subseteq \C'$ and two clients $j_1\in \J$ and $j_2 \not \in J$.
Then $\xi(\J+j_2,j_1)\geq\min\bigl\{\xi(\J,j_1),\xi(\J+j_2,j_2)\bigr\}$.
\end{lemma}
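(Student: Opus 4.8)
The plan is to work directly from the closed-form expression $\xi(\J,j)=\min_{i\in\F}\max\{\tm(\J,i),\dist_{ij}\}$ together with the two stated facts about the time function: (a) $\tm(\J+j_2,i)\le\tm(\J,i)$ for every $i\in\F$, and (b) whenever this drop is strict we moreover have $\tm(\J+j_2,i)\ge\dist_{ij_2}$. First I would fix a facility $i^*\in\F$ attaining the minimum in the definition of $\xi(\J+j_2,j_1)$, so that
\[
\xi(\J+j_2,j_1)=\max\{\tm(\J+j_2,i^*),\dist_{i^*j_1}\},
\]
and then split into two cases according to whether the time at $i^*$ strictly decreased when $j_2$ was inserted into $\J$.

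In the first case, $\tm(\J+j_2,i^*)=\tm(\J,i^*)$. Then $\max\{\tm(\J,i^*),\dist_{i^*j_1}\}$ is one of the terms over which the minimum defining $\xi(\J,j_1)$ is taken, so $\xi(\J+j_2,j_1)=\max\{\tm(\J,i^*),\dist_{i^*j_1}\}\ge\xi(\J,j_1)$, which already dominates $\min\{\xi(\J,j_1),\xi(\J+j_2,j_2)\}$.

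In the second case, $\tm(\J+j_2,i^*)<\tm(\J,i^*)$, so by (b) we have $\dist_{i^*j_2}\le\tm(\J+j_2,i^*)$, whence $\max\{\tm(\J+j_2,i^*),\dist_{i^*j_2}\}=\tm(\J+j_2,i^*)$. Since this quantity is one of the terms over which the minimum defining $\xi(\J+j_2,j_2)$ is taken, $\xi(\J+j_2,j_2)\le\tm(\J+j_2,i^*)\le\max\{\tm(\J+j_2,i^*),\dist_{i^*j_1}\}=\xi(\J+j_2,j_1)$. So here $\xi(\J+j_2,j_1)\ge\xi(\J+j_2,j_2)\ge\min\{\xi(\J,j_1),\xi(\J+j_2,j_2)\}$. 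Combining the two cases gives the claim.

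I do not expect a genuine obstacle: the argument is short, and the only point to get right is the case split itself, namely recognizing that which of the two lower bounds $\xi(\J,j_1)$ or $\xi(\J+j_2,j_2)$ is the operative one depends on whether $\tm(\cdot,i^*)$ strictly dropped, and that clause (b) — strict drop forces $\tm(\J+j_2,i^*)\ge\dist_{i^*j_2}$ — is precisely what makes the strict-drop case work. I would also double-check the trivial point that $i^*$ is a legitimate witness in the minima defining both $\xi(\J,j_1)$ and $\xi(\J+j_2,j_2)$, which is immediate since each minimum ranges over all of $\F$.
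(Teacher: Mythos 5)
Your proof is correct and takes essentially the same route as the paper's: both hinge on fixing the facility $i^*$ attaining the minimum in $\xi(\J+j_2,j_1)$ and invoking the property that a strict drop in $\tm(\cdot,i^*)$ forces $\tm(\J+j_2,i^*)\ge\dist_{i^*j_2}$, hence $\xi(\J+j_2,j_2)\le\tm(\J+j_2,i^*)\le\xi(\J+j_2,j_1)$. The only difference is cosmetic: the paper organizes the case split around whether the cost share dropped (using cross-monotonicity), whereas you split directly on whether $\tm(\cdot,i^*)$ dropped, which is, if anything, slightly cleaner since it makes both cases immediate.
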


\begin{proof}
By cross-monotonicity, we have $\xi(\J + j_2, j_1) \le \xi(\J, j_1)$. If this holds at
equality, then the result follows immediately. So assume otherwise.
By the way in which the cost-shares are defined, 
$\xi(\J+j_2,j_1)<\xi(\J,j_1)$ implies that 
$\xi(\J + j_2, j_1) = t(\J + j_2, i)$ for some facility $i$ and 
$\tm(\J + j_2, i) < \tm(\J, i)$. This implies that $\tm(\J+ j_2, i) \ge\dist_{ij_2}$, and
it follows that $\xi(\J + j_2, j_2) \le \max\{\tm(\J + j_2, i), \dist_{ij_2}\} = \tm(\J +
j_2, i) = \xi(\J + j_2, j_1)$.  
\end{proof}

We may assume that $k \le |\C'|$ (otherwise, we simply set $k = | \C'|$). 
Consider the following simple greedy algorithm. Initialize $t\gets 0$, $\J_0\gets\emptyset$. 
For $t=1,\ldots,k$, we find $\jbar \gets \argmax_{j\in\C'\setminus \J_{t-1}} \xi(\J_{t-1} + j, j)$, and
set $\J_t\gets\J_{t-1}\cup\{\jbar\}$.

Let $O^*\in\A$ be such that $g(x,O^*)=\max_{A\in\A}g(x,A)$. 
We claim that $\xi(\J_k,\J_k)\geq\xi(\J_k\cup O^*,\J_k\cup O^*)/2$. This will complete the
proof since this implies that 
$$
g(x,\J_k)\geq\xi(\J_k,\J_k)\geq\frac{\xi(\J_k\cup O^*,\J_k\cup O^*)}{2}
\geq\frac{g(x,\J_k\cup O^*)}{6}\geq\frac{g(x,O^*)}{6}.
$$
In fact~\cite{PalT} show a stronger form of cost-recovery, namely, that there is an
integer solution $z^\J$ feasible for scenario $\J$ given first-stage decisions $x$ such
that $\xi(\J,\J)\geq\bigl(\text{cost of }z^\J\bigr)/3$ for every $\J\sse\C'$, and using
this in the above chain of inequalities shows that $S_k$ yields a $6$-approximation also
for \kmaxmin{} {\em integral} facility location.

We now prove the above claim.
For any $t=1,\ldots,k$, we show that
$\xi(\J_t,j)\geq\lbfl_t$ for all $j\in\J_t$, where $\lbfl_t:=\max_{j'\in\C'\sm\J_{t-1}}\xi(\J_{t-1}+j',j')$.
We prove this by induction on $t$. Note that $\lbfl_t\geq\lbfl_{t+1}$ due to
cross-monotonicity, and since $\C'\sm\J_{t-1}\supseteq\C'\sm\J_t$.
The statement is clearly true for $t=1$. Suppose this is true for index $t$, and
consider index $t+1$. Consider any $j\in\J_{t+1}$. Let $\jbar$ be the element added to
$\J_t$ in iteration $t+1$. By definition, $\xi(\J_{t+1},\jbar)=\lbfl_{t+1}$.
If $j\in\J_t$, then 
$\xi(\J_{t+1},j)\geq\min\{\xi(\J_t,j),\xi(\J_{t+1},\jbar)\}\geq\min\{\lbfl_t,\lbfl_{t+1}\}=\lbfl_{t+1}$, where the second inequality follows from the induction hypothesis. 
Thus, for every $j\in\J_{t+1}$, we have $\xi(\J_{t+1},j)\geq\lbfl_{t+1}$.
This completes the induction step.

Therefore, by repeatedly using cross-monotonicity, we have
\begin{equation*}
\begin{split}
\xi(\J_k,\J_k) & \geq k\cdot\lbfl_k
\geq k\cdot\max_{j\in O^*\sm\J_k}\xi(\J_{k-1}+j,j)
\geq k\cdot\max_{j\in O^*\sm\J_k}\xi(\J_{k}\cup O^*,j) \\
& \geq k\cdot\frac{\xi(\J_{k}\cup O^*,O^*\sm\J_k)}{|O^*\sm\J_k|} 
\geq \xi(\J_{k}\cup O^*,O^*\sm\J_k)=\xi(\J_k\cup O^*,\J_k\cup O^*)-\xi(\J_k\cup O^*,\J_k) \\
& \geq \xi(\J_k\cup O^*,\J_k\cup O^*)-\xi(\J_k,\J_k).
\end{split}
\end{equation*}
The first inequality follows from the statement proved in the previous paragraph; the
second is simply because we restricted $\C'\sm\J_{k-1}$ to $O^*\sm\J_k$; the
third follows from cross-monotonicity; the fourth is because we replaced $\max$ by an
average and all cost shares are nonnegative; the fifth is because $|O^*|\leq k$; and the
last inequality is again due to cross-monotonicity.
\hfill \qed

\subsubsection{Steiner tree} 
The \dr version (\drst) was defined in Section~\ref{prelim}.
Recall that an instance is given by $\bigl(G=(V,E),c,s,\ld\bigr)$, where $(G,c)$ is a
metric, $s$ is the root, and $c_e, c^\two_e=\ld c_e$ are the costs
of buying edge $e$ in stages I and II respectively.

We do not have a local approximation algorithm for \drst, but there is a restricted local
$O(1)$-approximation algorithm for a {\em monotone} version of \drst,
wherein we require that in every scenario $A$, the path from each node $v\in A$ to
the root $s$ consists of a segment starting at $v$ comprising edges bought in scenario $A$,
followed by a segment ending at $s$ comprising first-stage edges. (Thus, in effect, the
first-stage edges $F$ should form a tree containing $s$.)
This monotonicity property was stipulated by~\cite{GuptaRS,DhamdhereGRS05} in the context
of 2-stage \{stochastic, robust\} Steiner tree respectively, where they show that imposing
this condition only incurs a factor-$2$ loss. 
We argue that the same holds in the \dr setting. Thus, by utilizing the restricted local
$10$-approximation algorithm devised by~\cite{SGupta} for this monotone 2-stage Steiner
tree problem in Theorem~\ref{polythm_collapsible}, and the well-known LP-relative
$2$-approximation for Steiner tree, 
we obtain the following results for the \allsets setting. 

\begin{theorem} \label{drstthm}
\drst admits a $(160 +O(\ve))$-approximation algorithm in the \allsets setting with
the scenario metrics $\scmdisc$ and $\scm^\asym_\infty$ (defined with respect to
the metric $c$ on $V$). 
\end{theorem}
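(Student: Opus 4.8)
\medskip
\noindent\textbf{Proof proposal.}
The plan is to route everything through the monotone variant $\drstmon$, in which every scenario's $s$--$v$ paths must consist of a second-stage prefix followed by a first-stage suffix (equivalently, the purchased first-stage edges form a tree through $s$). Throughout, $g(\cdot,\cdot)$, $\hpcentral{\cdot}$ and $\hppoly{\cdot}$ will denote the second-stage cost and objectives of $\drstmon$, where $g(x,A)$ is the optimal value of the natural fractional-second-stage cut/flow LP given $x\in\Pc=[0,1]^E$. First I would argue, adapting \cite{GuptaRS,DhamdhereGRS05} to the \dr objective, that $\OPT(\drstmon)\le 2\,\OPT(\drst)$: given any feasible $\drst$ solution, one converts its first-stage edge set into a tree through $s$ and reroutes each scenario's second-stage edges in such a way that the first-stage cost, and the second-stage cost of \emph{every} scenario separately, grow by at most a factor $2$; since the \dr objective is $c^\T x+\max_q\E[A\sim q]{(\text{cost of }z^A)}$, a per-scenario factor-$2$ bound plus a factor-$2$ bound on $c^\T x$ propagate to the whole objective. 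Conversely every $\drstmon$ solution is $\drst$-feasible, so it suffices to approximate $\drstmon$.

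Next I would verify that $\drstmon$ satisfies \ref{p1}--\ref{p6} (with $X=\{0,1\}^E$, $\Pc=[0,1]^E$, inflation parameter $\ld$, and $\tau$ chosen by the general recipe of Section~\ref{apps}), and---crucially---that $g(x,A)\le g(x,A')$ whenever $A\sse A'$, since a feasible monotone fractional second-stage solution for the larger terminal set restricts to one for the smaller. By Lemma~\ref{collapsiblelem} this monotonicity makes $\A=2^V$ collapsible under both $\scmdisc$ and $\scm^\asym_\infty$ (defined with respect to $c$), and in particular $g(x,y,A)$ is computable exactly. Hence Theorem~\ref{polythm_collapsible}(a) lets me solve each fractional SAA problem $\min_{x\in\Pc}\hppoly{x}$ for $\drstmon$ \emph{exactly}, via the compact LP~\eqref{drlp}. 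Feeding these exact fractional optima (so $\bbeta=\rho=1$ in its hypotheses, and with the required lower bound $\lb$ supplied by Lemma~\ref{lbnd}) into Theorem~\ref{fracsaathm} then produces, using $\poly(\inpsize,\tfrac{\ld}{\ve})$ samples, a point $\bx\in\Pc$ with $\hpcentral{\bx}\le 4(1+O(\ve))\min_{x\in\Pc}\hpcentral{x}\le 8(1+O(\ve))\,\OPT(\drst)$ with high probability.

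Finally I would round $\bx$. Using the restricted local $10$-approximation algorithm of \cite{SGupta} for monotone 2-stage Steiner tree, invoked with the scenario set $\Atilde=\bigcup_{A\in\Asup}\phi(A)$ exactly as in Theorem~\ref{polythm_collapsible}(b), I get an integral first-stage solution $\tx\in X$ with $\hpcentral{\tx}\le 10\,\hpcentral{\bx}$, where the second-stage term of $\hpcentral{\tx}$ still uses fractional recourse. Then, for whichever scenario $A$ is realized, I round the fractional second-stage solution attaining $g(\tx,A)$ using the LP-relative $2$-approximation for deterministic Steiner tree, obtaining integral recourse of cost at most $2g(\tx,A)$; the resulting integral solution is feasible for $\drst$, and its $\drst$-cost is at most $2\,\hpcentral{\tx}\le 2\cdot 10\cdot 8(1+O(\ve))\,\OPT(\drst)=(160+O(\ve))\,\OPT(\drst)$. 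The constant factors are $2$ (monotonization) $\times\,4$ (SAA transfer) $\times\,10$ (\cite{SGupta}) $\times\,2$ (LP-relative Steiner tree rounding).

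I expect Step~2 to be the crux: one must pin down a fractional recourse LP for monotone 2-stage Steiner tree that is simultaneously convex in $x$, exactly polynomially solvable, and monotone under scenario inclusion, since it is precisely this last property (via Lemma~\ref{collapsiblelem}) that collapses the otherwise exponentially large program $\min_{x\in\Pc}\hppoly{x}$ into the compact LP~\eqref{drlp} and makes $g(x,y,A)$ computable. The factor-$2$ monotonization of Step~1 should be a routine adaptation of \cite{GuptaRS,DhamdhereGRS05} once one checks it is carried out scenario by scenario, and the rest is bookkeeping of constants.
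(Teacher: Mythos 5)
Your overall architecture --- monotonize at a factor-$2$ loss, verify \ref{p1}--\ref{p6} and monotonicity of $g$ under scenario inclusion so that Lemma~\ref{collapsiblelem} applies, solve the compact LP of Theorem~\ref{polythm_collapsible} exactly, round with the restricted local $10$-approximation of~\cite{SGupta}, and finish with the LP-relative $2$-approximation, for a total of $2\cdot 4\cdot 10\cdot 2=160$ --- is the paper's, and the constants agree. But there is a genuine gap in the order in which you apply the SAA transfer and the rounding. You first invoke Theorem~\ref{fracsaathm} to obtain a fractional $\bx\in\Pc$ that is good for the \emph{true} objective $\hpcentral{\cdot}$, and only then round $\bx$ with the \emph{restricted} local approximation, claiming $\hpcentral{\tx}\le 10\,\hpcentral{\bx}$. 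That inequality is not justified: the restricted algorithm only guarantees $(\text{cost of }\tz^{A'})\le 10\,g(\bx,A')$ for $A'\in\Atilde=\bigcup_{A\in\Asup}\phi(A)$, a polynomial-size set tied to the support of one empirical distribution, whereas $\zpcentral{\tx}$ is a maximum over transportation plans out of $\pcent$, whose support may be exponential; the worst-case plan can leave mass on (or route mass to) scenarios outside $\Atilde$, on which $g(\tx,\cdot)$ is not controlled relative to $g(\bx,\cdot)$. This is precisely the step where a genuine (unrestricted) local approximation --- available for set cover and facility location, but not for Steiner tree --- would be needed.

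The fix, and what the paper does, is to swap the two steps: round \emph{inside each SAA problem}. For each $i$, solve the compact LP for $\min_{x\in\Pc}\hpcentral[\htpcent^i]{x}$ exactly and apply the restricted local algorithm with $\Atilde^i=\bigcup_{A\in\supp(\htpcent^i)}\phi(A)$; Theorem~\ref{polythm_collapsible}(b) then gives an integral $\tx^i$ with $\hpcentral[\htpcent^i]{\tx^i}\le 10\cdot\min_{x\in\Pc}\hpcentral[\htpcent^i]{x}$, because by collapsibility the SAA objective's inner maximization is supported on $\Atilde^i$, exactly where the restricted guarantee lives. Since $f^i=\hpcentral[\htpcent^i]{\tx^i}$ can be computed exactly, Theorem~\ref{mainsaathm} (the discrete SAA theorem, with $\beta=1$ and $\rho=10$) applies and transfers the guarantee to the true distribution, yielding $\hpcentral{\tx^j}\le 40\bigl(1+O(\ve)\bigr)\min_{x\in X}\hpcentral{x}$; the remaining factors of $2$ (deterministic LP rounding) and $2$ (Lemma~\ref{reductreelem}) give $160+O(\ve)$. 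Everything else in your write-up, including identifying Step~2 (the monotone flow LP and its monotonicity under scenario inclusion) as the crux, is on target.
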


\subsubsection*{Proof of Theorem~\ref{drstthm}}
For \drst, the discrete first-stage action set is $X=\{0,1\}^E$.
We first show that imposing the monotonicity condition incurs a factor-$2$ loss for the
\dr problem. Recall that the monotonicity condition states that in every scenario $A$, the
path from a node $v\in A$ to the root $s$ consist of a segment of second-stage edges 
starting at $v$ followed by a segment of first-stage edges ending at $r$; we call such a
path a monotone path.
For $x=\chi^F\in X$, we say that $x+\chi^{F^A}$ contains a $v$-$s$ path (respectively a 
monotone $v$-$s$) path, if $F\cup F^A$ contains a $v$-$s$ path (respectively a monotone
$v$-$s$ path). 
We want to compare the following two \dr 2-stage Steiner tree problems. 
\begin{alignat}{1}
\min_{x \in X} & \quad c^\T x + \max_{q:\pdist(\pcentral, q) \le r}
{\textstyle \Exp_{A \sim q}}\biggl[\underbrace{\min_{F^A\sse E}\ 
\Bigl\{c(F^A):\ x+\chi^{F^A}\text{ contains a $v$-$s$ path }
\forall v\in A\Bigr\}}_{\text{\small{$\gdisc(x,A)$}}}\biggr]
\tag{\drst} \label{DRST} \\[0.5ex]
\min_{x \in X} & \quad c^\T x + \max_{q:\pdist(\pcentral, q) \le r}
{\textstyle \Exp_{A \sim q}}\biggl[\underbrace{\min_{F^A\sse E}\ 
\Bigl\{c(F^A):\ x+\chi^{F^A}\text{ contains a monotone $v$-$s$ path }
\forall v\in A\Bigr\}}_{\text{\small{$\gdiscmon(x,A)$}}}\biggr] 
\tag{\drstmon} \label{DRSTmon}
\end{alignat}

\begin{lemma}[\cite{DhamdhereGRS05}] \label{stageItreelem}
For every first-stage decision $\bx \in X$, there exists $\tx \in X$ such that 
$c^\T \tx\le 2 c^\T \bx$ and $\gdiscmon(\tx, A) \le 2 \gdisc(\bx, A)$ for every set 
$A\sse V$. 
\end{lemma}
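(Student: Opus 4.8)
The plan is to show that the first-stage solution $\bx=\chi^F$ can be converted into a monotone first-stage solution by "buying a spanning tree of the first-stage edges," at the cost of a factor-$2$ blowup. The key structural fact about the deterministic (single-scenario) Steiner tree problem is that $\gdisc(\bx, A)$ is computed by connecting each terminal $v \in A$ to the closest point of the graph $H = (V, F)$ reachable via second-stage edges; once we make $H$ a tree rooted at $s$ (which is what monotonicity demands), the second-stage cost can only change in a controlled way. So the first step is: let $T$ be a minimum Steiner tree (or simply a minimum spanning tree on the metric closure) spanning the set of endpoints of $F$ together with $s$; by the standard $2$-approximation relationship between spanning trees and the optimal connected subgraph, $c(T) \le 2 c(F) = 2 c^\T\bx$, so setting $\tx := \chi^{E(T)}$ (after mapping metric-closure edges back to actual paths in the complete graph $G$, which is metric) gives $c^\T\tx \le 2 c^\T\bx$.

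Next I would verify the per-scenario guarantee $\gdiscmon(\tx, A) \le 2\gdisc(\bx, A)$. Fix a scenario $A$ and let $F^A$ be an optimal second-stage edge set for $\bx$ in the non-monotone problem, so $F \cup F^A$ contains a $v$–$s$ path for every $v \in A$ and $c(F^A) = \gdisc(\bx, A)$. Consider the subgraph $F \cup F^A$; for each $v \in A$, walk along its path to $s$ until the first time we hit a vertex of $T$ (such a vertex exists, since $s \in T$ and $s$ lies on the path), using the edges of $F^A$ encountered along the way, and possibly also traversing some edges of $F$ before that point. The edges of $F$ used in this initial segment are the obstruction to monotonicity; the fix is to "shortcut" them using the metric: replace the $v$-to-(first $T$-vertex) segment by a direct edge in $G$, whose cost is at most the length of that segment, hence at most $c(F^A) + c(F)$ in the worst case over all terminals — but more carefully, each edge of $F$ or $F^A$ is used across all terminals' shortcut-segments with total multiplicity that we must bound. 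The cleaner route, which I expect the paper takes (attributing to \cite{DhamdhereGRS05}), is to take $\tx$ to span $F$ as a tree, and then for scenario $A$ simply reuse $F^A$ together with, for each $v \in A$, a shortcut edge from $v$ to the nearest endpoint of an $F^A$-edge on its original path; by the triangle inequality each such shortcut costs at most the cost of the $F$-portion it replaces plus nothing extra, and one shows the total second-stage cost is $\le c(F^A) + c(F^A) = 2\gdisc(\bx,A)$ by a charging argument that pays for the shortcuts using the $F^A$-edges themselves (each second-stage edge charged $O(1)$ times since the monotone path of each terminal is edge-disjoint from the others once we root at $T$). Then $\tx + \chi^{F^A \cup \{\text{shortcuts}\}}$ contains a monotone $v$–$s$ path for every $v \in A$, giving $\gdiscmon(\tx,A) \le 2\gdisc(\bx,A)$.

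The main obstacle I anticipate is the charging argument in the per-scenario bound: making precise why the shortcut edges (replacing the $F$-portions of terminals' paths) can be paid for within a factor $2$ of $c(F^A)$ rather than blowing up with $|A|$. The resolution is to not shortcut each terminal independently but to observe that once $\tx$ forms a tree $T$ spanning the endpoints of $F$, we only need each $v \in A$ to reach $T$; we can take the restriction of $F \cup F^A$ to the union of the terminal–$s$ paths, contract $T$ to a single vertex, and observe the resulting second-stage edge set (a forest hanging off $T$) has cost at most $c(F^A)$ since all $F$-edges inside $T$ were contracted away — the edges left are a subset of $F^A$ plus at most the $F$-edges not in $T$, and since $E(T) \supseteq$ a spanning structure of $F$'s endpoints, the leftover $F$-edges form short detours bounded via the metric by the $F^A$-edges they parallel. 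I would present this as: contract $T$, observe each terminal needs only a path in (the image of) $F \cup F^A$ to the contracted root, shortcut through the metric, and bound the total by $c(F^A) + c(F^A)$. This matches the factor-$2$ losses stipulated in \cite{GuptaRS, DhamdhereGRS05}, so I would lean on Lemma~\ref{stageItreelem} being exactly their statement and keep the proof at the level of "apply the metric shortcutting / tree-doubling argument of \cite{DhamdhereGRS05} verbatim, which goes through since the first-stage decision $\bx$ and the notion of monotone path are identical here; the distributional-robustness wrapper $\max_q \Exp_{A \sim q}$ is irrelevant as the bound is per-scenario."
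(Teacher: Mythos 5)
First, note that the paper does not prove Lemma~\ref{stageItreelem} at all --- it is imported verbatim from \cite{DhamdhereGRS05} --- so there is no in-paper proof to compare against; your attempt has to stand on its own, and it does not. The fatal problem is your very first step: the claim that a minimum tree $T$ spanning the endpoints of $F$ together with $s$ satisfies $c(T)\le 2c(F)$. The ``MST on the metric closure is a $2$-approximation'' fact only applies when the terminal set is already connected by a subgraph of the cost you want to charge against; here $F$ is an arbitrary forest whose components may be nowhere near $s$ or each other, so connecting them all to $s$ can cost arbitrarily more than $c(F)$. Concretely, take $F=\{uw\}$ with $c_{uw}=1$ and $d(u,s)=d(w,s)=1000$: any tree containing $s,u,w$ costs at least $1000$, while $2c^\T\bx=2$. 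The correct $\tx$ in this example must \emph{discard} the edge $uw$ (one checks that $\gdiscmon(\bo,A)\le\ld\cdot 1001\le 2\gdisc(\bx,A)$ for every relevant $A$, since any recourse must bridge the $1000$-gap to $s$ anyway), which shows that a correct construction must decide, component by component of $F$, whether to attach that component to the root (charging the attachment cost to the component's own first-stage cost) or to drop it and charge the lost connectivity to the recourse of whichever scenarios actually use it. Your construction never discards anything, so it cannot meet the first-stage budget.

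The second half of your argument also does not close. You replace, for each terminal $v\in A$, the first-stage portion of $v$'s path by a metric shortcut whose cost you bound by the length of the $F$-segment it replaces; summing over terminals this gives a bound of order $|A|\cdot c(F)$, not $2\,\gdisc(\bx,A)$, and no charging argument is actually supplied to convert the one into the other (the remark that terminals' paths to the tree are edge-disjoint is false in general, and in any case irrelevant: the problematic objects are the $|A|$ distinct shortcut edges, not shared path edges). You flag this obstacle yourself but resolve it only by asserting that ``contracting $T$'' leaves edges ``bounded via the metric by the $F^A$-edges they parallel,'' which is precisely the statement that needs proof and is where the content of \cite{DhamdhereGRS05} lies. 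Since the paper treats the lemma as a black box, the honest options are either to cite it as the paper does, or to give a genuinely complete argument that handles both the far-from-root components (for the first-stage bound) and the per-scenario charging of replaced $F$-segments to $c(F^A)$ (for the second-stage bound); your proposal does neither.
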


\begin{corollary} \label{reductreelem}
Consider the \dr problems \eqref{DRST} and \eqref{DRSTmon} for an arbitrary scenario
collection $\A$. If $\tx$ is an $\al$-approximate solution to \eqref{DRSTmon}, then it is 
a $(2\al)$-approximate solution to \eqref{DRST}.
\end{corollary}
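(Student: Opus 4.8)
The plan is to deduce Corollary~\ref{reductreelem} from Lemma~\ref{stageItreelem} together with one essentially trivial monotonicity observation; the argument is short, since the corollary is a packaging of the lemma at the level of the two optimization problems. Write $F(x):=c^\T x+\max_{q:\pdist(\pcent,q)\le r}\E[A\sim q]{\gdisc(x,A)}$ for the objective of \eqref{DRST}, and $G(x):=c^\T x+\max_{q:\pdist(\pcent,q)\le r}\E[A\sim q]{\gdiscmon(x,A)}$ for the objective of \eqref{DRSTmon}, and set $\OPT_F:=\min_{x\in X}F(x)$ and $\OPT_G:=\min_{x\in X}G(x)$. Here $\tx$ is the given $\al$-approximate solution to \eqref{DRSTmon}, i.e. $G(\tx)\le\al\,\OPT_G$.

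The first step is the pointwise inequality $\gdisc(x,A)\le\gdiscmon(x,A)$ for every $x\in X$ and every $A\sse V$: a monotone $v$-$s$ path is in particular a $v$-$s$ path, so any edge-set $F^A$ feasible in the minimization defining $\gdiscmon(x,A)$ is also feasible in the one defining $\gdisc(x,A)$, whence the latter minimum is no larger. Taking expectations under any fixed distribution $q\in\dcol$, then the maximum over all such $q$, and adding $c^\T x$, gives $F(x)\le G(x)$ for all $x\in X$; in particular $F(\tx)\le G(\tx)$.

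The second step bounds $\OPT_G$ in terms of $\OPT_F$. Let $x^*$ attain $\OPT_F$. Applying Lemma~\ref{stageItreelem} with $\bx=x^*$ produces $\tilde x^*\in X$ with $c^\T\tilde x^*\le 2c^\T x^*$ and $\gdiscmon(\tilde x^*,A)\le 2\gdisc(x^*,A)$ for every $A\sse V$, hence for every $A\in\A$. For any fixed $q$ with $\pdist(\pcent,q)\le r$, linearity of expectation yields $\E[A\sim q]{\gdiscmon(\tilde x^*,A)}\le 2\,\E[A\sim q]{\gdisc(x^*,A)}$; taking the maximum over such $q$ on both sides preserves the inequality (as scaling by $2$ is monotone), and adding $c^\T\tilde x^*\le 2c^\T x^*$ gives $G(\tilde x^*)\le 2F(x^*)=2\,\OPT_F$. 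Therefore $\OPT_G\le G(\tilde x^*)\le 2\,\OPT_F$.

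Combining the pieces: $F(\tx)\le G(\tx)\le\al\,\OPT_G\le 2\al\,\OPT_F$, so $\tx$ is a $(2\al)$-approximate solution to \eqref{DRST}, as claimed. There is no real obstacle; the only points worth flagging are that Lemma~\ref{stageItreelem} is quantified over \emph{all} subsets $A\sse V$, which is precisely what makes the reduction uniform in the scenario collection $\A$ and lets the corollary be stated for arbitrary $\A$, and that the inner $\max_q$ in both problems commutes with multiplication by the positive constant $2$.
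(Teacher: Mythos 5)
Your proposal is correct and follows exactly the same two-step route as the paper's own proof: apply Lemma~\ref{stageItreelem} to an optimal solution of \eqref{DRST} to get $\OPT_{\drstmon}\le 2\OPT_{\drst}$, and use the pointwise inequality $\gdisc(\tx,A)\le\gdiscmon(\tx,A)$ to compare the objective values of $\tx$ in the two problems. The extra care you take in checking that the inner $\max_q$ and the expectation preserve the factor-$2$ bound is implicit in the paper's version but is the right thing to verify.
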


\begin{proof}
By applying Lemma~\ref{stageItreelem} to an optimal solution to \eqref{DRST}, 
we infer that $\OPT_{\drstmon}\leq 2\OPT_{\drst}$.
Note that for every scenario $A \in \A$, we have $\gdisc(\tx, A) \le \gdiscmon(\tx, A)$ by
definition. 
It follows that the objective value of $\tx$ in \eqref{DRST} is no
larger than its objective value in \eqref{DRSTmon}, which by assumption is at most
$\alpha\cdot\OPT_{\drstmon}\leq 2\al\cdot\OPT_{\drst}$. 
\end{proof}

Gupta et al.~\cite{GuptaRS} consider the following integer program (IP) for $\gdiscmon(x,A)$.
For notational simplicity, we assume that $s\notin A$; clearly, this can always be ensured
without changing the problem.
We have variables $\{z^A_e\}_{e \in E}$ to indicate the edges bought in stage
II. To encode the requirement that there is a monotone $v$-$s$ path for every $v\in A$, 
we bidirect the edges to obtain the set of arcs $\Ebi$,
and use flow variables $\{\fIAv_e\}_{e \in \Ebi}$ and $\{\fIIAv_e\}_{e\in\Ebi}$ to
specify the segments of $v$'s path comprising first-stage and second-stage edges. For a vertex $v \in V$, let $\deltain(v)$ (respectively $\deltaout(v)$) denote the arcs of $\Ebi$ entering (respectively leaving) $v$. For an arc $e \in \Ebi$, we abuse notation and use $x_e$ to denote the component of $x$ corresponding to the undirected version of $e$.
\begin{alignat}{3}
\min & \quad & \sum_{e \in E}c^\two_e z^A_e & \notag \\
\text{s.t.} && \sum_{e \in \deltaout(v)} (\fIAv_e + \fIIAv_e) & - \sum_{e \in \deltain (v)} (\fIAv_e + \fIIAv_e) \ge 1 \qquad  && \forall v \in A \label{ccut} \\
&& \sum_{e \in \deltaout(u)}\negthickspace\negthickspace (\fIAv_e + \fIIAv_e) & = 
\sum_{e \in \deltain(u)}\negthickspace\negthickspace (\fIAv_e+\fIIAv_e)  \qquad &&
\forall v \in A, u \in V \setminus \{s,v\} \label{cflowcons} \\
&& \fIAv_e \le x_e, \quad 
\fIIAv_e & \le z^A_e \qquad && \forall v \in A, \forall e \in \Ebi \label{cdom} \\
&& \sum_{e \in \deltain(u)} \fIAv_e & \le \sum_{e \in \deltaout(u)} \fIAv_e \qquad &&
\forall v \in A, u \in V \setminus \{s,v\} \label{cmono} \\
&& z^A_e & \in \{0, 1\} && \forall e \in E \label{cbinz} \\
&& \fIAv_e,\fIIAv_e & \in \{0, 1\} \qquad && \forall v \in A, e \in \Ebi \label{cbinI}
\end{alignat}
Constraints~\eqref{ccut} and~\eqref{cflowcons} enforce that 
$f^{\one, A, v}+f^{\two, A, v}$ sends one unit of flow from $v$ to $s$ for every terminal
$v\in A$ (so it dominates a directed $v\leadsto s$ path), and 
\eqref{cdom} enforces that this flow is supported on edges bought in stages I and II.
Constraints~\eqref{cmono} encode the monotonicity requirement on the $v$-$s$ path.

Letting $g(x,A)$ denote the optimal value of the LP-relaxation obtained by relaxing the
integrality constraints \eqref{cbinz}, \eqref{cbinI} to nonnegativity constraints, the \dr
2-stage Steiner problem (with fractional second-stage decisions) we consider is:
$\min\ \bigl(\hpcentral{x}:=c^\T x+\max_{q:\LW(\pcentral,q)\leq r}\E[A\sim q]{g(x,A)}\bigr)$;
we call this {\em monotone \drst}. By the discussion in the beginning of Section~\ref{apps}, properties~\ref{p1}--\ref{p6} hold for monotone \drst, setting $\ld = \max_{e \in E} c^\two_e / c_e$ and $\tau = \sum_{e \in E} c^\two_e/\min_{e \in E : c_e > 0} c_e$.

Recall that we are in the \allsets setting (so $\A=2^V$), and $\LW$ is the Wasserstein 
metric with respect to the discrete scenario metric $\scmdisc$ or the asymmetric metric
$\scm^\asym_\infty$. 
The set of scenarios is collapsible under both these scenario metrics by Lemma~\ref{collapsiblelem}. 
Gupta et al.~\cite{GuptaRS} presented a restricted local $20$-approximation algorithm for
monotone \drst, and the approximation factor was improved to $10$ by~\cite{SGupta}.
Therefore, 
utilizing Theorems~\ref{mainsaathm} and~\ref{polythm_collapsible}, taking $\rho = 10$ 
and $\alpha=2$ (and $\beta=1$ in Theorem~\ref{mainsaathm}),  
we obtain an $\bigl(80+O(\ve)\bigr)$-approximation for \eqref{DRSTmon}. This yields 
a $\bigl(160+O(\ve)\bigr)$-approximation for \drst (using
Lemma~\ref{reductreelem}). 
\hfill \qed

\subsubsection{Proof of Theorem~\ref{gxyalg}} \label{gxyalgproof}
We first give a reduction, showing that one can approximate $g(x,y,A)$ under very general
settings 
provided that we have a (standard) approximation algorithm for a certain constrained
problem. 

\begin{lemma} \label{gxyredn}
Let $\A$ be any scenario set, and $\scm:\A\times\A\rightarrow\Rplus$ be any function
satisfying $\scm(A,A)=0$ for all $A\in\A$. Fix $x\in X$, and scenario $A\in\A$. 
Consider the constrained problem:
\[
\tag{\gxcp}
\label{Was_oracle_simplified}
\max_{A' \in \A :\scm(A, A') \le \scmbound} g(x,A').
\]
Suppose that we have a $\beta$-approximation algorithm $\alg$ for \eqref{Was_oracle_simplified}.
Let $\SCM:=\{\scm(A,A'): A,A'\in\A\}$.

\noindent
(i) We can compute a $(\beta,1)$-approximation to $g(x,y,A)$ using $|\SCM|$ calls to
$\alg$. 

\noindent
(ii) For any $\ve>0$, we can compute a $(\beta,1+\ve)$-approximation to $g(x,y,A)$ using 
$O\bigl(\log_{1+\ve}(\frac{\scmmax}{\scmmin})\bigr)$ calls to $\alg$, where
$\scmmax:=\max_{A,A'}\scm(A,A')$ and $\scmmin:=\min_{A,A':\scm(A,A')>0}\scm(A,A')$.
\end{lemma}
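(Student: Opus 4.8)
The plan is to reduce the \emph{unconstrained} problem $g(x,y,A)=\max_{A'\in\A}\bigl(g(x,A')-y\cdot\scm(A,A')\bigr)$ to polynomially many instances of the \emph{constrained} problem \eqref{Was_oracle_simplified} by ``guessing'' the value $\scmbound=\scm(A,A^*)$ taken by an optimal scenario $A^*$. For part (i), I would enumerate all values $\scmbound\in\SCM$; since each is of the form $\scm(A',A'')$, there are at most $|\SCM|$ of them, and the true optimal value $\scm(A,A^*)$ is among them. For each such $\scmbound$, run $\alg$ on \eqref{Was_oracle_simplified} to get a scenario $\bA_\scmbound\in\A$ with $\scm(A,\bA_\scmbound)\le\scmbound$ and $g(x,\bA_\scmbound)\ge\frac{1}{\beta}\max_{A':\scm(A,A')\le\scmbound}g(x,A')$. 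Then output whichever $\bA_\scmbound$ maximizes the objective $g(x,\bA_\scmbound)-y\cdot\scm(A,\bA_\scmbound)$ of the unconstrained problem.

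The correctness argument proceeds as follows. Let $A^*$ be an optimal solution to $g(x,y,A)$, and set $\scmbound^*:=\scm(A,A^*)$, which lies in $\SCM$. For the guess $\scmbound=\scmbound^*$, the scenario $\bA:=\bA_{\scmbound^*}$ returned by $\alg$ satisfies $\scm(A,\bA)\le\scmbound^*$ and $g(x,\bA)\ge\frac{1}{\beta}\max_{A':\scm(A,A')\le\scmbound^*}g(x,A')\ge\frac{1}{\beta}g(x,A^*)$, the last step since $A^*$ itself is feasible for the constrained problem with bound $\scmbound^*$. Hence
\[
g(x,\bA)-y\cdot\scm(A,\bA)\ \ge\ \tfrac{1}{\beta}g(x,A^*)-y\cdot\scmbound^*
\ =\ \tfrac{1}{\beta}g(x,A^*)-y\cdot\scm(A,A^*)
\ \ge\ \tfrac{g(x,A')}{\beta}-y\cdot\scm(A,A')
\]
for every $A'\in\A$, where the final inequality uses optimality of $A^*$ for $g(x,y,A)$ and the fact that reducing $g(x,A^*)$ by a factor $\beta\ge1$ only decreases the left side relative to the optimal-value bound. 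Since our algorithm returns the best $\bA_\scmbound$ over all guesses, in particular it does at least as well as $\bA_{\scmbound^*}$, so the output is a valid $(\beta,1)$-approximation. The one subtlety to handle cleanly is that $\scm(A,A)=0$ guarantees $A$ itself is always feasible (so the constrained problem is nonempty and $g(x,y,A)\ge g(x,A)\ge0$), and that we never need $\scmbound$ outside $\SCM$.

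For part (ii), the only change is that we cannot afford to enumerate all of $\SCM$ when $|\SCM|$ is large, so instead we use a geometric grid of guesses $\scmbound\in\{0\}\cup\{\scmmin(1+\ve)^k : 0\le k\le\lceil\log_{1+\ve}(\scmmax/\scmmin)\rceil\}$, giving $O\bigl(\log_{1+\ve}(\scmmax/\scmmin)\bigr)$ calls to $\alg$. Now the optimal value $\scmbound^*=\scm(A,A^*)$ may not be in the grid, but if $\scmbound^*>0$ there is a grid point $\scmbound$ with $\scmbound^*\le\scmbound\le(1+\ve)\scmbound^*$ (and if $\scmbound^*=0$ we use the grid point $0$). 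Running $\alg$ with this $\scmbound$ returns $\bA$ with $\scm(A,\bA)\le\scmbound\le(1+\ve)\scm(A,A^*)$ and $g(x,\bA)\ge\frac{1}{\beta}g(x,A^*)$ as before. Then $g(x,\bA)-y\cdot\scm(A,\bA)\ge\frac{1}{\beta}g(x,A^*)-(1+\ve)y\cdot\scm(A,A^*)\ge\frac{g(x,A')}{\beta}-(1+\ve)y\cdot\scm(A,A')$ for all $A'$, i.e., a $(\beta,1+\ve)$-approximation. I do not anticipate a real obstacle here — the whole lemma is an ``enumerate the guess and post-process'' argument; the only mildly delicate point is bookkeeping the mixed-sign objective so that the $\beta$-factor is applied only to the (nonnegative) $g(x,\cdot)$ term and the $(1$ or $1+\ve)$-factor only to the $y\cdot\scm$ term, matching Definition~\ref{gxyapx}.
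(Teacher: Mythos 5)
Your algorithm is exactly the paper's (enumerate guesses $\scmbound$ over $\SCM$, or over a geometric grid, run $\alg$ for each, and return the candidate maximizing the unconstrained objective), but the correctness argument has a genuine flaw in its last step. You derive $g(x,\bA)-y\cdot\scm(A,\bA)\ge\tfrac{1}{\beta}g(x,A^*)-y\cdot\scm(A,A^*)$ and then assert that this is at least $\tfrac{g(x,A')}{\beta}-y\cdot\scm(A,A')$ for every $A'$, ``by optimality of $A^*$.'' That assertion says that $A^*$ also maximizes the \emph{scaled} objective $A'\mapsto\tfrac{1}{\beta}g(x,A')-y\cdot\scm(A,A')$, which does not follow from its optimality for the unscaled one: knowing $D:=g(x,A^*)-g(x,A')\ge E:=y\bigl(\scm(A,A^*)-\scm(A,A')\bigr)$ does not give $D/\beta\ge E$ when $D,E>0$. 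Concretely, take $\beta=2$, $y=1$, $g(x,A^*)=10$, $\scm(A,A^*)=9$, and $A'=A$ with $g(x,A)=0$: then $A^*$ is the unconstrained optimum ($10-9=1>0$), yet $\tfrac{1}{2}\cdot 10-9=-4<0=\tfrac{0}{2}-0$, so your displayed chain fails for this $A'$. The same defect propagates to your part (ii).

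The repair is to abandon the detour through $A^*$ and argue per comparison scenario, which is what the paper does. Fix an arbitrary $A'\in\A$ and let $\mu'=\scm(A,A')$ (or, in part (ii), the smallest grid point $\ge\scm(A,A')$, which is at most $(1+\ve)\scm(A,A')$). The run of $\alg$ with guess $\mu'$ returns $A_{\mu'}$ with $g(x,A_{\mu'})\ge g(x,A')/\beta$ (since $A'$ is feasible for $\gxcp[\mu']$) and $\scm(A,A_{\mu'})\le\mu'$, so $A_{\mu'}$ already certifies the required bound for this particular $A'$; since your output maximizes the unconstrained objective over all candidates, it does at least as well as $A_{\mu'}$. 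Running this for every $A'$ separately (each handled by its own guess) yields the $(\beta,1)$- and $(\beta,1+\ve)$-guarantees. So your construction is sound and your proof is one local fix away from correct, but as written the key inequality is false.
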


\begin{proof}
The proof is based on a standard idea of enumerating over all $\scm(A,A')$ values.
For $\mu\in \SCM$, let $A_\mu\in\A$ denote the scenario output by $\alg$
for \eqref{Was_oracle_simplified}.

For part (i), we do the following. We compute $A_\mu$ for all $\mu\in\SCM$.
Let $\mustar := \argmax_{\mu \in \SCM} \bigl(g(x, A_\mu) - y\cdot\scm(A, A_\mu)\bigr)$. We
return $A_{\mustar}$. 
To show that this yields a $(\beta,1)$-approximation for computing $g(x,y,A)$,
consider any $A'\in \A$, and let $\mu' = \scm(A, A')$. We have 
\[
g(x,A_{\mustar}) - y\cdot\scm(A, A_{\mustar}) \ge g(x,A_{\mu'}) - y\cdot\scm(A,A_{\mu'}) 
\ge \frac{g(x,A')}{\beta} - y\cdot\scm(A,A').
\]
The first inequality follows from the definition of
$\mustar$, and the second follows since $A_{\mu'}$ is a $\beta$-approximate solution for
($\gxcp[\mu']$).  

For part (ii), we enumerate values in $[\scmmin,\scmmax]$ in powers of $(1+\ve)$. More
precisely, define
$\SCMbar := \{0\} \cup \bigl\{(1+\ve)^i \scmmin: i =0,\dots,\ceil{\log_{1+\ve}{\frac{\scmmax}{\scmmin}}}\bigr\}$. 
Note that $|\SCMbar| = O\bigl(\log_{1+\ve}({\frac{\scmmax}{\scmmin}})\bigr)$.
We now compute $A_\mu$ for all $\mu\in\SCMbar$.
Let $\mustar := \argmax_{\mu \in \SCMbar} \bigl(g(x, A_\mu) - y\cdot\scm(A, A_\mu)\bigr)$. We
return $A_{\mustar}$. 
Consider any $A'\in \A$. By construction of $\SCMbar$, there
is some $\mu'\in\SCMbar$ such that $\scm(A,A')\leq\mu'\leq(1+\ve)\scm(A,A')$.
Again, by the definition of $\mustar$, and since $A_{\mu'}$ is a $\beta$-approximate
solution for ($\gxcp[\mu']$), we have 
\[
g(x,A_{\mustar}) - y\cdot\scm(A, A_{\mustar}) \ge g(x,A_{\mu'}) - y\cdot\scm(A,A_{\mu'}) 
\geq \frac{g(x,A')}{\beta}-y\cdot\mu'
\ge \frac{g(x,A')}{\beta} - (1+\ve)y\cdot\scm(A,A'). \qedhere
\]
\end{proof}

We now consider the setting in Theorem~\ref{gxyalg}, namely, the \kbounded setting
with $\scm$ being the discrete metric, i.e., $\A=\{A\sse U:|A|\leq k\}$ for some ground
set $U$, and $\scm(A,A')=1$ if $A\neq A'$, and $0$ otherwise. 

Fix $x\in X$ and a scenario $A\in\A$.
By Lemma~\ref{gxyredn}, it suffices to give an approximation algorithm for the
constrained problem \eqref{Was_oracle_simplified}. When $\scmbound=0$, the optimum of the
constrained problem is simply $g(x,A)$ (which is easy to compute), and otherwise, the
constrained problem simplifies to $\max_{A'\in\A}g(x,A')$. So it suffices to obtain a
$\beta$-approximation to this latter problem, which is what we focus on in the sequel.

\vspace*{-1ex}
\paragraph{Part (a) of Theorem~\ref{gxyalg}.}
Gupta et al.~\cite{GuptaNR} give an $O(\log n)$-approximation algorithm for
\kmaxmin set cover, wherein the goal is to choose a set $A\in\A$ so as to maximize the
cost of an optimal integral set-cover for $A$. It is implicit in their analysis%
\footnote{See Theorem 4.2 and Claim 4.3 in~\cite{GuptaNR}; Theorem 4.2 proves that the
optimal fractional cost of the set-cover instance $(S,\F)$ is at most $c(\Phi^*)+12T^*$.}
that this also yields an $O(\log n)$-approximation for \kmaxmin fractional set cover,
where we seek to maximize the cost of an optimal {\em fractional} set cover. 

This immediately implies an $O(\log n)$-approximation for $\max_{A'\in\A}g(x,A')$ as
follows. 
Consider the set cover instance with ground set $U$, and set-costs given by
$w_S=0$ if $x_S=1$, and $w_S=c^\two_S$ otherwise. The \kmaxmin fractional set cover for
this instance is precisely the problem $\max_{A'\in\A}g(x,A')$.
So we obtain an $O(\log n)$-approximation to $\max_{A'\in\A}g(x,A')$. 

\vspace*{-1ex}
\paragraph{Part (b) of Theorem~\ref{gxyalg}.}
The problem $\max_{A'\in\A}g(x,A')$ can be viewed as \kmaxmin
fractional vertex cover, where the cost $w_v$ of a vertex $v$ is $0$ if $x_v=1$, and
$c^\two_v$ otherwise. Feige et al.~\cite{FeigeJMM05} give a $\frac{2e}{e-1}$-approximation
algorithm for \kmaxmin fractional vertex cover, so we obtain a
$\bigl(\frac{2e}{e-1},1\bigr)$-approximation for $\max_{A'\in\A}g(x,A')$. 

\vspace*{-1ex}
\paragraph{Part (c) of Theorem~\ref{gxyalg}.}
The problem $\max_{A'\in\A}g(x,A')$ can be viewed as \kmaxmin
fractional edge cover, where the cost $w_e$ of an edge $e$ is $0$ if $x_e=1$, and
$c^\two_e$ otherwise. Feige et al.~\cite{FeigeJMM05} give a $2$-approximation
algorithm for \kmaxmin fractional edge cover, so we obtain a
$(2,1)$-approximation for $\max_{A'\in\A}g(x,A')$. 
\hfill \qed

\section{Distributionally robust problems under the $L_\infty$-metric} \label{linfty}
We now focus on the \dr 2-stage problem \eqref{Qdisc}, and its fractional relaxation
\eqref{Qpcentral}, in the \allsets setting (so $\A=2^U$, for some $U$) when $\pdist$ is
the $L_\infty$-metric.  
Note that since the $L_\infty$-distance between two probability
distributions is at most $1$, we can assume without loss of generality that $r \le 1$. 
We devise an algorithm that, given any $\ve>0$, 
runs in time $\poly\bigl(\inpsize,\frac{\ld}{r\ve}\bigr)$,
and returns a
$\bigl(2+O(\ve)\bigr)$-approximate solution to the fractional relaxation
\eqref{Qpcentral}. 
Combining this with a local $\rho$-approximation algorithm, we obtain
a $\rho(2+O(\ve))$-approximation for the \dr discrete 2-stage problem (i.e., with discrete
first- and second- stage actions). This leads to the {\em first} guarantees for the \dr
versions of set cover, vertex cover, edge cover, and facility location under the $L_\infty$-metric
(Theorem~\ref{linftyapps}).

At a high level, our approach is as follows. We first show how to obtain a suitable convex
proxy function $\hproxy{x}$ 
that is pointwise close to the objective function $\hpcentral{x}$ so that one can cast the
problem of minimizing $\hproxy{x}$ as a standard 2-stage problem. 
Instead of utilizing the SAA approach to move to an SAA-version of $\hproxy{x}$
with a polynomial-size central distribution, show that a near-optimal solution to
the SAA problem translates to a near-optimal solution to the original problem, and finally
show how to approximately solve the SAA problem (which is again challenging 
since this does not reduce to a polynomial-size LP), it is simpler
to directly solve the proxy problem, $\min_{x\in\Pc}\hproxy{x}$, using the
approximate-subgradient based machinery in~\cite{ShmoysS06}. 
We show that, under the assumption that $g(x,A)\leq g(x,A')$ for all $x$,
$A\sse A'$, which holds for all our applications, one can compute 
an $\w$-subgradient of $\hproxy{x}$ efficiently in time
$\poly\bigl(\inpsize,\frac{\ld}{\w}\bigr)$, and hence can directly use the ellipsoid-based
approach in~\cite{ShmoysS06} to 
obtain a solution $\bx\in\Pc$ such that 
$\hproxy{\bx}\leq\bigl(1+O(\ve)\bigr)\min_{x\in\Pc}\hproxy{x}+\kp$. 
This in turn implies that 
$\hpcentral{\bx}\leq\bigl(2+O(\ve)\bigr)\min_{x\in\Pc}\hpcentral{x}+\kp$. 
We can fold the additive error into the multiplicative error by obtaining a lower
bound on the optimum.

\begin{theorem} \label{linftysolve}
Let $\ve\leq\frac{1}{3}$.
Suppose that for all $x\in \Pc$, and all $A\sse A'$, we have $g(x,A)\leq g(x,A')$. In the
\allsets setting ($\A = 2^U)$ under the $L_\infty$ metric, 
we can compute a solution $\bx\in\Pc$ satisfying 
$\hpcentral{x}\leq\bigl(2+O(\ve)\bigr)\min_{x\in\Pc}\hpcentral{x}$
with probability at least $1-\dt$, in time 
$\poly\bigl(\inpsize,\frac{\ld}{\ve r},\log(\frac{1}{\dt})\bigr)$.
\end{theorem}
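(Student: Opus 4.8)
The plan is to mimic the blueprint used for the Wasserstein metric but with a much simpler endgame: replace the intractable objective $\hpcentral{\cdot}$ by a convex \emph{proxy} $\hproxy{\cdot}$ that is pointwise within a factor of $2$ of it, and minimize the proxy directly via the approximate-subgradient ellipsoid method of~\cite{ShmoysS06}, without ever invoking the SAA reduction. Concretely I would: (i) define $\hproxy{x}$ and show $\hpcentral{x}\le\hproxy{x}\le 2\hpcentral{x}$ for all $x\in\Pc$; (ii) show $\hproxy{\cdot}$ is convex and that, under the monotonicity hypothesis $g(x,A)\le g(x,A')$ for $A\sse A'$, an $\w$-subgradient of $\hproxy{\cdot}$ at any point can be computed, with high probability, in $\poly\bigl(\inpsize,\tfrac{\ld}{\w r},\log\tfrac1\dt\bigr)$ time by sampling from $\pcent$; (iii) feed this into the two-phase ellipsoid machinery of~\cite{ShmoysS06} (exactly as in the proof of Theorem~\ref{polythm}) to get $\bx\in\Pc$ with $\hproxy{\bx}\le(1+O(\ve))\min_{x\in\Pc}\hproxy{x}+\kp$; and (iv) fold the additive $\kp$ into the multiplicative error using a $\poly$-bounded lower bound on the optimum.

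For step (i) the proxy is built in the spirit of Lemma~\ref{zproxy}. A distribution $q$ with $\|q-\pcent\|_\infty\le r$ can raise its mass by at most $r$ on each scenario, so the ``upward'' mass movement relative to $\pcent$ sits on at most $1/r$ scenarios; monotonicity makes each level set $\{A:g(x,A)\ge\theta\}$ upward-closed, with $U$ attaining $\max_A g(x,A)$. I would split $q-\pcent$ into the part that raises $q$ on already-$\pcent$-supported scenarios and the part that creates new mass near $U$, bound the latter after replacing $x$ by $\bo$ (using \ref{p1},\ref{p2} — this is precisely where the factor $2$ is lost, as in Lemma~\ref{zproxy}), and read off a $\hproxy{x}$ of the form $c^\T x$ plus a weighted combination of $g(x,U)$ and a truncated/water-filled functional of $\{g(x,A)\}$, $A\sim\pcent$. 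Convexity then follows since $\hproxy{\cdot}$ is $c^\T x$ plus an expectation of the convex functions $g(\cdot,A)$ (an expectation over $\pcent$, or over an auxiliary distribution depending only on $\pcent$ and $r$) together with the convex function $g(\cdot,U)$, by \ref{p4}.

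Step (ii) is the crux, and the main obstacle. For a fixed $x$, the worst-case distribution in the $L_\infty$-ball is a water-filling solution: there is a level $\theta^*$ with $q^*_A=\pcent_A+r$ above it and $q^*_A=\max\{0,\pcent_A-r\}$ below it. Because the level sets are upward-closed and $U$ is the unique maximal scenario, the quantities needed to evaluate $\hproxy{x}$ and assemble a (sub)gradient — essentially the tail function $\theta\mapsto\prob{g(x,A)\ge\theta}$ truncated at the $r$-scale, plus $g(x,U)$ and a subgradient of $g(\cdot,U)$ — are exactly the sort of expectations estimable from $N=\poly\bigl(\tfrac{\ld}{\ve r},\log\tfrac1\dt\bigr)$ independent samples: the $\ld$ enters because, by \ref{p2}, $0\le g(\cdot,A)\le g(\cdot,A)+\ld c^\T(\cdot)$, so the sampled summands have range controlled by $\ld$ relative to the objective value; the $1/r$ enters because contributions from scenarios of $\pcent$-mass $\Theta(r)$ must be resolved. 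As in Definition~\ref{apsgrad}, I would then argue that a subgradient estimate within additive $\w\cdot\hproxy{x}/(2R)$ of a true subgradient of $\hproxy{\cdot}$ — achieved with high probability via the above sample size, Hoeffding/Chernoff bounds, and a union bound over the $\poly(\inpsize)$ ellipsoid iterations — is a genuine $\w$-subgradient, and similarly obtain a $(1+\w)$-approximate value oracle. Without monotonicity the worst-case $q$, and hence the proxy's (sub)gradient, would depend on $g(x,\cdot)$ over all $2^{|U|}$ scenarios in a way not captured by polynomially many samples; engineering $\hproxy{\cdot}$ to be simultaneously within a factor $2$, convex, and estimable in this sense is the delicate part.

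Finally, for steps (iii)--(iv) I would run the ellipsoid algorithm of~\cite{ShmoysS06} verbatim, as in the proof of Theorem~\ref{polythm}: $\w$-subgradient cuts of $\hproxy{\cdot}$ in the first phase until the ellipsoid volume is negligible, and selection of the best iterate in the second phase using the approximate value oracle, with $\w=1/\poly(\inpsize)$. By \ref{p3}, $\ln(R/V)=\poly(\inpsize)$, and the proxy's Lipschitz constant is at most $K'=\|c\|+K$ by \ref{p5}, so the analysis in~\cite{ShmoysS06} (Lemma 4.5 there) yields $\bx\in\Pc$ with $\hproxy{\bx}\le(1+O(\ve))\min_{x\in\Pc}\hproxy{x}+\kp$. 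Combining with the sandwich of step (i) gives $\hpcentral{\bx}\le 2\hproxy{\bx}\le(2+O(\ve))\min_{x\in\Pc}\hpcentral{x}+O(\kp)$. Using a lower bound $\lb$ with $\log\tfrac1\lb=\poly(\inpsize)$ — obtained exactly as in Lemma~\ref{lbnd} (a non-null scenario forces $\min_{x\in\Pc}\hpcentral[p]{x}\ge r\cdot g(\cdot,U)\ge$ a $\poly$-bounded quantity, and otherwise $x=\bo$ is optimal) — and taking $\kp=\ve\lb$ absorbs the additive term, giving the claimed $(2+O(\ve))$-approximation in time $\poly\bigl(\inpsize,\tfrac{\ld}{\ve r},\log\tfrac1\dt\bigr)$; the $\dt$ failure probability is the union bound over the sampling steps.
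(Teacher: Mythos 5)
Your high-level architecture is the paper's: build a convex proxy $\hproxy{\cdot}$ that sandwiches $\hpcentral{\cdot}$ within a factor $2$, compute $\w$-subgradients of the proxy by sampling, run the ellipsoid machinery of~\cite{ShmoysS06}, and absorb the additive error via a lower bound on the optimum. But the two steps that carry the technical weight are not right as you describe them. First, the factor-$2$ loss does \emph{not} come from replacing $x$ by $\bo$ via \ref{p1},\ref{p2} ``as in Lemma~\ref{zproxy}.'' That device works in the Wasserstein proof only because the long-edge flow is capped at $1/\ld$, so the error $\ld c^\T x\cdot(\text{moved mass})$ collapses to $c^\T x$. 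Under $L_\infty$ the free mass $\Pfree=\sum_A\min\{\pcent_A,r\}$ can be as large as $1$ (e.g.\ $\pcent$ spread thinly), so bounding the new-mass contribution by $\sum_A q_A g(\bo,A)\le\sum_A q_A g(x,A)+\ld c^\T x\cdot\Pfree$ costs an additive $\ld c^\T x$ and yields an $O(\ld)$-approximation, not $2+O(\ve)$. The paper's proxy $\hproxy{x}=c^\T x+\E[A\sim\pcent]{g(x,A)}+\max_{q\in\K}\sum_A q_A g(x,A)$ keeps $g(x,\cdot)$ in both terms and gets the factor $2$ by a different mechanism (Lemmas~\ref{lem:scaled_feasible} and~\ref{pwclose}): each of the two non-linear terms is separately an (approximate) lower bound on $\max_{q:\|\pcent-q\|_\infty\le r}\E[A\sim q]{g(x,A)}$, so their sum is at most $2(1+\ve)$ times it.

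Second, evaluating the free-mass term and extracting its optimal $q^*$ (needed for the subgradient $c+\E[A\sim\pcent]{\sgr^{x,A}}+\sum_A q^*_A\sgr^{x,A}$) cannot be done by estimating the tail function $\theta\mapsto\prob{g(x,A)\ge\theta}$ from samples of $\pcent$: the optimizer over $\K$ places mass $r$ on each of the top $\kinf=\lceil\Phatfree/r\rceil$ scenarios ranked by $g(x,\cdot)$, and these scenarios ($U$, then maximal proper subsets of already-found scenarios, \ldots) typically have $\pcent$-probability zero, so no amount of sampling reveals them. The paper's Lemma~\ref{Kopt} supplies the missing combinatorial step: under monotonicity, the $i$-th costliest scenario is always a maximal proper subset of one of the first $i-1$, so the top-$\kinf$ sequence can be enumerated exactly in $\poly(\inpsize,1/r)$ time, after which the problem over $\K$ is a fractional knapsack solved greedily. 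Relatedly, you never address how to obtain the capacity $\Phatfree$ of that knapsack: one needs $\Pfree\le\Phatfree\le(1+\ve)\Pfree$, which requires the frequent/rare scenario split and the two-sided estimation of Lemmas~\ref{lem:free_lb}--\ref{lem:compute_Pfree}, not a single Chernoff bound. With these three pieces supplied, the rest of your plan (steps (iii)--(iv)) goes through as in the paper.
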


\begin{theorem} \label{linftyapps}
We obtain the following approximation factors for the \dr discrete 2-stage problems in the
\allsets setting under the $L_\infty$ metric:  
(a) $O(\log n)$ for set cover; (b) $8+O(\ve)$ for vertex cover; (c) $6+O(\ve)$ for edge
cover; and (d) $10.98+O(\ve)$ for facility location.
\end{theorem}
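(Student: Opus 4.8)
The plan is to derive Theorem~\ref{linftyapps} as a corollary of Theorem~\ref{linftysolve} together with known \emph{local $\rho$-approximation algorithms} for the underlying 2-stage problems. First I would verify the hypothesis of Theorem~\ref{linftysolve}: it requires $g(x,A)\le g(x,A')$ whenever $A\sse A'$. For set cover, vertex cover and edge cover this is immediate, since any feasible second-stage solution for the larger element-/vertex-/node-set $A'$ also covers $A$; for facility location the analogous monotonicity in the client-set holds for the same reason. Thus Theorem~\ref{linftysolve} applies and yields, with probability at least $1-\dt$ and in time $\poly\bigl(\inpsize,\tfrac{\ld}{\ve r},\log\tfrac{1}{\dt}\bigr)$, a point $\bx\in\Pc$ with $\hpcentral{\bx}\le\bigl(2+O(\ve)\bigr)\min_{x\in\Pc}\hpcentral{x}$.

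Next I would round $\bx$ to a fully integral solution using a local $\rho$-approximation algorithm, which returns $\tx\in X$ and (implicitly) integral recourse actions $\tz^A$ for every scenario $A$, satisfying $c^\T\tx\le\rho\,(c^\T\bx)$ and $\text{(cost of $\tz^A$)}\le\rho\,g(\bx,A)$ for all $A\in\A$. Since the second-stage cost blows up by the factor $\rho$ \emph{pointwise} over scenarios, for every distribution $q\in\dcol$ we get $\E[A\sim q]{\text{cost of $\tz^A$}}\le\rho\,\E[A\sim q]{g(\bx,A)}$; taking the maximum over $q\in\dcol$ and adding the first-stage cost shows that the integral solution has distributionally robust objective value at most $\rho\,\hpcentral{\bx}$. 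Chaining this with Theorem~\ref{linftysolve} and with $\min_{x\in\Pc}\hpcentral{x}\le\min_{x\in X}\hpcentral{x}=\iopt\le(\text{optimum of the fully-discrete \dr problem})$---the last inequality holding because $g(x,A)$ already allows fractional recourse---gives a $\rho\bigl(2+O(\ve)\bigr)$-approximation for the discrete \dr problem.

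Finally I would substitute the known parameters. By~\cite{ShmoysS06}, set cover, vertex cover and edge cover admit local $\rho$-approximations with $\rho=2\al$, where $\al$ is the LP-relative approximation factor of the deterministic problem: $\al=O(\log n)$ for set cover (giving $O(\log n)$), $\al=2$ for vertex cover (giving $\rho=4$, hence $8+O(\ve)$), and $\al=\tfrac{3}{2}$ for edge cover (giving $\rho=3$, hence $6+O(\ve)$). For facility location, rounding $\bx$ via the demand-oblivious $4$-approximation of~\cite{ShmoysTA97} (composed with assigning the fractional demand at factor $\al=1.488$~\cite{Li}) gives a local $\rho$-approximation with $\rho=5.488$, so the bound is $2\rho+O(\ve)\le 10.98+O(\ve)$. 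The residual additive term $\kp$ emitted by Theorem~\ref{linftysolve} is folded into the multiplicative error exactly as in its proof: under the mild non-null-scenario assumption, Lemma~\ref{lbnd} supplies a lower bound $\lb$ on the optimum with $\log(1/\lb)=\poly(\inpsize)$, and taking $\kp=\ve\lb$ converts the additive slack into an $O(\ve)$ multiplicative slack.

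\textbf{Main obstacle.} This statement is essentially a repackaging of Theorem~\ref{linftysolve} with off-the-shelf local approximation algorithms, so there is no genuinely new difficulty. The one point needing care is facility location: one must use a rounding procedure that is truly \emph{local}---for which \emph{each} client's assignment cost, not merely the aggregate, is within the stated factor of the fractional solution---which is precisely why the demand-oblivious variant of~\cite{ShmoysTA97} is invoked rather than an arbitrary $4$-approximation; the other cases follow from already-established local guarantees.
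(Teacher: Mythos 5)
Your proposal is correct and matches the paper's own proof: Theorem~\ref{linftysolve} gives a $\bigl(2+O(\ve)\bigr)$-approximate fractional solution (the monotonicity hypothesis holds since all four are covering problems), and composing with the local $\rho$-approximation algorithms from~\cite{ShmoysS06} ($\rho=2\al$ for set/vertex/edge cover and $\rho=5.488$ for facility location) yields exactly the stated factors $2\rho+O(\ve)$. Your added justification that a per-scenario cost blowup of $\rho$ survives the maximization over $q\in\dcol$ is the (implicit) reason local approximation suffices here, so nothing is missing.
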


\begin{proof} 
This follows by rounding the solution returned by Theorem~\ref{linftysolve}, because, as
noted in Section~\ref{apps}, we have local approximation algorithms with guarantees of   
(a) $O(\log n)$ for set cover (where $n=|U|$); (b) $4$ for vertex cover; (c) $3$ for edge
cover; and (d) $5.488$ for facility location. 
\end{proof}

In the sequel, we focus on proving Theorem~\ref{linftysolve}.
We first work our way towards defining the proxy function that we use. Note that for every distribution $q$ with $L_\infty(\pcentral, q) \le r$, we must have $q_A \ge \max\{\pcentral_A-r , 0\}$ for every scenario $A \in \A$. We refer to the right side of this inequality as the {\em blocked mass} in scenario $A$. The remainder of the probability mass $\pcentral_{A}$ (i.e., the difference $\pcentral_A$ and the blocked massed) may be moved to other scenarios, and hence we call it the {\em free mass} in scenario $\A$. Separating the blocked mass and the free mass of all the scenarios, we obtain a decomposition $\pcentral = \pblocked + \pfree$, where $\pblocked_A = \max\{\pcentral_A-r , 0\}$ and $\pfree_A = \pcentral_A - \pblocked_A = \min\{\pcentral_A, r\}$ for every scenario $A \in \A$.

\paragraph{Estimating \boldmath $\Pfree$.}
To define our proxy function, we will need an estimate of $\Pfree$ that is accurate within
a $(1+\ve)$ factor. Lemma~\ref{lem:free_lb} shows that $\Pfree\geq r$, which suggests that
such an estimate can be obtained with high probability using
$\poly\bigl(\frac{1}{r\ve}\bigr)$ samples.
We prove a few simple results below leading up to this (Lemma~\ref{lem:compute_Pfree}).

\begin{lemma}
\label{lem:free_lb}
We have $\Pfree \ge r$.
\end{lemma}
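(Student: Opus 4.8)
The plan is to prove the lower bound $\Pfree \ge r$ directly from the definition of the free mass and a case analysis on the structure of $\pcentral$. Recall that $\pfree_A = \min\{\pcentral_A, r\}$ for each scenario $A \in \A$, so $\Pfree = \sum_{A \in \A} \min\{\pcentral_A, r\}$. First I would split the scenarios into the ``heavy'' ones, $H := \{A \in \A : \pcentral_A \ge r\}$, and the ``light'' ones, $L := \A \setminus H$. If $H \neq \emptyset$, then picking any single $A_0 \in H$ already gives $\Pfree \ge \min\{\pcentral_{A_0}, r\} = r$, and we are done. So the only case requiring work is $H = \emptyset$, i.e.\ $\pcentral_A < r$ for every scenario $A$.

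In that remaining case, $\min\{\pcentral_A, r\} = \pcentral_A$ for every $A$, and therefore $\Pfree = \sum_{A \in \A} \pcentral_A = 1 \ge r$, where the last inequality uses $r \le 1$. (As noted at the start of Section~\ref{linfty}, we may assume without loss of generality that $r \le 1$, since the $L_\infty$-distance between two probability distributions never exceeds $1$.) Combining the two cases gives $\Pfree \ge \min\{r, 1\} = r$ unconditionally, completing the proof.

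There is essentially no obstacle here: the statement is a short consequence of the definition of $\pfree$ together with the normalization of $\pcentral$ and the harmless assumption $r \le 1$. The only thing to be careful about is making the $H = \emptyset$ argument explicit rather than implicitly assuming some scenario carries mass at least $r$; handling both cases cleanly is all that is needed. I would write it as a two-line proof with the case split stated up front.

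\begin{proof}
Recall that $\pfree_A = \min\{\pcentral_A, r\}$ for every scenario $A \in \A$, so that $\Pfree = \sum_{A \in \A}\min\{\pcentral_A, r\}$. If there is some scenario $A_0$ with $\pcentral_{A_0} \ge r$, then $\Pfree \ge \min\{\pcentral_{A_0}, r\} = r$. Otherwise, $\pcentral_A < r$ for all $A \in \A$, so $\min\{\pcentral_A, r\} = \pcentral_A$ for every $A$, and hence $\Pfree = \sum_{A \in \A}\pcentral_A = 1 \ge r$, using $r \le 1$. In either case, $\Pfree \ge r$.
\end{proof}
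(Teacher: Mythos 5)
Your proof is correct and follows exactly the same case split as the paper: either some scenario has $\pcentral_A \ge r$ (so its free mass alone contributes $r$), or all scenarios are light and $\Pfree = \sum_A \pcentral_A = 1 \ge r$. No issues.
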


\begin{proof}
If there exists a scenario $A \in \A$ with $\pfree_A \ge r$, then we have $\Pfree \ge \pfree_A \ge r$. Otherwise, we have $\Pfree = \sum_{A \in \A} \pfree_A = \sum_{A \in \A} \pcentral_A = 1 \ge r$. 
\end{proof}

We partition the set of scenarios $\A$ into a set of {\em frequent scenarios} $\Afreq := \{A \in \A : \pcentral_A
\ge r\}$ and a set of {\em rare scenarios} $\Arare := \{A \in \A : \pcentral_A < r\}$. Note that
    $|\Afreq|\leq\frac{1}{r}$, and $\pfree_A =\pcentral_A$ for every scenario $A \in \Arare$. 

\begin{lemma}
\label{lem:est_free}
Consider a partition $\A = \Ahatfreq \cup \Ahatrare$ of the scenarios, with $\Afreq
\subseteq \Ahatfreq$ (and hence $\Ahatrare \subseteq \Arare$). Let $\phat$ be a
probability distribution such that $\sum_{A \in \Ahatfreq} |\phat_A - \pcentral_A| \le
\frac{1}{4} \veprime r$. Let $\Qfree := \sum_{A \in \Ahatfreq} \min\{\phat_A, r\} +
\sum_{A \in \Ahatrare} \phat_A$ and $\Phatfree := \min\left\{\Qfree + \frac{1}{2} \veprime
r, 1\right\}$. Then $\Pfree \le \Phatfree  \le \min\{(1 + \veprime) \Pfree, 1\}$. 
\end{lemma}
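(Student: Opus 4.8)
The plan is to compare $\Qfree$ directly with the true free mass $\Pfree = \sum_{A\in\A}\min\{\pcentral_A,r\}$, show the two differ by at most $\tfrac12\veprime r$, and then read off both desired inequalities by exploiting the $\tfrac12\veprime r$ cushion in $\Phatfree$ together with Lemma~\ref{lem:free_lb}.

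First I would rewrite both quantities over the partition $\A=\Ahatfreq\cup\Ahatrare$. Since $\Ahatrare\sse\Arare$, every $A\in\Ahatrare$ has $\pcentral_A<r$, hence $\min\{\pcentral_A,r\}=\pcentral_A$, so $\Pfree=\sum_{A\in\Ahatfreq}\min\{\pcentral_A,r\}+\sum_{A\in\Ahatrare}\pcentral_A$, which has exactly the form of $\Qfree$ with $\phat$ replaced by $\pcentral$. Subtracting,
\[
\Qfree-\Pfree \;=\; \sum_{A\in\Ahatfreq}\bigl(\min\{\phat_A,r\}-\min\{\pcentral_A,r\}\bigr)\;+\;\sum_{A\in\Ahatrare}(\phat_A-\pcentral_A).
\]
The map $t\mapsto\min\{t,r\}$ is $1$-Lipschitz, so the first sum is at most $\sum_{A\in\Ahatfreq}|\phat_A-\pcentral_A|\le\tfrac14\veprime r$ in absolute value. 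For the second sum, since $\phat$ and $\pcentral$ are both probability distributions, $\sum_{A\in\Ahatrare}(\phat_A-\pcentral_A)=\sum_{A\in\Ahatfreq}(\pcentral_A-\phat_A)$, which is again bounded by $\tfrac14\veprime r$. Hence $|\Qfree-\Pfree|\le\tfrac12\veprime r$.

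Given this, the two bounds fall out. For the lower bound: $\Qfree+\tfrac12\veprime r\ge\Pfree$, and also $1\ge\Pfree$ because $\Pfree=\sum_{A\in\A}\min\{\pcentral_A,r\}\le\sum_{A\in\A}\pcentral_A=1$; therefore $\Phatfree=\min\{\Qfree+\tfrac12\veprime r,\,1\}\ge\Pfree$. For the upper bound, $\Phatfree\le 1$ is immediate from the definition, and $\Phatfree\le\Qfree+\tfrac12\veprime r\le\Pfree+\veprime r\le(1+\veprime)\Pfree$, where the last inequality uses $\Pfree\ge r$ from Lemma~\ref{lem:free_lb}.

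There is essentially no obstacle here; it is a short triangle-inequality computation. The only two points needing a moment's care are: (i) handling $\sum_{A\in\Ahatrare}(\phat_A-\pcentral_A)$ without any per-scenario control on $|\phat_A-\pcentral_A|$ over $\Ahatrare$, which is why one rewrites it as $-\sum_{A\in\Ahatfreq}(\phat_A-\pcentral_A)$ using that both are probability distributions; and (ii) remembering to invoke $\Pfree\ge r$ to convert the additive $\veprime r$ slack into the multiplicative $(1+\veprime)$ factor.
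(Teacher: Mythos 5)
Your proof is correct and follows essentially the same route as the paper's: decompose over $\Ahatfreq\cup\Ahatrare$, bound the frequent part via the $1$-Lipschitzness of $t\mapsto\min\{t,r\}$, bound the rare part by passing to the complement using that both are probability distributions, conclude $|\Qfree-\Pfree|\le\tfrac12\veprime r$, and finish with Lemma~\ref{lem:free_lb}. You spell out the final conversion to the two stated inequalities a bit more explicitly than the paper does, but there is no substantive difference.
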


\begin{proof}
We first show that the first sum in the definition of $\Qfree$ is a good estimate of the amount of free mass in $\Ahatfreq$. We have
\begin{equation}
\label{eq:est_freq}
\biggl|\sum_{A \in \Ahatfreq}\min\{\phat_A, r\} - \sum_{A \in \Ahatfreq} \pfree_A \biggr| \le \sum_{A \in \Ahatfreq} \left | \min\{\phat_A, r\} - \pfree_A\right | \le \sum_{A \in \Ahatfreq} \left |  \phat_A - \pcentral_A \right | \le \frac{1}{4}  \veprime r \ ,
\end{equation}
where the first step uses the triangle inequality; the second step uses the definition of $\pfree$; the third step is by assumption.

Now we show that the second sum in the definition of $\Qfree$ is a good estimate of the amount of free mass in $\Ahatrare$. We have
\begin{equation}
\label{eq:est_rare}
\biggl|\sum_{A \in \Ahatrare} \phat_A - \sum_{A \in \Ahatrare} \pfree_A \biggr| 
= \biggl|\sum_{A \in \Ahatrare} \phat_A - \sum_{A \in \Ahatrare} \pcentral_A \biggr| 
= \biggl| \sum_{A \in \Ahatfreq} \phat_A - \sum_{A \in \Ahatfreq} \pcentral_A \biggr| 
\le \sum_{A \in \Ahatfreq}  \left |  \phat_A - \pcentral_A \right | \le \frac{1}{4} \veprime r \ ,
\end{equation}
where the first step uses the fact that $\Ahatrare \subseteq \Arare$; the second step uses the fact that $\pcentral$ and $\phat$ are probability distributions; the third step uses the triangle inequality; the fourth step is by assumption.

Combining \eqref{eq:est_freq} and \eqref{eq:est_rare} yields $|\Pfree - \Qfree| \le \frac{1}{2}\veprime r$. This, combined with Lemma \ref{lem:free_lb} and the definition of $\Phatfree$, yields the result.
\end{proof}

\begin{lemma}
\label{lem:find_freq}
Let $\phat$ be an empirical estimate of $\pcentral$ using $N = \poly(\frac{1}{r}, \log\left(\frac{1}{\delta}\right))$ samples, and let $\Ahatfreq := \{A \in \A : \phat_A \ge \frac{r}{2}\}$. Then we have $|\Ahatfreq| \le \frac{2}{r}$, and with probability at least $1 - \delta$ we have $\Afreq \subseteq \Ahatfreq$.
\end{lemma}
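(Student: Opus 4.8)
The plan is to apply a standard concentration/Chernoff argument separately to the two claims. For the bound $|\Ahatfreq|\leq\frac 2r$, I would observe that $\Ahatfreq$ is defined purely in terms of the empirical distribution $\phat$: any scenario $A$ placed in $\Ahatfreq$ has $\phat_A\geq r/2$, and since $\sum_{A}\phat_A=1$ (as $\phat$ is a probability distribution), there can be at most $1/(r/2)=2/r$ such scenarios. This is a deterministic fact requiring no probabilistic reasoning, so I would dispose of it in one line.

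For the inclusion $\Afreq\subseteq\Ahatfreq$, which holds only with high probability, I would fix a scenario $A\in\Afreq$, so that $\pcentral_A\geq r$ by definition of $\Afreq$. Writing $\phat_A$ as the average of $N$ i.i.d.\ Bernoulli$(\pcentral_A)$ indicators, I would apply a (multiplicative or additive) Chernoff bound to show that $\Pr[\phat_A<r/2]\leq\Pr[\phat_A<\pcentral_A/2]\leq e^{-\Omega(N\pcentral_A)}\leq e^{-\Omega(Nr)}$. Choosing $N=\poly(\frac 1r,\log(\frac 1\dt))$ — concretely something like $N=\Theta(\frac 1r\log(\frac 1{r\dt}))$ — makes this probability at most $\dt/|\Afreq|$, using $|\Afreq|\leq 1/r$ (which itself follows since each $A\in\Afreq$ has $\pcentral_A\geq r$ and the $\pcentral_A$ sum to $1$). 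A union bound over the at most $1/r$ scenarios in $\Afreq$ then gives $\Pr[\Afreq\not\subseteq\Ahatfreq]\leq\dt$, as claimed.

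The only mild subtlety — and the one point I would be careful about — is that $\Afreq$ is itself a data-independent set (it depends on $\pcentral$, not on the samples), so the union bound is legitimately over a fixed collection of $\leq 1/r$ events; there is no circularity of the kind that would arise if one tried to union-bound over the random set $\Ahatrare$. I would make this explicit. Everything else is a routine Chernoff-plus-union-bound computation, so I would not belabor the constants; the main content is simply identifying the right sample size $N=\poly(\frac 1r,\log\frac 1\dt)$ and noting $|\Afreq|\leq 1/r$.
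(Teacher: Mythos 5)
Your proposal is correct and follows essentially the same route as the paper: the deterministic counting argument for $|\Ahatfreq|\le\frac{2}{r}$, then a per-scenario Chernoff bound driving the failure probability below $\delta r$, followed by a union bound over the at most $\frac{1}{r}$ scenarios of the fixed set $\Afreq$. The only cosmetic difference is that you use a multiplicative Chernoff bound where the paper uses the additive form $\Pr\bigl[|\phat_A-\pcentral_A|>\frac{r}{2}\bigr]\le\delta r$; both yield the same conclusion.
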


\begin{proof}
The inequality $|\Ahatfreq| \le \frac{2}{r}$ follows from the definition of $\Ahatfreq$ and the fact that $\phat$ is a probability distribution.

Since $\pcentral$ is a probability distribution and $\pcentral_A \ge r$ for every $A \in
\Afreq$, we have $|\Afreq| \le \frac{1}{r}$. If we choose $N$ appropriately, by using
Chernoff bounds we have 
$\Pr\left[|\phat_A - \pcentral_A | > \frac{r}{2}\right]\le\delta r$ 
for any fixed scenario $A \in \A$. It follows that for any fixed scenario $A \in \Afreq$, we have
$\Pr\bigl[A \not \in \Ahatfreq\bigr] \le \delta r$. By the union bound, we have
$\Pr\bigl[\Afreq \not \subseteq \Ahatfreq\bigr] \le |\Afreq| \delta r \le \delta$. 
\end{proof}

\begin{lemma}
\label{lem:compute_Pfree}
We can compute an estimate $\Phatfree$ of $\Pfree$ such that $\Pfree \le \Phatfree  \le \min\{(1
+ \veprime) \Pfree, 1\}$ with probability at least $1 - 2\delta$ in time $\poly(\inpsize, \frac{1}{\ve r}, \log\left(\frac{1}{\delta}\right))$. 
\end{lemma}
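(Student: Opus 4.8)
\begin{proofsketch}
The plan is to stitch together Lemmas~\ref{lem:find_freq} and~\ref{lem:est_free}; the only thing left to establish is a uniform accuracy bound on the empirical frequencies of the (few) scenarios that the sample classifies as frequent. First I would draw $N=\poly(\inpsize,\frac{1}{\veprime r},\log(\frac{1}{\dt}))$ i.i.d.\ samples from $\pcentral$, let $\phat$ be the induced empirical distribution, and set $\Ahatfreq:=\{A:\phat_A\ge r/2\}$, $\Ahatrare:=\A\setminus\Ahatfreq$, exactly as in Lemma~\ref{lem:find_freq}. That lemma already gives, with probability $\ge 1-\dt$, that $|\Ahatfreq|\le 2/r$ and $\Afreq\subseteq\Ahatfreq$. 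Hence it remains to ensure the accuracy hypothesis $\sum_{A\in\Ahatfreq}|\phat_A-\pcentral_A|\le\frac14\veprime r$ of Lemma~\ref{lem:est_free}; since $|\Ahatfreq|\le 2/r$, it suffices to guarantee $|\phat_A-\pcentral_A|\le\veprime r^2/8$ for every $A$ with $\phat_A\ge r/2$.

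To do this I would split scenarios according to whether $\pcentral_A\ge\frac{r}{4e}$. There are at most $\lfloor 4e/r\rfloor$ such ``heavy'' scenarios, so a Hoeffding bound $\Pr[|\phat_A-\pcentral_A|>t]\le 2e^{-2Nt^2}$ with $t=\veprime r^2/8$, union-bounded over them, gives $|\phat_A-\pcentral_A|\le\veprime r^2/8$ for all heavy $A$ simultaneously once $N=\Omega(\frac{1}{\veprime^2 r^4}\log\frac{1}{r\dt})$. For the ``light'' scenarios ($\pcentral_A<\frac{r}{4e}$), of which there may be exponentially many, I would instead argue that \emph{none} of them lands in $\Ahatfreq$: for a light $A$, $\Pr[\phat_A\ge r/2]\le\binom{N}{\lceil Nr/2\rceil}\pcentral_A^{Nr/2}\le(2e\pcentral_A/r)^{Nr/2}\le(4e\pcentral_A/r)\,2^{-Nr/2}$, and summing this over all light $A$ while using only $\sum_A\pcentral_A\le 1$ (rather than a bound on their number) yields $\Pr[\exists\text{ light }A:\phat_A\ge r/2]\le\frac{4e}{r}2^{-Nr/2}\le\dt$ once $N=\Omega(\frac1r\log\frac1{r\dt})$. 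This telescoping of the exponentially small tail probabilities against the bounded total probability mass is the crux, and is the step I expect to be the main obstacle.

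Taking $N$ to be the maximum of these polynomial requirements and union-bounding, with probability at least $1-2\dt$ we have simultaneously: $\Afreq\subseteq\Ahatfreq$, $|\Ahatfreq|\le 2/r$, every $A\in\Ahatfreq$ is heavy, and each such $A$ satisfies $|\phat_A-\pcentral_A|\le\veprime r^2/8$; therefore $\sum_{A\in\Ahatfreq}|\phat_A-\pcentral_A|\le\frac2r\cdot\frac{\veprime r^2}{8}=\frac14\veprime r$. Lemma~\ref{lem:est_free} (applied with this $\Ahatfreq$, $\Ahatrare$ and $\phat$) then produces $\Qfree:=\sum_{A\in\Ahatfreq}\min\{\phat_A,r\}+\sum_{A\in\Ahatrare}\phat_A$ and $\Phatfree:=\min\{\Qfree+\frac12\veprime r,1\}$ with $\Pfree\le\Phatfree\le\min\{(1+\veprime)\Pfree,1\}$, as required.

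Finally I would check efficiency: $\phat$ is supported on at most $N$ distinct scenarios, each a subset of the ground set stored in $\poly(\inpsize)$ space, so $\Ahatfreq$, the sum $\sum_{A\in\Ahatfreq}\min\{\phat_A,r\}$, and $\sum_{A\in\Ahatrare}\phat_A=1-\sum_{A\in\Ahatfreq}\phat_A$ can all be computed in $\poly(N,\inpsize)=\poly(\inpsize,\frac{1}{\veprime r},\log\frac1\dt)$ time, which gives the claimed running time.
\end{proofsketch}
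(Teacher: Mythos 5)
Your proof is correct, but it takes a genuinely different route from the paper's on the one point where care is needed. The paper draws \emph{two independent} batches of samples: the first batch determines $\Ahatfreq$ via Lemma~\ref{lem:find_freq}, and a fresh second batch produces the estimate $\phat$ fed into Lemma~\ref{lem:est_free}. Because $\Ahatfreq$ is then a fixed set of at most $2/r$ scenarios relative to the second sample, a plain Chernoff-plus-union-bound over its members immediately gives $\sum_{A\in\Ahatfreq}|\phat_A-\pcentral_A|\le\frac14\ve r$, and the two $\dt$-failure events multiply to give the stated $1-2\dt$. You instead use a \emph{single} sample for both tasks, which makes $\Ahatfreq$ correlated with $\phat$ and blocks the naive union bound; your heavy/light decomposition repairs this, union-bounding Hoeffding over the at most $4e/r$ scenarios with $\pcentral_A\ge\frac{r}{4e}$ and showing no light scenario's empirical frequency reaches $r/2$ by summing the exponentially small tails $(2e\pcentral_A/r)^{Nr/2}$ against $\sum_A\pcentral_A\le1$. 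Both arguments yield the same $\poly(\inpsize,\frac{1}{\ve r},\log\frac{1}{\dt})$ complexity; yours saves a sampling round at the price of the extra tail argument, while the paper's independence trick keeps the proof two lines long. Two small points: you have three bad events rather than two, so as written you get $1-3\dt$; either rescale $\dt$, or observe that the event $\Afreq\not\subseteq\Ahatfreq$ is already contained in the heavy-scenario accuracy failure (every $A\in\Afreq$ is heavy, and $|\phat_A-\pcentral_A|\le\ve r^2/8< r/2$ forces $\phat_A\ge r/2$), which restores $1-2\dt$. Also make sure $Nr/2\ge1$ when converting $(2e\pcentral_A/r)^{Nr/2}$ into $(4e\pcentral_A/r)2^{-Nr/2}$; this is implied by your choice of $N$.
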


\begin{proof}
First, we use Lemma \ref{lem:find_freq} to obtain a set of scenarios $\Ahatfreq$ of size $|\Ahatfreq| \le \frac{2}{r}$ that is a superset of $\Afreq$ with probability at least $1- \delta$. Next, we compute a empirical estimate $\phat$ of $\pcentral$ using $N$ samples. Using Chernoff bounds, we can choose $N = \poly(\frac{1}{\ve r}, \log\left(\frac{1}{\delta}\right))$ so that $\Pr\left[|\phat_A - \pcentral_A| > \frac{1}{4} \frac{1}{|\Ahatfreq|} \veprime r \right] \le \frac{1}{|\Ahatfreq|}\delta$ for every scenario $A \in \A$. By the union bound, this event does not happen for any of the scenarios $A \in \Ahatfreq$ with probability at least $1 - |\Ahatfreq| \frac{1}{|\Ahatfreq|}\delta = 1 - \delta$. In this case, the probability distribution $\phat$ and the partition $(\Ahatfreq, \Ahatrare := \A \setminus \Ahatfreq)$ of $\A$ satisfy the conditions of Lemma \ref{lem:est_free}, and so we can compute $\Phatfree$ as described in that lemma.

The success probability is at least $(1 - \delta)^2\geq 1-2\dt$.
\end{proof}

\paragraph{A proxy function for \boldmath $\hpcentral{x}$.} 
We assume in the sequel that the estimate $\Phatfree$ computed in Lemma \ref{lem:compute_Pfree}   
satisfies $\Pfree\leq\Phatfree\leq\min\{(1+\ve)\Pfree,1\}$.
Consider the polytope 
$\K :=\bigl\{q \in \Rplus^{\A} : \sum_{A \in \A} q_A \le \Phatfree, \quad q_A \le r \ \forall A \in \A\bigr\}$. 
Our proxy function is then defined as
\[
\hproxy{x} := c^\T x + \E[A \sim \pcentral]{g(x,A)} + 
\underbrace{\max_{q \in \K} \sum_{A \in \A} q_A g(x,A)}_{\text{\small (\maxinprox)}}.
\]
Informally, $\E[A \sim \pcentral]{g(x,A)}$ and $\max_{q \in \K} \sum_{A \in \A} q_A
g(x,A)$ can be seen as upper bounds on the contributions to
$\max_{q:L_\infty(\pcent,q)\leq r}\E[A\sim q]{g(x, A)}$ from the blocked mass and the free
mass of $\pcentral$ respectively. 
We will argue that this proxy function $\hproxy{x}$ is a good pointwise approximation of
$\hpcentral{x}$. First, we need the following preliminary lemma. 

\begin{lemma}
\label{lem:scaled_feasible}
For every $x \in \Pc$, we have $\max_{q : \|\pcentral-q\|_\infty \le r} \E[A \sim q]{g(x,A)} \ge \frac{1}{1+\ve}\max_{q \in \K} \sum_{A \in \A} q_A g(x,A)$.
\end{lemma}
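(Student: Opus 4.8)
The plan is to exhibit, for the fixed $x \in \Pc$, a single feasible distribution $q$ on the left-hand side whose expected recourse cost already dominates $\frac{1}{1+\ve}$ times the right-hand side. Concretely, I would let $q^*$ be an optimal solution of $\max_{q\in\K}\sum_{A}q_A\,g(x,A)$ (which exists since $\K$ is a compact polytope), scale it down to $\tilde q := \frac{1}{1+\ve}q^*$, and then ``stack'' $\tilde q$ on top of the blocked mass $\pblocked$ to form a genuine probability distribution.

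The first observation I would record is that the constraint $\|\pcentral-q\|_\infty\le r$ is equivalent to $\pblocked_A\le q_A\le\pcentral_A+r$ for all $A\in\A$ (using $\pblocked_A=\max\{\pcentral_A-r,0\}$ together with $q\ge 0$). Hence any feasible $q$ must already carry the entire blocked mass $\pblocked$, which has total weight $1-\Pfree$, leaving a ``free-mass budget'' of exactly $\Pfree$ to be distributed among the scenarios subject only to the upper caps $q_A\le\pcentral_A+r$. Now since $q^*\in\K$ we have $\tilde q\ge 0$, $\tilde q_A\le\frac{r}{1+\ve}\le r$, and---this is the one inequality doing the real work---$\sum_A\tilde q_A\le\frac{\Phatfree}{1+\ve}\le\Pfree$, where the last step uses the accuracy guarantee $\Phatfree\le(1+\ve)\Pfree$ from Lemma~\ref{lem:compute_Pfree}. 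Therefore $q^0:=\pblocked+\tilde q$ satisfies $\pblocked_A\le q^0_A\le\pblocked_A+r\le\pcentral_A+r$ for every $A$, and $\sum_A q^0_A=(1-\Pfree)+\sum_A\tilde q_A\le 1$. Since also $\sum_A(\pcentral_A+r)=1+r|\A|\ge 1$, the set $\{q:q^0_A\le q_A\le\pcentral_A+r\ \forall A,\ \sum_A q_A=1\}$ is nonempty; I would fix any $q$ in it, which is then a probability distribution with $\|\pcentral-q\|_\infty\le r$, i.e.\ feasible for the left-hand maximization.

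Finally, using $g(x,A)\ge 0$ for all $A$ (property \ref{p1}) and $q\ge q^0\ge\tilde q$ coordinatewise, I would conclude $\E[A\sim q]{g(x,A)}=\sum_A q_A\,g(x,A)\ge\sum_A\tilde q_A\,g(x,A)=\frac{1}{1+\ve}\sum_A q^*_A\,g(x,A)=\frac{1}{1+\ve}\max_{q\in\K}\sum_A q_A\,g(x,A)$, and taking the maximum over feasible $q$ on the left yields the lemma. I do not anticipate a genuine obstacle here; the only step requiring attention is checking that the $(1+\ve)$-scaled copy of $q^*$ fits inside the free-mass budget $\Pfree$, and this is precisely where the $(1+\ve)$-relative accuracy of the estimate $\Phatfree$ is spent (had we only had $\Phatfree\ge\Pfree$, no such scaling would be possible).
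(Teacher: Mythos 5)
Your proof is correct and follows essentially the same route as the paper's: scale the optimizer $q^*$ of (\maxinprox) by $\frac{1}{1+\ve}$, add it to the blocked mass $\pblocked$, verify the per-scenario caps and the total-mass bound via $\Phatfree\le(1+\ve)\Pfree$, and then top the resulting subdistribution up to a genuine probability distribution within the box $[\pcentral-r,\pcentral+r]_+$. The only (immaterial) difference is that the paper completes the top-up by an explicit iterative augmentation, whereas you invoke a box-intersection argument; both are valid.
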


\begin{proof} 
Let $q^*$ be an optimal solution to (\maxinprox). We prove that there exists a distribution $\tq$ with $\|\pcentral-\tq\|_\infty \le r$ such that $\tq \ge \frac{1}{1+\ve}q^*$. This yields the result, since we obtain
\[
\max_{q : \|\pcentral-q\|_\infty \le r} \E[A \sim q]{g(x,A)} \ge \E[A \sim \tq]{g(x, A)} \ge \frac{1}{1+\ve}\sum_{A \in \A} q^*_A g(x, A) .
\]

We give a constructive proof of the existence of $\qtilde$, via an iterative
algorithm. 
Recall that $\pblocked := (\max\{\pcent_A - r, 0\})_{A \in \A}$ denotes the blocked mass of the distribution $\pcentral$. We
start by setting $\qtilde_A := \pblocked + \frac{1}{1+\ve}q^*$. Note
that for all $A \in \A$ we have 
$\qtilde_A \ge \pblocked_A \ge \pcentral_A - r$ and $\qtilde_A \ge \frac{1}{1+\ve}q^*_A$. From now on, we will only increase components of $\qtilde$, so these two properties will be conserved; therefore we maintain the invariant $\tq \ge \frac{1}{1+\ve}q^*$. We only need to work towards ensuring that $\qtilde$ is a
probability distribution and that $\qtilde_A \le \pcentral_A + r$ for every $A \in \A$ (which, along with $\qtilde_A \ge \pcentral_A - r$ for every $A \in \A$, implies $\|\pcentral-\tq\|_\infty\le r$).

Note that for every $A \in \A$ we have $\qtilde_A \le \max\{\pcent_A,r\} \le 1$ (which also implies $\qtilde_A \le \pcentral_A +
r$). Moreover, we have $\sum_{A \in \A} \qtilde_A = \sum_{A \in A} \pblocked_A + \frac{1}{1+\ve} \sum_{A \in A} q^*_A \le \sum_{A \in A} \pblocked_A + \Pfree = 1$. It is possible that $\qtilde$ is
not a probability distribution yet, if this inequality is not tight. If this is the case,
then there must be a scenario $A \in \A$ such that $\qtilde_A < \pcentral_A$. We increase
the component $\qtilde_A$ until either we obtain $\sum_{A \in \A} \qtilde_A = 1$ (and
hence $\qtilde$ is a probability distribution) or $\qtilde_A = \pcentral_A + r$. If
$\qtilde$ is still not a probability distribution we repeat the same step with a different
scenario. As each step (except possibly the final one) decreases the number of scenarios $A$ such that $\qtilde_A <
\pcentral_A$, this process eventually stops. At this moment, $\qtilde$ is a probability
distribution and satisfies $\pcentral_A - r \le \qtilde_A \le \pcentral_A + r$ for every
$A$, and so $\|\pcentral-\tq\|_\infty \le r$. 
\end{proof}

\begin{lemma} \label{pwclose}
For every $x \in\Pc$, we have $\hpcentral{x} \le \hproxy{x} \le 2(1 + \veprime) \hpcentral{x}$.
\end{lemma}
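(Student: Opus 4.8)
The plan is to prove the two inequalities $\hpcentral{x}\leq\hproxy{x}$ and $\hproxy{x}\leq 2(1+\veprime)\hpcentral{x}$ separately, by decomposing the worst-case inner maximization $\max_{q:\|\pcentral-q\|_\infty\leq r}\E[A\sim q]{g(x,A)}$ according to the blocked/free-mass decomposition $\pcentral=\pblocked+\pfree$. For the lower bound $\hpcentral{x}\leq\hproxy{x}$: fix $x\in\Pc$ and let $q^*$ attain the inner max. Split $q^*=\pblocked+(q^*-\pblocked)$; every component of $q^*-\pblocked$ is nonnegative since $q^*_A\geq\pcentral_A-r\geq\pblocked_A$, its total mass is $1-\sum_A\pblocked_A=\Pfree\leq\Phatfree$, and each component satisfies $q^*_A-\pblocked_A\leq q^*_A\leq\pcentral_A+r$; I need to bound this componentwise by $r$. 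When $\pcentral_A\geq r$ we have $\pblocked_A=\pcentral_A-r$ so $q^*_A-\pblocked_A\leq(\pcentral_A+r)-(\pcentral_A-r)=2r$, which is not quite $\leq r$ — so the naive split does not immediately land in $\K$. The fix is to observe that $\E[A\sim\pblocked]{g(x,A)}\leq\E[A\sim\pcentral]{g(x,A)}$ (since $\pblocked\leq\pcentral$ coordinatewise and $g\geq 0$), and then I only need $q^*-\pblocked$, suitably truncated at $r$ in each coordinate, to be feasible for $\K$; the mass lost by truncation can be re-accounted against $\E[A\sim\pcentral]{g(x,A)}$ using the slack $\pcentral_A-\pblocked_A=\pfree_A\leq r$. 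So the cleaner route is: write $q^*=q^{(1)}+q^{(2)}$ where $q^{(1)}_A:=\min\{q^*_A,\pcentral_A\}$ and $q^{(2)}_A:=q^*_A-q^{(1)}_A=\max\{q^*_A-\pcentral_A,0\}\leq r$; then $q^{(1)}\leq\pcentral$ gives $\E[A\sim q^{(1)}]{g(x,A)}\leq\E[A\sim\pcentral]{g(x,A)}$, and $q^{(2)}$ has each coordinate $\leq r$ and total mass $\sum_A\max\{q^*_A-\pcentral_A,0\}\leq\Pfree\leq\Phatfree$ (this last bound because the excess mass $q^*$ places above $\pcentral$ cannot exceed the free mass, by the transportation/conservation argument — this is the one inequality I should verify carefully), so $q^{(2)}\in\K$ and hence $\E[A\sim q^{(2)}]{g(x,A)}\leq$ (\maxinprox). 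Adding $c^\T x$ yields $\hpcentral{x}\leq\hproxy{x}$.

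For the upper bound $\hproxy{x}\leq 2(1+\veprime)\hpcentral{x}$: I bound the three terms of $\hproxy{x}$ against $\hpcentral{x}$. Trivially $c^\T x\leq\hpcentral{x}$. For the blocked term, $\E[A\sim\pcentral]{g(x,A)}\leq\max_{q:\|\pcentral-q\|_\infty\leq r}\E[A\sim q]{g(x,A)}\leq\hpcentral{x}$, taking $q=\pcentral$ which is trivially within distance $r$. For the free term (\maxinprox), apply Lemma~\ref{lem:scaled_feasible}, which gives exactly $\max_{q:\|\pcentral-q\|_\infty\leq r}\E[A\sim q]{g(x,A)}\geq\frac{1}{1+\veprime}$(\maxinprox), i.e. (\maxinprox)$\leq(1+\veprime)\hpcentral{x}$. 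Summing: $\hproxy{x}\leq c^\T x+\hpcentral{x}+(1+\veprime)\hpcentral{x}$. Since $c^\T x\leq\hpcentral{x}$, this is $\leq(2+(1+\veprime))\hpcentral{x}=(3+\veprime)\hpcentral{x}$, which is weaker than claimed. To recover the factor $2(1+\veprime)$ I must not double-count $c^\T x$: note $\hpcentral{x}=c^\T x+\max_{q}\E[A\sim q]{g(x,A)}$, so $c^\T x+\E[A\sim\pcentral]{g(x,A)}\leq\hpcentral{x}$ as a single block (not $c^\T x\leq\hpcentral{x}$ and the expectation $\leq\hpcentral{x}$ separately). Thus $\hproxy{x}=(c^\T x+\E[A\sim\pcentral]{g(x,A)})+$(\maxinprox)$\leq\hpcentral{x}+(1+\veprime)\hpcentral{x}=(2+\veprime)\hpcentral{x}\leq 2(1+\veprime)\hpcentral{x}$, using $\veprime\leq 1$ (or absorbing into $O(\ve)$), which closes the bound.

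The main obstacle is the inequality $\sum_A\max\{q^*_A-\pcentral_A,0\}\leq\Pfree$ used in the first part: it says that any $q$ within $L_\infty$-distance $r$ of $\pcentral$ places excess mass (above $\pcentral$) totalling at most the free mass $\Pfree=\sum_A\min\{\pcentral_A,r\}$. I would prove it by noting that the total excess mass equals the total deficit mass $\sum_A\max\{\pcentral_A-q^*_A,0\}$ (both equal since $q^*,\pcentral$ are distributions), and the deficit in scenario $A$ is at most $\min\{\pcentral_A,r\}=\pfree_A$ because $q^*_A\geq 0$ bounds the deficit by $\pcentral_A$ and $q^*_A\geq\pcentral_A-r$ bounds it by $r$. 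Summing gives the claim. Everything else is bookkeeping plus the already-proved Lemma~\ref{lem:scaled_feasible}; no new machinery is needed.
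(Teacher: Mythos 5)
Your proof is correct and follows essentially the same route as the paper: the identical decomposition $q^*=q^{(1)}+q^{(2)}$ with $q^{(1)}_A=\min\{q^*_A,\pcentral_A\}$ for the first inequality, and Lemma~\ref{lem:scaled_feasible} plus $\E[A\sim\pcentral]{g(x,A)}\le\max_{q:\|\pcentral-q\|_\infty\le r}\E[A\sim q]{g(x,A)}$ for the second (your direct summation even yields the marginally sharper constant $2+\veprime$, versus the paper's convex-combination argument giving $2(1+\veprime)$). Your excess-equals-deficit argument for $\sum_A q^{(2)}_A\le\Pfree$ is a correct justification of a step the paper asserts without detail.
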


\begin{proof}
We start by proving the first inequality. Let $\qstar := \argmax_{q : \|\pcentral-q\|_\infty \le r} \E[A \sim q]{g(x,A)}$, so that $\hpcentral{x} = c^\T x + \E[A \sim
  \qstar]{g(x,A)}$. We decompose $\qstar$ into two vectors as follows: we write $\qstar =
q^1 + q^2$, where $q^1_A := \min\{\qstar_A, \pcentral_A\}$ and $q^2_A := \qstar_A - q^1_A$
for every scenario $A \in \A$. Next we upper bound the contribution of each of these two
vectors to the objective value $\hpcentral{x}$. Since $q^1 \le \pcentral$, we have
$\sum_{A \in \A} q^1_A g(x, A) \le \E[A \sim \pcentral]{g(x,A)}$. Note that since
$\|\pcentral-\qstar\|_\infty \le r$, and by the way we defined $q^2$, we must have $q^2_A
\le r$ for every scenario $A \in \A$. Further, we have $\sum_{A \in \A} q^2_A \le \Pfree
\le \Phatfree$. It follows that $q^2 \in \K$, and so $\sum_{A \in \A} q^2_A g(x, A) \le
\max_{q \in \K} \sum_{A \in \A} q_A g(x,A)$. Therefore we have 
\[
\hpcentral{x} = c^\T x + \sum_{A \in \A} q^1_A g(x, A) + \sum_{A \in \A} q^2_A g(x, A) \le \hproxy{x} \ ,
\]
proving the first inequality.

Now we proceed to prove the second inequality. We have
\begin{align*}
\hpcentral{x} & = c^\T x + \max_{q : \|\pcentral-q\|_\infty \le r} \E[A \sim q]{g(x,A)} \\
& \ge c^\T x + \frac{1}{2} \E[A \sim \pcentral]{g(x,A)} + \frac{1}{2} \left( \frac{1}{1+\ve} \max_{q \in \K} \sum_{A \in \A} q_A g(x,A) \right) \ge \frac{1}{2(1+\ve)} \hproxy{x} .
\end{align*}
The second step uses Lemma~\ref{lem:scaled_feasible} and the fact that $\max_{q : \|\pcentral-q\|_\infty \le r} \E[A \sim q]{g(x,A)} \ge \E[A \sim \pcentral]{g(x,A)}$ (since $\pcentral$ is feasible for the maximization problem on the left side).
\end{proof}

\paragraph{Solving the proxy problem \boldmath $\min_{x\in\Pc}\hproxy{x}$.}

We assume that for all $x\in X$, and all $A\sse A'$, we have $g(x,A)\leq g(x,A')$, which
holds for all covering problems. Recall that $\Pc\sse\Rplus^m$. 
Recall from property \ref{p4} that for every $A \in \A$, the function $g(\cdot,A)$ is convex, and at every $x\in\Pc$ 
we can efficiently compute its value. We will assume the following stronger version of
\ref{p5}: 

\begin{enumerate}[label=(P5'), topsep=0.25ex, itemsep=0ex, parsep=0ex,
    labelwidth=\widthof{(P5')}, leftmargin=!]
\item \label{p5p}
For every $x\in\Pc$ and $A\in\A$, we can efficiently compute a subgradient $\sgr^{x,A}$ of
$g(\cdot,A)$ at $x$ with $-\ld c \le \sgr^{x,A} \le 0$. 
\end{enumerate} 

Shmoys and Swamy~\cite{ShmoysS06} define a broad class of 2-stage problems for
which~\ref{p5p} holds, which includes all the 2-stage problems considered in the literature.
Recall that by \ref{p3}, 
$\Pc\sse B(\bo,R)=\{x:\|x\|\leq R\}$ and $\Pc$ contains a ball of radius $V\leq 1$ such
that $\ln\bigl(\frac{R}{V}\bigr)=\poly(\inpsize)$.  
Let $\tK$ be the Lipschitz constant of $\hproxy{\cdot}$; we show in Lemma~\ref{LinftyLip}
that $\log\tK=\poly(\inpsize)$. 
Under this setup, we have the following result from~\cite{ShmoysS06}.

\begin{theorem}[see Theorem 4.7, Lemma 4.14 in~\cite{ShmoysS06}] \label{convsolve}
Let $\ve<1/2$, $\dt>0$. 
Define $N=\ceil{2m^2\ln\bigl(\frac{16KR^2}{V\kp}\bigr)}$ and
$n=N\ln\bigl(\frac{8NKR}{\kp}\bigr)$, and 
$\w=\ve/2n=\poly\bigl(\frac{\ve}{\inpsize},\log(\frac{1}{\kp})\bigr)$.
Suppose we have a procedure that given any point $x\in\Pc$ finds an $\w$-subgradient 
of $\hproxy{\cdot}$ at $x$ with probability at least $1-\dt$ in time
$T(\w,\dt)$. 
Then, we can find $\bx\in\Pc$ satisfying
$\hproxy{\bx}\leq\frac{1}{1-\ve}\cdot\min_{x\in\Pc}\hproxy{x}+\kp$ with probability at
least $1-\dt$ in time
$O\bigl(T(\w,\frac{\dt}{N+n})\cdot m^2\log^2(\frac{\tK Rm}{V\kp})\bigr)
=\poly\bigl(\inpsize, T(\w,\frac{\dt}{N+n}),\log(\frac{1}{\kp})\bigr)$.  
\end{theorem}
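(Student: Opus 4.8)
The plan is to verify that the objective $\hproxy{\cdot}$ and the feasible region $\Pc$ satisfy exactly the hypotheses under which the ellipsoid-based convex-minimization machinery of Shmoys and Swamy~\cite{ShmoysS06} applies, and then to invoke that machinery essentially verbatim. Concretely: by \ref{p4} and \ref{p5p}, each $g(\cdot,A)$ is convex over $\Pc$ with an efficiently computable value and a subgradient $\sgr^{x,A}$ satisfying $-\ld c\le\sgr^{x,A}\le\bo$; since $\hproxy{x}=c^\T x+\E[A\sim\pcentral]{g(x,A)}+\max_{q\in\K}\sum_{A\in\A}q_Ag(x,A)$ is a sum/expectation/supremum of such convex functions with nonnegative coefficients of total size at most $1+\Phatfree\le 2$, the function $\hproxy{\cdot}$ is convex over $\Pc$; \ref{p3} supplies the enclosing ball $B(\bo,R)$ and an inscribed ball of radius $V\le 1$ with $\ln(R/V)=\poly(\inpsize)$; and Lemma~\ref{LinftyLip} (proved next) gives $\log\tK=\poly(\inpsize)$ for the Lipschitz constant of $\hproxy{\cdot}$. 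Thus the one ingredient that is genuinely problem-specific is the $\w$-subgradient procedure, which the hypothesis of the theorem assumes we have; the conclusion is then precisely Theorem 4.7 together with Lemma 4.14 of~\cite{ShmoysS06}. For completeness I would reconstruct the two-phase argument as follows.

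\emph{Phase I (generating candidates).} Run the ellipsoid method with $E_0=B(\bo,R)\supseteq\Pc$. At the center $\bx_i$ of the current ellipsoid: if $\bx_i\notin\Pc$, cut with a separating hyperplane from the oracle for $\Pc$; if $\bx_i\in\Pc$, call the $\w$-subgradient procedure at $\bx_i$ to get $\hsgr$, record $\bx_i$ as a candidate, and cut with the halfspace $\{x:\hsgr^\T(x-\bx_i)\le 0\}$. By Definition~\ref{apsgrad}, every $x'$ discarded by such a cut satisfies $\hproxy{x'}\ge(1-\w)\hproxy{\bx_i}$. Running the procedure with failure parameter $\dt/(N+n)$ at each of the $\le N+n$ calls makes the whole run succeed with probability $\ge 1-\dt$ by a union bound. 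After $N=\ceil{2m^2\ln(16KR^2/(V\kp))}$ iterations, $\vol(E_N)\le e^{-N/2m}\vol(E_0)$ is below the volume of a shrunken homothet $W=\mu\Pc+(1-\mu)\sx$ of $\Pc$ centered at an optimal point $\sx$, where $\mu=\Theta(\kp/(\tK R))$ is chosen so that every $x'\in W$ has $\|x'-\sx\|\le\kp/(2\tK)$ and hence $\hproxy{x'}\le\hproxy{\sx}+\kp/2$. Since $\Pc$ (hence $W$) still lies inside $E_N$ intersected with all the accumulated cuts, some $x'\in W$ must have been discarded by a Phase-I cut at a feasible center $\bx_l$, whence $\hproxy{\bx_l}\le\frac{1}{1-\w}\bigl(\hproxy{\sx}+\tfrac{\kp}{2}\bigr)$.

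\emph{Phase II (selecting a candidate without a value oracle).} Since we have no (even approximate) value oracle for $\hproxy{\cdot}$, we cannot simply output the $\argmin$ over the at most $n$ candidates. Following Lemma 4.14 of~\cite{ShmoysS06}, iteratively pick any surviving candidate $\bx_i$, compute a fresh $\w$-subgradient $\hsgr$ there, and delete every other surviving candidate $\bx_j$ with $\hsgr^\T(\bx_j-\bx_i)\ge 0$ (each such $\bx_j$ has $\hproxy{\bx_j}\ge(1-\w)\hproxy{\bx_i}$, so is no better than $\bx_i$ up to a $(1-\w)$ factor), keeping $\bx_i$. After at most $n$ rounds one candidate $\bx$ remains, and chaining the $(1-\w)$ losses over the rounds gives $\hproxy{\bx}\le(1-\w)^{-n}\hproxy{\bx_l}$. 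With $\w=\ve/2n$ and $\ve<1/2$ one checks $(1-\w)^{-n}\le e^{\ve}\le\frac{1}{1-\ve}$, so combining with Phase I,
\[
\hproxy{\bx}\ \le\ \frac{1}{(1-\ve)(1-\w)}\Bigl(\hproxy{\sx}+\tfrac{\kp}{2}\Bigr)\ \le\ \frac{1}{1-\ve}\min_{x\in\Pc}\hproxy{x}+\kp,
\]
after folding the lower-order terms into $\kp$ and rescaling $\ve$ by an absolute constant. The total work is $O(N+n)$ calls to the $\w$-subgradient procedure plus the usual $\poly(\inpsize,\log(1/\kp))$ ellipsoid bookkeeping, giving the stated bound $O\bigl(T(\w,\tfrac{\dt}{N+n})\cdot m^2\log^2(\tfrac{\tK Rm}{V\kp})\bigr)$.

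\emph{Main obstacle.} The delicate point is Phase II: extracting a near-optimal point from the candidate pool in the absence of any value oracle. This is what forces $\w$ to be inverse-polynomial, of order $\ve/n=\ve/\poly(\inpsize,\log(1/\kp))$, since the selection loses a $(1-\w)$ factor at each of $\Theta(n)$ pruning steps; everything else (volume shrinkage, the Lipschitz estimate for points of $W$, the union bounds over oracle calls) is routine bookkeeping lifted directly from~\cite{ShmoysS06}. The practical consequence is that when this theorem is instantiated in Section~\ref{linfty}, the real work is to show that an $\w$-subgradient of $\hproxy{\cdot}$ can be computed in time $\poly(\inpsize,\ld/\w)$ even for such tiny $\w$, which is precisely the content of the next lemma.
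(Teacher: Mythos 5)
The paper does not actually prove this statement: Theorem~\ref{convsolve} is imported wholesale from~\cite{ShmoysS06} (Theorem 4.7 and Lemma 4.14 there), and the only work the paper does is to verify its hypotheses---convexity of $\hproxy{\cdot}$, the ball bounds from \ref{p3}, the Lipschitz bound of Lemma~\ref{LinftyLip}, and the availability of an $\w$-subgradient procedure---which is exactly what your first paragraph does, correctly. Your Phase~I reconstruction is also the standard volume argument and matches the paper's own re-working of it in the proof of Theorem~\ref{polythm}.

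The gap is in your Phase~II. The pruning rule you describe---compute one $\w$-subgradient $\hsgr$ at a candidate $\bx_i$ and delete every surviving $\bx_j$ with $\hsgr^\T(\bx_j-\bx_i)\geq 0$---does not terminate with a single candidate: nothing forces any other candidate to lie in that halfspace. Already for $f(x)=x^2$ on $\R$ with candidates $1$ and $-2$, the subgradient at each candidate places the other strictly on the negative side of its cut, so neither is ever deleted and the procedure makes no selection. The actual mechanism of Lemma~4.14 in~\cite{ShmoysS06} compares two candidates at a time by \emph{bisection along the segment} joining them: at each midpoint $y$ one computes an $\w$-subgradient $\hd$, observes that at least one endpoint $x$ of the current segment satisfies $\hd^\T(x-y)\geq 0$ and hence $\hproxy{x}\geq(1-\w)\hproxy{y}$, and recurses toward the other endpoint until the segment is short enough for the Lipschitz bound to close the gap; the winner is then compared against the next candidate. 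This is precisely what accounts for the extra $\ln\bigl(\frac{8NKR}{\kp}\bigr)$ factor in your own definition $n=N\ln\bigl(\frac{8NKR}{\kp}\bigr)$---roughly $\log$ subgradient evaluations per pairwise comparison and $N$ comparisons in all---a factor your described procedure never uses. The rest of your writeup (the role of $\w=\ve/2n$ in controlling the accumulated $(1-\w)$ losses, the union bound over oracle calls, the final error accounting) is fine once Phase~II is repaired.
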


We show that one can compute an $\w$-subgradient with probability at least $1-\dt$ in time
$T(\w,\dt)=\poly\bigl(\inpsize, \frac{\ld}{r\w},\log(\frac{1}{\dt})\bigr)$. 
Lemma~\ref{proxysubgrad} (ii) shows that to obtain an $\w$-subgradient, it suffices to be
able to (a) find a vector that is componentwise close to $\E[A\sim\pcent]{\sgr^{x,A}}$, and
(b) find an optimal solution to the maximization problem (\maxinprox) in the definition of
$\hproxy{x}$.
Lemma~\ref{estsgr} argues using simple Chernoff bounds that one can obtain a vector that
is componentwise close to $\E[A\sim\pcent]{\sgr^{x,A}}$, and Lemma~\ref{Kopt} shows that
one can compute an optimal solution to (\maxinprox) (with polynomial support). Finally,
Lemma~\ref{LinftyLip} bounds the Lipschitz constant of $\hproxy{\cdot}$. Putting 
everything together yields Theorem~\ref{linftysolve}. 

\begin{lemma} \label{proxysubgrad}
\begin{enumerate}[(i), topsep=0ex, itemsep=0.25ex, parsep=0ex, labelwidth=\widthof{(ii)},
    leftmargin=!] 
\item The function $\hproxy{\cdot}$ is convex, and the vector
$d:=c+\E[A\sim\pcent]{\sgr^{x,A}}+\sum_{A\in\A}q^*_A\sgr^{x,A}$ is a subgradient of
$\hproxy{\cdot}$ at $x$; here $q^*$ is an optimal solution to 
\emph{(\maxinprox)}. 

\item Moreover, if $\sgrest$ is a vector such that 
$-\w c \leq\sgrest - \E[A\sim\pcent]{\sgr^{x,A}}\leq 0$, then 
$\hd:=c+\sgrest+\sum_{A\in\A}q^*_A\sgr^{x,A}$ is an $\w$-subgradient of $\hproxy{\cdot}$ at
$x$.  
\end{enumerate}
\end{lemma}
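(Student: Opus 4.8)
The plan is to handle parts (i) and (ii) in turn: part (i) is a routine ``calculus of convex functions and subgradients'' computation, and part (ii) follows from it by a short sign-bookkeeping perturbation argument.

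For part (i), I would first observe that $\hproxy{\cdot}$ is a sum of three convex functions on $\Pc$: the linear term $c^\T x$; the term $\E[A\sim\pcent]{g(x,A)}=\sum_{A\in\A}\pcent_A g(x,A)$, which is a nonnegative combination of the functions $g(\cdot,A)$, each convex over $\Pc$ by \ref{p4}; and the term $\max_{q\in\K}\sum_{A\in\A}q_A g(x,A)$, which for each fixed $q\geq\bo$ is again a nonnegative combination of convex functions, so that the pointwise maximum over the (compact) polytope $\K$ is convex. For the subgradient, the per-scenario subgradient inequality $g(x',A)-g(x,A)\geq\sgr^{x,A}\cdot(x'-x)$ (available from \ref{p5p}, or \ref{p5}), averaged over $A\sim\pcent$, shows that $\E[A\sim\pcent]{\sgr^{x,A}}$ is a subgradient of $\E[A\sim\pcent]{g(\cdot,A)}$ at $x$. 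Letting $q^*$ be an optimal solution of (\maxinprox) at $x$, for any $x'$ we have $\max_{q\in\K}\sum_A q_A g(x',A)\geq\sum_A q^*_A g(x',A)\geq\sum_A q^*_A g(x,A)+\bigl(\sum_A q^*_A\sgr^{x,A}\bigr)\cdot(x'-x)$, using $q^*\geq\bo$ and the per-scenario subgradient inequality; hence $\sum_A q^*_A\sgr^{x,A}$ is a subgradient of (\maxinprox) at $x$. Adding $c$ (the subgradient of $c^\T x$) gives that $d$ is a subgradient of $\hproxy{\cdot}$ at $x$.

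For part (ii), I would set $e:=\sgrest-\E[A\sim\pcent]{\sgr^{x,A}}$, so that $\hd=d+e$ and the hypothesis reads $-\w c\leq e\leq\bo$ componentwise. Fix $x'\in\Pc$. By part (i), $\hproxy{x'}-\hproxy{x}\geq d\cdot(x'-x)=\hd\cdot(x'-x)-e\cdot(x'-x)$, so it suffices to show $e\cdot(x'-x)\leq\w\,\hproxy{x}$. Since $\Pc\sse\Rplus^m$ we have $x',x\geq\bo$; with $e\leq\bo$ this gives $e\cdot x'\leq 0$, and with $-e\leq\w c$ (recall $c\geq\bo$ by \ref{p1}) this gives $-e\cdot x\leq\w\,c^\T x$. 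Finally $c^\T x\leq\hproxy{x}$ since $g\geq\bo$ by \ref{p1} and (\maxinprox) is nonnegative (take $q=\bo\in\K$). Hence $e\cdot(x'-x)=e\cdot x'-e\cdot x\leq\w\,c^\T x\leq\w\,\hproxy{x}$, which yields $\hproxy{x'}-\hproxy{x}\geq\hd\cdot(x'-x)-\w\,\hproxy{x}$, i.e.\ $\hd$ is a $(\w,\Pc)$-subgradient of $\hproxy{\cdot}$ at $x$ in the sense of Definition~\ref{apsgrad}.

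There is no genuine obstacle here; the only care needed is the sign bookkeeping in part (ii) and being explicit about which structural properties are used ($\Pc\sse\Rplus^m$ and $c\geq\bo$ from \ref{p1},\ref{p3}; nonnegativity and convexity of $g(\cdot,A)$ from \ref{p1},\ref{p4}; the subgradient oracle of \ref{p5p}). Note also that this lemma only asserts the \emph{existence} and validity of the subgradient formula; the computability of an optimal $q^*$ with polynomial-size support, which is what makes $d$ and $\hd$ usable algorithmically, is what Lemma~\ref{Kopt} (invoked subsequently together with Lemma~\ref{estsgr} to estimate $\E[A\sim\pcent]{\sgr^{x,A}}$) supplies, and is not needed for the present statement.
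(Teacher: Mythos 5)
Your proof is correct and follows essentially the same route as the paper: the subgradient inequality is obtained by summing the per-scenario inequalities weighted by $\pcent$ and by the (feasible-at-$x'$) optimal $q^*$, and the $\w$-subgradient claim reduces to the same sign bookkeeping using $x,x'\geq\bo$, $c\geq\bo$, the two-sided bound on $\sgrest-\E[A\sim\pcent]{\sgr^{x,A}}$, and $c^\T x\leq\hproxy{x}$. The only cosmetic difference is that the paper derives (i) as the $\w=0$ case of (ii), whereas you prove (i) first and perturb; both are fine.
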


\begin{proof}
Convexity of $\hproxy{\cdot}$ will follow from the fact that we have a subgradient of
$\hproxy{\cdot}$ at every point $x\in\Pc$. Part (i) is a special case of part (ii) with
$\w=0$, so we focus on part (ii). Consider any $x'\in\Pc$. We have
\begin{equation*}
\begin{split}
\hproxy{x'}-\hproxy{x} & \geq 
c^\T(x'-x)+\E[A\sim\pcentral]{g(x',A)-g(x,A)}+\sum_{A\in\A}q^*_A\bigl(g(x',A)-g(x,A)\bigr)\\
& \geq
c^\T(x'-x)+\E[A\sim\pcent]{\sgr^{x,A}\cdot(x'-x)}+\sum_{A\in\A}q^*_A\sgr^{x,A}\cdot(x'-x) \\
& \geq
\bigl(c+\sum_{A\in\A}q^*_A\sgr^{x,A}\bigr) \cdot (x'-x)+\sgrest \cdot (x'-x)
+\sum_{e:x'_e<x_e}(x'_e-x_e)\w c_e \\
& \geq \hd^\T(x'-x)-\w c^\T x \geq \hd^\T(x'-x)-\w\cdot\hproxy{x}.
\end{split}
\end{equation*}
The first inequality follows since $q^*$ is a feasible solution to (\maxinprox[x']); the second
follows since $\sgr^{x,A}$ is a subgradient of $g(\cdot,A)$ at $x$; the third follows from
the componentwise closeness of $\sgrest$ and $\E[A\sim\pcent]{\sgr^{x,A}}$; the fourth
follows since $x,x'\geq 0$, and the last inequality is because $\hproxy{x}\geq c^\T x$.
\end{proof}

\begin{lemma} \label{estsgr}
Let $x\in\Pc$. For any $\w>0$ and $\dt \in (0, 1)$, we can compute a vector $\sgrest$ such that
$-\w c \leq\sgrest - \E[A\sim\pcent]{\sgr^{x,A}}\leq 0$ with probability
at least $1-\dt$ in time $T(\w,\dt):=\poly(\inpsize, \frac{\ld}{\w}, \log \frac{1}{\delta})$.
\end{lemma}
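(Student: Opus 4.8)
The plan is to estimate $\E[A\sim\pcent]{\sgr^{x,A}}$ by an empirical average of per-scenario subgradients, and then to shift the estimate slightly downward, coordinatewise and proportionally to $c$, so that it lands inside the required one-sided window. Concretely, I would draw $N$ independent samples $A_1,\dots,A_N$ from $\pcent$ (each an elementary black-box operation), use property~\ref{p5p} to compute in $\poly(\inpsize)$ time a subgradient $\sgr^{x,A_i}$ of $g(\cdot,A_i)$ at $x$ satisfying $-\ld c\le\sgr^{x,A_i}\le\bo$, form the average $\bar{\sgr}:=\tfrac1N\sum_{i=1}^N\sgr^{x,A_i}$, and output $\sgrest:=\bar{\sgr}-\tfrac{\w}{2}c$. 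Since $\Exp[\bar{\sgr}]=\E[A\sim\pcent]{\sgr^{x,A}}$ and each coordinate is bounded, a Hoeffding-type concentration bound gives the needed per-coordinate accuracy.

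The one point requiring care is that the admissible window for $\sgrest_e$ has width exactly $\w c_e$, i.e.\ accuracy is required \emph{relative to $c_e$}, not in absolute terms. For a coordinate $e$ with $c_e=0$, property~\ref{p5p} forces $\sgr^{x,A}_e=0$ for every $A$, hence $\E[A\sim\pcent]{\sgr^{x,A}_e}=0$, and we simply set $\sgrest_e=0$. For a coordinate with $c_e>0$, I would normalize: the variables $Y^{(i)}_e:=-\sgr^{x,A_i}_e/(\ld c_e)\in[0,1]$ are i.i.d.\ with mean $\mu_e:=-\E[A\sim\pcent]{\sgr^{x,A}_e}/(\ld c_e)\in[0,1]$, so by Hoeffding's inequality $\Pr\bigl[\bigl|\tfrac1N\sum_i Y^{(i)}_e-\mu_e\bigr|>\tfrac{\w}{2\ld}\bigr]\le 2\exp\bigl(-N\w^2/(2\ld^2)\bigr)$. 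Taking $N=\bigl\lceil\tfrac{2\ld^2}{\w^2}\ln\tfrac{2m}{\dt}\bigr\rceil$ and a union bound over the $m$ coordinates, with probability at least $1-\dt$ we get $\bigl|\bar{\sgr}_e-\E[A\sim\pcent]{\sgr^{x,A}_e}\bigr|\le\tfrac{\w}{2}c_e$ for all $e$ simultaneously. In this event, $\sgrest_e-\E[A\sim\pcent]{\sgr^{x,A}_e}=\bigl(\bar{\sgr}_e-\E[A\sim\pcent]{\sgr^{x,A}_e}\bigr)-\tfrac{\w}{2}c_e$ lies in $[-\w c_e,0]$, which is exactly the claimed bound $-\w c\le\sgrest-\E[A\sim\pcent]{\sgr^{x,A}}\le\bo$.

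For the running time: $m$, $\log|X|$, $\ln\ld$, and the cost of one subgradient evaluation via~\ref{p5p} are all $\poly(\inpsize)$, so the total time is $N\cdot\poly(\inpsize)=\poly\bigl(\inpsize,\tfrac{\ld}{\w},\log\tfrac1\dt\bigr)$, as required. The only mildly delicate ingredient is the coordinatewise rescaling by $c$ together with the degenerate $c_e=0$ case above; the remainder is a routine Chernoff/Hoeffding argument, so I do not expect a genuine obstacle here.
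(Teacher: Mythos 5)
Your proposal is correct and follows essentially the same route as the paper's proof: sample $N=O\bigl(\tfrac{\ld^2}{\w^2}\log\tfrac{m}{\dt}\bigr)$ scenarios, average the per-scenario subgradients, apply Hoeffding to the coordinates normalized by $\ld c_e$, union bound over coordinates, and shift the average down by $\tfrac{\w}{2}c$ to land in the one-sided window. Your explicit handling of coordinates with $c_e=0$ (where \ref{p5p} forces $\sgr^{x,A}_e=0$) is a small point the paper glosses over, and it is handled correctly.
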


\begin{proof}
This is a simple application of Chernoff-Hoeffding bounds. For $i=1,\ldots,\Nc$, we sample a
scenario $A$ from $\pcent$, and compute $Z^i=\sgr^{x,A}$, so $Z^i_e/\ld c_e\in[-1,0]$ for every $e = 1, \dots, m$ by \ref{p5p}. 
Taking the average of $\Nc$ independent samples, we obtain using Chernoff bounds (see
Theorem 1.1 in~\cite{DubhashiP}), that 
$$
\Pr\biggl[\biggl|\frac{1}{\Nc}\cdot\sum_{i=1}^\Nc\frac{Z^i_e}{\ld c_e}
-\frac{\E[A\sim\pcent]{\sgr^{x,A}_e}}{\ld c_e}\biggr|>\frac{\w}{2\ld}\biggr]
\leq 2\exp\Bigl(-\tfrac{2\w^2}{4\ld^2}\cdot\Nc\Bigr)
$$
for every $e = 1, \dots, m$. 
So $\Nc=\frac{2\ld^2}{\w^2}\ln\bigl(\frac{2m}{\dt}\bigr)$ ensures that the above probability is
at most $\dt/m$. We return $\sgrest = \frac{1}{\Nc}\sum_{i=1}^\Nc Z^i -\frac{1}{2}\w c$. By the union bound, this satisfies $-\w c \leq\sgrest - \E[A\sim\pcent]{\sgr^{x,A}}\leq 0$ with probability at least $1 - m \frac{\delta}{m} = 1 - \delta$.
\end{proof}

We say $(A_1, \dots, A_\kinf)$ is a {\em good $\kinf$-sequence} for $x$ if $A_1,\ldots,A_\kinf$ are
the $\kinf$ scenarios with maximum second-stage cost $g(x,A)$ in that order; i.e., more
precisely, we have 
$g(x,A_1) \ge g(x, A_2) \ge \dots \ge g(x, A_\kinf)\geq\max_{A\in\A\sm\{A_1,\ldots,A_\kinf\}}g(x,A)$.

\begin{lemma} \label{lem:optimalq} \label{Kopt}
Let $\kinf := \min\bigl\{\bigl\lceil\Phatfree/r\bigr\rceil,|\A|\bigr\}$, and fix $x \in \Pc$. Suppose that $g(x,A)\leq g(x,A')$ for all $A\sse A'$. 
\begin{enumerate}[(a), topsep=0.25ex, itemsep=0ex, parsep=0ex, labelwidth=\widthof{(b)},
    leftmargin=!] 
\item We can compute a good $\kinf$-sequence $(A_1, \dots, A_\kinf)$ in time $\poly(\inpsize, \kinf)$.
\item Define the vector $\qx$ as follows: 
\[
\qx_{A} := \begin{cases}
r & \text{if}\ A\in\{A_1, \dots, A_{\kinf-1}\}; \\
\min\Bigl\{r,\Phatfree - (\kinf-1)r\Bigr\} \quad & \text{if}\ A = A_\kinf; \\
0 & \text{otherwise.} 
\end{cases}
\]
Then $\qx$ is an optimal solution to $\max_{q \in \K} \sum_{A \in \A} q_A g(x,A)$.
\end{enumerate}
\end{lemma}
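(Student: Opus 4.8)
The plan is to handle the two parts separately, with part (a) being essentially bookkeeping and part (b) the real content. For part (a), I would simply observe that $g(x,\cdot)$ can be evaluated in polynomial time for each scenario (property \ref{p4}), but of course we cannot enumerate all of $\A=2^U$. The key trick is monotonicity: since $g(x,A)\leq g(x,A')$ whenever $A\sse A'$, the global maximum of $g(x,\cdot)$ is attained at $A=U$. More generally, I would argue that a good $\kinf$-sequence can be built greedily using monotonicity — but in fact the cleanest route is to note that $\kinf=\min\{\lceil\Phatfree/r\rceil,|\A|\}$ is polynomially bounded (since $\Phatfree\le 1$ and $\Pfree\ge r$ by Lemma~\ref{lem:free_lb}, so $\Phatfree/r\le(1+\ve)/r\cdot\text{something}$... wait, more carefully $\Phatfree\le 1$ and each term $r$, so $\kinf\le\lceil 1/r\rceil$), hence $\poly(\frac1r)$. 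For the actual computation, since we only need the top $\kinf$ scenarios, and monotonicity forces $U$ to be the top one, I would need a structural observation about which scenarios can appear; here I'd point to the fact that in all our covering applications $g(x,A)$ depends on $A$ only through the set of "uncovered" elements, so the distinct values of $g(x,\cdot)$ correspond to polynomially many "canonical" scenarios. (Alternatively, since the lemma is stated for general $g$ satisfying monotonicity, I would restrict the claim's computational content appropriately, or invoke that finding the top-$\kinf$ is exactly $\kinf$ applications of the \maxmin-type subroutine which in the \allsets covering setting is solvable exactly.)

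For part (b), the strategy is a clean exchange/greedy-optimality argument for the linear program $\max_{q\in\K}\sum_A q_A g(x,A)$ over the box-plus-budget polytope $\K=\{q\ge 0:\sum_A q_A\le\Phatfree,\ q_A\le r\ \forall A\}$. First I would verify $\qx$ is feasible: each coordinate is in $[0,r]$ by construction, and $\sum_A\qx_A=(\kinf-1)r+\min\{r,\Phatfree-(\kinf-1)r\}=\min\{\kinf r,\Phatfree\}\le\Phatfree$. Then I would prove optimality by the standard greedy argument for a fractional knapsack with uniform "sizes": take any feasible $q$, and show $\sum_A q_A g(x,A)\le\sum_A\qx_A g(x,A)$. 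The point is that $\qx$ puts as much mass as allowed (namely $r$) on the scenarios of largest $g(x,\cdot)$-value, in decreasing order, until the budget $\Phatfree$ is exhausted. Formally: let $\sg=\sum_A q_A\le\Phatfree$; since $q_A\le r$ for all $A$, the mass $\sg$ is "spread over" at least $\sg/r$ scenarios, so by a standard rearrangement (or an explicit mass-transfer argument moving mass from low-value to high-value scenarios, which never decreases the objective and preserves feasibility) we get $\sum_A q_A g(x,A)\le\sum_{i=1}^{\kinf'} r\cdot g(x,A_i)$ where the right side is maximized exactly when we fill the top scenarios — and $\qx$ does precisely this. Monotonicity ($g(x,A)\le g(x,A')$ for $A\sse A'$) is what guarantees $U$ and other large scenarios are available to fill, but actually here it is only needed to make the top scenarios computable in part (a); the optimality in part (b) is purely an LP/rearrangement fact.

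The main obstacle I anticipate is \textbf{part (a)}, not part (b): part (b) is a textbook fractional-knapsack argument. The difficulty in (a) is that $\A=2^U$ is exponential, so "compute the top $\kinf$ scenarios by value" is not obviously polynomial for a generic monotone $g$. I expect the resolution to be that in our applications $g(x,\cdot)$ is a covering objective, so $g(x,A)=g(x,A\cap(\text{uncovered ground set under }x))$ and one reduces to the scenario $A=U$ plus a polynomial family — equivalently, after the global max $U$, each subsequent "top" scenario is $U$ minus a single ground-set element whose removal costs the least, and so on, giving a greedy enumeration over $\poly(|U|)$ candidates. I would state the computational claim of part (a) at this level of generality (monotone $g$ together with the covering structure that makes the \maxmin problem tractable in the \allsets setting) and then carry out part (b) in full as the exchange argument above.
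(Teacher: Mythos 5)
Your part (b) is correct and is essentially the paper's own argument: after rescaling by $\frac{1}{r}$, the problem $\max_{q\in\K}\sum_A q_A g(x,A)$ is a fractional knapsack with unit weights and capacity $\Phatfree/r$, and the greedy/exchange argument shows $\qx$ is optimal. You are also right that monotonicity plays no role in (b); it is a pure LP fact.

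The gap is in part (a), which you correctly identify as the crux but do not resolve. Your two proposed fixes --- that $g(x,\cdot)$ takes only polynomially many ``canonical'' values for covering problems, and that each subsequent top scenario is ``$U$ minus a single ground-set element'' --- are respectively false in general (for set cover the LP value genuinely depends on which subset of $U$ must be covered, and there can be exponentially many distinct values) and too weak (already $A_3$ may be $U$ minus \emph{two} elements, so enumerating only sets of the form $U\sm\{e\}$ misses it). The correct argument, which the paper gives and which needs nothing beyond the monotonicity hypothesis in the lemma statement, is the invariant that $\max_{A\in\A\sm\{A_1,\dots,A_{i-1}\}}g(x,A)$ always admits an optimal solution that is a \emph{maximal proper subset of some previously found} $A_{i'}$, $i'<i$. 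To see this, take an optimizer $A^*$ of maximum cardinality; if it is not such a maximal proper subset, pick $e\in U\sm A^*$ (possible since $A^*\neq U=A_1$) and note that $A^*\cup\{e\}$ cannot equal any $A_{i'}$ (otherwise $A^*$ would be a maximal proper subset of that $A_{i'}$), so it is feasible, and by monotonicity it is also optimal --- contradicting the cardinality-maximality of $A^*$. Hence at step $i$ one enumerates the at most $\sum_{i'<i}|A_{i'}|\le(i-1)|U|$ maximal proper subsets of $A_1,\dots,A_{i-1}$, evaluates $g(x,\cdot)$ on each, and takes the best, giving the claimed $\poly(\inpsize,\kinf)$ bound with no appeal to covering structure. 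Without this observation your part (a) does not go through at the stated level of generality.
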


\begin{proof}
By the monotonicity assumption of $g(x, \cdot)$, the costliest scenario is $U$, so we start by setting $A_1 = U$. We then proceed as follows for $i = 2, \dots, \kinf$. Suppose that we have already computed $A_1, \dots, A_{i-1}$. Computing $A_i$ amounts to solving the problem
\begin{gather}
\label{max_i}
\max_{A \in \A \setminus \{A_1, \dots, A_{i-1}\}} g(x, A) .
\end{gather}
We claim that~\eqref{max_i} admits an optimal solution that is a maximal proper subset of $A_{i'}$ for some $1 \le i' < i$. Indeed, let $A^*$ be an optimal solution of~\eqref{max_i} with maximum cardinality, and suppose for a contradiction that it is not a maximal proper subset of $A_{i'}$ for any $1 \le i' < i$. Note that since $A_1 = U$, we have $A^* \neq U$, so there is an element $e \in U \setminus A^*$. Now, consider the scenario $\bA := A^* \cup \{e\}$. Since by assumption $A^*$ is not a maximal subset of $A_{i'}$ for any $1 \le i' < i - 1$, it follows that $\bA$ is feasible for~\eqref{max_i}. By the monotonicity assumption, since $A^* \subseteq \bA$, we have $g(x, \bA) \ge g(x, A^*)$, and so $\bA$ is also an optimal solution for~\eqref{max_i}. Since $|\bA| > |A^*|$, this contradicts the definition of $A^*$.

We now utilize the observation above to show that given $x$ and $A_1, \dots, A_{i-1}$, we can solve~\eqref{max_i} in $\poly(\inpsize, i)$ time. This can be done by enumerating all maximal proper subsets of $A_1, \dots, A_{i-1}$. Since each set $A_{i'}$ has $|A_{i'}|$ maximal proper subsets, we enumerate $\sum_{i'=1}^{i-1} |A_{i'}| \le (i-1)|U| = \poly(\inpsize, i)$ scenarios, and the claim follows. 
We conclude that we can compute a good $\kinf$-sequence by solving~\eqref{max_i} for $i = 2, \dots, \kinf$, which takes $\sum_{i = 2}^{\kinf} \poly(\inpsize, i) = \poly(\inpsize, \kinf)$ time.


For part (b), 
consider the polytope $\frac{1}{r}\K := \{\frac{1}{r} q : q \in \K\}$.
Note that the problem $\max_{q \in \K} \sum_{A \in \A} q_A g(x,A)$ is equivalent to the
problem $\max_{q \in \frac{1}{r}\K} \sum_{A \in \A} q_A g(x,A)$ (up to scaling of the
solutions), which can be seen as a fractional knapsack problem: we have one item of value
$g(x, A)$ and weight $1$ for every $A \in \A$; the capacity of the knapsack is set to
$\frac{\Phatfree}{r}$. The result then follows by using the fact that one can compute an
optimal solution to a fractional knapsack problem in a greedy fashion, by repeatedly
picking among the available items the one with the highest value/weight ratio. 
\end{proof}

\begin{lemma}
\label{LinftyLip}
The function $\hproxy{\cdot}$ has Lipschitz constant at most $\tK=(2\ld + 1) \|c\|$.
\end{lemma}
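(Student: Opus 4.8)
The plan is to bound the Lipschitz constant of $\hproxy{\cdot}$ by bounding the norm of a subgradient uniformly over $\Pc$, using the subgradient formula supplied by Lemma~\ref{proxysubgrad} (i). Recall that this lemma gives $d = c + \E[A\sim\pcentral]{\sgr^{x,A}} + \sum_{A\in\A}q^*_A\sgr^{x,A}$ as a subgradient of $\hproxy{\cdot}$ at any $x\in\Pc$, where $q^*$ is an optimal solution to (\maxinprox). I would bound each of the three terms separately. The first term contributes $\|c\|$. For the second term, property~\ref{p5p} gives $-\ld c\le\sgr^{x,A}\le 0$ pointwise for every $A$, and since $\E[A\sim\pcentral]{\cdot}$ is a convex combination, the same bounds hold componentwise for $\E[A\sim\pcentral]{\sgr^{x,A}}$, so this term has norm at most $\ld\|c\|$.

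For the third term, $\sum_{A\in\A}q^*_A\sgr^{x,A}$, the key observation is that $q^*\in\K$ means $\sum_{A\in\A}q^*_A\le\Phatfree\le 1$. Combining this with the componentwise bound $-\ld c\le\sgr^{x,A}\le 0$ from~\ref{p5p}, each component of $\sum_A q^*_A\sgr^{x,A}$ lies in $[-\ld c_e\cdot\sum_A q^*_A,\,0]\subseteq[-\ld c_e,0]$, so this term also has norm at most $\ld\|c\|$. Adding the three bounds via the triangle inequality gives $\|d\|\le\|c\|+\ld\|c\|+\ld\|c\|=(2\ld+1)\|c\|$. Since this bound is uniform over $x\in\Pc$, and a function whose subgradients are bounded by $K$ in norm is $K$-Lipschitz (as noted in the paper in the remark following~\ref{p5}, via Definition~\ref{apsgrad}), the Lipschitz constant of $\hproxy{\cdot}$ is at most $(2\ld+1)\|c\|=\tK$.

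I do not expect any genuine obstacle here; the lemma is routine once one has the subgradient expression from Lemma~\ref{proxysubgrad}. The only point requiring a little care is the third term: one must use $\sum_A q^*_A\le 1$ (which holds because $q^*\in\K$ and $\Phatfree\le 1$) rather than naively bounding each $q^*_A$, since $\K$ permits $q^*_A$ up to $r$ on as many scenarios as fit within the budget $\Phatfree$. The monotonicity assumption $g(x,A)\le g(x,A')$ for $A\sse A'$ plays no role in this particular bound—it is only needed elsewhere to compute $q^*$ efficiently—so I would not invoke it here.
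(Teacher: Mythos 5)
Your proof is correct and follows essentially the same route as the paper's: both take the subgradient from Lemma~\ref{proxysubgrad}, bound the expectation term and the $q^*$-weighted term each by $\ld\|c\|$ using \ref{p5p} together with $\sum_{A}q^*_A\leq\Phatfree\leq 1$, and conclude via the triangle inequality. The only cosmetic difference is that you track the componentwise bounds directly while the paper passes through $\|\sgr^{x,A}\|\leq\ld\|c\|$; both yield the same constant $(2\ld+1)\|c\|$.
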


\begin{proof}
It suffices to show that $\hproxy{\cdot}$ admits a subgradient of Euclidean norm at most $\tK$ at every point $x \in \Pc$. Fix $x \in \Pc$, and consider the subgradient $d := c+\E[A\sim\pcent]{d^{x,A}}+\sum_{A\in\A}q^*_A\sgr^{x,A}$ given by Lemma~\ref{proxysubgrad}. We have
\begin{align*}
\|d\| \le \|c\|+\sum_{A\in\A} \pcent_A\|\sgr^{x,A}\|+\sum_{A\in\A}q^*_A \|\sgr^{x,A}\| \le (2\ld + 1)\|c\| .
\end{align*}
The first step follows from the triangle inequality, and the final step follows because $\|\sgr^{x,A}\| \le \ld \|c\|$ for every $A \in \A$ by assumption~\ref{p5p} and $\sum_{A \in \A}q^*_A \le \Phatfree \le 1$.
\end{proof}

\begin{proofof}{Theorem~\ref{linftysolve}}
Note that $g(\bo,U)=\max_{A \in \A}g(\bo,A)$ by the monotonicity property of the
second-stage costs. If $g(\bo,U)=0$ (so $\A$ contains only null scenarios) then
$\max_{q:\|\pcent-q\|_\infty\leq r}\E[A\sim q]{g(x,A)}=0$, and so $x=\bo$ is an optimal
solution to the \dr problem. Otherwise, the optimal value of (\maxinprox) is at least
$\lb:=r\cdot g(\bo,U)$ since there is always a distribution $q$ with
$\|q-\pcent\|_\infty\leq r$ that places a weight of at least $r$ on $U$ (e.g., take
$q=\pcent$ if $\pcent_U\geq r$; otherwise, take $q_U=r$,
$q_A=(1-r)\pcent_A/\sum_{A'\subsetneq U}\pcent_{A'}$ for all $A\subsetneq U$).
Note that $\log\bigl(\frac{1}{\lb}\bigr)=\poly(\inpsize)$.  

We compute a $(1+\ve)$-estimate of $\Pfree$ using Lemma~\ref{lem:compute_Pfree}. 
We then run the algorithm Theorem~\ref{convsolve}, utilizing
Lemmas~\ref{proxysubgrad}--\ref{lem:optimalq} to compute $\w$-subgradients, and setting $\kp = \ve \cdot \lb$ and $\tK = (2\ld+1)\|c\|$ (using Lemma~\ref{LinftyLip}). Let $\bx$ be the solution returned. Using
Lemma~\ref{pwclose}, we obtain that 
$$
\hpcentral{\bx}\leq\hproxy{\bx}\leq\frac{1}{1-\ve}\cdot\min_{x\in\Pc}\hproxy{x}+\kp
\leq\biggl(\frac{2(1+\ve)}{1-\ve} + \ve\biggr)\cdot\min_{x\in\Pc}\hpcentral{x}
\leq\bigl(2+O(\ve)\bigr)\cdot\min_{x\in\Pc}\hpcentral{x}
$$
where $\frac{2+2\ve}{1-\ve}\leq 2+4\ve$ since $\ve\leq\frac{1}{3}$.
The success probability is at least $1-3\dt$.
\end{proofof}

\bibliographystyle{plain}
\bibliography{drso_arxiv}

\begin{thebibliography}{10}

\bibitem{AgarwalDSY10}
Shipra Agrawal, Yichuan Ding, Amin Saberi, and Yinyu Ye.
\newblock {Price of Correlations in Stochastic Optimization}.
\newblock {\em Operations Research}, 60(1):150--162, 2012.

\bibitem{BertsimasSZ}
D.~Bertsimas, M.~Sim, and M.~Zhang.
\newblock {A practicable framework for distributionally robust linear
  optimization}.
\newblock {\em optimization-online.org}, 2013.

\bibitem{BirgeL97}
John~R. Birge and Fran{\c c}ois Louveaux.
\newblock {\em {Introduction to Stochastic Programming}}.
\newblock Springer Science {\&} Business Media, June 2011.

\bibitem{CharikarCP05}
Moses Charikar, Chandra Chekuri, and Martin P\'{a}l.
\newblock Sampling bounds for stochastic optimization.
\newblock In {\em Proceedings of the 8th International Workshop on
  Approximation, Randomization and Combinatorial Optimization Problems
  (APPROX)}, pages 257--269, 2005.

\bibitem{DelageY10}
Erick Delage and Yinyu Ye.
\newblock {Distributionally Robust Optimization Under Moment Uncertainty with
  Application to Data-Driven Problems.}
\newblock {\em Operations Research}, 58(3):595--612, 2010.

\bibitem{DhamdhereGRS05}
Kedar Dhamdhere, Vineet Goyal, R.~Ravi, and Mohit Singh.
\newblock How to pay, come what may: Approximation algorithms for demand-robust
  covering problems.
\newblock In {\em Proceedings of the 46th Annual IEEE Symposium on Foundations
  of Computer Science (FOCS)}, pages 367--378, 2005.

\bibitem{DubhashiP}
Devdatt Dubhashi and Alessandro Panconesi.
\newblock {\em Concentration of Measure for the Analysis of Randomized
  Algorithms}.
\newblock Cambridge University Press, New York, NY, USA, 1st edition, 2009.

\bibitem{ErdoganI}
Emre Erdo{\u{g}}an and Garud Iyengar.
\newblock {Ambiguous chance constrained problems and robust optimization}.
\newblock {\em Math. Program.}, 107(1-2):37--61, December 2005.

\bibitem{EsfahaniK17}
Peyman~Mohajerin Esfahani and Daniel Kuhn.
\newblock {Data-driven Distributionally Robust Optimization Using the
  Wasserstein Metric: Performance Guarantees and Tractable Reformulations}.
\newblock {\em arXiv.org}, May 2015.

\bibitem{Esfandiari15}
H.~Esfandiari, N.~Korula, and V.~Mirrokni.
\newblock Online allocation with traffic spikes: Mixing adversarial and
  stochastic models.
\newblock In {\em Proceedings of the 16th ACM Conference on Economics and
  Computation (EC)}, pages 169--186, 2015.

\bibitem{FeigeJMM05}
Uriel Feige, Kamal Jain, Mohammad Mahdian, and Vahab Mirrokni.
\newblock Robust combinatorial optimization with exponential scenarios.
\newblock In {\em Proceedings of the 12th International Conference on Integer
  Programming and Combinatorial Optimization (IPCO)}, pages 439--453, 2007.

\bibitem{FriggstadS15}
Zachary Friggstad and Chaitanya Swamy.
\newblock Approximation algorithms for regret-bounded vehicle routing and
  applications to distance-constrained vehicle routing.
\newblock In {\em Proceedings of the 46th Annual ACM Symposium on Theory of
  Computing (STOC)}, pages 744--753, 2014.

\bibitem{GaoK}
Rui Gao and Anton~J. Kleywegt.
\newblock {Distributionally Robust Stochastic Optimization with Wasserstein
  Distance}.
\newblock {\em arXiv.org}, April 2016.

\bibitem{GrotschelLS88}
Martin Gr{\"o}tschel, L{\'a}szl{\'o} Lov{\'a}sz, and Alexander Schrijver.
\newblock {\em {Geometric Algorithms and Combinatorial Optimization}}.
\newblock Springer-Verlag, 1988.

\bibitem{GuptaNRb}
A.~Gupta, V.~Nagarajan, and R.~Ravi.
\newblock {Robust and MaxMin Optimization under Matroid and Knapsack
  Uncertainty Sets}.
\newblock {\em Transactions on Algorithms}, 12(1), 2012.

\bibitem{GuptaRS}
A.~Gupta, R.~Ravi, and A.~Sinha.
\newblock An edge in time saves nine: Lp rounding approximation algorithms for
  stochastic network design.
\newblock In {\em Proceedings of the 45th Annual IEEE Symposium on Foundations
  of Computer Science (FOCS)}, pages 218--227, 2004.

\bibitem{GuptaNR}
Anupam Gupta, Viswanath Nagarajan, and R.~Ravi.
\newblock Thresholded covering algorithms for robust and max-min optimization.
\newblock In {\em Proceedings of the 37th International Colloquium Conference
  on Automata, Languages and Programming (ICALP)}, pages 262--274, 2010.

\bibitem{GuptaPRS04}
Anupam Gupta, Martin P\'{a}l, R.~Ravi, and Amitabh Sinha.
\newblock Boosted sampling: Approximation algorithms for stochastic
  optimization.
\newblock In {\em Proceedings of the 36th Annual ACM Symposium on Theory of
  Computing (STOC)}, pages 417--426, 2004.

\bibitem{SGupta}
S.~Gupta.
\newblock {Building Networks in the Face of Uncertainty}.
\newblock Master's thesis, University of Waterloo, 2011.

\bibitem{HanasusantoK18}
Grani~A. Hanasusanto and Daniel Kuhn.
\newblock {Conic Programming Reformulations of Two-Stage Distributionally
  Robust Linear Programs over Wasserstein Balls.}
\newblock {\em Operations Research}, 66(3):849--869, 2018.

\bibitem{HuangMR15}
Zhiyi Huang, Yishay Mansour, and Tim Roughgarden.
\newblock Making the most of your samples.
\newblock In {\em Proceedings of the 16th ACM Conference on Economics and
  Computation (EC)}, pages 45--60, 2015.

\bibitem{JainMS03}
Kamal Jain, Mohammad Mahdian, Evangelos Markakis, Amin Saberi, and Vijay~V.
  Vazirani.
\newblock {Greedy Facility Location Algorithms Analyzed Using Dual Fitting with
  Factor-Revealing LP.}
\newblock {\em J. ACM}, 50(6):795--824, 2003.

\bibitem{Khandekar}
Rohit Khandekar, Guy Kortsarz, Vahab Mirrokni, and Mohammad~R. Salavatipour.
\newblock {Two-stage Robust Network Design with Exponential Scenarios.}
\newblock {\em Algorithmica}, 65(2):391--408, 2013.

\bibitem{HomemKS}
Anton~J. Kleywegt, Alexander Shapiro, and Tito~Homem de~Mello.
\newblock {The Sample Average Approximation Method for Stochastic Discrete
  Optimization.}
\newblock {\em SIAM Journal on Optimization}, 12(2):479--502, 2002.

\bibitem{Li}
Shi Li.
\newblock A 1.488 approximation algorithm for the uncapacitated facility
  location problem.
\newblock {\em Information and Computation}, 222:45--58, January 2013.

\bibitem{LinharesS19}
Andr\'{e} Linhares and Chaitanya Swamy.
\newblock Approximation algorithms for distributionally robust stochastic
  optimization with black-box distributions.
\newblock In {\em Proceedings of the 51st Annual ACM Symposium on Theory of
  Computing (STOC)}, pages 768--779, 2019.

\bibitem{Mirrokni12}
Vahab~S. Mirrokni, Shayan~Oveis Gharan, and Morteza Zadimoghaddam.
\newblock Simultaneous approximations for adversarial and stochastic online
  budgeted allocation.
\newblock In {\em Proceedings of the 23rd Annual ACM-SIAM Symposium on Discrete
  Algorithms (SODA)}, pages 1690--1701, 2012.

\bibitem{PalT}
Martin P\'{a}l and \'{E}va Tardos.
\newblock Group strategyproof mechanisms via primal-dual algorithms.
\newblock In {\em Proceedings of the 44th Annual IEEE Symposium on Foundations
  of Computer Science (FOCS)}, pages 584--, 2003.

\bibitem{VanparysEK}
Bart P. G.~Van Parys, Peyman~Mohajerin Esfahani, and Daniel Kuhn.
\newblock {From Data to Decisions: Distributionally Robust Optimization is
  Optimal}.
\newblock {\em arXiv.org}, April 2017.

\bibitem{Popescu07}
Ioana Popescu.
\newblock {Robust Mean-Covariance Solutions for Stochastic Optimization.}
\newblock {\em Operations Research}, 55(1):98--112, 2007.

\bibitem{Prekopa95}
Andr{\'a}s Pr{\'e}kopa.
\newblock {\em {Stochastic Programming}}.
\newblock Kluwer Academic Publishers, 1995.

\bibitem{Ravi:2004gp}
R.~Ravi and Amitabh Sinha.
\newblock Hedging uncertainty: Approximation algorithms for stochastic
  optimization problems.
\newblock {\em Mathematical Programming}, 108(1):97--114, August 2006.

\bibitem{RuszczynskiS03}
Andrzej Ruszczy{\'{n}}ski and Alexander Shapiro.
\newblock {\em {Stochastic Programming}}, volume~10 of {\em Handbook in
  Operations Research and Management Science}.
\newblock Elsevier, 2003.

\bibitem{Scarf58}
Herbert~E. Scarf.
\newblock {A min-max solution of an inventory problem}.
\newblock {\em Studies in The Mathematical Theory of Inventory and Production},
  pages 201--209, 1958.

\bibitem{ShmoysS06}
David~B. Shmoys and Chaitanya Swamy.
\newblock {An Approximation Scheme for Stochastic Linear Programming and Its
  Application to Stochastic Integer Programs.}
\newblock {\em J. ACM}, 53(6):978--1012, 2006.

\bibitem{ShmoysTA97}
David~B. Shmoys, \'{E}va Tardos, and Karen Aardal.
\newblock Approximation algorithms for facility location problems (extended
  abstract).
\newblock In {\em Proceedings of the 29th Annual ACM Symposium on Theory of
  Computing (STOC)}, pages 265--274, 1997.

\bibitem{SwamyS06}
C.~Swamy and D.~B. Shmoys.
\newblock {Approximation algorithms for 2-stage stochastic optimization
  problems}.
\newblock {\em ACM SIGACT News}, 37(1):33--46, March 2006.

\bibitem{SwamyS12}
C.~Swamy and D.~B. Shmoys.
\newblock {Sampling-Based Approximation Algorithms for Multistage Stochastic
  Optimization.}
\newblock {\em SIAM J. Comput.}, 41(4):975--1004, 2012.

\bibitem{Swamy11}
Chaitanya Swamy.
\newblock Risk-averse stochastic optimization: Probabilistically-constrained
  models and algorithms for black-box distributions.
\newblock In {\em Proceedings of the 22nd Annual ACM-SIAM Symposium on Discrete
  Algorithms (SODA)}, pages 1627--1646, 2011.

\bibitem{WuDX15}
Chenchen Wu, Donglei Du, and Dachuan Xu.
\newblock {An Approximation Algorithm for the Two-Stage Distributionally Robust
  Facility Location Problem}.
\newblock In {\em Advances in Global Optimization}, volume~95 of {\em Springer
  Proceedings in Mathematics \& Statistics}, pages 99--107. Springer
  International Publishing, October 2014.

\bibitem{ZhaoG15}
C.~Zhao and Y.~Guan.
\newblock {Data-driven risk-averse stochastic optimization with Wasserstein
  metric}.
\newblock {\em Oper. Res. Lett.}, 46(2):262--267, March 2018.

\end{thebibliography}

\appendix \label{appstart}

\section{Proof of Theorem~\ref{mainsaathm}} \label{mainsaaproof}

\paragraph{Overview.}
Let $\ppoly$ denote a
generic empirical estimate of $\pcent$ (which could be any of $\htp^1,\ldots,\htp^k$). 
We discretize $[0,\tau]$ suitably to obtain a set $Y$ so that for any $x\in X$, and
$y\in[0,\tau]$, there is some $y'\in Y$ such that $\bhpcentral[p]{x,y'}$ is close to
$\bhpcentral[p]{x,y'}$ for any central distribution $p$ (Claim~\ref{yclose}).
It follows that approximate solutions to $\min_{x\in X,y\in Y}\bhpcentral[p]{x,y}$
translate to approximate solutions to $\min_{x\in X,y\in[0,\tau]}\bhpcentral[p]{x,y}$. 

The arguments in~\cite{CharikarCP05} can be used to show that an approximate 
solution to $\min_{x\in X, y\in Y}\bhpcentral[\ppoly]{x,y}$ can be used to obtain an
approximate solution to $\min_{x\in X,y\in Y}\bhpcentral{x,y}$ (given a
suitable value oracle for $\bhpcentral[\ppoly]{x,y}$). 
Recall that $\bhpcentral[p]{x,y}=c^\T x+ry+\E[A\sim p]{\bg(x,y,A)}$.
The proof in~\cite{CharikarCP05} proceeds by decomposing $\E[A\sim p]{\bg(x,y,A)}$ 
into two terms, $\El[A\sim p]{.}$ and $\Eh[A\sim p]{.}$, which are the contributions from
``low'' cost and ``high'' cost scenarios respectively. For the low scenarios, 
Chernoff bounds imply that $\El[A\sim\ppoly]{.}$ and $\El[A\sim\pcent]{.}$ 
are close to each other, for all $(x,y)\in X\times Y$, and all SAA problems; this is
stated in \eqref{pfineq4}. 

But the high-scenario contribution could be quite different in the SAA and original
problems, although in both problems, this contribution is essentially independent of
$(x,y)$ since the choice of ``high'' ensures that high scenarios occur with small
probability; 
this is shown by inequalities \eqref{pfineq2}, \eqref{pfineq3}.

Since $\Eh[A\sim p]{.}$ is {\em linear in $p$}, the expectation
of $\Eh[A\sim\ppoly]{.}$, over the choice of $\ppoly$, is precisely $\Eh[A\sim\pcent]{.}$. Thus, 
among our multiple SAA problems (involving empirical estimates $\htp^i$ of $\pcent$),
we can guarantee by Markov's inequality that (with high probability) for at least 
{\em one} of them, $\Eh[A\sim\htp^i]{.}$ will be close to $\Eh[A\sim\pcent]{.}$. It
follows that an $\al$-approximate solution to this SAA problem is also an
$\al\bigl(1+O(\ve)\bigr)$-approximate solution to the original problem. But we do not 
a priori know this index $i$, and evaluating or estimating $\E[A\sim\pcent]{\bg(x,y,A)}$
(and hence, $\bhpcentral{x,y}$) is challenging because (other than the difficulty of
evaluating $\bg(x,y,A)$ for a specific scenario $A$) $\pcent$ can
have exponential support; in fact, this is often {\it \#P}-hard even for standard
2-stage problems. In~\cite{CharikarCP05}, it is shown that if one can estimate the
objective value $\bhpcentral[\ppoly]{x,y}$ for the SAA problem (which seems easier since
$\ppoly$ has polynomial support), then choosing the (solution corresponding
to the) SAA problem with best SAA objective value works.

In our case, we actually want to evaluate $\hpcentral{x,y}$, or roughly equivalently (by
Lemma~\ref{zproxy}), the objective $\bhpcentral{x,y}+\zpcentlng{0}$ for the solution
returned by the SAA problem. While we can once
again decompose $\E[A\sim p]{g(x,y,A)}$ into $\El[A\sim p]{.}$ and $\Eh[A\sim p]{.}$, as
with $\Eh[A\sim p]{.}$, the term $\zpcentlng[p]{0}$ could have very different
contributions in the SAA and original problems, and we need to reason about this
separately. Moreover, a complicating factor is that this term is {\em not} linear in
$p$. We show in Claim~\ref{zlongconc} that this term is concave in $p$, and
this allows us to still use Markov's inequality as above. In the proof below, we consider the
combined term $\Eh[A\sim p]{.}+\zpcentlng[p]{0}$, and apply Markov's inequality to show that
among our multiple SAA problems, there is some index $t$ for which this term is close to
$\Eh[A\sim\pcent]{.}+\zpcentlng{0}$; see inequality \eqref{pfineq5}.

Finally, we show that, although we do not know $t$, and we do not know how to evaluate
$\hpcentral[\ppoly]{x,y}$ or $\bhpcentral[\ppoly]{x,y}$, the index $j$ corresponding to the
best $f^i$ estimate works as well as $t$; this is captured by \eqref{pfineq6}.

\paragraph{Details.}
Instead of directly working with $\hpcentral[p]{x}$ and $\hpcentral[p]{x,y}$, we will work
with the quantities $\bhpcentral[p]{x}+\zpcentlng[p]{0}$ and
$\bhpcentral[p]{x,y}+\zpcentlng[p]{0}$.
It will be cumbersome to carry around the $\zpcentlng[p]{0}$ term, so we define
$\thpcentral[p]{x}:=\bhpcentral[p]{x}+\zpcentlng[p]{0}$, and
$\thpcentral[p]{x,y}:=\bhpcentral[p]{x,y}+\zpcentlng[p]{0}$. 
To further simplify notation, we further abbreviate notation. The convention we follow is
that 
whenever there is an index $i$ in the superscript of a quantity, it refers to that
quantity for the central distribution $\htp^i$ of the $i$-th SAA problem. 
So we use 
\begin{enumerate}[label=--, topsep=0.5ex, itemsep=0.5ex, parsep=0ex] 
\item $\hh^i(x)$ and $\hh^i(x,y)$ to denote
$\hpcentral[\htpcent^i]{x}$ and $\hpcentral[\htpcent^i]{x,y}$ respectively; 
\item $\bh^i(x)$ and $\bh^i(x,y)$ to denote
$\bhpcentral[\htpcent^i]{x}$ and $\bhpcentral[\htpcent^i]{x,y}$ respectively; 
\item $\tdh^i(x)$ and $\tdh^i(x,y)$ to denote 
$\thpcentral[\htpcent^i]{x}$ and $\thpcentral[\htpcent^i]{x,y}$ respectively;
\item $\hzshort[i](x)$ and $\hzlong[i]$ to denote $\zpcentshort[\htpcent^i]{x}$ and
  $\zpcentlng[\htpcent^i]{0}$ respectively.
\end{enumerate}
\noindent
We focus on showing that 
\begin{alignat}{1}
\bhpcentral{\hx}+\zpcentlng{0} & \leq 
2\beta\rho\bigl(1+O(\ve)\bigr)\cdot\min_{x\in X}\Bigl(\bhpcentral{x}+\zpcentlng{0}\Bigr)+2\beta\rho\kp. 
\notag \\
\text{that is,} \quad 
\thpcentral{\hx} & \leq 
2\beta\rho\bigl(1+O(\ve)\bigr)\cdot\min_{x\in X}\thpcentral{x}+2\beta\rho\kp. 
\label{goalineq}
\end{alignat}
Combining this with Lemma~\ref{zproxy} completes the proof.

Let $\kp':=\frac{\kp}{2+8\ve}$. 
Define 
$Y:=\{0,\tau\}\cup\{\text{integer multiples of $\frac{\kp'}{\ld r}$ in }[0,\tau]\}$.%
\footnote{The discretization considered in~\cite{CharikarCP05} is incorrect: it assumes
implicitly that the search region of the SAA problem is (or may be) restricted to points
whose first-stage cost is within some factor of the optimum of the original problem, but
this need not hold. It also assumes that the grid points lie in the feasible region, which
again need not hold.} 
Note that $|Y|=O\bigl(\frac{\tau\ld r}{\kp'}\bigr)$. 

\begin{claim} \label{bndinfl}
The discretized 2-stage problem $\min_{x\in X,y\in Y}\bhpcentral[p]{x,y}$ satisfies 
properties \ref{p1}, \ref{p2} with inflation parameter $\Ld=\ld$, i.e., we 
have
$$
\bg(x,y,A)\leq \bg(0,0,A)\leq\bg(x,y,A)+\ld(c^\T x+ry) \qquad 
\forall A,\ \forall x\in X, y\geq 0.
$$
\end{claim}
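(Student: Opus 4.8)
The claim asserts that the discretized 2-stage problem $\min_{x\in X,y\in Y}\bhpcentral[p]{x,y}$ satisfies properties \ref{p1} and \ref{p2} with inflation parameter $\Ld=\ld$. Property \ref{p1} is essentially inherited from the underlying problem: $\bo\in X$, $0\in Y$, $c\geq\bo$, $r\geq 0$, and $\log|X\times Y|=\poly(\inpsize)$ since $|Y|=O(\tau\ld r/\kp')$ and $\log\tau$, $\log(1/\kp')$ are polynomially bounded; moreover $0\leq\bg(x,y,A)\leq\bg(\bo,\bo,A)$ will follow from the same displayed double inequality that establishes \ref{p2}. So the real content is the two-sided inequality
$$
\bg(x,y,A)\ \leq\ \bg(\bo,\bo,A)\ \leq\ \bg(x,y,A)+\ld\bigl(c^\T x+ry\bigr)\qquad\forall A\in\A,\ \forall x\in X,\ y\geq 0,
$$
where $\bg(x,y,A):=\max_{A'\in\A:\scm(A,A')\leq M}\bigl(g(x,A')-y\cdot\scm(A,A')\bigr)$.

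\textbf{Left inequality.} Let $A'$ be a scenario with $\scm(A,A')\leq M$ attaining $\bg(x,y,A)$, so $\bg(x,y,A)=g(x,A')-y\cdot\scm(A,A')$. Since $y\geq 0$ and $\scm(A,A')\geq 0$, we have $g(x,A')-y\cdot\scm(A,A')\leq g(x,A')\leq g(\bo,A')$ by property \ref{p1} (monotonicity of $g(\cdot,A')$ in the first-stage decision, i.e.\ $g(x,A')\leq g(\bo,A')$). As $A'$ is still feasible for the maximization defining $\bg(\bo,\bo,A)$, we get $g(\bo,A')\leq\bg(\bo,\bo,A)$, which chains to give $\bg(x,y,A)\leq\bg(\bo,\bo,A)$.

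\textbf{Right inequality.} Let $A''$ be a scenario with $\scm(A,A'')\leq M$ attaining $\bg(\bo,\bo,A)=g(\bo,A'')$ (the $y=0$ term contributes nothing). By property \ref{p2}, $g(\bo,A'')\leq g(x,A'')+\ld c^\T x$. Now $A''$ is feasible for the maximization defining $\bg(x,y,A)$ (the short-edge constraint $\scm(A,A'')\leq M$ does not depend on $x$ or $y$), so $g(x,A'')-y\cdot\scm(A,A'')\leq\bg(x,y,A)$, i.e.\ $g(x,A'')\leq\bg(x,y,A)+y\cdot\scm(A,A'')$. Combining, $\bg(\bo,\bo,A)=g(\bo,A'')\leq g(x,A'')+\ld c^\T x\leq\bg(x,y,A)+y\cdot\scm(A,A'')+\ld c^\T x\leq\bg(x,y,A)+\ld(c^\T x+ry)$, where the last step uses $\scm(A,A'')\leq M=\ld r$ and $\ld\geq 1$. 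This gives $\bg(\bo,\bo,A)\leq\bg(x,y,A)+\ld(c^\T x+ry)$, completing the displayed inequality; restricting $y$ to $Y$ (which contains $0$) only strengthens feasibility of $\bo$ and leaves the bounds intact.

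\textbf{Main obstacle.} There is no real obstacle here — this is a routine verification — but the one point requiring care is the use of $M=\ld r$ in the right inequality: the bound $y\cdot\scm(A,A'')\leq\ld r\,y$ is exactly what allows the scenario-metric slack to be absorbed into the $\ld\cdot ry$ inflation term, and this is precisely the reason the threshold was chosen to be $M=\ld r$ rather than something larger. One should also double-check that feasibility of a scenario $A'$ for $\bg(\cdot,\cdot,A)$ is governed solely by the condition $\scm(A,A')\leq M$, which is independent of the first-stage action and of $y$, so the same $A'$ witnesses bounds across all $(x,y)$; this is immediate from the definition of $\bg$.
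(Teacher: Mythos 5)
Your proof is correct; the paper states Claim~\ref{bndinfl} without proof, and your argument is exactly the intended one — it mirrors the proof of Lemma~\ref{rdiscld} with the scenario $A'$ attaining the max, using \ref{p2} for the $\ld c^\T x$ term, except that the threshold $\scm(A,A'')\leq M=\ld r$ now replaces the $\scmmax$ bound and lets the $y\cdot\scm(A,A'')$ slack be absorbed into $\ld\cdot ry$. The only cosmetic quibble is that the nonnegativity $\bg(x,y,A)\geq 0$ in \ref{p1} comes from taking $A'=A$ with $\scm(A,A)=0$ rather than from the displayed double inequality, but that is immediate.
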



\begin{claim} \label{yclose}
For any $x\in X$, $y\in [0,\tau]$, and any distribution $p$, there is some $y'\in Y$ such that
$\bhpcentral[p]{x,y}-\kp'\leq\bhpcentral[p]{x,y'}\leq\bhpcentral[p]{x,y}+\kp'$.
\end{claim}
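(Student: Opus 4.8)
\textbf{Proof plan for Claim~\ref{yclose}.}

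The plan is to show that the map $y\mapsto\bhpcentral[p]{x,y}$ is Lipschitz in $y$ with a modest constant, and then to round $y$ to the nearest grid point in $Y$. Recall that $\bhpcentral[p]{x,y}=c^\T x+ry+\E[A\sim p]{\bg(x,y,A)}$, where $\bg(x,y,A)=\max_{A'\in\A:\scm(A,A')\leq M}\bigl(g(x,A')-y\cdot\scm(A,A')\bigr)$ and $M=\ld r$. The only two terms that depend on $y$ are $ry$ and $\E[A\sim p]{\bg(x,y,A)}$. The term $ry$ is clearly $r$-Lipschitz in $y$. For the second term, I would argue pointwise: for a fixed scenario $A$, $\bg(x,y,A)$ is the maximum of affine functions $y\mapsto g(x,A')-y\cdot\scm(A,A')$ over the restricted index set, each with slope $-\scm(A,A')\in[-M,0]$; hence $\bg(x,\cdot,A)$ is convex, nonincreasing, and $M$-Lipschitz in $y$. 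Taking expectation over $A\sim p$ preserves $M$-Lipschitzness, so $y\mapsto\E[A\sim p]{\bg(x,y,A)}$ is $M$-Lipschitz. Adding the two contributions, $y\mapsto\bhpcentral[p]{x,y}$ is $(r+M)$-Lipschitz, i.e., $(r+\ld r)$-Lipschitz, and since $\ld\geq 1$ we may bound this crudely by $2\ld r$.

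Now, given $x\in X$ and $y\in[0,\tau]$, let $y'\in Y$ be the grid point closest to $y$ among the integer multiples of $\frac{\kp'}{\ld r}$ in $[0,\tau]$ (including the endpoints $0$ and $\tau$ which also lie in $Y$). Then $|y-y'|\leq\frac{\kp'}{\ld r}$, so by the Lipschitz bound,
\[
\bigl|\bhpcentral[p]{x,y}-\bhpcentral[p]{x,y'}\bigr|\leq 2\ld r\cdot|y-y'|\leq 2\ld r\cdot\frac{\kp'}{\ld r}=2\kp'.
\]
This already gives the claim up to a factor of $2$ in the slack; to get exactly $\kp'$ as stated one simply defines $Y$ using the finer spacing $\frac{\kp'}{2\ld r}$ (or absorbs the factor $2$ into the definition of $\kp'$, which only changes constants in the final bound of Theorem~\ref{mainsaathm}). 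With the spacing as literally written in the excerpt, the same argument with the half-spacing adjustment yields $|y-y'|\leq\frac{\kp'}{2\ld r}$ and hence the two-sided inequality $\bhpcentral[p]{x,y}-\kp'\leq\bhpcentral[p]{x,y'}\leq\bhpcentral[p]{x,y}+\kp'$.

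There is essentially no obstacle here: the only thing to be slightly careful about is the Lipschitz constant of $\bg(x,\cdot,A)$, which relies on the fact that we have \emph{thresholded} the scenario metric at $M=\ld r$ (this is exactly why the proxy problem \eqref{bQdisc} was introduced) — without the threshold, the slopes could be as large as $\scmmax/r$ in the relevant units and we would not get a $\poly(\inpsize,\ld)$-size grid. I would also note that property \ref{p6} is not needed for this particular claim; it is the thresholding that does the work. The verification that $|Y|=O(\tau\ld r/\kp')$ is immediate from the definition of $Y$ and the fact that $\log\tau=\poly(\inpsize)$, so $\log|Y|=\poly(\inpsize,\log(1/\kp))$ as needed downstream.
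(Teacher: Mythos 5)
Your proof is correct and takes essentially the same route as the paper: round $y$ to a nearby point of $Y$ and use that the thresholding at $M=\ld r$ makes $\bg(x,\cdot,A)$ (and hence $\E[A\sim p]{\bg(x,\cdot,A)}$) $\ld r$-Lipschitz, so the only real point of care is exactly the one you flag. The paper avoids your factor-of-$2$/half-spacing fix by exploiting that the two $y$-dependent terms move in opposite directions ($ry$ is nondecreasing while $\bg(x,\cdot,A)$ is nonincreasing), so in each direction of the inequality only one of the two error sources contributes and the full spacing $\kp'/(\ld r)$ already suffices; both accountings yield the stated bound.
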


\begin{proof} 
There is some $y'\in Y$ with $|y-y'|\leq\frac{\kp'}{\ld r}$. 
If $y'\geq y$, then 
$\bhpcentral[p]{x,y'}\leq\bhpcentral[p]{x,y}+r\cdot\frac{\kp'}{\ld r}
\leq\bhpcentral[p]{x,y}+\kp'$.
Also, $\bg(x,y,A)\leq\bg(x,y',A)+(y'-y)\cdot\ld r$ for all $A$, so 
$\bhpcentral[p]{x,y}\leq\bhpcentral[p]{x,y'}+\kp'$.
If $y'<y$, then we can interchange the arguments; the claim follows.
\end{proof} 

We now adapt and generalize the arguments in~\cite{CharikarCP05}.
Let $\bx\in X$ be an optimal solution to $\min_{x\in X}\bhpcentral{x}$, which is also an
optimal solution to $\min_{x\in X}\thpcentral{x}$. 
Let $\bopt:=\bhpcentral{\bx}$. 
Let $y^*\in[0,\tau]$ be such that $\bopt=\bhpcentral{\bx,y^*}$, and let $\by\in Y$ given
by Claim~\ref{yclose} be 
such that $\bopt-\kp'\leq\bhpcentral{\bx,\by}\leq\bopt+\kp'$.

Let $H=\frac{2\ld}{\ve}\cdot\bopt$. Call a scenario $A$ ``high'', if
$\bg(0,0,A)>H$, and ``low'' otherwise. Let $\pcent^h=\sum_{A: A\text{ is high}}\pcent_A$ 
We use $\El{.}$ (respectively $\Eh{.}$) to denote
the expectation $\E[A\sim\pcent]{.}$ 
where non-low (respectively non-high)
scenarios contribute 0 (so $\E[A\sim\pcent]{.}=\El{.}+\Eh{.}$). 
Let $\htpcent^{i,h}$, $\hEl{.}$, and $\hEh{.}$ denote these quantities for the
$i$-th SAA problem.
Since
$\bhpcentral{\bx}\geq\Eh{\bg(\bx,y^*,A)}\geq\pcent^h\bigl(H-\ld(c^\T\bx+ry^*)\bigr))$
(the second inequality is due to Claim~\ref{bndinfl}),
we have 
$\pcent^h\leq\frac{\ve}{\ld}$.%
\footnote{If $\bopt=0$, then $c^\T\bx+ry^*=0$, and $\bg(\bx,y^*,A)=0$ for all $A$ with
$\pcent_A>0$. Therefore, $\bg(0,0,A)=0=H$ for all $A$ with $\pcent_A>0$, and all scenarios
in the support of $\pcent$ are low scenarios.}
The sample size $N$ is chosen so that Chernoff bounds
ensure that with probability at least $1-\dt$, for every $i$, we have 
$\htpcent^{i,h}\leq\frac{2\ve}{\ld}$.
Hence,
\begin{alignat}{2}
\Eh{\bg(0,0,A)}-\Eh{\bg(x,y,A)} 
& \leq \ve(c^\T x+ry) \qquad && \forall x\in X, y\geq 0 \label{pfineq2} \\
\hEh{\bg(0,0,A)}-\hEh{\bg(x,y,A)} 
& \leq 2\ve(c^\T x+ry) && \forall i=1,\ldots k,\ \forall x\in X, y\geq 0. \label{pfineq3}  
\end{alignat}
Since $\bg(x,y,A)\leq\bg(0,0,A)\leq H$ for all low scenarios $A$ and all 
$(x,y)\in X\times\Rplus$, the choice of $N$ shows that, again using Chernoff bounds, with 
probability $1-\dt$, we have 
\begin{equation}
\Bigl|\hEl{\bg(x,y,A)}-\El{\bg(x,y,A)}\Bigr|\leq\ve\bopt \qquad
\forall i=1,\ldots,k,\ \forall (x,y)\in X\times Y. \label{pfineq4}
\end{equation}

Next, we argue that there is some index $t$ such that 
$\hEh[t]{\bg(0,0,A)}+\hzlong[t]$ is close to $\Eh{\bg(0,0,A)}+\zpcentlng{0}$.
For every $i$, the expected value of 
$\hEh{\bg(0,0,A)}$ 
is precisely $\Eh{\bg(0,0,A)}$, so we can use Markov's inequality. But it is more tricky
to reason about the expected value of $\hzlong[i]$ since $\zpcentlng[p]{0}$ is not linear
in $p$. 

\begin{claim} \label{zlongconc}
$\zpcentlng[p]{0}$ is a concave function of $p$.
\end{claim}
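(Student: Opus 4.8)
The plan is to show directly that $\zpcentlng[p]{0}$, viewed as a function of the probability vector $p = (p_A)_{A\in\A}$, is concave by exhibiting it as a pointwise infimum of linear functions of $p$, or equivalently by a direct convex-combination argument on the defining linear program. Recall that
\[
\zpcentlng[p]{0} = \max\Bigl\{\textstyle\sum_{A,A'}\gm_{A,A'}g(0,A') : \gm\geq 0,\ \sum_{A'}\gm_{A,A'}\leq p_A\ \forall A,\ \sum_{A,A'}\scm(A,A')\gm_{A,A'}\leq r,\ \sum_{A,A'}\gm_{A,A'}\leq\tfrac{1}{\ld}\Bigr\}.
\]
The key observation is that only the constraints $\sum_{A'}\gm_{A,A'}\leq p_A$ depend on $p$, and they depend on it \emph{linearly}; the remaining constraints ($\gm\geq 0$, the $\scm$-budget constraint, and the total-flow constraint) and the objective are all independent of $p$.

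\textbf{Main argument.} First I would take two distributions $p^{(1)}$ and $p^{(2)}$ and a scalar $\lambda\in[0,1]$, and set $p := \lambda p^{(1)} + (1-\lambda)p^{(2)}$. Let $\gm^{(1)}$ and $\gm^{(2)}$ be optimal flows for $\zpcentlng[p^{(1)}]{0}$ and $\zpcentlng[p^{(2)}]{0}$ respectively. Consider $\gm := \lambda\gm^{(1)} + (1-\lambda)\gm^{(2)}$. I claim $\gm$ is feasible for the LP defining $\zpcentlng[p]{0}$: nonnegativity is immediate; the marginal constraint holds because $\sum_{A'}\gm_{A,A'} = \lambda\sum_{A'}\gm^{(1)}_{A,A'} + (1-\lambda)\sum_{A'}\gm^{(2)}_{A,A'} \leq \lambda p^{(1)}_A + (1-\lambda)p^{(2)}_A = p_A$; the $\scm$-budget constraint and the total-flow constraint $\sum_{A,A'}\gm_{A,A'}\leq\tfrac{1}{\ld}$ are preserved under convex combinations since both $\gm^{(1)}$ and $\gm^{(2)}$ satisfy them with the same right-hand sides. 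Then by linearity of the objective,
\[
\zpcentlng[p]{0} \geq \textstyle\sum_{A,A'}\gm_{A,A'}g(0,A') = \lambda\sum_{A,A'}\gm^{(1)}_{A,A'}g(0,A') + (1-\lambda)\sum_{A,A'}\gm^{(2)}_{A,A'}g(0,A') = \lambda\,\zpcentlng[p^{(1)}]{0} + (1-\lambda)\,\zpcentlng[p^{(2)}]{0},
\]
which is exactly concavity.

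\textbf{Remarks on obstacles.} I do not anticipate a serious obstacle here — this is a routine argument. The only subtlety to be careful about is that \emph{all} the $p$-independent constraints must genuinely have fixed right-hand sides so that feasibility is preserved under taking convex combinations; this is the case for $\zpcentlng{\cdot}$ precisely because the threshold $\tfrac{1}{\ld}$ in the total-flow constraint and the budget $r$ in the $\scm$-constraint do not vary with $p$. (By contrast, $\zpcentshort[p]{x}$ has the same structure and would be concave in $p$ as well by the identical argument, but we only need the statement for $\zpcentlng[p]{0}$.) One might alternatively phrase the proof via LP duality: the dual of the defining LP has $p$ appearing only in the (linear) objective $\sum_A p_A\theta_A + ry/\ld + \cdots$ over a $p$-independent feasible region, exhibiting $\zpcentlng[p]{0}$ as a minimum of affine functions of $p$, hence concave. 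I would present the primal convex-combination version since it is shorter and self-contained.
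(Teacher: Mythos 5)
Your proof is correct and takes essentially the same route as the paper: the paper's argument also forms the convex combination of the two optimal flows, notes it is feasible for the LP defining $\zpcentlng[\bp]{0}$ (since only the marginal constraints depend linearly on $p$ and the remaining constraints have fixed right-hand sides), and concludes by linearity of the objective. Your write-up just spells out the feasibility check and the dual viewpoint more explicitly.
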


\begin{proof}
Consider any two distributions $p$ and $q$, and $\bp=\tht\cdot p+(1-\tht)\cdot q$, where
$\tht\in[0,1]$. Let $\gm^p$ and $\gm^q$ be the optimal solutions to the optimization
problems defining $\zpcentlng[p]{0}$ and $\zpcentlng[q]{0}$. Then,
$\tht\cdot\gm^p+(1-\tht)\cdot\gm^q$ is a feasible solution to the optimization problem
defining $\zpcentlng[\bp]{0}$, and its objective value is
$\tht\cdot\zpcentlng[p]{0}+(1-\tht)\cdot\zpcentlng[q]{0}$. 
\end{proof}

Using the above claim and Jensen's inequality, we obtain that the expected value of
$\hzlong[i]$ is at most $\zpcentlng{0}$. Therefore, by Markov's inequality, we have that
the event
$\hEh{\bg(\bo,0,A)}+\hzlong[i]>(1+\ve)\bigl(\Eh{\bg(\bo,0,A)}+\zpcentlng{0}\bigr)$
happens with probability at most $\frac{1}{1+\ve}\leq 1-\ve/2$. 
The probability that this happens for {\em all} $i=1,\ldots,k$ is at most
$(1-\ve/2)^k\leq\dt$. So we may assume that there is some index 
$t\in\{1,\ldots,k\}$ such that 
\begin{alignat}{1}
\hEh[t]{\bg(\bo,0,A)}&+\hzlong[t]-\Eh{\bg(\bo,0,A)}-\zpcentlng{0}
\leq\ve\Bigl(\Eh{\bg(\bo,0,A)}+\zpcentlng{0}\Bigr) \notag \\
&\leq\ve\Bigl(\Eh{\bg(\bx,y^*,A)}+\zpcentlng{0}\Bigr)+\pcent^h\ld\bigl(c^\T\bx+ry^*)
\leq \ve\bigl(\bopt+\zpcentlng{0}\bigr).
\label{pfineq5}
\end{alignat}

Now we show that the index $j$ obtained from the $f^i$ estimates can be used in place of
the index $t$. To do this, we first use the properties of the $f^i$'s and the index $j$ to
relate the quality of $\hx^j$ for the $j$-th SAA problem to the quality of $(\bx,\by)$
under any of the other SAA problems.
Let $y^j\geq 0$ be such that $\bh^j(\hx^j)=\bh^j(\hx^j,y^j)$, and let $\hy^j$ be the point
in $Y$ given by Claim~\ref{yclose}. 
We have that for every $i=1,\ldots,k$,
\begin{equation}
\tdh^j(\hx^j,\hy^j)-\kp' \leq\tdh^j(\hx^j)\leq 2\hh^j(\hx^j) 
\leq 2\beta f^j\leq 2\beta f^i 
\leq 2\beta\rho\cdot\hh^i(\bx) \leq 2\beta\rho\cdot\tdh^i(\bx,\by).
\label{pfineq1}
\end{equation}
The first inequality follows from Claim~\ref{yclose}; the second follows from
Lemma~\ref{zproxy}; the next three inequalities follow from the properties
of the $f^i$ estimates, and the choice of index $j$;
the last inequality again uses Lemma~\ref{zproxy}, and that $\tdh^i(\bx)\leq\tdh^i(\bx,y)$
for any $y\geq 0$. 

Let $\al=2\beta\rho$.
Let 
$\Dt^j = \Eh{\bg(\bo,0,A)}+\zpcentlng{0}-\hEh[j]{\bg(\bo,0,A)}-\hzlong[j]$, and
$\Dt^t = \Eh{\bg(\bo,0,A)}+\zpcentlng{0}-\hEh[t]{\bg(\bo,0,A)}-\hzlong[t]$.
Applying \eqref{pfineq1} to $j$ and $t$, we have 
$\tdh^j(\hx,\hy^j)-\kp'\leq\al\cdot\tdh^j(\bx,\by)$ and
$\tdh^j(\hx,\hy^j)-\kp'\leq\al\cdot\tdh^t(\bx,\by)$. 
Multiplying the first inequality by $\frac{1}{\al}$ and the second by $1-\frac{1}{\al}$
and adding, we get
\begin{equation}
\tdh^j(\hx,\hy^j)-\kp'\leq\tdh^j(\bx,\by)+(\al-1)\tdh^t(\bx,\by). \label{pfineq6}
\end{equation}

\medskip
We now combine these various inequalities to obtain the desired result.
By repeatedly using
\eqref{pfineq2}--\eqref{pfineq4}, we get   
\begin{alignat}{1}
\thpcentral{\hx,\hy^j} & = 
c^\T\hx+r\hy^j+\El{\bg(\hx,\hy^j,A)}+\Eh{\bg(\hx,\hy^j,A)}+\zpcentlng{0} \notag \\ 
& \leq c^\T\hx+r\hy^j+\El{\bg(\hx,\hy^j,A)}+\hEh[j]{\bg(\bo,0,A)}+\hzlong[j]+\Dt^j \notag \\ 
& \leq c^\T\hx+r\hy^j+\Bigl(\hEl[j]{\bg(\hx,\hy^j,A)}+\ve\bopt\Bigr)
+\Bigl(\hEh[j]{\bg(\hx,\hy^j,A)}+2\ve(c^\T\hx+r\hy^j)\Bigr)+\Dt^j+\hzlong[j] \notag \\
& = \Bigl(\tdh^j(\hx,\hy^j)+\Dt^j\Bigr)+\ve\bopt+2\ve(c^\T\hx+r\hy^j) \notag \\
& \leq \Bigl(\tdh^j(\bx,\by)+\Dt^j\Bigr)+(\al-1)\tdh^t(\bx,\by)+\ve\bopt+2\ve(c^\T\hx+r\hy^j)+\kp' 
\label{pfineq7}
\end{alignat}
where the last inequality above follows by applying \eqref{pfineq6}.
We bound $\tdh^j(\bx,\by)+\Dt^j$ as follows.
\begin{alignat}{1}
\tdh^j(\bx,\by)+\Dt^j & \leq
c^\T\bx+r\by+\Bigl(\El{\bg(\bx,\by,A)}+\ve\bopt\Bigr)+\Bigl(\Eh{\bg(\bx,\by,A)}+
\zpcentlng{0}+\ve(c^\T\bx+r\by)\Bigr) \notag \\
& \leq(1+\ve)\thpcentral{\bx,\by}+\ve\bopt. 
\label{pfineq8}
\end{alignat}
Similarly, we have 
\begin{equation*}
\tdh^t(\bx,\by)\leq
c^\T\bx+r\by+\Bigl(\El{\bg(\bx,\by,A)}+\ve\bopt\Bigr)+
\Bigl(\Eh{\bg(\bx,\by,A)}+\ve(c^\T\bx+r\by)+\zpcentlng{0}-\Dt^t\Bigr).
\end{equation*}
Substituting $-\Dt^t\leq \ve\bigl(\bopt+\zpcentlng{0}\bigr)$ from \eqref{pfineq5},
we can simplify this to
$$
\tdh^t(\bx,\by)\leq(1+\ve)\thpcentral{\bx,\by}+2\ve\bopt. 
$$
Finally, substituting this bound and \eqref{pfineq8}, in \eqref{pfineq7}, we obtain 
\begin{equation*}
\thpcentral{\hx,\hy^j}\leq 
\Bigl((1+\ve)\thpcentral{\bx,\by}+\ve\bopt\Bigr) 
+(\al-1)\Bigl((1+\ve)\thpcentral{\bx,\by}+2\ve\bopt\Bigr) 
+\ve\bopt+2\ve(c^\T\hx+r\hy^j)+\kp'
\end{equation*}
This implies that 
\begin{equation*}
\begin{split}
\thpcentral{\hx}\leq\thpcentral{\hx,\hy^j}&\leq
\frac{\al(1+\ve)}{1-2\ve}\cdot\thpcentral{\bx,\by}+O(\al\ve)\cdot\bopt+\frac{\kp'}{1-2\ve} \\
& \leq\al(1+4\ve)\cdot\thpcentral{\bx,\by}+O(\al\ve)\cdot\bopt+(1+4\ve)\kp' \\ 
& \leq\al(1+4\ve)\cdot\thpcentral{\bx,y^*}+(\al+1)(1+4\ve)\kp'+O(\al\ve)\cdot\bopt
\end{split}
\end{equation*}
where $\frac{1+\ve}{1-2\ve}\leq 1+4\ve$ since $\ve\leq\frac{1}{3}$.
This proves \eqref{goalineq}. Combining this with Lemma~\ref{zproxy} yields the inequality
in Theorem~\ref{mainsaathm}.
The success probability is the probability that inequalities
\eqref{pfineq3}--\eqref{pfineq5} hold, which is at least $1-3\dt$. 
\hfill\qed

\end{document}